\DeclareTextFontCommand{\emph}{\slshape}
\renewcommand{\paragraph}{%
	\@startsection{paragraph}{4}%
	{\z@}{1.75ex \@plus 1ex \@minus .2ex}{-0.7em}%
	{\normalfont\normalsize\bfseries}%
}
\let\originalleft\left
\let\originalright\right
\renewcommand{\left}{\mathopen{}\mathclose\bgroup\originalleft}
\renewcommand{\right}{\aftergroup\egroup\originalright}
\pgfplotsset{compat=1.10}
\setlist[enumerate,1]{label=(\arabic*)}
\setlist[itemize,1]{label=--}
\setlist[itemize,2]{label=--}
\setlist[itemize,3]{label=--}
\setlist[itemize,4]{label=--}
\theoremstyle{definition}
\newtheorem{theorem}{Theorem}
\newtheorem{proposition}{Proposition}
\newtheorem{lemma}{Lemma}
\newtheorem{corollary}{Corollary}
\newtheorem{remark}{Remark}
\newtheorem{observation}{Observation}
\newtheorem{example}{Example}
\newtheorem{definition}{Definition}
\newtheoremstyle{named}
	{\topsep}					
	{\topsep}					
	{}							
	{0pt}						
	{\bfseries}					
	{}							
	{5pt plus 1pt minus 1pt}	
	{\thmnote{#3}}				
\theoremstyle{named}
\newtheorem{namedthm}{}
\renewcommand{\qedsymbol}{$\blacksquare$}
\xpatchcmd{\proof}{\itshape}{\proofheadfont}{}{}
\newcommand{\proofheadfont}{\slshape}
\crefname{page}{p.}{pp.}
\crefname{equation}{equation}{equations}
\crefname{section}{section}{sections}
\crefname{subsection}{section}{sections}
\crefname{subsubsection}{section}{sections}
\crefname{appsec}{appendix}{appendices}
\crefname{supplsec}{supplemental appendix}{supplemental appendices}
\crefname{footnote}{footnote}{footnotes}
\crefname{figure}{figure}{figures}
\crefname{table}{table}{tables}
\crefname{theorem}{theorem}{theorems}
\crefname{proposition}{proposition}{propositions}
\crefname{lemma}{lemma}{lemmata}
\crefname{corollary}{corollary}{corollaries}
\crefname{remark}{remark}{remarks}
\crefname{observation}{observation}{observations}
\crefname{example}{example}{examples}
\crefname{fact}{fact}{facts}
\crefname{definition}{definition}{definitions}
\crefname{assumption}{assumption}{assumptions}
\crefname{exercise}{exercise}{exercises}
\crefname{notation}{notation}{notation}
\crefname{claim}{claim}{claims}
\crefname{conjecture}{conjecture}{conjectures}
\newcommand{\eps}{\varepsilon}
\newcommand{\dd}{\mathrm{d}}
\newcommand{\DD}{\mathrm{D}}
\DeclareMathOperator*{\argmax}{arg\,max}
\DeclareMathOperator*{\inv}{inv}
\DeclareMathOperator*{\interior}{int}
\DeclareMathOperator*{\cl}{cl}
\DeclareMathOperator*{\supp}{supp}
\newcommand{\E}{\mathbf{E}}
\newcommand{\PP}{\mathbf{P}}
\newcommand{\oo}{\mathrm{o}}
\newcommand{\R}{\mathbf{R}}
\newcommand{\Q}{\mathbf{Q}}
\newcommand{\C}{\mathbf{C}}
\newcommand{\N}{\mathbf{N}}
\newcommand{\1}{\boldsymbol{1}}
\newcommand{\join}{\vee}
\newcommand{\meet}{\wedge}
\newcommand{\union}{\cup}
\newcommand{\intersect}{\cap}
\newcommand{\Union}{\bigcup}
\DeclarePairedDelimiter\abs{\lvert}{\rvert}
\newcommand*{\xslant}[2][76]{%
	\begingroup
	\sbox0{#2}%
	\pgfmathsetlengthmacro\wdslant{\the\wd0 + cos(#1)*\the\wd0}%
	\leavevmode
	\hbox to \wdslant{\hss
		\tikz[
			baseline=(X.base),
			inner sep=0pt,
			transform canvas={xslant=cos(#1)},
		] \node (X) {\usebox0};%
		\hss
		\vrule width 0pt height\ht0 depth\dp0 %
	}%
	\endgroup
}
\newcommand*{\xslantmath}{}
\def\xslantmath#1#{%
	\@xslantmath{#1}%
}
\newcommand*{\@xslantmath}[2]{%
	\ensuremath{%
		\mathpalette{\@@xslantmath{#1}}{#2}%
	}%
}
\newcommand*{\@@xslantmath}[3]{%
	\xslant#1{$#2#3\m@th$}%
}
\def\namedlabel#1#2{\begingroup
	#2%
	\def\@currentlabel{#2}%
	\phantomsection\label{#1}\endgroup
}
\let\save@mathaccent\mathaccent
\newcommand*\if@single[3]{%
	\setbox0\hbox{${\mathaccent"0362{#1}}^H$}%
	\setbox2\hbox{${\mathaccent"0362{\kern0pt#1}}^H$}%
	\ifdim\ht0=\ht2 #3\else #2\fi
	}
\newcommand*\rel@kern[1]{\kern#1\dimexpr\macc@kerna}
\newcommand*\widebar[1]{\@ifnextchar^{{\wide@bar{#1}{0}}}{\wide@bar{#1}{1}}}
\newcommand*\wide@bar[2]{\if@single{#1}{\wide@bar@{#1}{#2}{1}}{\wide@bar@{#1}{#2}{2}}}
\newcommand*\wide@bar@[3]{%
	\begingroup
	\def\mathaccent##1##2{%
	  \let\mathaccent\save@mathaccent
	  \if#32 \let\macc@nucleus\first@char \fi
	  \setbox\z@\hbox{$\macc@style{\macc@nucleus}_{}$}%
	  \setbox\tw@\hbox{$\macc@style{\macc@nucleus}{}_{}$}%
	  \dimen@\wd\tw@
	  \advance\dimen@-\wd\z@
	  \divide\dimen@ 3
	  \@tempdima\wd\tw@
	  \advance\@tempdima-\scriptspace
	  \divide\@tempdima 10
	  \advance\dimen@-\@tempdima
	  \ifdim\dimen@>\z@ \dimen@0pt\fi
	  \rel@kern{0.6}\kern-\dimen@
	  \if#31
	    \overline{\rel@kern{-0.6}\kern\dimen@\macc@nucleus\rel@kern{0.4}\kern\dimen@}%
	    \advance\dimen@0.4\dimexpr\macc@kerna
	    \let\final@kern#2%
	    \ifdim\dimen@<\z@ \let\final@kern1\fi
	    \if\final@kern1 \kern-\dimen@\fi
	  \else
	    \overline{\rel@kern{-0.6}\kern\dimen@#1}%
	  \fi
	}%
	\macc@depth\@ne
	\let\math@bgroup\@empty \let\math@egroup\macc@set@skewchar
	\mathsurround\z@ \frozen@everymath{\mathgroup\macc@group\relax}%
	\macc@set@skewchar\relax
	\let\mathaccentV\macc@nested@a
	\if#31
	  \macc@nested@a\relax111{#1}%
	\else
	  \def\gobble@till@marker##1\endmarker{}%
	  \futurelet\first@char\gobble@till@marker#1\endmarker
	  \ifcat\noexpand\first@char A\else
	    \def\first@char{}%
	  \fi
	  \macc@nested@a\relax111{\first@char}%
	\fi
	\endgroup
}
	\newcommand{\hyperdest}[1]{\Hy@raisedlink{\hypertarget{#1}{}}}
		\newif\ifustar
		\newif\iftransfers
		\newif\ifFone
		\newif\ifustar
		\newif\ifequal
		\newif\ifconcave
		\newif\ifmix
		\newif\ifmixlabels
		\newif\iflabels
		\newif\ifFzero
		\newif\ifcavphan
		\newif\iflab
		\newif\ifphantom
		\newif\ifMCone
		\newif\ifQ
		\newif\ifpartial
		\newif\ifside
		\newif\ifU
		\newif\ifUdl
		\newif\ifUhat
		\newif\ifUhatdl
		\newif\ifT
		\newif\ifplus
		\newif\ifhull
		\newif\iffunction
		\newif\ifmore
		\newif\ifnew
		\newif\iftriangle
		\newif\iffunctions
\title{\scshape Screening for breakthroughs%
\thanks{We are deeply grateful to Eddie Dekel for many detailed comments.
Sinander thanks him, Alessandro Pavan and Bruno Strulovici for their guidance and support.
We have profited from comments from
anonymous referees,
Dilip Abreu,
Miguel Ballester,
Lori Beaman,
Iván Canay,
Gabe Carroll,
Sylvain Chassang,
Janet Currie,
Théo Durandard,
Piotr Dworczak,
Andrew Ellis,
Jeff Ely,
Alex Frug,
George Georgiadis,
Brett Green,
Faruk Gül,
Yingni Guo,
Marina Halac,
Oliver Hart,
Ian Jewitt,
Shengwu Li,
Alessandro Lizzeri,
Guido Lorenzoni,
Erik Madsen,
Thomas Mariotti,
David Martimort,
Eric Maskin,
John Moore,
Matt Notowidigdo,
Wojciech Olszewski,
Harry Pei,
Wolfgang Pesendorfer,
Dan Quigley,
Debraj Ray,
Ronny Razin,
Wolfgang Ridinger,
Marciano Siniscalchi,
Juuso Toikka,
Arne Uhlendorff,
Can Ürgün,
Chris Wallace,
Asher Wolinsky,
Leeat Yariv
and audiences at Bonn, Harvard, LSE, Lund, Northwestern, NYU, Oxford, Princeton,
\emph{Seminars in Economic Theory,} Toulouse,
and the 2021 \emph{REStud} Tour.
Curello acknowledges support from the German Research Foundation (DFG) through CRC TR 224 (Project B02).%
}%
}
\author{%
Gregorio Curello \\
University of Mannheim
\and
Ludvig Sinander \\
University of Oxford}
\date{13 March 2025}
\begin{document}

\maketitle

\begin{abstract}
	How best to incentivise \emph{prompt} disclosure?
	We study this question in a general model in which a technological breakthrough occurs at an uncertain time and is privately observed by an agent,
	and a principal must incentivise disclosure via her control of a payoff-relevant physical allocation.
	We uncover a deadline structure of optimal mechanisms:
	they have a simple deadline form in an important special case,
	and a graduated deadline structure in general.
	We apply our results to the design of unemployment insurance schemes.
\end{abstract}

\section{Introduction}
\label{sec:intro}

Society advances by finding better ways of doing things.
When such a technological breakthrough occurs, it frequently becomes known only to certain individuals with particular expertise.
Only if such individuals share their knowledge promptly can the promise of progress be unlocked.

The resulting need to incentivise prompt disclosure
engenders a screening problem in which the agent's private information is about \emph{when,} rather than about \emph{what.}
We call this \emph{screening for breakthroughs.}

The need to screen for breakthroughs is widespread.
One example is the much-discussed problem of talent-hoarding in organisations \parencite[see][]{Haegele2022}.
The manager of a team is well-placed to know when one of her subordinates acquires a skill.
When this happens, headquarters may wish to re-assign the worker to a new role better-suited to her abilities.
Managers, however, have a documented tendency to want to hold on to their workers.
Careful design is thus needed to incentivise prompt disclosure.

Another example is unemployment insurance:
since unemployed workers are typically privately informed about when they receive a job offer,
benefits must be designed with a view to incentivising them to start work promptly.
A third example concerns technical innovations that reduce firms' greenhouse-gas emissions, at the price of raising production costs.%
	\footnote{Such innovations are expected to account for the bulk of abatement
	in the cement industry,
	currently the source of about $7\%$ of all $\text{CO}_2$ emissions
	(\citeauthor*{CziglerEtal2020}, \citeyear{CziglerEtal2020}).}
Only with suitable regulation
will firms which discover such innovations
choose to adopt them.

In this paper, we study the general problem of screening for breakthroughs.
We introduce a model in which
an agent privately observes when a new productive technology arrives.
This breakthrough expands utility possibilities for the agent and principal, but generates a conflict of interest between them.
The agent decides whether and when to disclose the breakthrough,
and the principal controls a payoff-relevant physical allocation over time.
Our model deliberately focusses on the screening-for-breakthroughs problem, excluding well-understood frictions such as the need to incentivise the agent to exert unobservable effort.
In an extension, we show that adding such a moral-hazard friction to the model does not affect our results.

We ask how the principal can best incentivise prompt disclosure of the breakthrough.
Our answer uncovers a deadline structure of optimal mechanisms:
only simple \emph{deadline mechanisms} are optimal in an important special case,
while a graduated deadline structure characterises optimal incentives in general.
We apply these insights to the design of unemployment insurance schemes.

\subsection{Overview of model and results}
\label{sec:intro:overview}

A breakthrough occurs at a random time, making available a new technology that expands utility possibilities for an agent and a principal.
There is a conflict of interest:
were the principal to operate the old and new technologies in her own interest,
the agent would be better off under the old one.
The agent privately observes when the breakthrough occurs, and (verifiably) discloses it at a time of her choosing.
The principal controls a physical allocation that determines the agent's utility over time.
(The description of a physical allocation may include a specification of monetary payments to the agent; in this case, the conflict-of-interest assumption requires that the agent be protected by (at least a degree of) limited liability prior to disclosure.)

To focus on the robust qualitative features of optimal screening,
we study \emph{undominated} mechanisms,
meaning those such that no alternative mechanism is weakly better for the principal under any arrival distribution of the breakthrough
and strictly better under some distribution.
We further describe, for any given breakthrough distribution, the principal's optimal choice among undominated mechanisms.

Toward our deadline characterisation,
we first study how undominated mechanisms incentivise the agent.
We show that the agent should be indifferent at all times between prompt and delayed disclosure (\Cref{proposition:indiff}).
This is despite the fact that the standard argument fails:
were the agent strictly to prefer prompt to delayed disclosure,
then lowering the agent's post-disclosure utility would \emph{not} necessarily benefit the principal.

We then elucidate the deadline structure of undominated mechanisms
when the pre-breakthrough technology's utility possibilities have an affine shape.
\Cref{theorem:deadline} asserts that in this case, all undominated mechanisms belong to a small class of simple \emph{deadline mechanisms.}
Absent disclosure, these mechanisms give the agent a Pareto-efficient utility $u^0$ before a deadline, and an inefficiently low utility $u^\star$ afterwards.%
	\footnote{$u^0$ and $u^\star$ are functions of the technologies, so the deadline is the only free parameter.}
The proof of \Cref{theorem:deadline} argues
(loosely) that any mechanism may be improved by \emph{front-loading} the agent's pre-disclosure utility, making it higher early and lower late while preserving its total discounted value.
We further characterise the principal's optimal choice of deadline
as a function of the breakthrough distribution (\Cref{proposition:opt_deadline}).

Outside of the affine case,
optimal mechanisms exhibit a graduated deadline structure (\Cref{theorem:opt}):
absent disclosure, the agent's utility still starts at the efficient level $u^0$
and declines monotonically toward the inefficiently low level $u^\star$,
but the transition may be gradual.
For any given breakthrough distribution, we describe the optimal transition (\Cref{proposition:opt_transition}).

We then apply our results to the design of unemployment insurance schemes.
An unemployed worker (agent) receives a job offer at a random time,
and chooses whether to accept, and if so how soon to start.
Offers are private, but the state (principal) observes when the worker starts a job.
The state controls unemployment benefits and income taxes,
and cares both about the worker's welfare and net tax revenue.

Many countries, such as Germany and France,
pay a generous unemployment benefit until a deadline,
and provide only a low benefit to those remaining unemployed beyond this deadline.
Our results provide a potential rationale for such deadline schemes:
they are approximately optimal provided
that either (a)~the worker's consumption utility has limited curvature,
or (b)~tax revenue is comparatively unimportant for social welfare.
Conversely, our analysis suggests that where neither (a) nor (b) is satisfied,
substantial welfare gains could be achieved by tapering benefits gradually, as in Italy.

We conclude by examining the robustness of our general results to the introduction of additional realistic frictions.
We focus on moral hazard, a friction that is important in applications such as unemployment insurance (where search effort is required to generate job offers).
In our extended model, the agent decides in each period whether (unobservably) to exert effort at a cost. Effort improves the chance of a breakthrough.
The principal must now incentivise both effort and disclosure. We show that optimal mechanisms retain their deadline structure: \Cref{theorem:deadline,theorem:opt} remain true (verbatim).

\subsection{Related literature}
\label{sec:intro:lit}

This paper belongs to the literature on incentive design for a proposing agent, initiated by \textcite{ArmstrongVickers2010}.%
	\footnote{See also \textcite{NockeWhinston2013} and \textcite{GuoShmaya2023}.
	Our account of the literature follows the latter authors' insightful discussion.
	The literature has precedents in applied work on corporate finance \parencite{BerkovitchIsrael2004} and antitrust \parencite{Lyons2003}.}
In their (static) model, the agent privately observes which physical allocations are available, then proposes one (or several).
The key assumptions are that
\begin{enumerate}[label=(\alph*)]

	\item \label{bullet:evidence}
	the agent can propose only available allocations, and that

	\item \label{bullet:permission}
	the principal can implement only proposed allocations.

\end{enumerate}
Our dynamic problem shares these key features:
the new technology
\ref{bullet:evidence} can only be disclosed (proposed) once available, and
\ref{bullet:permission} can be utilised by the principal only once disclosed.

\textcite{BirdFrug2019} study a different dynamic environment with features \ref{bullet:evidence} and \ref{bullet:permission}.
Payoffs are simple:
there is an allocation $\alpha$ preferred by the principal
and a default allocation favoured by the agent,%
	\footnote{There is an extension to multiple allocations $\alpha$; little changes.}
and the principal can furthermore reward the agent at a linear cost.
In each period, the agent privately observes whether $\alpha$ is available;
it can \ref{bullet:evidence} be disclosed only if available, and
\ref{bullet:permission} be implemented only if disclosed.
Were rewards unrestricted,
$\alpha$ could be implemented whenever available
by rewarding the agent just enough to induce disclosure.
(And this is optimal; thus there is no conflict of interest in our sense.)
The authors instead subject promised rewards to a dynamic budget constraint,%
	\footnote{They assume in particular that the agent can be rewarded only using exogenous reward `opportunities', which arrive randomly over time;
	but nothing changes if rewards take other forms, e.g. (flow) monetary payments subject to a per-period cap.}
and study how the budget should be spent over time.
By comparison, we allow for general payoffs (technologies) and impose no dynamic constraints, focussing instead on a conflict of interest.

Feature \ref{bullet:evidence} means that the agent's disclosures are verifiable,
a possibility first studied by \textcite{Viscusi1978,GrossmanHart1980,Milgrom1981,Grossman1981}.
A strand of the subsequent literature examines the role of commitment in static models,%
	\footnote{Particularly \textcite{GlazerRubinstein2004,GlazerRubinstein2006,Sher2011,HartKremerPerry2017,BenporathDekelLipman2019}.}
while another studies the timing of disclosure absent commitment;%
	\footnote{See \textcite{DyeSridhar1995,AcharyaDemarzoKremer2011,GuttmanKremerSkrzypacz,CampbellEdererSpinnewijn2014,Curello2023b,Curello2023}.
	The last three papers feature `breakthroughs',
	but these engender no conflict of interest in our sense;
	the incentive problem is instead that of deterring shirking.}
our environment features both commitment and dynamics.%
	\footnote{So does recent work on revenue management,
	where a firm contracts with customers who arrive unobservably over time and choose when verifiably to reveal themselves; see
	\textcite{PaiVohra2013,BoardSkrzypacz2016,Mierendorff2016,Garrett2016,Garrett2017,GershkovMoldovanuStrack2018,DilmeLi2019}.}
These models lack property \ref{bullet:permission}:
the agent cannot constrain the principal.

More distantly related is the large literature on dynamic adverse-selection models
with cheap-talk communication (contrast with \ref{bullet:evidence})
and no scope for the agent to constrain the principal's choice of allocation (contrast with \ref{bullet:permission}).
The strand on dynamic `delegation' allows for non-transferable utility, as we do;%
	\footnote{See \textcite{JacksonSonnenschein2007,MatsushimaMiyazakiYagi2010,Frankel2016jet,Guo2016,LiMatouschekPowell2017,LipnowskiRamos2020,GuoHorner2020,DeclippelEliazFershtmanRozen2021}.}
otherwise the literature tends to focus on monetary transfers.%
	\footnote{E.g. \textcite{Roberts1982,BaronBesanko1984,CourtyLi2000,Battaglini2005,EsoSzentes2007restud,EsoSzentes2007rand,Board2007,PavanSegalToikka2014}.}
A recent strand examines models which, like ours, feature private information about \emph{when,} rather than about \emph{what.}
For example, \textcite{GreenTaylor2016} show how moral hazard may be mitigated by conditioning pay and termination on cheap-talk `progress reports'.%
	\footnote{See also \textcite{FengEtal2023}.}
In their model, the agent privately observes the arrival of a signal which indicates that project completion is within reach (given enough effort).
Completion is observable.
There is no conflict of interest in our sense;
instead, the challenge is to incentivise unobservable (completion-hastening) effort.
(Absent this moral hazard, the principal would have no reason to elicit the signal.)
Relatedly, \textcite{Madsen2022} studies how cheap-talk progress reports may be elicited by conditioning pay and termination on a contractible signal.
In his model, the agent privately observes when a project `expires', and the principal decides when to terminate the project. The principal (agent) prefers termination close to expiry (as late as possible).
Crucially, there is a noisy contractible signal of expiry.%
	\footnote{If there were no contractible signal, then non-trivial screening would be impossible, since the agent's preferences are the same whatever her type (expiry date).}
Both of these papers use the term `deadline', as we do, but mean quite different things by it.%
	\footnote{Deterministic hard deadlines, as in our result, appear only in the benchmark case of Green and Taylor in which there is no signal (a case unrelated to our model and Madsen's).
	In Green and Taylor, `(soft) deadline' means a time after which termination may randomly occur if the agent has not yet reported the signal's arrival.
	Madsen uses `(soft) deadline' to mean that termination depends on the realisation of the contractible signal.}

\subsection{Roadmap}
\label{sec:intro:roadmap}

We introduce the model in the next section, then formulate the principal's problem in §\ref{sec:problem}.
In §\ref{sec:indiff}, we show that undominated mechanisms incentivise the agent by keeping her always indifferent.
We then describe the deadline structure of optimal mechanisms (§\ref{sec:deadline} and §\ref{sec:opt}).
In §\ref{sec:pf}, we apply our results to the design of unemployment insurance schemes.
We conclude in §\ref{sec:mh} by showing that our results extend to a richer model featuring moral hazard.

\section{Model}
\label{sec:model}

There is an agent and a principal,
whose utilities are denoted by $u \in [0,\infty)$ and $v \in [-\infty,\infty)$, respectively.
A frontier $F^0 : [0,\infty) \to [-\infty,\infty)$ describes utility possibilities:
$F^0(u)$ is the highest utility that the principal can attain subject to giving the agent utility $u$.
We assume that $F^0$ is concave and upper semi-continuous,
that it has a unique peak $u^0>0$ (namely, $F^0\left( u^0 \right) > F^0(u)$ for every $u \neq u^0$),
and that it is finite on $\left(0,u^0\right]$. Such a frontier is depicted in \Cref{fig:frontiers}.
\ustartrue													
\equalfalse													
\concavetrue												
\transfersfalse												
\mixfalse													
\mixlabelsfalse												
\labelstrue													
\Fzerotrue													
\Fonetrue													
\cavphanfalse												
\begin{SCfigure}
	\centering
	\input{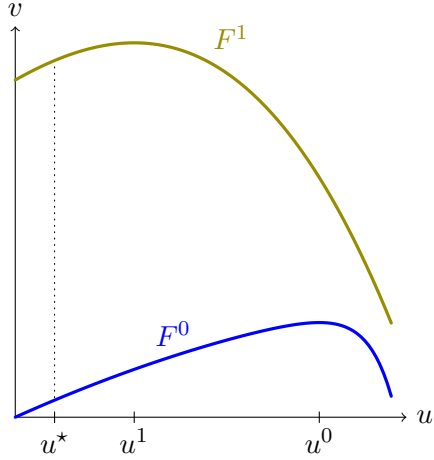}
	\caption{Utility possibility frontiers. The new technology expands utility possibilities ($F^1 \geq F^0$), but creates a conflict of interest ($u^1 < u^0$).
	$u^\star$~denotes the rightmost point to the left of $u^0$ at which $F^0,F^1$ have equal slopes.}
	\label{fig:frontiers}
\end{SCfigure}%

Time $t \in \R_+$ is continuous.
The principal controls the agent's flow utility $u$ (and thus her own utility $F^0(u)$) over time, and is able to commit.

We interpret this abstract description of utility possibilities in the standard fashion:
there is an (unmodelled) set of feasible physical allocations over which the agent and principal have preferences,
and the principal decides which allocation prevails in each period. She thus effectively controls the agent's flow utility. We illustrate and interpret further in §\ref{sec:model:interp} below.

At a random time $\tau$, a \emph{breakthrough} occurs:
a new technology becomes available which expands the utility possibility frontier to $F^1 \geq F^0$.
The new frontier is likewise concave and upper semi-continuous,
with a unique peak denoted by $u^1 \geq 0$.
(Note that we allow for the possibility that $u^1=0$, in which case $F^1$ is decreasing.)
The breakthrough engenders a conflict of interest:
the new frontier peaks at a strictly lower agent utility ($u^1 < u^0$),
so that the breakthrough would hurt the agent were the principal to operate both technologies in her own interest.
This is illustrated in \Cref{fig:frontiers}.

The breakthrough is observed only by the agent.
At any time $t \geq \tau$ after the breakthrough,
she can verifiably disclose to the principal that it has occurred.
(That is, she can \emph{prove} that the new technology is available.)
The new technology can be used only once its availability has been disclosed.

The agent and principal discount their flow payoffs at rate $r>0$ and have expected-utility preferences,
so that their respective payoffs from random flow utilities $t \mapsto x_t$ and $t \mapsto y_t$ are
\begin{equation*}
	\E\left( r \int_0^\infty e^{-rt} x_t \dd t \right)
	\quad \text{and} \quad
	\E\left( r \int_0^\infty e^{-rt} y_t \dd t \right) .
\end{equation*}
The random time $\tau$ at which the breakthrough occurs is distributed according to an arbitrary cumulative distribution function $G$.

We write $u^\star$ for the rightmost $u \in \left[ 0, u^0 \right]$ at which the old and new frontiers $F^0,F^1$ have equal slope (in the sense of sharing a supergradient---see \textcite[part~V]{Rockafellar1970}), and let $u^\star \coloneqq 0$ in case no such $u \in \left[0,u^0\right]$ exist.
This utility level will feature prominently in our analysis.
To avoid trivialities, we impose the weak genericity assumption that $u^\star$ is a strict local maximum of $F^1-F^0$. Note that $u^\star \leq u^1 < u^0$.

\subsection{Interpreting the frontiers}
\label{sec:model:interp}

In the simplest applications, there are finitely many (old) allocations, and the agent privately observes when a single new allocation becomes available.
For example,
a manager may observe when a worker on her team acquires a skill,
or a firm may discover an emissions-reducing innovation.
Each allocation provides some utilities $(u,v)$ to the agent and principal, which may be plotted as in \Cref{fig:projects}.
\ustarfalse													
\newtrue													
\triangletrue												
\functionstrue												
\transfersfalse												
\begin{SCfigure}
	\centering







\begin{tikzpicture}[scale=1, line cap=round]

\pgfmathsetmacro{\ticklength}{1/14};					
\pgfmathsetmacro{\radius}{2.5};							

\iftransfers
	\pgfmathsetmacro{\umax}{5.5000};					
\else
	\pgfmathsetmacro{\umax}{5.2000};					
\fi
\pgfmathsetmacro{\vmax}{5.2000};						
\pgfmathsetmacro{\hoffset}{0.021};						

\pgfmathsetmacro{\uzero}{4.0};							
\pgfmathsetmacro{\vzero}{2.6};							
\pgfmathsetmacro{\uone}{2.3};							
\pgfmathsetmacro{\vone}{5.0};							
\pgfmathsetmacro{\uzeroA}{0.3};							
\pgfmathsetmacro{\vzeroA}{1.3};							
\pgfmathsetmacro{\uzeroB}{2.8};							
\pgfmathsetmacro{\vzeroB}{2.4};							
\pgfmathsetmacro{\uzeroC}{5.0};							
\pgfmathsetmacro{\vzeroC}{1.8};							
\pgfmathsetmacro{\uzeroD}{3.5};							
\pgfmathsetmacro{\vzeroD}{0.4};							
\pgfmathsetmacro{\uzeroE}{1.2};							
\pgfmathsetmacro{\vzeroE}{1.0};							

\iftriangle

	\draw[fill, gray, opacity=0.8]
		(0,0) 
		-- (\uzeroA,\vzeroA)
		-- (\uzeroB,\vzeroB)
		-- (\uzero ,\vzero )
		-- (\uzeroC,\vzeroC)
		-- (\uzeroD,\vzeroD)
		-- cycle;

	\ifnew
		\draw[fill, lightgray, opacity=0.5]
			(\uzeroA,\vzeroA)
			-- (\uone  ,\vone  )
			-- (\uzeroC,\vzeroC)
			-- (\uzero ,\vzero )
			-- (\uzeroB,\vzeroB)
			-- cycle;
	\fi

\fi

\ifustar
	\draw[-,dotted]
		( \uone,
		{ (\uone/(\uzeroA+\uzero)) * \vzeroA
		+ (1-(\uone/(\uzeroA+\uzero))) * \vzero } ) 
		-- ( \uone, \vone );
\fi

\draw[->] (0,0) -- ( \umax, 0.0 );						
\draw ( \umax, 0.0 )									
	node[anchor=west] {$u$};
\draw[->] (0,0) -- ( 0.0, \vmax );						
\draw ( 0.0, \vmax )									
	node[anchor=south] {$v$};

\iffunctions

	\ifnew
		\draw[very thick, blue]							
			(\uzeroA,\vzeroA)
			-- (\uzeroB,\vzeroB)
			-- (\uzero ,\vzero )
			-- (\uzeroC,\vzeroC);	
	\else
		\draw[very thick, blue]							
			( 0, 0 ) 
			-- (\uzeroA,\vzeroA)
			-- (\uzeroB,\vzeroB)
			-- (\uzero ,\vzero )
			-- (\uzeroC,\vzeroC);	
	\fi
	\draw												
		( { 0.4 * \uzeroA + 0.6 * \uzeroB },
		  { 0.4 * \vzeroA + 0.6 * \vzeroB } )
		node[anchor=south east, blue]
		{$F^0$};

	\iftransfers

		\draw[-, very thick, dotted, blue]
		(\uzeroC,\vzeroC)
		-- (\umax,
		{ \vzeroC - \umax + \uzeroC } );

	\fi

	\ifnew

		\draw[very thick, olive]						
			(\uzeroA,\vzeroA)
			-- (\uone  ,\vone  )
			-- (\uzeroC,\vzeroC);
		\draw											
			( { 0.3 * \uzeroA + 0.7 * \uone },
			  { 0.3 * \vzeroA + 0.7 * \vone } )
			node[anchor=south east, olive]
			{$F^1$};

		\draw[very thick, blue]							
			( {0+\hoffset} , { 0 } ) 
			-- ( { \uzeroA+\hoffset }, { \vzeroA } );
		\draw[very thick, olive]						
			( {0-\hoffset} , { 0 } ) 
			-- ( { \uzeroA-\hoffset }, { \vzeroA } );

		\iftransfers

			\draw[-, very thick, dotted, olive]
			(\uone  ,\vone  )
			-- (\umax,
			{ \vone - \umax + \uone } );

		\fi

	\fi

\fi

\fill[thick] ( 0, 0 )									
	circle[radius=\radius pt];
\fill[thick] (\uzero,\vzero)							
	circle[radius=\radius pt];
\ifnew
	\fill[thick,white] (\uone,\vone)					
		circle[radius=\radius pt];
	\draw[very thick] (\uone,\vone)						
		circle[radius=\radius pt];
\fi
\fill[thick] (\uzeroA,\vzeroA)							
	circle[radius=\radius pt];
\fill[thick] (\uzeroB,\vzeroB)							
	circle[radius=\radius pt];
\fill[thick] (\uzeroC,\vzeroC)							
	circle[radius=\radius pt];
\fill[thick] (\uzeroD,\vzeroD)							
	circle[radius=\radius pt];
\fill[thick] (\uzeroE,\vzeroE)							
	circle[radius=\radius pt];

\draw[-] ( \uzero, - \ticklength )						
	-- ( \uzero, \ticklength );
\draw ( \uzero, 0 )										
	node[anchor=north] {$\strut u^0$};
\ifnew
	\draw[-] ( \uone, - \ticklength )					
		-- ( \uone, \ticklength );
	\draw ( \uone, 0 )									
		node[anchor=north] {$\strut u^1$};
\fi

\draw[-] ( - \ticklength, \vzero )						
	-- ( \ticklength, \vzero );
\draw ( 0, \vzero )										
	node[anchor=east] {$\strut v^0$};
\ifnew
	\draw[-] ( - \ticklength, \vone )					
		-- ( \ticklength, \vone );
	\draw ( 0, \vone )									
		node[anchor=east] {$\strut v^1$};
\fi

\end{tikzpicture}

	\caption[Finitely many allocations:
	the old (solid), the new (hollow),
	and utility possibilities (grey). Here $u^\star = u^1$.]%
	{Finitely many allocations:
	the old (%
	\begin{tikzpicture}[scale=1, line cap=round]
		\fill[thick] (0,0) circle[radius=2.5 pt];
	\end{tikzpicture}%
	), the new (%
	\begin{tikzpicture}[scale=1, line cap=round]
		\draw[very thick] (0,0) circle[radius=2.2 pt];
	\end{tikzpicture}%
	),
	and utility possibilities (grey). Here $u^\star = u^1$.}
	\label{fig:projects}
\end{SCfigure}%
The utility possibility set is the convex hull of these profiles,%
	\footnote{In-between profiles are achieved by rapidly switching back and forth (or randomising).}
and the frontier $F^0$ is its upper boundary.
The agent privately observes when a new allocation $\left( u^1, v^1 \right)$ becomes available.
The principal likes the new allocation better than any other,
whereas the agent prefers the principal's favourite old allocation $\left( u^0, v^0 \right)$.
Thus utility possibilities expand, but there is a conflict of interest.

\begin{example}
	\label{example:talent-hoarding}
	The simplest formalisation of the talent-hoarding story from the introduction is as follows. A worker belongs to a team in an organisation. Her productivity on the team is $v^0>0$, while her productivity outside of the team is strictly lower, normalised to zero. At some uncertain time, she acquires a skill that can be exercised only outside of her current team, at productivity $v^1>v^0$. (This could be the skill to manage a team of her own, for example.) Headquarters (the principal) cares about output, while the worker's manager (the agent) has a pure empire-building motive: her payoff is $u=1$ if the worker is on her team and $u=0$ otherwise. In this case, the frontiers are given by $F^0(u) = u v^0$ and $F^1(u) = (1-u)v^1 + uv^0$ for each $u \in [0,1]$.%
		\footnote{And $F^0(u) = F^1(u) = -\infty$ for all $u \in (1,\infty)$, since $u>1$ is impossible.}
	These frontiers satisfy our model assumptions; in particular, the conflict-of-interest assumption holds since $u^1 = 0 < 1 = u^0$.
\end{example}

Richer applications feature (infinitely) many allocations.
In our application to unemployment insurance (§\ref{sec:pf}), for example, an allocation specifies the worker's consumption and (if she is employed) her labour supply.

Our abstract treatment of allocations allows for a broad range of applications.
Allocations may be multi-dimensional, for example,
with some dimensions corresponding to observable actions taken by the agent.
(The principal controls these by issuing action recommendations, backed by the threat of giving the agent zero utility forever unless she complies.)
One dimension of the allocation may describe monetary payments to the agent; we discuss this possibility in §\ref{sec:model:discussion} below.

Rich downstream interactions between the principal and agent can be accommodated by re-interpreting the frontier $F^1$ in lifetime terms, so that $F^1(u)$ is the principal's continuation utility from the post-disclosure interaction when she is constrained to provide the agent with a continuation utility of $u$.%
	\footnote{The legitimacy of this re-interpretation is formally established in §\ref{sec:problem:after} below. Note that the pre-breakthrough frontier $F^0$ cannot be re-interpreted in this `lifetime' fashion.}
The post-disclosure interaction could be one of contracting under (rich, possibly dynamic) moral hazard, for example: that yields a frontier $F^1$ which satisfies our shape assumptions \parencite[see e.g.][Figure 1]{Sannikov2008}.

\subsection{Discussion of the assumptions}
\label{sec:model:discussion}

Two of our assumptions are economically substantive.
First, the agent privately observes a technological breakthrough, but cannot utilise the new technology without the principal's knowledge.
Many economic environments have this feature: in unemployment insurance, for instance, the state observes the worker's employment status (from e.g. tax records).

Secondly, there is a conflict of interest, captured by $u^1 < u^0$.
Such conflicts arise naturally in applications: in unemployment insurance, for example, the state (principal) would like an employed worker (agent) to work and pay taxes, but the worker would rather not.
Absent a conflict of interest, the principal can attain first-best (see \Cref{remark:first-best} below).

Many of the remaining model assumptions are innocuous, as we next briefly relate.
Further details are provided in \cref{suppl:ext}.

\paragraph{Utility possibilities {\normalfont(details: §\ref{suppl:ext:drop_F1-F0_strict}--§\ref{suppl:ext:participation})}.}
The assumption that $F^1 \geq F^0$ is without loss of generality (the old technology remains available after the breakthrough, so the principal can still attain utility $\geq F^0(u)$ while giving the agent utility $u$, for every $u \in [0,\infty)$).
The assumption that the frontiers are concave is likewise without loss:
if one of them were not,
then the principal could get arbitrarily close to any point on its concave upper envelope by rapidly switching back and forth between agent utility levels.
Upper semi-continuity is similarly innocuous.
The stipulation that $u^\star$ is a strict local maximum of $F^1-F^0$
essentially just rules out a saddle point,
and is anyway dispensable.

Not every agent utility $u \in [0,\infty)$ need be feasible:
if no physical allocation provides utility $u$, then we let $F^j(u) \coloneqq -\infty$, ensuring that $u$ is never chosen by the principal.
Our assumption that $F^0$ is finite on $\left(0,u^0\right]$ is without loss.

We have required the agent's flow utility $u$ to be non-negative,
meaning that there is a bound (normalised to zero) on how much misery the principal can inflict on the agent.
This assumption may be replaced with a participation constraint without affecting our results.

\paragraph{Distribution.}
The distribution $G$ of the breakthrough time is unrestricted:
it can have atoms, for example, and need not have full support.
We show in §\ref{sec:mh} below that our results extend to the case in which $G$ is endogenously generated by the agent's unobservable exertion of costly effort.

\paragraph{Uncertain technology {\normalfont(details: §\ref{suppl:ext:random_F1})}.}
Our analysis applies unchanged if the new frontier $F^1$ is random, provided the agent does not have private information about its realisation.

\paragraph{Cheap talk.}
Nothing changes if the agent's disclosures are non-verifiable, provided the principal observes her own payoffs in real time,
since she can then verify cheap-talk reports at negligible cost.%
	\footnote{Following a report, the principal can provide utility $u^1$ for a short time,
	earning $F^1( u^1 )$ if the breakthrough really did occur
	and $F^0( u^1 ) < F^1( u^1 )$ if not.}

\paragraph{(Non-)transferable utility.}
The frontiers $F^0,F^1$ can encode monetary transfers between the principal and agent; our model assumptions restrict such transfers only by requiring that \emph{before} the breakthrough, the agent is protected by (at least a degree of) limited liability.
In detail, write $\widetilde{F}^0,\widetilde{F}^1$ for the frontiers describing utility possibilities \emph{absent} monetary transfers. If the principal gives the agent gross utility $\widetilde{u} \in [0,\infty)$ and pays her $w \in \R$, then net flow utilities are $\widetilde{u} + w$ for the agent and $\widetilde{F}^j\bigl(\widetilde{u}\bigr) - w$ for the principal when technology $j \in \{0,1\}$ is used.
Any constraints on payments, such as limited liability, are captured by constraint sets $W^0 \subseteq \R$ before disclosure and $W^1 \subseteq \R$ after disclosure. For $j \in \{0,1\}$, the utility possibility frontier $F^j$ equals the concave upper semi-continuous upper envelope of
\begin{equation*}
	u \mapsto \sup_{w \in W^j}
	\left\{ \widetilde{F}^j\left(\widetilde{u}\right) - w : \widetilde{u}+w = u \right\} .
\end{equation*}
Assume that $\widetilde{F}^0,\widetilde{F}^1$ satisfy the model assumptions. Then $F^0,F^1$ also satisfy all model assumptions, \emph{except} possibly for the conflict-of-interest assumption $u^1 < u^0$. 
What is needed for $u^1 < u^0$ to hold is that the agent be protected by a degree of limited liability \emph{before} the breakthrough, i.e. $\inf W_0 \geq -k$ for some $k \in \R_+$; in particular, this condition with $k=0$ is sufficient, and it is necessary for this condition to hold with \emph{some} $k \geq 0$.%
	\footnote{Write $\widetilde{u}^0,\widetilde{u}^1$ for the peaks of $\widetilde{F}^0,\widetilde{F}^1$, and note that $u^1 \leq \widetilde{u}^1 < \widetilde{u}^0 \geq u^0$. If $\inf W_0 \geq 0$ then $u^0 = \widetilde{u}^0$, so $u^1 \leq \widetilde{u}^1 < \widetilde{u}^0 = u^0$. If $\inf W_0 < -k$ for every $k \in \R_+$, then $u^0 = 0 \leq u^1$.}
The model assumptions imply no restrictions on \emph{post}-disclosure payments $W^1$.

\subsection{Mechanisms and incentive-compatibility}
\label{sec:model:ic}

A \emph{mechanism} specifies, for each period $t \in \R_+$,
the flow utility $x^0_t$ that the agent enjoys at $t$ if she has not yet disclosed,
as well as the continuation utility $X^1_t$ that she earns by disclosing at $t$.
Formally, a mechanism is a pair $\left( x^0, X^1 \right)$,
where $x^0 : \R_+ \to \R_+$ and $X^1 : \R_+ \to [0,\infty]$ are Lebesgue-measurable.
We call $x^0$ the \emph{pre-disclosure flow,} and $X^1$ the \emph{disclosure reward.}

(Our notation uses lowercase for flows and uppercase for stocks: flow utilities are $x_t \in [0,\infty)$, while continuation payoffs are $X_t \in [0,\infty]$. As usual, `$x$' and `$X$' denote the functions $t \mapsto x_t$ and $t \mapsto X_t$, respectively.)

Note that the description of a mechanism does not specify what utility flow $s \mapsto x^{1,t}_s$ the agent enjoys after disclosing at $t$, only its present value
\begin{equation*}
	X^1_t = r \int_t^\infty e^{-r(s-t)} x^{1,t}_s \dd s 
\end{equation*}
(which may be equal to $\infty$). Nor does the definition specify which technology is used when both are available.
These omissions do not matter for the agent's incentives,
so we shall address them when we formulate the principal's problem (next section).

A mechanism is \emph{incentive-compatible (`IC')} iff the agent prefers disclosing promptly to \ref{item:ic:delay}~disclosing with a delay or \ref{item:ic:never}~never disclosing. Formally:

\begin{definition}
	\label{definition:ic}
	A mechanism $\left( x^0, X^1 \right)$ is \emph{incentive-compatible (`IC')} iff for every period $t \in \R_+$,
	\begin{enumerate}[label=(\alph*)]

		\item \label{item:ic:delay}
		$X^1_t \geq r \int_t^{t+d} e^{-r(s-t)} x^0_s \dd s + e^{-rd} X^1_{t+d}$
		\;for every $d > 0$, and

		\item \label{item:ic:never}
		$X^1_t \geq r \int_t^\infty e^{-r(s-t)} x^0_s \dd s$.

	\end{enumerate}
\end{definition}

By a revelation principle, 
we may restrict attention to incentive-compatible mechanisms.
(See \cref{suppl:revelation_principle} for details.)

\begin{remark}
	\label{remark:first-best}
	Although we have not yet stated the principal's problem,
	it is clear that her first-best is the mechanism $\left( x^0, X^1 \right) \equiv \left( u^0, u^1 \right)$,
	which fails to be incentive-compatible due to the conflict of interest ($u^1 < u^0$).
	If there were no conflict of interest ($u^1 \geq u^0$), then the first-best would be IC.
\end{remark}

In the sequel, we equip the set $\R_+$ of times with the Lebesgue measure,
so that a `null set of times' means a set of Lebesgue measure zero,
and `almost everywhere (a.e.)' means `except possibly on a null set of times'.

Observe that two IC mechanisms $\left( x^0, X^1 \right)$ and $\bigl( x^{0\dag}, X^1 \bigr)$
which differ only in that $x^0 \neq x^{0\dag}$ on a null set are payoff-equivalent.%
	\footnote{$x^0$ enters payoffs as $\smash{\E_G\left( \int_0^\tau e^{-rt} x^0_t \dd t \right)}$ and $\smash{\E_G\left( \int_0^\tau e^{-rt} F^0\left( x^0_t \right) \dd t \right)}$, respectively.
	Modifying $x^0$ on a null set has no effect on the integrals,
	and thus leaves both players' payoffs unchanged, no matter what the breakthrough distribution $G$.}
For this reason, we shall not distinguish between such mechanisms in the sequel, instead treating them as identical.%
	\footnote{\label{footnote:mech_defn_formal}%
	We term such $( x^0, X^1 )$ and $( x^{0\dag}, X^1 )$ \emph{versions} of each other.
	A mechanism is really an equivalence class:
	a maximal set whose every element is a version of every other.}

\section{The principal's problem}
\label{sec:problem}

In this section, we formulate the principal's problem,
and define undominated and optimal mechanisms.
We then derive an upper bound on the agent's utility in undominated mechanisms.

\subsection{After disclosure}
\label{sec:problem:after}

To determine the principal's payoff,
we must fill in the gaps in the definition of a mechanism.
So fix a mechanism $\left( x^0, X^1 \right)$,
and suppose that the agent discloses at time $t$.
For each of the remaining periods $s \in [t,\infty)$, the principal must determine
\begin{enumerate}

	\item \label{item:prin:tech}
	which technology ($F^0$ or $F^1$) will be used, and

	\item \label{item:prin:util}
	what flow utility $x^{1,t}_s$ the agent will enjoy.

\end{enumerate}
Part~\ref{item:prin:tech} is straightforward: the principal is always (weakly) better off using the new technology.

For \ref{item:prin:util}, the principal must choose a (measurable) utility flow $x^{1,t} : [t,\infty) \to [0,\infty)$ subject to providing the agent with the continuation utility specified by the mechanism:
\begin{equation*}
	r \int_t^\infty e^{-r(s-t)} x^{1,t}_s \dd s 
	= X^1_t .
\end{equation*}
She chooses so as to maximise her post-disclosure payoff
\begin{equation*}
	r \int_t^\infty e^{-r(s-t)} F^1\left( x^{1,t}_s \right) \dd s .
\end{equation*}
Since the frontier $F^1$ is concave, the constant flow $x^{1,t} \equiv X^1_t$ is optimal.

Parts~\ref{item:prin:tech} and \ref{item:prin:util} together
imply that the principal earns a flow payoff of $F^1\left( X^1_t \right)$ forever following a time-$t$ disclosure in a mechanism $\left( x^0, X^1 \right)$.

\subsection{Undominated and optimal mechanisms}
\label{sec:problem:opt}

The principal's payoff from an incentive-compatible mechanism $\left( x^0, X^1 \right)$ is
\begin{equation*}
	\Pi_G\left( x^0, X^1 \right)
	\coloneqq \E_G\left(
	r \int_0^\tau e^{-rt} F^0\left( x^0_t \right) \dd t
	+ e^{-r\tau} F^1\left( X^1_\tau \right)
	\right) ,
\end{equation*}
where the expectation is over the random breakthrough time $\tau \sim G$.%
	\footnote{For IC mechanisms $( x^0, X^1 )$ such that $X^1_\tau = \infty$ with positive probability, we interpret `$F^1(\infty)$' as $\lim_{u \uparrow \infty} F^1(u) = -\infty$, so that $\Pi_G(x^0,X^1) \coloneqq -\infty$.}
Her problem is to maximise her payoff by choosing among IC mechanisms.

A basic adequacy criterion for a mechanism
is that it not be \emph{dominated} by another mechanism,
by which we mean that the alternative mechanism is weakly better under every distribution
and strictly better under at least one:

\begin{definition}
	\label{definition:dominated}
	Let $\left( x^0, X^1 \right)$ and $\bigl( x^{0\dag}, X^{1\dag} \bigr)$
	be incentive-compatible mechanisms.
	The former \emph{dominates} the latter iff
	\begin{equation*}
		\Pi_G\left( x^0, X^1 \right)
		\geq \mathrel{(>)} \Pi_G\left( x^{0\dag}, X^{1\dag} \right)
		\quad \text{for every (some) distribution $G$.}
	\end{equation*}
	An IC mechanism is \emph{undominated} iff no IC mechanism dominates it.
\end{definition}

Domination is a distribution-free concept: the principal weakly prefers a dominating mechanism whatever her belief $G$ about the likely time of the breakthrough. When the principal's belief $G$ makes her exactly indifferent between two mechanisms, one of which dominates the other, choosing the dominating mechanism means maximising the principal's ex-post payoff (which cannot hurt, and seems more prudent if the principal entertains even a little doubt about $G$).

\begin{definition}
	\label{definition:opt}
	An incentive-compatible mechanism is \emph{optimal} for a distribution $G$ iff it maximises $\Pi_G$ and is undominated.
\end{definition}

We show in \cref{suppl:undom_opt_properties} that
undominated and optimal mechanisms exist.

\subsection{An upper bound on the agent's utility}
\label{sec:problem:lequ0}

Absent incentive concerns, the principal never wishes to give the agent utility strictly exceeding $u^0$, since both frontiers are downward-sloping to the right of $u^0$.
The principal could use utility promises in excess of $u^0$ as an incentive tool, however.
This is never worthwhile:

\setcounter{lemma}{-1}
\begin{lemma}
	\label{lemma:lequ0}
	Any undominated incentive-compatible mechanism $\left( x^0, X^1 \right)$
	satisfies $x^0 \leq u^0$ almost everywhere.
\end{lemma}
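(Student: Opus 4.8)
The plan is to establish the contrapositive: if an IC mechanism $\left( x^0, X^1 \right)$ has $x^0_t > u^0$ on a set $A \subseteq \R_+$ of positive Lebesgue measure, then it is dominated, hence not undominated. The witness is the mechanism $\bigl( \widetilde{x}^0, X^1 \bigr)$ obtained by capping the pre-disclosure flow at the peak, $\widetilde{x}^0 \coloneqq x^0 \wedge u^0$, while leaving the disclosure reward $X^1$ untouched. Note that $\widetilde{x}^0$ is measurable and valued in $\left[ 0, u^0 \right] \subseteq \R_+$, so this is a bona fide mechanism, and it genuinely differs from $\left( x^0, X^1 \right)$, since $\widetilde{x}^0 \neq x^0$ on the positive-measure set $A$.

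First I would verify that $\bigl( \widetilde{x}^0, X^1 \bigr)$ is incentive-compatible. In both inequalities of \Cref{definition:ic}, the pre-disclosure flow enters only on the right-hand side, integrated against the non-negative kernel $s \mapsto r e^{-r(s-t)}$, while the (unchanged) disclosure reward $X^1$ sits on the left. Since $\widetilde{x}^0 \leq x^0$ pointwise, replacing $x^0$ by $\widetilde{x}^0$ only decreases those right-hand sides, so each IC inequality continues to hold — indeed IC is relaxed. Next I would compare the principal's payoffs. By §\ref{sec:problem:after}, post-disclosure play and payoffs are pinned down by $X^1$ alone, which is unchanged; hence for every distribution $G$,
\begin{equation*}
	\Pi_G\bigl( \widetilde{x}^0, X^1 \bigr) - \Pi_G\left( x^0, X^1 \right)
	= r \, \E_G\!\left( \int_0^{\tau} e^{-rt} \left[ F^0\bigl( x^0_t \wedge u^0 \bigr) - F^0\bigl( x^0_t \bigr) \right] \dd t \right) .
\end{equation*}
The bracketed term vanishes off $A$ and equals $F^0\left( u^0 \right) - F^0\bigl( x^0_t \bigr)$ on $A$; this is strictly positive, because $F^0$ has a \emph{unique} peak at $u^0$ and $F^0\left( u^0 \right)$ is finite (as $F^0$ is finite on $\left( 0, u^0 \right]$), so even if $F^0\bigl( x^0_t \bigr) = -\infty$ the difference is a well-defined element of $\left( 0, \infty \right]$. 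Thus the integrand is non-negative, and $\Pi_G\bigl( \widetilde{x}^0, X^1 \bigr) \geq \Pi_G\left( x^0, X^1 \right)$ for \emph{every} $G$.

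For the strict inequality under \emph{some} $G$, I would take $G$ with full support on $\R_+$ (an exponential, say). Applying Tonelli's theorem to the non-negative integrand, the displayed difference becomes $r \int_0^\infty e^{-rt} \1_A(t) \left[ F^0\left( u^0 \right) - F^0\bigl( x^0_t \bigr) \right] \PP_G( \tau > t ) \, \dd t$, which is strictly positive since $A$ has positive measure and $\PP_G( \tau > t ) > 0$ for all $t$. Hence $\bigl( \widetilde{x}^0, X^1 \bigr)$ dominates $\left( x^0, X^1 \right)$, contradicting undominatedness, which proves $x^0 \leq u^0$ a.e. The one point that needs care — and where the hypotheses earn their keep — is verifying that the pre-disclosure gain on $A$ is a genuine strict improvement rather than an indeterminate $\infty - \infty$; uniqueness of the peak together with finiteness of $F^0$ near it secures exactly this, and the remaining bookkeeping (measurability of $A$ and $\widetilde{x}^0$, the fact from §\ref{sec:problem:after} that nothing after disclosure is altered, and the handling of $-\infty$-valued payoffs under the paper's conventions) is routine.
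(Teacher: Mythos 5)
Your proof is correct and matches the paper's argument for this lemma essentially step for step: both use the capped mechanism $\left(\min\{x^0, u^0\}, X^1\right)$ as the witness, verify IC by noting that lowering the pre-disclosure flow only relaxes both constraints in \Cref{definition:ic}, and obtain domination from the strict monotonicity of $F^0$ on $[u^0, \infty)$ together with a full-support distribution to make the improvement strict. The Tonelli computation and the explicit handling of $-\infty$ values of $F^0$ are welcome elaborations but don't change the underlying argument.
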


\begin{proof}
	Let $\left( x^0, X^1 \right)$ be an IC mechanism
	in which $x^0 > u^0$ on a non-null set of times.
	Consider the alternative mechanism
	$\left( \min\left\{ x^0, u^0 \right\}, X^1 \right)$
	in which the agent's pre-disclosure flow is capped at $u^0$.
	This mechanism dominates the original one:
	its pre-disclosure flow is lower, strictly on a non-null set,
	and the frontier $F^0$ is strictly decreasing on $\left[ u^0, \infty \right)$.
	And it is incentive-compatible:
	prompt disclosure is as attractive as in the original (IC) mechanism,
	and disclosing with delay (or never disclosing) is weakly less attractive
	since the agent earns a lower flow payoff $\min\left\{ x^0, u^0 \right\} \leq x^0$ while delaying.
\end{proof}

\section{Keeping the agent indifferent}
\label{sec:indiff}

In this section, we describe how undominated mechanisms incentivise the agent.
This result is a stepping stone to the deadline characterisation of undominated mechanisms that we develop in next two sections.

To formulate the agent's problem in a mechanism $\left( x^0, X^1 \right)$,
let $X^0_t$ denote the period-$t$ present value of the remainder of the pre-disclosure flow $x^0$:
\begin{equation*}
	X^0_t \coloneqq r \int_t^\infty e^{-r(s-t)} x^0_s \dd s .
\end{equation*}
In a period $t$ in which the agent has observed but not yet disclosed the breakthrough,
she chooses between
\begin{itemize}

	\item disclosing promptly (payoff $X^1_t$),

	\item disclosing with any delay $d>0$ (payoff $X^0_t + e^{-rd} \bigl( X^1_{t+d} - X^0_{t+d} \bigr)$), and

	\item never disclosing (payoff $X^0_t$).

\end{itemize}
Incentive-compatibility demands precisely that the agent weakly prefer the first option.
Our first result asserts that in an undominated mechanism, she must in fact be indifferent between all three alternatives:

\setcounter{proposition}{-1}
\begin{proposition}
	\label{proposition:indiff}
	Any undominated incentive-compatible mechanism $\left( x^0, X^1 \right)$
	satisfies $X^0 = X^1$.
\end{proposition}

That is, the reward $X^1_t$ for disclosure must equal the present value $X^0_t = r \int_t^\infty e^{-r(s-t)} x^0_s \dd s$ of the remainder of the pre-disclosure flow $x^0$.

A naïve intuition for \Cref{proposition:indiff} is that, were the agent strictly to prefer prompt disclosure in some period $t$, the principal could reduce her disclosure reward $X^1_t$ without violating IC.
The trouble with this idea is that if $X^1_t \leq u^1$, then lowering $X^1_t$ would \emph{hurt} the principal (refer to \Cref{fig:frontiers} on \cpageref{fig:frontiers}).
This is no mere quibble, for (as we shall see) undominated mechanisms will spend time in $\left[ 0, u^1 \right]$.
More broadly, in a general dynamic environment, it is not clear that IC ought to bind everywhere.

The proof is in \cref{app:pf_theorem_indiff}.
Below, we outline the main idea in discrete time,
then highlight the additional details that arise in continuous time.

\begin{proof}[Sketch proof]
	Let time $t \in \{0,1,2,\dots\}$ be discrete,
	and write $\beta \coloneqq e^{-r}$ for the discount factor.
	A mechanism $\left( x^0, X^1 \right)$ is incentive-compatible iff
	in each period $s$,
	the agent prefers prompt disclosure
	to delaying by one period
	and to never disclosing:
	\begin{align*}
		X^1_s &\geq (1-\beta) x^0_s + \beta X^1_{s+1}
		\label{eq:delay_IC}
		\tag{delay IC}
		\\
		X^1_s &\geq X^0_s .
		\label{eq:nondisc_IC}
		\tag{non-disclosure IC}
	\end{align*}
	(\hyperref[eq:delay_IC]{Delay IC} also deters delay by two or more periods.)
	We shall show that undominatedness requires
	that the \ref{eq:delay_IC} inequalities be equalities;
	we omit the argument that \ref{eq:nondisc_IC} must also hold with equality.

	So let $\left( x^0, X^1 \right)$ be an IC mechanism with \ref{eq:delay_IC} slack in some period $t$:
	\begin{equation*}
		X^1_t > (1-\beta) x^0_t + \beta X^1_{t+1} .
	\end{equation*}
	Observe that if the terms $x^0_t$ and $X^1_{t+1}$ on the right-hand side are $\geq u^1$,
	then the left-hand side $X^1_t$ must strictly exceed $u^1$. Equivalently, it must be that either
	\begin{equation*}
		\text{\hyperdest{item:indiff_discrete_pf:case1}(i)
		$X^1_t > u^1$,}\quad
		\text{\hyperdest{item:indiff_discrete_pf:case2}(ii)
		$x^0_t < u^1$,}\quad\text{or}\quad
		\text{\hyperdest{item:indiff_discrete_pf:case3}(iii)
		$X^1_{t+1} < u^1$.}
	\end{equation*}
	In each of these cases, we shall find a mechanism that dominates $\left( x^0, X^1 \right)$.
	We will use the fact that \ref{eq:nondisc_IC} is slack in each period $s \leq t$.%
		\footnote{To prove this, use induction on $s \in \{t,t-1,\dots,2,1,0\}$. In the base case $s=t$,
		$X^1_t
		> (1-\beta) x^0_t + \beta X^1_{t+1}
		\geq (1-\beta) x^0_t + \beta X^0_{t+1}
		\equiv X^0_t$
		by period-$t$ \ref{eq:delay_IC} (which is slack) and period-$(t+1)$ \ref{eq:nondisc_IC}. For the induction step, suppose that period-$(s+1)$ \ref{eq:nondisc_IC} is slack, where $s<t$; then
		$X^1_s
		\geq (1-\beta) x^0_s + \beta X^1_{s+1}
		> (1-\beta) x^0_s + \beta X^0_{s+1}
		\equiv X^0_s$,
		where the weak inequality holds by period-$s$ \ref{eq:delay_IC}.}

	In \hyperlink{item:indiff_discrete_pf:case1}{case~(i)}, the naïve intuition is vindicated: lowering $X^1_t$ toward $u^1$ really does improve the principal's payoff (strictly in case of a breakthrough in period $t$).
	And this preserves IC:
	the (slack) period-$t$ \ref{eq:delay_IC} and \ref{eq:nondisc_IC} hold for a small enough decrease,
	while \ref{eq:delay_IC} \emph{slackens} in period $t-1$ and is unaffected in all other periods, and
	\hyperref[eq:nondisc_IC]{non-disclosure IC} is unaffected in all periods other than $t$.

	In \hyperlink{item:indiff_discrete_pf:case2}{case~(ii)}, increase $x^0_t$ toward $u^1$, by an amount small enough to preserve the (slack) period-$t$ \ref{eq:delay_IC} and period-$s$ \ref{eq:nondisc_IC} for each $s \leq t$.
	Other periods' \ref{eq:delay_IC} is undisturbed, and so is \ref{eq:nondisc_IC} in periods $s > t$.
	Since $F^0$ increases strictly to the left of $u^1 < u^0$,
	the principal's payoff improves (strictly in case of a breakthrough after $t$).

	Finally, in \hyperlink{item:indiff_discrete_pf:case3}{case~(iii)}, \emph{increase} $X^1_{t+1}$ toward $u^1$. (The opposite of the naïve intuition.)
	The principal is better off (strictly in case of a period-$(t+1)$ breakthrough).
	Period-$t$ \ref{eq:delay_IC} abides provided the modification is small,
	while \ref{eq:delay_IC} is loosened in period $t+1$ and unaffected in other periods.
	\hyperref[eq:nondisc_IC]{Non-disclosure IC} is clearly preserved.
\end{proof}

The proof in \cref{app:pf_theorem_indiff} is based on the logic of the sketch above,
but must handle two issues that arise in continuous time.
First, in \hyperlink{item:indiff_discrete_pf:case2}{case~(ii)}, $x^0$ must be increased on a \emph{non-null} set of times if the principal's payoff is to increase strictly under some distribution.
Secondly, in cases \hyperlink{item:indiff_discrete_pf:case1}{(i)} and \hyperlink{item:indiff_discrete_pf:case3}{(iii)}, it is typically not possible to modify $X^1$ in a single period while preserving IC.

In light of \Cref{proposition:indiff}, an undominated incentive-compatible mechanism $\left( x^0, X^1 \right)$ is pinned down by the pre-disclosure flow $x^0$, since the disclosure reward $X^1$ must always equal the present value of the remainder of $x^0$:
\begin{equation*}
	X^1_t 
	= X^0_t 
	= r \int_t^\infty e^{-r(s-t)} x^0_s \dd s 
	\quad \text{for each $t \in \R_+$.}
\end{equation*}
We therefore drop superscripts in the sequel, writing an IC mechanism simply as $(x,X)$, where $X_t \coloneqq r \int_t^\infty e^{-r(s-t)} x_s \dd s$ for each $t \in \R_+$.
Since mechanisms of this form are automatically IC, we refer to them simply as `mechanisms'.
By \Cref{lemma:lequ0}, we need only consider mechanisms $(x,X)$ that satisfy $x \leq u^0$ a.e.

\section{Deadline mechanisms}
\label{sec:deadline}

In this section, we uncover a deadline structure of undominated mechanisms when the old utility possibility frontier $F^0$ is affine on $\left[ 0, u^0 \right]$, as in \Cref{fig:frontiers_affine}.
\ustartrue													
\concavefalse												
\begin{SCfigure}
	\centering
	\input{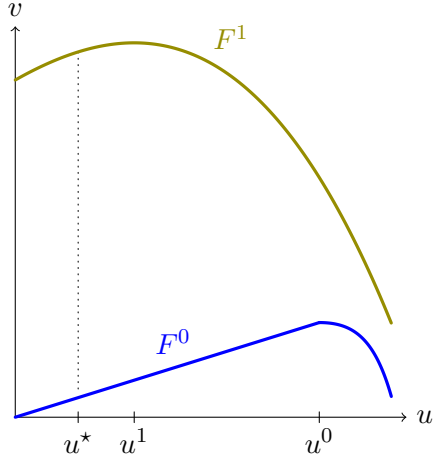}
	\caption{Utility possibility frontiers in the affine case. $u^\star$ is where the frontiers are furthest apart.}
	\label{fig:frontiers_affine}
\end{SCfigure}%
We further characterise the optimal choice of deadline, given the breakthrough distribution.

We start with the affine case partly for reasons of conceptual clarity:
this case lays bare a `front-loading' force that will provide the key to understanding undominated mechanisms in general.
The affine case is also important in its own right, since affineness frequently arises in applications, for two basic reasons.
The first reason is convexification (recall \Cref{fig:projects} on \cpageref{fig:projects}).
In the simplest case, with just two allocations,
the utility possibility frontier is the straight line connecting the two feasible utility profiles.%
	\footnote{In-between profiles are attained by rapidly switching back and forth (or randomising).}
More generally, the utility possibility set is the convex hull of all feasible utility profiles, so its upper boundary $F^0$ is affine if there are two profiles such that the line segment connecting them lies above all other profiles.

The second reason is that in (utilitarian) policy applications, such as unemployment insurance (§\ref{sec:pf} below), the agent's utility directly enters the principal's payoff in a linear fashion. Explicitly, the agent's utility is $u = \phi(a)$, where $a \in \mathcal{A}$ is a policy variable and $\phi : \mathcal{A} \to [0,\infty)$ is surjective, and the principal's utility is $v = u - \psi(a)$ for some function $\psi : \mathcal{A} \to \R$, so
\begin{equation*}
	F^0(u)
	= \sup_{a \in \mathcal{A}} \left\{ u - \psi(a) : u=\phi(a) \right\}
	= u - \inf_{a \in \mathcal{A}} \left\{ \psi(a) : u=\phi(a) \right\}
\end{equation*}
for each $u \in [0,\infty)$. The first term is linear, so if the second term has almost no curvature, then $F^0$ is approximately affine.%
	\footnote{For example, if $\mathcal{A}$ is a convex subset of $\R$ and $\phi,\psi$ are twice continuously differentiable with $\phi' > 0 < \psi'$, then $F^0(u) = u - \psi( \phi^{-1}(u) )$ for each $u \in [0,\infty)$, so the curvature $\abs{F^{0\prime\prime} / F^{0\prime} }$ is small if the curvature difference $\abs{ \psi''/\psi' - \phi''/\phi' }$ is small.}

The utility level $u^\star$ (defined in §\ref{sec:model}) admits a simple description when $F^0$ is affine:
it is the unique $u \in \left[ 0, u^0 \right]$ at which the frontiers are furthest apart,%
	\footnote{$u^\star$ is a strict local maximum of the gap $F^1-F^0$, which is concave when $F^0$ is affine.}
as indicated in \Cref{fig:frontiers_affine}.
A \emph{deadline mechanism} is one in which the agent's utility absent disclosure
is at the efficient level $u^0$ before a deterministic deadline,
and at the inefficiently low level $u^\star$ afterwards:

\begin{definition}
	\label{definition:deadline}
	A mechanism $(x,X)$ is a \emph{deadline mechanism} iff
	\begin{equation*}
		x_t =
		\begin{cases}
			u^0		& \text{for $t \leq T$} \\
			u^\star	& \text{for $t > T$} 
		\end{cases}
		\qquad \text{for some $T \in [0,\infty]$.}
	\end{equation*}
\end{definition}

Deadline mechanisms are simple:
only two utility levels are used, with a single switch between them.
And they form a small class of mechanisms, parametrised by a single number: the deadline $T$.
(The utility levels $u^0$ and $u^\star$ are not free parameters, being pinned down by the technologies $F^0,F^1$.)

The agent's reward $X$ upon disclosure in a deadline mechanism
(equal to the present value of the remainder of the pre-disclosure flow $x$)
is decreasing until the deadline, then constant at $u^\star$:
\begin{equation*}
	X_t =
	\begin{cases}
		\left( 1 - e^{-r(T-t)} \right) u^0 + e^{-r(T-t)} u^\star
		& \text{for $t \leq T$} \\
		u^\star
		& \text{for $t > T$.} 
	\end{cases}
	\tag{$\diamondsuit$}
	\label{eq:deadline_X}
\end{equation*}

\subsection{Only deadline mechanisms are undominated}
\label{sec:deadline:theorem}

The affine case admits a sharp prediction:
no matter what the shapes of the new frontier $F^1$ and breakthrough distribution $G$,
the principal will choose a mechanism from the small and simple deadline class.

\begin{theorem}
	\label{theorem:deadline}
	If the old frontier $F^0$ is affine on $\left[ 0, u^0 \right]$,
	then any undominated mechanism is a deadline mechanism.
\end{theorem}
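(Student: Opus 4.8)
The plan is to reduce the problem via the earlier results, exploit affineness to rewrite the principal's payoff in a form that decouples the two roles of the pre-disclosure flow, and then run a two-step improvement argument. By \Cref{theorem:indiff} an undominated mechanism has the reduced form $(x,X)$ with $X_t = r\int_t^\infty e^{-r(s-t)}x_s\,\dd s$, and by \Cref{lemma:lequ0} it satisfies $x\le u^0$ a.e.; moreover it cannot be dominated by any \emph{reduced-form} competitor, since those are incentive-compatible. Write $F^0(u)=a+bu$ on $\left[0,u^0\right]$; a unique peak at $u^0$ forces $b>0$. Since $x_t\in\left[0,u^0\right]$ a.e. and $r\int_0^\tau e^{-rt}x_t\,\dd t = X_0 - e^{-r\tau}X_\tau$ (the total discounted flow minus its tail after $\tau$), the payoff collapses to
\[
	\Pi_G(x,X) \;=\; a\,\E_G\!\left(1-e^{-r\tau}\right) \;+\; b\,X_0 \;+\; \E_G\!\left[e^{-r\tau}\,g\!\left(X_\tau\right)\right], \qquad g(u)\coloneqq F^1(u)-bu .
\]
The function $g$ is concave; it has a unique global maximum at $u^\star$ (since $F^0$ and $F^1$ share the slope $b$ there and $u^\star$ is a strict local maximum of $F^1-F^0$), and is strictly decreasing on $\left[u^\star,u^0\right]$ (because $u^\star$ is the \emph{rightmost} equal-slope point, so $g$ has no flat stretch in $\left(u^\star,u^0\right]$). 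Writing $\Pi_t(x,X)\coloneqq a(1-e^{-rt})+bX_0+e^{-rt}g(X_t)$ — the principal's payoff were the breakthrough known to occur at exactly $t$ — we have $\Pi_G=\int_0^\infty \Pi_t\,\dd G$. As domination is distribution-free and every point mass is an admissible $G$, a mechanism is undominated iff no mechanism weakly raises $\Pi_t$ for every $t$ and strictly raises it for some $t$; that is, iff it is Pareto-efficient for the family $\{\Pi_t\}_{t\ge 0}$.

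\emph{Step 1: an undominated $(x,X)$ has $X_t\ge u^\star$ for all $t$.} If $X_0<u^\star$, the constant mechanism $x\equiv u^\star$ strictly raises $\Pi_t$ at every $t$: it raises the $bX_0$ term ($b>0$) and, as $g$ peaks at $u^\star$, weakly raises $g(X_t)$. So assume $X$ dips below $u^\star$, and set $\rho\coloneqq\inf\{t:X_t<u^\star\}\in(0,\infty)$, so $X_\rho=u^\star$ by continuity. Since $X_\rho=u^\star$ says precisely that the time-$\rho$ present value of the remaining flow already equals $u^\star$, replacing $x$ by the constant $u^\star$ on $[\rho,\infty)$ leaves $X_t$ — in particular $X_0$ — unchanged for $t\le\rho$ while making $X_t=u^\star$ for $t>\rho$. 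This weakly raises every $\Pi_t$ and strictly raises $\Pi_t$ for $t$ slightly above $\rho$, where $X_t<u^\star$. Either way $(x,X)$ is dominated.

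\emph{Step 2: matching the deadline.} Let $(x,X)$ be undominated; by Step 1, $c\coloneqq X_0\in\left[u^\star,u^0\right]$, and there is a unique deadline mechanism $D^{(c)}$ with $X_0=c$, whose reward path is $D^{(c)}_t=\max\bigl\{u^\star,\,u^0-e^{rt}(u^0-c)\bigr\}$ by \eqref{eq:deadline_X}. Any reduced-form mechanism with $X_0=c$ obeys $X_t\ge u^0-e^{rt}(u^0-c)$ (as $x\le u^0$), so together with Step 1, $u^\star\le D^{(c)}_t\le X_t\le u^0$ for every $t$. Since $g$ is strictly decreasing on $\left[u^\star,u^0\right]$, this gives $g\bigl(D^{(c)}_t\bigr)\ge g(X_t)$ for all $t$ and hence, as $X_0$ agrees, $\Pi_t\bigl(D^{(c)}\bigr)\ge\Pi_t(x,X)$ for all $t$. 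If $(x,X)$ were not a deadline mechanism it would differ from $D^{(c)}$; since both paths are continuous and $X\ge D^{(c)}$ pointwise, $X_t>D^{(c)}_t$ on a nonempty open set of times, where $g\bigl(D^{(c)}_t\bigr)>g(X_t)$ — so $D^{(c)}$ dominates $(x,X)$, a contradiction. Hence $X=D^{(X_0)}$, i.e., $(x,X)$ is a deadline mechanism.

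The main obstacle is the first step: obtaining the decoupled payoff formula and seeing that undominatedness reduces to Pareto-efficiency over the point-mass objectives $\Pi_t$. The identity $r\int_0^\tau e^{-rt}x_t\,\dd t=X_0-e^{-r\tau}X_\tau$ is elementary, but it is useful only because affineness lets one replace $F^0(x_t)$ by $a+bx_t$ — otherwise the $F^0$-term does not collapse. After that, the one genuinely clever move is the ``flattening'' in Step 1: once the reward path touches $u^\star$, one may costlessly continue at the constant flow $u^\star$; this is the extreme form of the front-loading heuristic, and with it Step 2 is a short envelope comparison. Minor care is needed to check that the modified objects are bona fide reduced-form mechanisms (measurable, absolutely continuous, valued in $\left[0,u^0\right]$) and to handle the degenerate deadlines $T=0$ and $T=\infty$, but this is routine.
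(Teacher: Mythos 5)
Your proof is correct and follows essentially the same route as the paper's: both exploit affineness to collapse the payoff to $F^0(X_0) + \E_G\bigl[e^{-r\tau}\,(F^1-F^0)(X_\tau)\bigr]$, compare to the deadline mechanism with matching initial present value via the front-loading inequality $D^{(X_0)}_t \le X_t$, and conclude from the strict monotonicity of $F^1-F^0$ on $\left[u^\star,u^0\right]$. The only real difference is organizational: the paper avoids your Step 1 by setting the deadline's present value to $X_0 \vee u^\star$ and routing the inequality chain through $X \vee u^\star$, whereas you first establish $X \geq u^\star$ by an explicit tail-flattening modification; and your reformulation of undominatedness as Pareto-efficiency over the point-mass objectives $\Pi_t$ is a clean way to package the strictness argument, which the paper handles by a case split on whether $X^\dag = X \vee u^\star$.
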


The welfare implications are stark: ex-post Pareto efficiency in case of an early breakthrough,
and surplus destruction otherwise.
In particular, absent a breakthrough, the old technology is operated Pareto-efficiently (i.e. on the downward-sloping part of $F^0$, specifically at $u^0$) before the deadline, and inefficiently (at $u^\star$) afterwards.
Once the new technology arrives,
it is deployed efficiently (on the downward-sloping part of $F^1$) if its arrival was early (while $X \geq u^1$).%
	\footnote{A detail: $X_t \geq u^1$ holds in early periods $t$ only if the deadline is sufficiently late. We show in the next section that this must be the case in undominated mechanisms.}
If its arrival was late, then $F^1$ is operated inefficiently if $u^\star < u^1$, and efficiently if $u^\star = u^1$.
These welfare implications, as well as the special role played by $u^\star$, are general properties that hold even outside of the affine case, so we postpone discussing them fully until §\ref{sec:opt:discussion} below.

We prove \Cref{theorem:deadline} in \cref{app:pf_theorem_deadline}.
Below, we give an intuitive sketch.

\begin{proof}[Sketch proof]
	Fix a non-deadline mechanism $(x,X)$ with $x \leq u^0$, and assume for simplicity that $x \geq u^\star$.
	We will show that $(x,X)$ is dominated by the deadline mechanism $\bigl( x^\dag, X^\dag \bigr)$ whose deadline $T$ satisfies
	\begin{equation*}
		\underbrace{ \left( 1 - e^{-rT} \right) u^0 + e^{-rT} u^\star }
		_{\textstyle = X_0^\dag \quad\text{by \eqref{eq:deadline_X}} }
		= X_0 .
	\end{equation*}
	This mechanism is a \emph{front-loading} of $(x,X)$:
	the pre-disclosure flow has the same present value $X_0 = r \int_0^\infty e^{-rt} x_t \dd t$,
	but is higher early and lower late,
	as depicted in \Cref{fig:deadline_sketch:U}.
	\Utrue													
	\Udltrue												
	\begin{figure}
		\begin{subfigure}{0.48\textwidth}
			\centering
			\Uhatfalse										
			\Uhatdlfalse									



\ifUdl
	\Ttrue
\fi
\ifUhatdl
	\Ttrue
\fi
\newif\ifdlblue
\dlbluetrue
\ifU\else
	\ifUhat\else
		\dlbluefalse
	\fi
\fi




\begin{tikzpicture}[scale=1, line cap=round,
	declare function={
		Uhat_orig(\t,\k,\eta,\alph,\bet,\cons)				
			= \k * 2.71828^(-\eta*\t)
			* sin( deg( \alph * \t - \bet ) )
			+ \cons ;
		U_orig(\t,\r,\k,\eta,\alph,\bet,\cons)				
			= ((\r+\eta)/\r) * \k
			* 2.71828^(-\eta*\t)
			* sin( deg( \alph * \t - \bet ) )
			- ((\alph*\k)/\r)
			* 2.71828^(-\eta*\t)
			* cos( deg( \alph*\t - \bet ) )
			+ \cons;
		Uhat_deadline(\t,\uzero,\ustar,\T,\r)				
			= ( 1 - 2.71828^(-\r*(\T-\t)) )
			* \uzero
			+ 2.71828^(-\r*(\T-\t))
			* \ustar;
		T_fn(\uzero,\ustar,\r,\Uhatzero)					
			= - (1/\r) 
			* ln( (\uzero-\Uhatzero)
			/ (\uzero-\ustar) );
		}]

\pgfmathsetmacro{\ticklength}{1/14};						
\pgfmathsetmacro{\samples}{100};							

\pgfmathsetmacro{\Ulabelpos}{1.3};							
\pgfmathsetmacro{\Udllabelpos}{1.0};						
\pgfmathsetmacro{\Uhatlabelpos}{3.2};						
\pgfmathsetmacro{\Uhatdllabelpos}{0.8};						

\pgfmathsetmacro{\uzero}{4.5};								
\pgfmathsetmacro{\ustar}{0.5};								
\pgfmathsetmacro{\r}{0.18};									
\pgfmathsetmacro{\k}{0.8};									
\pgfmathsetmacro{\eta}{1.0};								
\pgfmathsetmacro{\alph}{1.4};								
\pgfmathsetmacro{\bet}{-0.9};								
\pgfmathsetmacro{\cons}{0.14 * (\r+\eta)/\r};				

\pgfmathsetmacro{\Uhatzero}
	{Uhat_orig(0,\k,\eta,\alph,\bet,\cons)};
\pgfmathsetmacro{\T}
	{T_fn(\uzero,\ustar,\r,\Uhatzero)};

\draw[->] (0.0,0.0) -- (5.2,0.0);							
\draw (5.2,0.0) node[anchor=west] {$t$};					
\draw[->] (0.0,0.0) -- (0.0,5.2);							
\draw (0.0,5.2) node[anchor=south] {$u$};					

\ifUhat
	\draw													
		[domain=0.0:5.0, variable=\t,
		samples=\samples, very thick, opacity=0.4]
		plot ( {\t},
		{ Uhat_orig(\t,\k,\eta,\alph,\bet,\cons) } );
	\draw[opacity=0.5]										
		( { \Uhatlabelpos },
		{ Uhat_orig(%
		\Uhatlabelpos,\k,\eta,\alph,\bet,\cons) } )
		node[anchor=south]
		{$X$};
\fi

\ifUhatdl
	\ifdlblue
		\draw												
			[domain=0.0:\T, variable=\t,
			samples=\samples, very thick, blue,
			opacity=0.4]
			plot ( {\t},
			{ Uhat_deadline(\t,\uzero,\ustar,\T,\r) } );
		\draw[-, very thick, blue, opacity=0.4]				
			( \T, \ustar )
			-- ( 5.0, \ustar );
		\draw[blue,opacity=0.6]								
			( { \Uhatdllabelpos },
			{ Uhat_deadline(%
				\Uhatdllabelpos,\uzero,\ustar,\T,\r) - 0.1 } )
			node[anchor=north]
			{$X^\dag$};
	\else
		\draw												
			[domain=0.0:\T, variable=\t,
			samples=\samples, very thick, orange]
			plot ( {\t},
			{ Uhat_deadline(\t,\uzero,\ustar,\T,\r) } );
		\draw[-, very thick, orange]						
			( \T, \ustar )
			-- ( 5.0, \ustar );
		\draw[orange]										
			( { \Uhatdllabelpos },
			{ Uhat_deadline(%
				\Uhatdllabelpos,\uzero,\ustar,\T,\r) - 0.1 } )
			node[anchor=south west]
			{$X$};
	\fi
\fi

\ifU
	\draw													
		[domain=0.0:5.0, variable=\t,
		samples=\samples, very thick, dotted]
		plot ( {\t},
		{ U_orig(\t,\r,\k,\eta,\alph,\bet,\cons) } );
	\draw													
		( { \Ulabelpos },
		{ U_orig(%
		\Ulabelpos,\r,\k,\eta,\alph,\bet,\cons) } )
		node[anchor=west]
		{$x$};
\fi

\ifUdl
	\ifdlblue
		\draw[-, very thick, dotted, blue]					
			( 0, \uzero )
			-- ( \T, \uzero );
		\draw[-, very thick, dotted, blue]					
			( \T, \ustar )
			-- ( 5.0, \ustar );
		\draw[blue]											
			( { \Udllabelpos },
			{ \uzero } )
			node[anchor=south]
			{$x^\dag$};
	\else
		\draw[-, very thick, dotted]						
			( 0, \uzero )
			-- ( \T, \uzero );
		\draw[-, very thick, dotted]						
			( \T, \ustar )
			-- ( 5.0, \ustar );
		\draw												
			( { \Udllabelpos },
			{ \uzero } )
			node[anchor=south]
			{$x$};
	\fi
\fi

\ifT
	\draw[-] ( \T, - \ticklength )							
		-- ( \T, \ticklength );
	\draw ( \T, 0 )											
		node[anchor=north] {$\strut T$};
\else
	\draw ( \T, 0 )											
		node[anchor=north] {\phantom{$\strut T$}};
\fi

\draw[-] ( - \ticklength, \uzero )							
	-- ( \ticklength, \uzero );
\draw ( 0, \uzero )											
	node[anchor=east] {$\strut u^0$};
\draw[-] ( - \ticklength, \ustar )							
	-- ( \ticklength, \ustar );
\draw ( 0, \ustar )											
	node[anchor=east] {$\strut u^\star$};

\end{tikzpicture}

			\caption{$x^\dag$ is higher early and lower late.}
			\label{fig:deadline_sketch:U}
		\end{subfigure}
		\hfill
		\begin{subfigure}{0.48\textwidth}
			\centering
			\Uhattrue										
			\Uhatdltrue										



\ifUdl
	\Ttrue
\fi
\ifUhatdl
	\Ttrue
\fi
\newif\ifdlblue
\dlbluetrue
\ifU\else
	\ifUhat\else
		\dlbluefalse
	\fi
\fi




\begin{tikzpicture}[scale=1, line cap=round,
	declare function={
		Uhat_orig(\t,\k,\eta,\alph,\bet,\cons)				
			= \k * 2.71828^(-\eta*\t)
			* sin( deg( \alph * \t - \bet ) )
			+ \cons ;
		U_orig(\t,\r,\k,\eta,\alph,\bet,\cons)				
			= ((\r+\eta)/\r) * \k
			* 2.71828^(-\eta*\t)
			* sin( deg( \alph * \t - \bet ) )
			- ((\alph*\k)/\r)
			* 2.71828^(-\eta*\t)
			* cos( deg( \alph*\t - \bet ) )
			+ \cons;
		Uhat_deadline(\t,\uzero,\ustar,\T,\r)				
			= ( 1 - 2.71828^(-\r*(\T-\t)) )
			* \uzero
			+ 2.71828^(-\r*(\T-\t))
			* \ustar;
		T_fn(\uzero,\ustar,\r,\Uhatzero)					
			= - (1/\r) 
			* ln( (\uzero-\Uhatzero)
			/ (\uzero-\ustar) );
		}]

\pgfmathsetmacro{\ticklength}{1/14};						
\pgfmathsetmacro{\samples}{100};							

\pgfmathsetmacro{\Ulabelpos}{1.3};							
\pgfmathsetmacro{\Udllabelpos}{1.0};						
\pgfmathsetmacro{\Uhatlabelpos}{3.2};						
\pgfmathsetmacro{\Uhatdllabelpos}{0.8};						

\pgfmathsetmacro{\uzero}{4.5};								
\pgfmathsetmacro{\ustar}{0.5};								
\pgfmathsetmacro{\r}{0.18};									
\pgfmathsetmacro{\k}{0.8};									
\pgfmathsetmacro{\eta}{1.0};								
\pgfmathsetmacro{\alph}{1.4};								
\pgfmathsetmacro{\bet}{-0.9};								
\pgfmathsetmacro{\cons}{0.14 * (\r+\eta)/\r};				

\pgfmathsetmacro{\Uhatzero}
	{Uhat_orig(0,\k,\eta,\alph,\bet,\cons)};
\pgfmathsetmacro{\T}
	{T_fn(\uzero,\ustar,\r,\Uhatzero)};

\draw[->] (0.0,0.0) -- (5.2,0.0);							
\draw (5.2,0.0) node[anchor=west] {$t$};					
\draw[->] (0.0,0.0) -- (0.0,5.2);							
\draw (0.0,5.2) node[anchor=south] {$u$};					

\ifUhat
	\draw													
		[domain=0.0:5.0, variable=\t,
		samples=\samples, very thick, opacity=0.4]
		plot ( {\t},
		{ Uhat_orig(\t,\k,\eta,\alph,\bet,\cons) } );
	\draw[opacity=0.5]										
		( { \Uhatlabelpos },
		{ Uhat_orig(%
		\Uhatlabelpos,\k,\eta,\alph,\bet,\cons) } )
		node[anchor=south]
		{$X$};
\fi

\ifUhatdl
	\ifdlblue
		\draw												
			[domain=0.0:\T, variable=\t,
			samples=\samples, very thick, blue,
			opacity=0.4]
			plot ( {\t},
			{ Uhat_deadline(\t,\uzero,\ustar,\T,\r) } );
		\draw[-, very thick, blue, opacity=0.4]				
			( \T, \ustar )
			-- ( 5.0, \ustar );
		\draw[blue,opacity=0.6]								
			( { \Uhatdllabelpos },
			{ Uhat_deadline(%
				\Uhatdllabelpos,\uzero,\ustar,\T,\r) - 0.1 } )
			node[anchor=north]
			{$X^\dag$};
	\else
		\draw												
			[domain=0.0:\T, variable=\t,
			samples=\samples, very thick, orange]
			plot ( {\t},
			{ Uhat_deadline(\t,\uzero,\ustar,\T,\r) } );
		\draw[-, very thick, orange]						
			( \T, \ustar )
			-- ( 5.0, \ustar );
		\draw[orange]										
			( { \Uhatdllabelpos },
			{ Uhat_deadline(%
				\Uhatdllabelpos,\uzero,\ustar,\T,\r) - 0.1 } )
			node[anchor=south west]
			{$X$};
	\fi
\fi

\ifU
	\draw													
		[domain=0.0:5.0, variable=\t,
		samples=\samples, very thick, dotted]
		plot ( {\t},
		{ U_orig(\t,\r,\k,\eta,\alph,\bet,\cons) } );
	\draw													
		( { \Ulabelpos },
		{ U_orig(%
		\Ulabelpos,\r,\k,\eta,\alph,\bet,\cons) } )
		node[anchor=west]
		{$x$};
\fi

\ifUdl
	\ifdlblue
		\draw[-, very thick, dotted, blue]					
			( 0, \uzero )
			-- ( \T, \uzero );
		\draw[-, very thick, dotted, blue]					
			( \T, \ustar )
			-- ( 5.0, \ustar );
		\draw[blue]											
			( { \Udllabelpos },
			{ \uzero } )
			node[anchor=south]
			{$x^\dag$};
	\else
		\draw[-, very thick, dotted]						
			( 0, \uzero )
			-- ( \T, \uzero );
		\draw[-, very thick, dotted]						
			( \T, \ustar )
			-- ( 5.0, \ustar );
		\draw												
			( { \Udllabelpos },
			{ \uzero } )
			node[anchor=south]
			{$x$};
	\fi
\fi

\ifT
	\draw[-] ( \T, - \ticklength )							
		-- ( \T, \ticklength );
	\draw ( \T, 0 )											
		node[anchor=north] {$\strut T$};
\else
	\draw ( \T, 0 )											
		node[anchor=north] {\phantom{$\strut T$}};
\fi

\draw[-] ( - \ticklength, \uzero )							
	-- ( \ticklength, \uzero );
\draw ( 0, \uzero )											
	node[anchor=east] {$\strut u^0$};
\draw[-] ( - \ticklength, \ustar )							
	-- ( \ticklength, \ustar );
\draw ( 0, \ustar )											
	node[anchor=east] {$\strut u^\star$};

\end{tikzpicture}

			\caption{$X^\dag \leq X$, with equality at $0$.}
			\label{fig:deadline_sketch:Uhat}
		\end{subfigure}
		\caption{Sketch proof of \Cref{theorem:deadline}: front-loading by a deadline mechanism.}
		\label{fig:deadline_sketch}
	\end{figure}
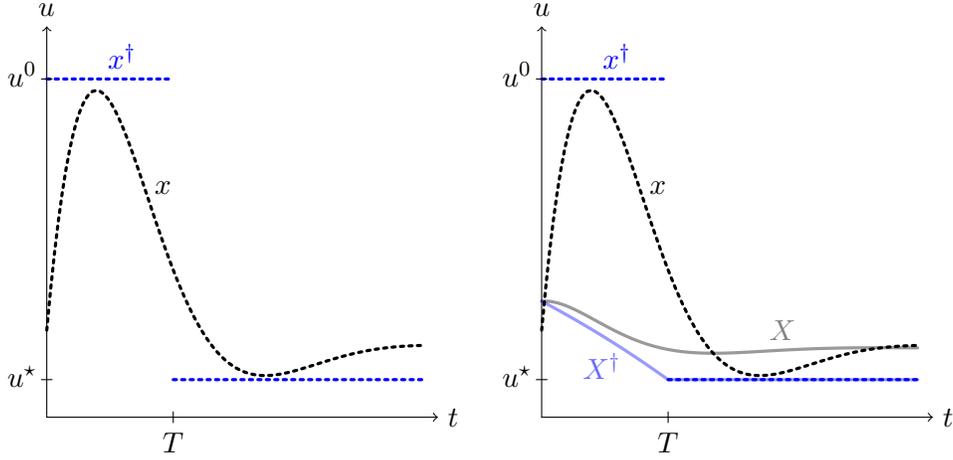%
	As time passes, the present value
	$X^\dag_t = r \int_t^\infty e^{-r(s-t)} x^\dag_s \dd s$
	of the remainder of the front-loaded flow $x^\dag$ rapidly diminishes,
	so that $X^\dag$ is weakly below $X$ in every period (see \Cref{fig:deadline_sketch:Uhat}).

	The principal's period-$t$ continuation payoff if the agent never discloses is
	\begin{equation*}
		Y_t
		\coloneqq r \int_t^\infty e^{-r(s-t)} F^0(x_s) \dd s
		= F^0\left( r \int_t^\infty e^{-r(s-t)} x_s \dd s \right)
 		= F^0( X_t ) ,
	\end{equation*}
	where the middle equality holds by the affineness of $F^0$. Her payoff is thus
	\begin{align*}
		\Pi_G(x,X)
		&= \E_G\left(
		Y_0 
		- e^{-r\tau} Y_\tau
		+ e^{-r\tau} F^1\left( X_\tau \right)
		\right)
		\\
		&= F^0(X_0)
		+ \E_G\left( e^{-r\tau} \left[ F^1 - F^0 \right](X_\tau) \right) .
		\tag*{}
		\label{eq:affine_payoff}
	\end{align*}
	Front-loading lowers $X$ toward $u^\star$, leaving $X_0$ unchanged.
	Since $F^1 - F^0$ is (strictly) decreasing on $\left[ u^\star, u^0 \right]$ by definition of $u^\star$,
	this improves the principal's payoff whatever the distribution $G$.
	The improvement is in fact strict for any full-support distribution.
	Thus $\bigl( x^\dag, X^\dag \bigr)$ dominates $(x,X)$.
\end{proof}

The key simplification in the above sketch is the assumption that $x \geq u^\star$. The proof in \cref{app:pf_theorem_deadline} dispenses with this assumption by choosing the deadline $T$ to satisfy $\bigl( 1 - e^{-rT} \bigr) u^0 + e^{-rT} u^\star = \max\{X_0,u^\star\}$, and showing (in a few extra steps) that this yields a dominating mechanism even if $x \ngeq u^\star$.

\Cref{theorem:deadline} provides a rationale for deadline mechanisms even when $F^0$ is not exactly affine: provided $F^0$ has only moderate curvature, the principal loses little by restricting attention to deadline mechanisms.

\subsection{Undominated deadlines}
\label{sec:deadline:undom_deadline}

\Cref{theorem:deadline} asserts that only deadline mechanisms are undominated when $F^0$ is affine, but does not adjudicate between deadlines.
In fact, not every deadline mechanism is undominated.
Consider a deadline $T$ so early that $X_0 < u^1$.
Since the disclosure reward $X$ decreases over time in a deadline mechanism, we have $X_\tau < u^1$ whatever the time $\tau$ of the breakthrough.

The principal can do better by using the later deadline $\underline{T}$ that satisfies $X_0 = u^1$, or explicitly (using \cref{eq:deadline_X} on \cpageref{eq:deadline_X})
\begin{equation*}
	\left( 1 - e^{-r\underline{T}} \right) u^0
	+ e^{-r\underline{T}} u^\star
	= u^1 .
\end{equation*}
This raises the agent's disclosure reward $X$ toward $u^1$,
improving the principal's post-disclosure payoff $F^1\left( X_\tau \right)$
whatever the breakthrough time $\tau$ (strictly if $\tau<\underline{T}$).
The principal also enjoys the high pre-disclosure flow $F^0\left( u^0 \right) > F^0\left( u^\star \right)$ for longer,
which is beneficial in case of a late breakthrough.

Undominatedness thus requires a deadline no earlier than $\underline{T}$.
This condition is not only necessary, but also sufficient:

\begin{proposition}
	\label{proposition:dl_charac}
	If the old frontier $F^0$ is affine on $\left[ 0, u^0 \right]$,
	then a mechanism is undominated iff
	it is a deadline mechanism with deadline
	$T \in \left[ \underline{T}, \infty \right]$.
\end{proposition}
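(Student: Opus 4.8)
The plan is to prove the two implications separately. For the direction that an undominated mechanism must be a deadline mechanism with deadline $T \geq \underline{T}$, I would invoke \Cref{theorem:deadline} to conclude it is \emph{some} deadline mechanism, and then show that a deadline mechanism with deadline $T < \underline{T}$ is dominated — namely by the deadline mechanism with deadline $\underline{T}$. The comparison is realisation-by-realisation in the breakthrough time $\tau$: by \eqref{eq:deadline_X} the later deadline yields a (weakly) higher disclosure reward $X_\tau$ at every $\tau$, strictly for $\tau < \underline{T}$, with both rewards lying in the range $\left[ u^\star, u^1 \right]$ on which $F^1$ is strictly increasing; and it delivers the efficient pre-disclosure flow $F^0\left( u^0 \right) > F^0\left( u^\star \right)$ for longer, which strictly helps whenever $\tau > T$. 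Splitting the principal's payoff into a pre- and a post-disclosure term, the two effects combine to give a strict improvement at every $\tau \geq 0$, hence a strict improvement of $\Pi_G$ under every distribution $G$, so in particular domination.

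The reverse direction — that every deadline mechanism with $T \in \left[ \underline{T}, \infty \right]$ is undominated — is the substantive one. Fix such a $T$, let $(x,X)$ be the corresponding deadline mechanism, and suppose toward a contradiction that some incentive-compatible mechanism dominates it. By the capping argument in the proof of \Cref{lemma:lequ0}, we may take the dominating mechanism $(x',X')$ to satisfy $x' \leq u^0$ a.e., so that the affine-case payoff identity $\Pi_G(x,X) = F^0(X_0) + \E_G\bigl( e^{-r\tau}[F^1-F^0](X_\tau) \bigr)$ from the proof sketch of \Cref{theorem:deadline} applies to both mechanisms. First, evaluating domination at point masses $\delta_t$ and sending $t \to \infty$ annihilates the discounted term, leaving $F^0(X'_0) \geq F^0(X_0)$, hence $X'_0 \geq X_0$ since $F^0$ is strictly increasing on $\left[ 0, u^0 \right]$. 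Second, since $T \geq \underline{T}$ we have $X_0 \geq u^1$; evaluating domination at $\delta_0$ gives $F^1(X'_0) \geq F^1(X_0)$, and as $F^1$ is strictly decreasing on $\left[ u^1, u^0 \right]$ while $X'_0 \leq u^0$, this forces $X'_0 \leq X_0$. So $X'_0 = X_0$, and the $F^0(X_0)$ terms cancel from the comparison henceforth.

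Third, because $x \equiv u^0$ on $[0,T]$ the reward $X$ there declines at the fastest admissible rate, so a Gr\"onwall comparison (using $x' \leq u^0$ and $X'_0 = X_0$) yields $X'_t \geq X_t$ for every $t$. Fourth, evaluating domination at each $\delta_t$ and cancelling the now-equal $F^0$ terms gives $[F^1-F^0](X'_t) \geq [F^1-F^0](X_t)$ for all $t$; since $F^1-F^0$ is concave with $u^\star$ a strict — hence unique — maximiser, it is strictly decreasing on $\left[ u^\star, u^0 \right]$, so on $[0,T]$, where $X_t \geq u^\star$ and thus $X'_t \geq u^\star$ by the previous step, this forces $X'_t \leq X_t$, i.e. $X'_t = X_t$; and on $(T,\infty)$ the right-hand side equals the maximum value $[F^1-F^0](u^\star)$, forcing $X'_t = u^\star = X_t$ there as well. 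Hence $X' = X$, so $x' = x$ a.e., and $(x',X')$ cannot strictly dominate $(x,X)$ under any distribution — the desired contradiction. (When $T = \infty$ the first two steps already give $X'_0 = u^0$, which alone forces $x' \equiv u^0$.)

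I expect the crux to be the last two steps of the second direction: the comparison principle that converts "$X'$ decreases no faster than the extremal rate" into the clean pointwise bound $X' \geq X$, and the subsequent exploitation of the strict monotonicity of $F^1-F^0$ on either side of $u^\star$ — which is precisely where the standing assumption that $u^\star$ is a strict local maximum of $F^1-F^0$ does its work. Some additional bookkeeping will be needed to exclude degenerate behaviour such as $X'$ touching $0$ (where $F^0,F^1$ may be $-\infty$), which is ruled out by finiteness of $\Pi_{\delta_t}$ evaluated at the deadline mechanism.
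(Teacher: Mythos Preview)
Your argument is correct and your handling of the first direction matches the paper's. For the converse (showing $T\in[\underline{T},\infty]$ is undominated), however, you take a genuinely different route.

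The paper first reduces --- via the fact that a dominated mechanism is dominated by an undominated one, together with \Cref{theorem:deadline} --- to showing that no \emph{other deadline mechanism} dominates $(x^T,X^T)$. It then dispatches this in one stroke: for finite $T$, the point mass at $T-\underline{T}$ makes $(x^T,X^T)$ attain the first-best (pre-disclosure flow $F^0(u^0)$, post-disclosure flow $F^1(u^1)$ since $X^T_{T-\underline{T}}=u^1$), so every other deadline is strictly worse there; for $T=\infty$, point masses at arbitrarily late times make any finite deadline strictly worse. Your route instead takes a putative dominator $(x',X')$ and, by probing with point masses $\delta_t$ (at $t\to\infty$, at $t=0$, and then at each $t$), successively pins down $X'_0=X_0$, then $X'=X$, forcing equality. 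What you gain is that you never invoke the first-best observation and never need to guess the right test distribution; what the paper gains is brevity --- once the reduction to deadline mechanisms is made, the first-best argument is two lines.

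One point you should make explicit: to write the dominator as $(x',X')$ with $X'$ the present value of $x'$ and apply the affine payoff identity, you need $X^{1\prime}=X^{0\prime}$. The capping argument from \Cref{lemma:lequ0} alone does not give this; you must pass to an undominated dominator (via the existence result in the supplemental appendices) and then invoke \Cref{theorem:indiff}. The paper's proof needs the same reduction, so this is not a conceptual gap, but your proposal should state it. Incidentally, once you make that reduction you may as well go all the way to deadline mechanisms via \Cref{theorem:deadline}, which would also dissolve the degeneracy worries you flag at the end (since then $X'\geq u^\star$ automatically). Finally, a small overstatement: your Gr\"onwall step yields $X'_t\geq X_t$ only on $[0,T]$, not ``for every $t$'' --- but you only use it on $[0,T]$, so the argument stands.
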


The proof is in \cref{app:deadline_charac}.

\subsection{Optimal deadlines}
\label{sec:deadline:opt_deadline}

\Cref{proposition:dl_charac} narrows the search for an optimal mechanism
to deadline mechanisms with a sufficiently late deadline.
The optimal choice among these depends on the breakthrough distribution $G$.

A late deadline is beneficial if the breakthrough occurs late,
as the efficient high utility $u^0$ is then provided for a long time.
The cost is that in case of an early breakthrough,
the agent must be given a utility of $X > u^1$ forever.
A first-order condition balances this trade-off:

\begin{proposition}
	\label{proposition:opt_deadline}
	Assume that the old frontier $F^0$ is affine on $\left[ 0, u^0 \right]$,
	that the new frontier $F^1$ is differentiable on $\left( 0, u^0 \right)$,
	and that $u^\star > 0$.
	A mechanism is optimal for $G$ iff it is a deadline mechanism and satisfies $\E_G\left( F^{1\prime}\left( X_\tau \right) \right) = 0$.
\end{proposition}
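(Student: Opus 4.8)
The plan is to combine the earlier structural results with a direct maximisation over the (one-dimensional) family of deadlines. By \Cref{theorem:deadline} and \Cref{proposition:dl_charac}, the undominated mechanisms are exactly the deadline mechanisms with deadline $T \in \left[ \underline T, \infty \right]$; and since an optimal mechanism exists and is undominated (\cref{suppl:undom_opt_properties}), $\Pi_G$ attains its maximum over incentive-compatible mechanisms inside this family. So it suffices to show: letting $\Psi(T)$ denote $\Pi_G$ of the deadline mechanism with deadline $T$, the maximisers of $\Psi$ over $\left[ \underline T, \infty \right]$ are precisely the deadlines satisfying $\E_G\!\left( F^{1\prime}( X_\tau ) \right) = 0$. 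I would first record what the hypotheses buy: writing $\beta>0$ for the (constant) slope of $F^0$ on $\left( 0, u^0 \right)$, one has $0 < u^\star < u^1 < u^0$ with $F^{1\prime}( u^\star ) = \beta$, and $F^{1\prime}$ is continuous and non-increasing on $\left( 0, u^0 \right)$ with $F^{1\prime} > 0$ on $\left( 0, u^1 \right)$, $F^{1\prime}( u^1 ) = 0$ and $F^{1\prime} < 0$ on $\left( u^1, u^0 \right)$. (In particular $u^\star < u^0$: otherwise $u^\star = u^0$, contradicting the genericity assumption, since $F^1 - F^0$ is strictly decreasing just to the left of $u^0$; and then $F^{1\prime}( u^\star ) = \beta$ by the definition of $u^\star$ and differentiability of $F^1$ at $u^\star$, whence $u^\star < u^1$ as $F^{1\prime}$ is non-increasing with $F^{1\prime}( u^\star ) = \beta > 0$.)

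\emph{Computing $\Psi'$.} By \eqref{eq:deadline_X}, the deadline mechanism with deadline $T$ has reward path $X_t = u^0 - e^{-r(T-t)}\left( u^0 - u^\star \right)$ for $t \leq T$ and $X_t = u^\star$ for $t > T$; so $t \mapsto X_t$ is continuous, $T \mapsto X_t$ is continuous and non-decreasing for each fixed $t$, and $X_t \in \left[ u^\star, u^0 \right) \subset \left( 0, u^0 \right)$ for $T < \infty$. Using affineness of $F^0$ on $\left[ 0, u^0 \right]$, the payoff is $\Pi_G( x, X ) = F^0( X_0 ) + \E_G\!\left( e^{-r\tau}\left[ F^1 - F^0 \right]( X_\tau ) \right)$, as in the proof sketch of \Cref{theorem:deadline}. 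Differentiating in $T$ under the expectation — which is legitimate because $\partial_T X_t$ equals $r\left( u^0 - u^\star \right) e^{-r(T-t)}$ for $t < T$ and vanishes for $t > T$, and because $X_T = u^\star$ renders the integrand $C^1$ in $T$ even where $G$ has an atom (mass at $t = T$ enters with the factor $F^{1\prime}( u^\star ) - \beta = 0$) — and substituting $F^{0\prime} \equiv \beta = F^{1\prime}( u^\star )$, a short computation collapses everything to
\begin{equation*}
	\Psi'(T) \;=\; r\left( u^0 - u^\star \right) e^{-rT}\, \E_G\!\left( F^{1\prime}( X_\tau ) \right),
	\qquad T \in \left[ \underline T, \infty \right).
\end{equation*}

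\emph{From the first-order condition to optimality.} Thus $\Psi'(T)$ has the sign of $\E_G\!\left( F^{1\prime}( X_\tau ) \right)$, which is non-increasing in $T$ (because $F^{1\prime}$ is non-increasing and $X_\tau$ non-decreasing in $T$); so $\Psi$ is single-peaked on $\left[ \underline T, \infty \right]$. At $T = \underline T$, $X_\tau \leq X_0 = u^1$ for every $\tau$, hence $\E_G\!\left( F^{1\prime}( X_\tau ) \right) \geq 0$; and $X_\tau \to u^0$ for each $\tau$ as $T \to \infty$, so $\E_G\!\left( F^{1\prime}( X_\tau ) \right) \to \lim_{u \uparrow u^0} F^{1\prime}( u ) < 0$. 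Hence $\Psi$ is maximised at a finite deadline, and its set of maximisers over $\left[ \underline T, \infty \right]$ is exactly $\left\{ T \geq \underline T : \E_G\!\left( F^{1\prime}( X_\tau ) \right) = 0 \right\}$. Finally, this condition already forces $T \in \left[ \underline T, \infty \right)$: when $T < \underline T$ one has $X_\tau \leq X_0 < u^1$ for all $\tau$, so $F^{1\prime}( X_\tau ) > 0$ and the expectation is strictly positive; when $T = \infty$ the expectation equals $\lim_{u \uparrow u^0} F^{1\prime}( u ) < 0$. Together with the reduction above, this gives the stated equivalence. The step I expect to be the main obstacle is the computation of $\Psi'$: one must differentiate through the expectation over the random breakthrough time while the reward path has a (smooth) corner at $t = T$, checking that the endpoint contributions cancel — with the cancellation at atoms of $G$ resting precisely on $F^{1\prime}( u^\star ) = \beta$. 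The rest is routine monotone comparative statics.
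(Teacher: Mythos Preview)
Your proof is correct and follows essentially the same route as the paper: compute the derivative of the principal's payoff in the deadline $T$, show it has the sign of $\E_G\!\left(F^{1\prime}(X_\tau)\right)$, and conclude via single-crossing and the endpoint checks. The only organisational difference is that the paper first proves a more general first-order condition (\hyperref[proposition:opt_deadline_general]{Proposition~\ref*{proposition:opt_deadline}$'$}) using one-sided derivatives $\pi_G^+,\pi_G^-$ and then specialises via \Cref{remark:foc_general_specific}, whereas you specialise up front---your cancellation $F^{1\prime}(u^\star)=\beta$ is precisely what that remark uses to collapse the general condition~\eqref{eq:foc_general} to $\E_G(F^{1\prime}(X_\tau))=0$.
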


In other words, the new technology should be operated optimally \emph{on average.}
This is a restriction on the deadline $T$ because $X$ is a function of it,
as described by \cref{eq:deadline_X} on \cpageref{eq:deadline_X}.

We prove \Cref{proposition:opt_deadline} in \cref{app:opt_deadline_pf}
by deriving a general first-order condition that is valid without any auxiliary assumptions,
then showing that it can be written as $\E_G\left( F^{1\prime}\left( X_\tau \right) \right) = 0$ when $F^0$ is affine, $F^1$ is differentiable and $u^\star$ is interior.

In the same \hyperref[app:opt_deadline_pf]{appendix}, we derive comparative statics for optimal deadlines:
they become later when the breakthrough distribution $G$ becomes later in the sense of first-order stochastic dominance.
This improves the agent's ex-ante payoff $X_0$, as can be seen from \cref{eq:deadline_X} on \cpageref{eq:deadline_X}.

\section{Optimal mechanisms in general}
\label{sec:opt}

In this section, we show that optimal mechanisms in the general (non-affine) case
exhibit a graduated deadline structure: absent disclosure, the agent's utility still declines from $u^0$ toward $u^\star$, but not necessarily abruptly.
Given the breakthrough distribution, we describe the optimal path.

\subsection{Qualitative features of optimal mechanisms}
\label{sec:opt:undominated}

Recall from §\ref{sec:model} that $u^\star$ denotes the greatest $u \in \left[ 0, u^0 \right]$
at which the old and new frontiers $F^0,F^1$ have equal slopes,
as depicted in \Cref{fig:frontiers} (\cpageref{fig:frontiers}).

\begin{theorem}
	\label{theorem:opt}
	Any mechanism $(x,X)$ that is optimal for some distribution $G$ with $G(0)=0$ and unbounded support
	has $x$ decreasing
	\begin{equation*}
		\text{from}\quad
		\lim_{t \to 0} x_t = u^0
		\quad \text{toward} \quad
		\lim_{t \to \infty} x_t = u^\star .%
			\footnote{Recall that a mechanism has multiple \emph{versions} (\cref{footnote:mech_defn_formal}, \cpageref{footnote:mech_defn_formal}).
			\Cref{theorem:opt} asserts that any optimal mechanism has a version with the stated properties.
			We focus on $\lim_{t \to 0} x_t$ rather than $x_0$
			because `$x_0=u^0$' is vacuous: \emph{any} mechanism has a version satisfying it.}
	\end{equation*}
\end{theorem}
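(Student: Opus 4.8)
By \Cref{theorem:indiff} and \Cref{lemma:lequ0} I would first write an optimal mechanism (being undominated) as a pair $(x,X)$ with $0\le x\le u^0$ a.e.\ and $X_t=r\int_t^\infty e^{-r(s-t)}x_s\,\dd s$; then $t\mapsto X_t$ is absolutely continuous, valued in $[0,u^0]$, with $x_t=X_t-X_t'/r$ a.e. Writing the principal's payoff as
\[
	\Pi_G(x,X)=r\int_0^\infty e^{-rt}F^0(x_t)\bigl(1-G(t)\bigr)\,\dd t+\int_0^\infty e^{-r\tau}F^1(X_\tau)\,\dd G(\tau),
\]
an admissible perturbation $x\mapsto x+\eps h$ induces $X\mapsto X+\eps H$ (again of the reduced form, hence IC), $H_t=r\int_t^\infty e^{-r(s-t)}h_s\,\dd s$, and by Fubini the first-order change in $\Pi_G$ is $r\int_0^\infty e^{-rt}h_t\,\Psi(t)\,\dd t$, where $\Psi(t)\coloneqq \partial F^0(x_t)\bigl(1-G(t)\bigr)+\int_0^t \partial F^1(X_\tau)\,\dd G(\tau)$ (supergradient selections $\partial F^j$). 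Optimality forces, for a.e.\ $t$: $\Psi(t)=0$ where $0<x_t<u^0$; $\Psi(t)\ge 0$ where $x_t=u^0$; $\Psi(t)\le 0$ where $x_t=0$.

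Next I would show $x$ admits a non-increasing version. If not, there are $a<b$ on which $x$ takes (on positive-measure sets) a strictly smaller value just after $a$ than just after $b$; transferring a small amount of discounted pre-disclosure utility from the later set to the earlier one, keeping $r\int_0^\infty e^{-rs}x_s\,\dd s$ fixed, leaves $X_t$ unchanged for $t\le a$ and beyond the perturbation, merely lowering $X$ on the intervening interval. Concavity of $F^0$ and monotonicity of $1-G$ make the pre-disclosure term weakly larger under every $G$ and strictly larger under some; the only adverse effect is the decrease of $X$ on the intervening interval, which I would control by localising the exchange to an interval on which $X$ stays on one side of $u^1$ (possible as $X$ is continuous) and noting that on the side where lowering $X$ hurts the post-disclosure term, $x$ has already reached its tail value, so there is no non-monotonicity there to correct. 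This contradicts undominatedness. Fixing the non-increasing version, $x_{0+}\coloneqq\lim_{t\downarrow 0}x_t$ and $x_\infty\coloneqq\lim_{t\to\infty}x_t$ exist in $[0,u^0]$, $X$ is non-increasing, and $X_t\to x_\infty$.

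For the initial value: if $x_{0+}<u^0$ then also $x_{0+}>0$ (else $x\equiv 0$, excluded since $F^0$ admits $\infty$ as a supergradient at $0$, making $\Psi$ positive near $0$ against the corner condition $\Psi\le 0$), so $0<x_t\le x_{0+}<u^0$ for small $t$ and $\partial F^0(x_t)\ge\partial F^0(x_{0+})>0$; since moreover $\int_0^t\partial F^1(X_\tau)\,\dd G(\tau)\to 0$ as $t\downarrow 0$ \emph{precisely because $G(0)=0$}, we get $\Psi(t)>0$ for all small $t$, contradicting $\Psi(t)=0$ where $x$ is interior. Hence $x_{0+}=u^0$. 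For the tail value: $x_\infty<u^0$ (otherwise $x_t\ge\lim_{s\to\infty}x_s=u^0$ forces $x\equiv u^0$, whence $X\equiv u^0$ and $\Psi(t)\to\partial F^1(u^0)<0$, against the corner condition under unbounded-support $G$), and likewise $x_\infty>0$, so $x_t\in(0,u^0)$ for large $t$ and $\Psi(t)=0$ there. As $t\to\infty$, $\partial F^0(x_t)(1-G(t))\to 0$, so $\int_0^\infty\partial F^1(X_\tau)\,\dd G(\tau)=0$ and the condition becomes $\partial F^0(x_t)(1-G(t))=\int_t^\infty\partial F^1(X_\tau)\,\dd G(\tau)$, i.e.\ $\partial F^0(x_t)=\E[\partial F^1(X_\tau)\mid\tau>t]$ --- here unboundedness of $\supp G$ matters, keeping the conditioning event non-trivial for every $t$. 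Letting $t\to\infty$, both $x_t$ and $X_\tau$ (for $\tau>t$) tend to $x_\infty$, so both sides converge to the slope at $x_\infty$: $\partial F^0(x_\infty)=\partial F^1(x_\infty)$. Thus $x_\infty$ is an equal-slope point of $F^0,F^1$ in $[0,u^0]$, so $x_\infty\le u^\star$; a further domination argument --- using the genericity assumption that $u^\star$ is a \emph{strict} local maximum of $F^1-F^0$ (hence $F^1-F^0$ is strictly decreasing on $(u^\star,u^0)$) together with the monotone descent of $x$ from $u^0$ --- rules out stalling below $u^\star$. Hence $\lim_{t\to\infty}x_t=u^\star$.

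I expect the monotonicity step to be the main obstacle: unlike the affine case of \Cref{theorem:deadline}, front-loading no longer leaves the post-disclosure term controlled (since $F^0(X_t)$ need not equal the continuation value $Y_t$), so the exchange must simultaneously extract the pre-disclosure gain and keep the post-disclosure effect non-negative --- which is where most of the continuous-time bookkeeping (atoms of $G$, null sets, localising around $u^1$) concentrates. A secondary difficulty is pinning the tail limit to $u^\star$ itself rather than to another equal-slope point, which is exactly what the genericity assumption secures.
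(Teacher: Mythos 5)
Your plan correctly identifies the Euler/KKT condition $\Psi(t)=[1-G(t)]\,\partial F^0(x_t)+\int_0^t\partial F^1(X_\tau)\,\dd G(\tau)=0$ as the engine, and your arguments for $\lim_{t\to 0}x_t=u^0$ (via the corner condition and $G(0)=0$) and for $\lim_{t\to\infty}x_t$ being an equal-slope point are in the right spirit. But there is a genuine gap at exactly the place you flag: the monotonicity step. Your exchange argument keeps $X$ fixed outside $(a,b)$ and lowers $X$ on $(a,b)$; where $X<u^1$ this lowering \emph{hurts} the post-disclosure term, and your escape --- ``on the side where lowering $X$ hurts, $x$ has already reached its tail value, so there is no non-monotonicity there to correct'' --- is circular: it assumes the very monotonicity you are trying to establish. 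There is no a priori reason an arbitrary candidate optimizer has $x$ constant where $X<u^1$. A second, related gap is pinning the tail limit to $u^\star$ \emph{itself}: taking $t\to\infty$ in the FOC gives at best that $x_\infty$ is \emph{an} equal-slope point, hence $x_\infty\le u^\star$; the ``further domination argument'' ruling out $x_\infty<u^\star$ is asserted, not supplied, and is not a consequence of the genericity assumption by itself (strict-local-max of $F^1-F^0$ at $u^\star$ says $F^1-F^0$ is strictly decreasing just to the \emph{right} of $u^\star$, not that stalling below is dominated).

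The paper's route is materially different and is designed precisely to sidestep these two gaps. It first \emph{constructs} a mechanism $(x^\dag,X^\dag)$ with $x^\dag$ decreasing and valued in $[u^\star,u^0]$ that satisfies the Euler equation with \emph{increasing} multipliers $\phi^0,\phi^1$ (Lemmata~\ref{lemma:existence_simple}--\ref{lemma:approx_F} and the \hyperref[corollary:existence]{existence corollary}), then invokes a uniqueness-of-multipliers half of the \hyperref[lemma:euler]{Euler lemma} to conclude that \emph{any} optimal mechanism shares the same $\phi^0,\phi^1$. Monotonicity of $x$ is then read off from the monotonicity of $\phi^0$: where $\phi^0$ is strictly increasing, $x$ is pinned down by $\phi^0$ via strict concavity of $F^0$ (claim~1); where $\phi^0$ is constant (i.e.\ on an affine segment of $F^0$), a \emph{local} front-loading argument à la \Cref{theorem:deadline} applies cleanly because there $F^0$ is affine and the post-disclosure sign issue is handled using \eqref{eq:euler_forward_general} and \eqref{eq:euler_average_general} rather than by assuming anything about $x$ (claims~2, 3). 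Those same claims also deliver $x\ge u^\star$ pointwise, which combined with $x_\infty\le u^\star$ (Lemma~\ref{lemma:limits}) gives $x_\infty=u^\star$. So while your plan captures the right first-order condition and the right tail/initial heuristics, the monotonicity and the $x\ge u^\star$ steps cannot be closed by the direct perturbation you sketch; they require either the paper's construction-then-uniqueness detour or an alternative of comparable force.
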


That is, optimal mechanisms are just like deadline mechanisms, except that the transition from $u^0$ to $u^\star$ may be gradual.
This graduality follows directly from relaxing affineness:
when $F^0$ has a strictly concave shape,
by definition,
the principal prefers providing intermediate utility
to providing only the extreme utilities $u^\star,u^0$.
\Cref{theorem:opt} is the combination of this mechanical effect
with the front-loading insight expressed by \Cref{theorem:deadline}.

Formally, the proof in \cref{app:opt} relies on a form of \emph{local} front-loading to establish monotonicity.
Given monotonicity, it is immediate that $x_t$ converges as $t \to \infty$. We explain in the next section why the limit must be $u^\star$.

The role of monotonicity is \emph{not} to provide incentives: on the contrary, mechanisms of the form $(x,X)$ satisfy IC (with equality) by definition, whatever the pre-disclosure flow $x : \R_+ \to \left[0,u^0\right]$. Rather, what \Cref{theorem:opt} asserts is that if $x$ is not decreasing, then there is a better mechanism. This claim is non-trivial to prove.

Absent a breakthrough, efficiency deteriorates as we travel leftward along the upward-sloping part of the old frontier $F^0$.
Once the new technology becomes available,
it is operated efficiently (on the downward-sloping part of $F^1$) if its arrival was sufficiently early.%
	\footnote{\label{footnote:X1_u1}%
	We show in \cref{app:opt_transition} that $X_t > u^1$ holds in all sufficiently early periods $t$.}
If its arrival was late, then $F^1$ is operated inefficiently if $u^\star < u^1$, and efficiently if $u^\star = u^1$.

The distributional hypotheses are mild:
$G(0)=0$ means that the new technology is unavailable initially,
while unbounded support rules out an effectively finite horizon.
The former's role is as a sufficient condition for $\lim_{t \to 0} x_t = u^0$,
while the latter is required by our proof strategy.

\subsection{Discussion}
\label{sec:opt:discussion}

Two salient features of \Cref{theorem:opt,theorem:deadline} are the special role played by $u^\star$ and the possibility (in case of a late breakthrough) of perpetual surplus destruction. We now discuss these two properties.

For simplicity, assume that $F^0$ and $F^1$ are differentiable, and consider a mechanism that is eventually constant: $x = \widebar{u}$ on $(T,\infty)$, where $\widebar{u} \in \left(0,u^0\right)$ and $G(T)<1$.
Unless $\widebar{u} = u^\star$, the mechanism $(x,X)$ may be improved by a simple perturbation:
\begin{equation*}
	x^\eps =
	\begin{cases}
		x
		& \text{on $[0,T]$} \\
		\widebar{u}+\eps
		& \text{on $\left(T,T+\ln(2)/r\right]$} \\
		\widebar{u}-\eps
		& \text{on $\left[T+\ln(2)/r,\infty\right)$}
	\end{cases}
	\qquad\quad \text{where $\eps \neq 0$.}
\end{equation*}
If $\eps>0$, then this is a `front-loading', making the pre-disclosure flow $x$ higher early on (before $T+\ln(2)/r$) and lower later, while keeping $X^\eps = X$ on $[0,T]$.%
	\footnote{Because $X^\eps_T = X_T + \eps e^{rT} \bigl( \int_T^{T+\ln(2)/r} r e^{-rs} \dd s - \int_{T+\ln(2)/r}^\infty r e^{-rs} \dd s \bigr) = X_T$ for each $\eps$.}
Since $\frac{\dd}{\dd \eps} x^\eps \bigr|_{\eps=0} = \frac{\dd}{\dd \eps} X^\eps \bigr|_{\eps=0} = 0$ on $[0,T]$, perturbing $\eps$ away from zero changes the principal's payoff $\Pi_G\left( x^\eps, X^\eps \right)$ at rate
\begin{multline*}
	\left. \frac{\dd}{\dd \eps}
	\E_G\left(
	r \int_0^\tau e^{-rt}
	F^0\left( x^\eps_t \right)
	\dd t
	\right)
	\right|_{\eps=0}
	+ \left. \frac{\dd}{\dd \eps}
	\E_G\left( e^{-r\tau}
	F^1\left( X^\eps_\tau \right)
	\right) 
	\right|_{\eps=0}
	\\
	\begin{aligned}
		&= \E_G\left( r \int_0^\tau e^{-rt} \left. \tfrac{\dd}{\dd \eps} x^\eps_t \right|_{\eps=0} \dd t \right)
		\times F^{0\prime}\left( \widebar{u} \right)
		+ K_G \times F^{1\prime}\left(\widebar{u}\right)
		\\
		&= K_G \times \left[ F^{1\prime}\left( \widebar{u} \right) - F^{0\prime}\left( \widebar{u} \right) \right]
		\qquad\qquad \text{where $K_G \coloneqq \E_G\left( e^{-r\tau} \left. \tfrac{\dd}{\dd \eps} X^\eps_\tau \right|_{\eps=0} \right)$,}
	\end{aligned}
\end{multline*}
where the second equality holds since the big expectation equals $\E_G( \phi_\tau'(0) )$ where $\phi_\tau(\eps) \coloneqq r \int_0^\tau e^{-rt} x^\eps_t \dd t = X^\eps_0 - e^{-r\tau} X^\eps_\tau$.
Thus \emph{whatever} the breakthrough distribution $G$, the principal's payoff can be improved by perturbing $\eps$ except if $F^{0\prime}\left(\widebar{u}\right)=F^{1\prime}\left(\widebar{u}\right)$, or equivalently $\widebar{u} = u^\star$.

This accounts for the special role of $u^\star$. It also implies the optimality of perpetual surplus destruction in case of a late breakthrough (after $T$), since setting $\widebar{u} = u^\star < u^1$ yields $X = x < u^1$ on $(T,\infty)$.

Economically, the above argument boils down to a demonstration that $u^\star$ balances the cost and benefit of `front-loading', so that neither front-loading ($\eps>0$) nor `back-loading' ($\eps<0$) yields an improvement. As discussed in the proof of \Cref{theorem:deadline} (§\ref{sec:deadline:theorem} above),
the benefit of front-loading is that the pre-disclosure flow $x$ is experienced only before the breakthrough, so making it higher early and lower late is mechanically better.%
	\footnote{The principal prefers a higher pre-disclosure flow since $F^0$ is increasing on $[0,u^0]$.}
The cost of front-loading is that it lowers the disclosure reward $X$, thereby increasing the severity of perpetual surplus destruction in case of a late breakthrough.

\subsection{Optimal transition}
\label{sec:opt:opt_transition}

\Cref{theorem:opt} describes the distribution-free qualitative features of optimal mechanisms,
but does not specify the precise manner in which the agent's utility ought to decline from $u^0$ toward $u^\star$.
The optimal path, for a given breakthrough distribution, is characterised by an Euler equation:

\begin{proposition}
	\label{proposition:opt_transition}
	Assume that $u^\star > 0$
	and that the frontiers $F^0,F^1$ are differentiable on $\left( 0, u^0 \right)$.
	Then any mechanism $(x,X)$ that is optimal for a distribution $G$ with $G(0)=0$ and unbounded support
	satisfies
	the initial condition $\E_G\left( F^{1\prime}\left( X_\tau \right) \right) = 0$
	and the Euler equation
	\begin{equation*}
		F^{0\prime}(x_t) 
		\geq \E_G\left(
		F^{1\prime}\left( X_\tau \right)
		\middle| \tau > t \right)
		\quad \text{for each $t \in \R_+$, with equality if $x_t < u^0$.}%
			\footnote{Here $F^{j\prime}(0)$ ($F^{j\prime}(u^0)$) for $j \in \{0,1\}$ denotes the right-hand (left-hand) derivative.
			Recall that a mechanism has multiple \emph{versions} (\cref{footnote:mech_defn_formal}, \cpageref{footnote:mech_defn_formal}).
			In full, the proposition asserts that some (any) version satisfies the Euler equation for (almost) every $t \in \R_+$.}
	\end{equation*}
\end{proposition}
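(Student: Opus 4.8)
The plan is to read both conclusions off the first-order condition for maximizing $\Pi_G$ over the convex set of feasible pre-disclosure flows $\{x\colon\R_+\to[0,u^0]\}$, treating the payoff as a function of $x$ alone --- which is legitimate since by \Cref{theorem:indiff} the disclosure reward is pinned down via $X_t=r\int_t^\infty e^{-r(s-t)}x_s\,\dd s$, and by \Cref{lemma:lequ0} we may impose $x\le u^0$. First I would rewrite the payoff in a form amenable to differentiation: Fubini applied to the pre-disclosure term gives
\[
	\Pi_G(x)=r\int_0^\infty e^{-rt}\bigl(1-G(t)\bigr)F^0(x_t)\,\dd t+\E_G\bigl(e^{-r\tau}F^1(X_\tau)\bigr).
\]
By \Cref{theorem:opt}, an optimal mechanism (for $G$ as hypothesized) has $x$ decreasing from $u^0$ to $u^\star$, so $u^\star\le x\le u^0$; and since $x$ drops strictly below $u^0$ on a time-tail, $X_\tau\le X_0<u^0$ for every $\tau$. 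As $u^\star>0$ and the concave functions $F^0,F^1$ are differentiable --- hence continuously differentiable, indeed Lipschitz --- on compact subsets of $(0,u^0)$, their derivatives are controlled on the interval $[u^\star,X_0]\subset(0,u^0)$ over which $X_\tau$ ranges; this is what will legitimate the differentiations below.

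Next I would differentiate along feasible directions. Fix a bounded measurable $h$ with $x+\varepsilon h$ feasible for small $\varepsilon>0$; by linearity of $X_t$ in $x$, perturbing $x$ by $\varepsilon h$ perturbs $X_t$ by $\varepsilon H_t$ with $H_t=r\int_t^\infty e^{-r(s-t)}h_s\,\dd s$. Differentiating under the time-integral and under $\E_G$ (dominated convergence, via the bounds above) and then swapping the order of integration in the post-disclosure term, the right $\varepsilon$-derivative at $0$ of $\varepsilon\mapsto\Pi_G(x+\varepsilon h)$ equals
\[
	r\int_0^\infty e^{-rt}\,h_t\,\Phi_t\,\dd t,\qquad\text{where}\quad\Phi_t\coloneqq\bigl(1-G(t)\bigr)F^{0\prime}(x_t)+\E_G\bigl(F^{1\prime}(X_\tau)\1\{\tau<t\}\bigr).
\]
Feasibility forces $h_t\le0$ wherever $x_t=u^0$ and nothing elsewhere (the bound $x\ge0$ is slack, as $x\ge u^\star>0$). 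Testing with $h=\pm\1_A$ for $A\subseteq\{\delta\le x_t\le u^0-\delta\}$ (any $\delta>0$) and then with $h=-\1_A$ for $A\subseteq\{x_t=u^0\}$, optimality (each such right-derivative is $\le0$) forces, by the fundamental lemma of the calculus of variations, $\Phi_t=0$ for a.e.\ $t$ with $x_t<u^0$, and $\Phi_t\ge0$ for a.e.\ $t$ with $x_t=u^0$.

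It remains to rewrite these as claimed. Letting $t\to\infty$ in $\Phi_t=0$ (valid for all large $t$, where $x_t<u^0$): the term $(1-G(t))F^{0\prime}(x_t)$ vanishes, since $1-G(t)\to0$ and $F^{0\prime}(x_t)\to F^{0\prime}(u^\star)$ is finite, while $\E_G(F^{1\prime}(X_\tau)\1\{\tau<t\})\to\E_G(F^{1\prime}(X_\tau))$ by dominated convergence; hence the initial condition $\E_G(F^{1\prime}(X_\tau))=0$. Substituting it back, $\E_G(F^{1\prime}(X_\tau)\1\{\tau<t\})=-\E_G(F^{1\prime}(X_\tau)\1\{\tau\ge t\})$, and dividing $\Phi_t$ by $1-G(t)>0$ (positive, as $G$ has unbounded support) turns $\Phi_t=0$ (resp.\ $\Phi_t\ge0$) into $F^{0\prime}(x_t)=\E_G(F^{1\prime}(X_\tau)\mid\tau\ge t)$ (resp.\ into ``$\ge$''); as $\{\tau\ge t\}$ and $\{\tau>t\}$ differ for only countably many $t$, this is the Euler equation, with equality precisely where $x_t<u^0$.

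Granting \Cref{theorem:opt}, the proposition is essentially a corollary, and the step to get right is the bookkeeping for the differentiation of $\Pi_G$ --- the interchanges of $\frac{\dd}{\dd\varepsilon}$ with the time-integral and with $\E_G$, and of the order of integration. These go through precisely because \Cref{theorem:opt} confines $X_\tau$ (and, along the perturbations, the perturbed $X$) to a compact subinterval of $(0,u^0)$ on which the frontiers are Lipschitz; were $X_\tau$ able to approach an endpoint at which $F^{0\prime}$ or $F^{1\prime}$ blows up, the difference quotients would fail to be dominated, so this confinement is the load-bearing input. (Two minor points: \Cref{theorem:opt} delivers these properties for some \emph{version} of the mechanism, but since versions agree off a null set and leave $X$ unchanged, the resulting a.e.\ Euler equation is version-free as claimed; and if one wanted the first-order condition without the differentiability hypotheses --- the route hinted at by the proof of \Cref{proposition:opt_deadline} --- one would replace the $F^{j\prime}$ by supergradient selections throughout, which is more involved but otherwise parallel.)
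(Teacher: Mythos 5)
Your proof is correct, but it takes a genuinely different route from the paper's. The paper derives Proposition~\ref{proposition:opt_transition} as a specialization of a more general result (\hyperref[proposition:opt_transition_general]{Proposition~\ref*{proposition:opt_transition}$'$}, \cref{app:opt_transition}): any mechanism optimal for a $G$ with unbounded support satisfies the \emph{superdifferential} Euler equation \eqref{eq:euler}. That result rests on the \hyperref[lemma:euler]{Euler lemma} (a supergradient version of the first-order condition, whose necessity direction requires knowing that \emph{some} mechanism solves \eqref{eq:euler} with particular multipliers $\phi^0,\phi^1$) together with the \hyperref[corollary:existence]{existence corollary}, which constructs such a solution via a compactness and approximation argument. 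The paper then uses \Cref{theorem:opt} and the differentiability hypotheses to rewrite the superdifferential equation in the stated pointwise form, obtaining it for every $t$ (not just a.e.) via right-continuity. You instead take \Cref{theorem:opt} as given and run a direct, elementary first-order-condition argument on $\Pi_G$ over $\{x:\R_+\to[0,u^0]\}$, exploiting the confinement $u^\star\le X\le X_0<u^0$ to keep all the difference quotients in a compact Lipschitz band of $F^0,F^1$; the resulting $\Phi_t$-complementary-slackness conditions then yield the initial condition (by letting $t\to\infty$) and the Euler equation (by normalizing by $1-G(t)$ and discarding the countable set of atoms of $G$). Your route avoids the existence corollary entirely and is shorter in the differentiable case; what it gives up is exactly what the paper's machinery was built for --- the non-differentiable case, the sufficiency direction (\hyperref[proposition:opt_transition_general]{Proposition~\ref*{proposition:opt_transition}$'$}), and the ``for every $t$'' refinement --- none of which is required by the proposition as stated. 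One small wrinkle you elide: after dividing $\Phi_t=0$ by $1-G(t)$, the resulting right-hand side is $\E_G(F^{1\prime}(X_\tau)\1\{\tau\ge t\})/(1-G(t))$, which equals $\E_G(F^{1\prime}(X_\tau)\mid\tau>t)$ only at non-atoms of $G$ (where $G(t-)=G(t)$); since $G$ has at most countably many atoms this is harmless for the a.e.\ statement, but it is worth flagging as the reason the identification with the conditional expectation is only a.e.\ rather than literal.
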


The initial condition $\E_G\left( F^{1\prime}\left( X_\tau \right) \right) = 0$ demands that the new technology be used optimally on average,
just like the first-order condition for an optimal deadline in the affine case (\Cref{proposition:opt_deadline}, \cpageref{proposition:opt_deadline}).
The special role of $u^\star$ (discussed in the previous section) can be deduced from the Euler equation: as $t \to \infty$, $X_t = r \int_t^\infty e^{-r(s-t)} x_s \dd s$ converges to $\underline{u} \coloneqq \lim_{t \to \infty} x_t$, so $F^{0\prime}\left( \underline{u} \right) = F^{1\prime}\left( \underline{u} \right)$, which is to say that $\underline{u} = u^\star$.

\begin{proof}[Sketch proof]
	A mechanism $(x,X)$ with $0 < x < u^0$ may be perturbed near an arbitrary period $t \in \R_+$ by adding $\eps$ to $x$ on $[t,t+\delta)$, where $\eps \neq 0$ and $\delta>0$ are small.
	This changes $X_s = r \int_s^\infty e^{-r(s'-s)} x_{s'} \dd s'$ for $s \leq t$ by $r e^{-r(t-s)} \delta \eps + \oo(\delta\eps)$, so changes the principal's payoff $\Pi_G(x,X)$ by
	\begin{equation*}
		r e^{-rt} F^{0\prime}(x_t) \delta \eps [1-G(t)]
		+ \int_{[0,t]} e^{-rs} F^{1\prime}(X_s) \left( r e^{-r(t-s)} \delta \eps \right) G(\dd s)
		+ \oo(\delta\eps) .
	\end{equation*}
	If $(x,X)$ is optimal, then it cannot be improved by such perturbations:
	\begin{equation*}
		F^{0\prime}(x_t) [1-G(t)]
		+ \int_{[0,t]} F^{1\prime}(X_s) G(\dd s) = 0 .
		\label{eq:euler_true}
		\tag{$\mathcal{E}_t$}
	\end{equation*}
	Letting $t \to \infty$ yields $\E_G\left( F^{1\prime}(X_\tau) \right) = 0$. Substituting this equality into \eqref{eq:euler_true} and dividing by $1-G(t)>0$ yields $F^{0\prime}(x_t) = \E_G\left( F^{1\prime}(X_\tau) \middle| \tau > t \right)$.
\end{proof}

To understand the Euler equation, differentiate it and rearrange to obtain
\begin{equation*}
	\dot{x}_t
	= - \underbrace{ \left( \frac{G'(t)}{1-G(t)} \right) }_{\text{hazard rate}}
	\frac{ F^{0\prime}(x_t) - F^{1\prime}(X_t) }{
	\underbrace{ - F^{0\prime\prime}(x_t) }_{\text{curvature}}
	} .%
		\footnote{This expression is valid
		under the additional assumptions that $G$ admits a continuous density
		and that $F^0$ possesses a continuous and strictly negative second derivative.}
\end{equation*}
Thus the agent's pre-disclosure utility declines in proportion to the hazard rate,
and in inverse proportion to the local curvature of the old frontier $F^0$.
As the latter would suggest, $x$ jumps over any affine segments ($F^{0\prime\prime} = 0$ and `$\dot{x} = \infty$'),
and pauses at kinks (`$F^{0\prime\prime} = -\infty$' and $\dot{x} = 0$).

Without the interiority ($u^\star > 0$) and differentiability hypotheses,
a superdifferential Euler equation characterises the optimal path.
We prove in \cref{app:opt_transition} that this equation is necessary for optimality, whence \Cref{proposition:opt_transition} follows,
and furthermore show that it is sufficient.

As for comparative statics, we show in \cref{suppl:mcs} that as the breakthrough distribution $G$ becomes later in the sense of monotone likelihood ratio,
the disclosure reward $X$ increases in every period.
(The pre-disclosure flow $x$ need not increase pointwise.)
It follows in particular that the agent's ex-ante payoff $X_0$ improves.

Although our focus is on general properties, there are special cases in which the Euler equation may be solved in closed form:

\begin{example}
	\label{example:exp_quad}
	Let the breakthrough arrive at constant rate $\lambda>0$, so that $G(t) = 1 - e^{-\lambda t}$ for every $t \in \R_+$. Fix $u^1 < u^0$ in $(0,\infty)$, and assume that
	\begin{equation*}
		F^j(u) \coloneqq a^j \left( u^j - \tfrac{1}{2} u \right) u + b^j
		\quad \text{for each $j \in \{0,1\}$ and every $u \in \left[ 0, u^0 \right]$,}
	\end{equation*}
	where $0 < a^0 < a^1 > a^0 u^0 / u^1$, and $b^1-b^0$ is large enough that $F^1 \geq F^0$.
	Solving the Euler equation yields the optimal mechanism $x$ given by
	\begin{equation*}
		x_t \coloneqq \left( u^0 - u^\star \right) e^{-\lambda kt} + u^\star
		\quad \text{for each $t \in \R_+$,}
	\end{equation*}
	where $u^\star = \displaystyle\frac{ a^1 u^1 - a^0 u^0 }{ a^1 - a^0 }$ and
	$k \coloneqq \displaystyle \frac{1 + r/\lambda}{2} \textstyle \left( \sqrt{ \frac{r/\lambda}{\left( \frac{ 1 + r/\lambda }{2} \right)^2} \left( \frac{a^1}{a^0} - 1 \right) + 1 } - 1 \right)$.
	In the special case $r=\lambda$, this simplifies to $k = \sqrt{a^1/a^0} - 1$.
\end{example}

\section{Application to unemployment insurance}
\label{sec:pf}

If unemployment benefits are generous but time-limited, then a worker who receives a job offer before her benefits run out may have an incentive to delay starting her new job, for example by arranging a deferred start date (or by simply waiting before accepting the offer).
Empirically, such strategic delay appears to be widespread.%
	\footnote{See \textcite{BooneVanours2012,DellavignaEtal2017,KyyraPesolaVerho2019}.}

In this section, we study the design of unemployment insurance (`UI') schemes when workers can exercise such strategic delay. We focus in particular on the merits of \emph{deadline benefit schemes,} in which the short-term unemployed receive a generous benefit, while those remaining unemployed past a deadline see their benefit reduced to a much lower level. Such schemes are used in many countries, including Germany, France and Sweden. We also study the optimal choice of deadline.

\paragraph{Related literature.}
The literature on optimal unemployment insurance has two main strands.
The first concerns the moral-hazard problem of incentivising job-search effort \parencite{ShavellWeiss1979,HopenhaynNicolini1997}.
We contribute to the second strand, which studies the adverse-selection problem arising from privately observed job offers \parencite{AtkesonLucas1995}.%
	\footnote{See also \textcite{ThomasWorrall1990,AtkesonLucas1992,HansenImrohoroglu1992,ShimerWerning2008}.}
(As shown in §\ref{sec:mh} below, our conclusions in this section would not change if we added moral hazard to the model.)
Within this second strand, our contribution is to characterise optimal UI under the assumption that workers can delay starting a new job, rather than having to start right away.

\subsection{Model}
\label{sec:pf:model}

A worker (agent) is unemployed.
At a random time $\tau \sim G$, she receives a job offer. If she accepts, then she chooses when to start.
The worker's ability to delay her start date is the distinguishing feature of our otherwise-standard model.
The state observes in real time whether the worker is employed, but cannot observe whether she has received a job offer. All jobs are permanent and pay the same wage, denoted $w>0$.

The worker's utility is $u = \phi(C) - \kappa(L)$,
where $C \geq 0$ is her consumption
and $L \geq 0$ her labour supply.
We assume that $\phi,\kappa : [0,\infty) \to [0,\infty)$ are respectively strictly concave and strictly convex, that they are differentiable on $(0,\infty)$ with strictly positive derivatives that satisfy
\begin{equation*}
	\lim_{C \to \infty} \phi'(C) = 0 , \quad
	\lim_{C \to 0} \phi'(C) = \infty
	\quad \text{and} \quad
	\lim_{L \to 0} \kappa'(L) = 0 ,
\end{equation*}
and that they are continuous with $\phi(0) = \kappa(0) = 0$ and $\lim_{C \to \infty} \phi(C) = \infty$.
We interpret $C=0$ as the lowest socially acceptable standard of living.
If the worker is unemployed, then $L=0$.

The state controls unemployment benefits and income taxes.
Following the literature, we impose no constraints on policy:%
	\footnote{This has been the standard approach since \textcite{HopenhaynNicolini1997}.}
income taxation after re-employment can be non-linear, for example, and can depend on the length of the preceding unemployment spell.
These policy instruments can implement any allocation $(C,L)$ which the worker prefers to autarky.%
	\footnote{An unemployed worker's consumption is simply her benefit.
	To get an employed worker to choose a bundle $(C,L)$ satisfying $u \coloneqq \phi(C) - \kappa(L) \geq 0$,
	use the income tax schedule $\theta(Y) = \min\left\{ Y, m Y + b \right\}$,
	with $m,b \in \R$ chosen so that
	the worker's income $L' \mapsto wL' - \theta(wL')$
	is tangent at $L$ to her indifference curve $L' \mapsto \phi^{-1}\left( \kappa(L') + u \right)$.}
We may therefore model the state as directly choosing $(C,L)$, subject to $u = \phi(C) - \kappa(L) \geq 0$.

The state's objective is social welfare $v = u + \lambda \times ( wL - C )$, where $u$ is the worker's welfare, $wL - C$ is net tax revenue, and $\lambda > 0$ is the shadow value of public funds.
The utility possibility frontiers for unemployed and employed workers are thus
\begin{align*}
	F^0(u)
	&\coloneqq \max_{C \geq 0} \left\{ u + \lambda (-C) : \phi(C) = u \right\}
	\\
	\text{and} \quad
	F^1(u)
	&\coloneqq \max_{C,L \geq 0} \left\{
	u + \lambda (w L - C) : \phi(C) - \kappa(L) = u
	\right\} ,
\end{align*}
respectively.
These frontiers satisfy our model assumptions (§\ref{sec:model}):

\begin{lemma}
	\label{lemma:UI_assns}
	In the application to unemployment insurance, the frontiers $F^0,F^1$
	are strictly concave and continuous,
	with unique peaks $u^0,u^1$ that satisfy $u^1 < u^0$.
	The gap $F^1 - F^0$ is strictly decreasing, so that $u^\star = 0$.
\end{lemma}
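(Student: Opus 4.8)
The plan is to reduce each frontier to a one-dimensional concave programme, solve it via its first-order condition, and then read off all four conclusions from the strict convexity of $\phi^{-1}$ together with the boundary conditions on $\phi'$ and $\kappa'$.

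\textbf{The frontier $F^0$.} Since $\phi$ is a strictly increasing continuous bijection from $[0,\infty)$ onto $[0,\phi(\infty))$, the constraint $\phi(C)=u$ forces $C=\phi^{-1}(u)$, so $F^0(u)=u-\lambda\phi^{-1}(u)$ on the range of $\phi$ and $F^0(u)=-\infty$ elsewhere. As $\phi$ is strictly concave and increasing, $\phi^{-1}$ is strictly convex and increasing, so $F^0$ is strictly concave and continuous on its effective domain. Its derivative $F^{0\prime}(u)=1-\lambda/\phi'(\phi^{-1}(u))$ decreases strictly from $1$ (as $u\downarrow 0$, using $\phi'(0)=\infty$) toward $-\infty$; hence $F^0$ has a unique peak $u^0$, characterised by $\phi'(C^0)=\lambda$ where $C^0\coloneqq\phi^{-1}(u^0)>0$.

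\textbf{The frontier $F^1$.} Fixing $L$ in the constraint $\phi(C)-\kappa(L)=u$ gives $C=\phi^{-1}(u+\kappa(L))$, so $F^1(u)=u+\lambda\max_{L\ge 0}[\,wL-\phi^{-1}(u+\kappa(L))\,]$. The map $(u,L)\mapsto u+\kappa(L)$ is convex, and composing it with the convex increasing $\phi^{-1}$ is jointly convex; hence the bracketed objective is jointly concave, and since partial maximisation preserves concavity, $F^1$ is concave. The inner problem is strictly concave in $L$ and coercive (because $\kappa'(0)=0$ with $\kappa$ strictly convex forces $\kappa\to\infty$, and $\phi'(\infty)=0$), so it has a unique maximiser $L^\star(u)$, and $L^\star(u)>0$ since the objective's derivative at $L=0$ equals $-w<0$. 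Writing $C^\star(u)\coloneqq\phi^{-1}(u+\kappa(L^\star(u)))$, the first-order condition reads $\kappa'(L^\star(u))=w\,\phi'(C^\star(u))$. By the envelope theorem $F^{1\prime}(u)=1-\lambda/\phi'(C^\star(u))$; differentiating the system $\{\kappa'(L^\star)=w\phi'(C^\star),\ \phi(C^\star)=u+\kappa(L^\star)\}$ and eliminating $\dot L^\star$ yields $\dot C^\star(u)>0$ (the driving sign being $\phi''<0<\kappa''$). Hence $F^{1\prime}$ is strictly decreasing, so $F^1$ is strictly concave with a unique peak $u^1$ satisfying $\phi'(C^\star(u^1))=\lambda$, i.e.\ $C^\star(u^1)=C^0$, i.e.\ $u^1+\kappa(L^\star(u^1))=u^0$; since $\kappa(L^\star(u^1))>0$ this gives $u^1<u^0$ (with the usual convention that $u^1=0$ if the unconstrained peak is negative, in which case $u^1<u^0$ trivially).

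\textbf{The gap and $u^\star$.} By the envelope theorem, $(F^1-F^0)'(u)=\lambda\bigl[(\phi^{-1})'(u)-(\phi^{-1})'(u+\kappa(L^\star(u)))\bigr]<0$, since $(\phi^{-1})'$ is strictly increasing and $\kappa(L^\star(u))>0$; thus $F^1-F^0$ is strictly decreasing. Consequently $F^{1\prime}<F^{0\prime}$ throughout $(0,u^0)$ and $F^{1\prime}(u^0)<0=F^{0\prime}(u^0)$, so the two frontiers share no supergradient on $(0,u^0]$; the only common supergradient on $[0,u^0]$ is the trivial value $\infty$ at $u=0$, whence $u^\star=0$. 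The only genuinely delicate point is upgrading ``concave'' to ``strictly concave'' for $F^1$—joint concavity alone gives merely weak concavity, so one must push the implicit-function differentiation of the first-order system through to obtain $\dot C^\star(u)>0$; the rest is bookkeeping, with care needed only about the effective domains (each $F^j=-\infty$ once $u$ leaves the range of $\phi$) and one-sided derivatives at the endpoints $0$ and $u^0$.
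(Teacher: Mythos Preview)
The paper does not give a proof of this lemma in the text; it says only that the argument is ``elementary (but tedious)'' and relegates it to a companion note. Your approach---reducing each frontier to a one-dimensional programme, using the envelope theorem to get $F^{j\prime}(u)=1-\lambda/\phi'(C^{(\star)}(u))$, and reading off strict monotonicity of the gap from strict convexity of $\phi^{-1}$ and $\kappa(L^\star(u))>0$---is the natural one and is correct in substance.

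One slip: the derivative of the inner objective $wL-\phi^{-1}(u+\kappa(L))$ at $L=0$ is $w-\kappa'(0)/\phi'(\phi^{-1}(u))=w>0$, not $-w<0$ as you wrote. The conclusion $L^\star(u)>0$ is right, but for the opposite reason: the objective is strictly increasing at $L=0$, so the constraint $L\geq 0$ does not bind. (Had the derivative really been negative at zero, concavity would force $L^\star=0$.)

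A minor technical caveat: your argument for strict concavity of $F^1$ goes through implicit differentiation of the first-order system to obtain $\dot C^\star(u)>0$, which tacitly uses second derivatives of $\phi$ and $\kappa$. The model in \S\ref{sec:pf:model} assumes only first derivatives. You can sidestep this either by adding the regularity hypothesis, or by arguing directly that $C^\star$ is strictly increasing: if $u_1<u_2$ gave $C^\star(u_1)\geq C^\star(u_2)$, then the two first-order conditions $\kappa'(L^\star(u_i))=w\phi'(C^\star(u_i))$ and the constraint $\phi(C^\star(u_i))=u_i+\kappa(L^\star(u_i))$ together with strict monotonicity of $\phi',\kappa'$ yield a contradiction. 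Either route is fine; the paper evidently regards the whole verification as routine.
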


The conflict of interest $u^1 < u^0$
arises because the social first-best
requires employed workers to supply labour ($L>0$), which they dislike, without compensating them with extra consumption (first-best consumption is $C^0 \coloneqq (\phi')^{-1}(\lambda)$ regardless of employment status).
This is an instance of the fact, well-known in public finance since \textcite{Mirrlees1971,Mirrlees1974},%
	\footnote{See the third section of \textcite{Mirrlees1974}, as well as p.~201 of \textcite{Mirrlees1971}.}
that welfare-maximisation (absent incentive constraints) does not `reward merit': on the contrary, it dictates efficient production, meaning that more productive workers work harder.
The proof of \Cref{lemma:UI_assns} is elementary but tedious, so we omit it.%
	\footnote{It may be found in \textcite{omit}.}


We shall use the term `unemployment insurance (UI) scheme' for a mechanism.
By \Cref{proposition:indiff} (\cpageref{proposition:indiff}), undominated schemes keep the worker only just willing promptly to start a job, so have the form $(x,X)$.
Implicit in a UI scheme $(x,X)$
are the benefit $B_t$ paid to the time-$t$ unemployed (given by $x_t = \phi\left( B_t \right)$)
and the labour supply $L_t$ and tax bill $\theta_t = w L_t - C_t$ of a worker who started working at $t$
(which satisfy $X_t = \phi\left( w L_t - \theta_t \right) - \kappa( L_t )$).

\subsection{Optimal unemployment insurance}
\label{sec:pf:results}

Optimal UI schemes are described by \Cref{theorem:opt} (\cpageref{theorem:opt}):
unemployment benefits $B_t = \phi^{-1}(x_t)$ decrease over time,
from $C^0 \coloneqq \phi^{-1}\left( u^0 \right)$
toward $0 = \phi^{-1}\left( u^\star \right)$.
Thus workers enjoy socially optimal consumption at the beginning of an unemployment spell,
but see their benefits reduced over time,
with the long-term unemployed provided only with society's lowest acceptable standard of living (`zero consumption').

Employed workers are rewarded with a higher continuation utility $X_t$ the earlier they start a job.
This involves a mix of lower labour supply and more generous tax treatment of earnings (yielding higher consumption).

A \emph{deadline UI scheme} is one in which a generous benefit of $C^0$ is paid to the short-term unemployed,
while those remaining unemployed beyond a deadline receive a low benefit just sufficient to finance the minimum standard of living (`zero consumption').
Such schemes are widespread in practice, used in e.g. Germany, France and Sweden.

Our results speak to the desirability of such deadline schemes.
\Cref{theorem:deadline} (\cpageref{theorem:deadline}) implies that a deadline scheme is approximately optimal if $F^0$ is close to affine, a condition which is satisfied if the worker's consumption utility $\phi$ has limited curvature
or if the social value $\lambda$ of tax revenue is moderate.
Conversely, if neither assumption is close to being satisfied, then our results predict substantial welfare gains from more gradual tapering, as in Italy.

Given the prevalence of deadline schemes (whatever their merits),
the choice of deadline is an important policy problem.
Our analysis highlights labour-market prospects as a key consideration:
a worker with worse chances (a later job-finding distribution $G$, in the sense of first-order stochastic dominance) should be set a later deadline.%
	\footnote{In particular, the optimal deadline described by \Cref{proposition:opt_deadline} (\cpageref{proposition:opt_deadline}) is later when $G$ is, as noted at the end of §\ref{sec:deadline:opt_deadline} and proved in \cref{app:opt_deadline_pf} (\cpageref{proposition:opt_deadline_comp}).}
Two implications are that older workers ought to face later deadlines
and that extensions should be granted during recessions.
These recommendations are broadly followed in Germany and France, where workers older than about 50 face more lenient deadlines,
and all workers' deadlines were prolonged during the 2020 recession.

\section{Extension: moral hazard}
\label{sec:mh}

Many applications feature an element of moral hazard, whereby the agent can hasten the breakthrough by unobservably exerting effort at a cost.
An unemployed worker's search effort influences her job-finding rate, for example.
In this section, we show that our results remain valid when there is moral hazard.

We consider the simplest moral-hazard environment:
in each period $t \in \R_+$, the agent either exerts effort ($a_t=1$) or doesn't ($a_t=0$).
The agent incurs a flow cost of $c a_t$, where $c>0$,
and the breakthrough arrives at rate $\lambda a_t$, where $\lambda>0$.
The principal observes neither effort nor breakthroughs.
We rule out the uninteresting case in which the principal prefers not to incentivise effort
by assuming that the breakthrough is sufficiently valuable:%
	\footnote{Precisely, the (mild) assumption is that the principal finds it worthwhile to incentivise a \emph{single} period's effort, by rewarding disclosure enough that $\lambda \left( X^1 - X^0 \right) \geq r c$.}
\begin{equation*}
	F^1\left(u + rc/\lambda \right) > F^0(u)
	\quad\text{for every $u \in \left[0,u^0\right]$.}
	\tag{$\triangle$}
	\label{eq:moralhazard_assumption}
\end{equation*}

An \emph{effort schedule} is a measurable map $a : \R_+ \to \{0,1\}$; the interpretation of $a_t=1$ ($=0$) is that in case of no breakthrough by period $t$, the agent exerts (no) effort in that period.
Each effort schedule $a : \R_+ \to \{0,1\}$ induces
a breakthrough distribution, namely $G^a$ given by $G^a(t) \coloneqq 1 - \exp\bigl( -\lambda \int_0^t a \bigr)$.%
	\footnote{A detail: the distribution $G^a$ may be improper (specifically, if $\int_0^\infty a < \infty$).}

The definition of a mechanism is unchanged.
A \emph{triplet} $\left(x^0,X^1,a\right)$ comprises a mechanism $\left(x^0,X^1\right)$ and an effort schedule $a : \R_+ \to \{0,1\}$.
A triplet $\left(x^0,X^1,a\right)$ 
is called \emph{incentive-compatible (IC)} exactly if $\left(x^0,X^1\right)$ is IC,
and \emph{obedient} iff the agent is willing to exert effort according to $a$ (i.e. among all effort schedules, $a$ yields the highest expected payoff).
By a revelation principle, we may restrict attention to IC and obedient triplets.
The principal's payoff from an IC and obedient triplet $\left(x^0,X^1,a\right)$ is $\Pi_{G^a}\left(x^0,X^1\right)$.

\emph{Continual effort} is the effort schedule $a \equiv 1$.
Call a mechanism $\left(x^0,X^1\right)$ \emph{obedient} iff the triplet $\left(x^0,X^1,1\right)$ is obedient.
Continual effort is optimal:

\begin{lemma}
	\label{lemma:max_effort_obedient}
	There exists an IC and obedient mechanism $\bigl(x^{0\dag},X^{1\dag}\bigr)$ such that $\Pi_{G^1}\bigl(x^{0\dag},X^{1\dag}\bigr) \geq \Pi_{G^a}\left(x^0,X^1\right)$ for any IC and obedient triplet $\left(x^0,X^1,a\right)$.
\end{lemma}

In light of \Cref{lemma:max_effort_obedient},
the principal may restrict attention to IC and obedient mechanisms without loss of optimality,
and we may assume that when faced with such a mechanism, the agent exerts continual effort.
Call an IC and obedient mechanism $\left(x^0,X^1\right)$ \emph{o-undominated} iff it is not dominated by any IC and obedient mechanism.

\begin{namedthm}[\Cref*{proposition:indiff}$\boldsymbol{'}$.]
	\label{proposition:indiff_moralhazard}
	Any o-undominated IC and obedient mechanism $\left(x^0,X^1\right)$ satisfies $x^0 \leq u^0$ and $X^1 = X^0 + rc/\lambda$.
\end{namedthm}

This is exactly \Cref{lemma:lequ0} and \Cref{proposition:indiff},
except that the disclosure reward $X^1$ differs from $X^0$ by a constant instead of by zero (in order to incentivise effort).
That is the \emph{only} difference that moral hazard makes:
o-undominated and optimal mechanisms $\left(x^0,X^1\right)$
still have pre-disclosure flow $x^0$ as described by \Cref{theorem:deadline,theorem:opt} (verbatim).

To see why \Cref{theorem:deadline,theorem:opt} remain true,
consider the `translated' model in which the frontiers are $F^0$ and $u \mapsto F^1(u + rc/\lambda)$.
These frontiers satisfy our model assumptions
(in particular, the new frontier exceeds the old by assumption \eqref{eq:moralhazard_assumption}).
\hyperref[proposition:indiff_moralhazard]{\Cref*{proposition:indiff}$'$} provides that \Cref{lemma:lequ0} and \Cref{proposition:indiff} hold:
all mechanisms $\left(x^0,X^1\right)$ that the principal considers satisfy $x^0 \leq u^0$ and $X^1 = X^0$.
And the principal's payoff from such a mechanism is $\Pi_{G^1}\left(x^0,X^0\right)$.
Thus \Cref{theorem:deadline,theorem:opt} are applicable.

The proofs of \Cref{lemma:max_effort_obedient} and {\hyperref[proposition:indiff_moralhazard]{\Cref*{proposition:indiff}$'$}} are in \cref{suppl:moralhazard}.



\begin{appendices}

\renewcommand*{\thesubsection}{\Alph{subsection}}

\crefalias{section}{appsec}
\crefalias{subsection}{appsec}
\crefalias{subsubsection}{appsec}
\section*{Appendices}
\label{app}
\addcontentsline{toc}{section}{Appendices}

\subsection{Background and notation}
\label{app:background}

The Lebesgue integral is used throughout.
In particular, for $s<t$ in $\R_+$ and a function $\phi : \R_+ \to [-\infty,\infty]$,
$\int_s^t \phi$ denotes the Lebesgue integral $\int_{(s,t)} \phi \dd \lambda$, where $\lambda$ is the Lebesgue measure.

We rely on various facts about concave functions \parencite[see][esp. part~V]{Rockafellar1970}.
For $j \in \{0,1\}$, recall that $F^j : [0,\infty) \to [-\infty,\infty)$ is concave and upper semi-continuous.
Write $D^j \coloneqq \left\{ u \in [0,\infty) : F^j(u) > -\infty \right\}$ for its effective domain (a convex set).
We have $\left(0,u^0\right] \subseteq D^j$ by assumption.

The right- and left-hand derivatives of $F^j$ are denoted by $F^{j+}$ and $F^{j-}$, respectively.
The former (latter) is well-defined on $D^j \union \inf D^j$
(on $\left( D^j \union \sup D^j \right) \setminus \{0\}$), but may take infinite values on the boundary.
$F^{j+}$ is right-continuous, and $F^{j-}$ is left-continuous.
If the derivative $F^{j\prime}$ exists at $u \in \interior D^j$, then $F^{j\prime}(u) = F^{j+}(u) = F^{j-}(u)$, and $F^{j\prime}$ is continuous at $u$.

The directional derivatives $F^{j+},F^{j-}$ are decreasing, and satisfy
\begin{equation*}
	F^{j-}(u) \leq F^{j+}(u') \leq F^{j-}(u') \leq F^{j+}(u'')
	\quad \text{for any $u > u' > u''$ in $\interior D^j$.}
\end{equation*}
The first (last) inequality is strict iff $F^j$ is not affine on $[u',u]$ (on $[u'',u']$),
and the middle inequality is strict exactly if $F^j$ has a kink at $u'$.

A \emph{supergradient} of $F^j$ at $u \in \cl D^j$ is an $\eta \in [-\infty,\infty]$ such that
\begin{equation*}
	F^j(u') \leq F^j(u) + \eta (u'-u)
	\quad \text{for every $u' \neq u$ in $[0,\infty)$.}
\end{equation*}
(Note that $\infty$ and $-\infty$ can be supergradients.)
$F^j$ admits at least one supergradient at every $u \in \cl D^j$.
For $u \in \interior D^j$,
$\eta \in [-\infty,\infty]$
is a supergradient of $F^j$ at $u$ exactly if $F^{j+}(u) \leq \eta \leq F^{j-}(u)$,
while for $u = \inf D^j$ ($u = \sup D^j$) the former (latter) inequality by itself is necessary and sufficient.

\subsection{Proof of \texorpdfstring{\Cref{proposition:indiff}}{Proposition~\ref{proposition:indiff}} (p.~\pageref{proposition:indiff})}
\label{app:pf_theorem_indiff}

We shall follow the sketch proof, but with significant elaborations aimed at overcoming the two technical hurdles discussed at the end of §\ref{sec:indiff}.

For any mechanism $\left( x^0, X^1 \right)$,
let $h : \R_+ \to [-\infty,\infty]$ be given by $h(t) \coloneqq e^{-rt}(X^1_t - X^0_t)$ for each $t \in \R_+$.%
	\footnote{In case $X^1_t = X^0_t = \infty$, we let $h(t) \coloneqq 0$ by convention.}
\Cref{proposition:indiff} asserts precisely that undominated IC mechanisms have $h$ identically equal to zero.

\begin{observation}
	\label{observation:ic_charac}
	A mechanism $\left( x^0, X^1 \right)$ is incentive-compatible exactly if
	$h$ is (a)~decreasing and (b)~non-negative.
\end{observation}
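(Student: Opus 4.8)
The plan is to translate each of the two incentive constraints in \Cref{definition:ic} directly into a property of $h$, using only the elementary splitting identity for an exponentially discounted integral. The key computation is that, for any $t \in \R_+$ and $d > 0$,
\[
	r \int_t^{t+d} e^{-r(s-t)} x^0_s \dd s = X^0_t - e^{-rd} X^0_{t+d} ,
\]
which follows by writing $X^0_t = r\int_t^{t+d} e^{-r(s-t)} x^0_s \dd s + r\int_{t+d}^\infty e^{-r(s-t)} x^0_s \dd s$ and factoring $e^{-rd}$ out of the second integral to recognise it as $e^{-rd} X^0_{t+d}$.

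Substituting this identity into part \ref{item:ic:delay} of \Cref{definition:ic} and rearranging, the delay constraint at $t$ for a given $d > 0$ becomes $X^1_t - X^0_t \geq e^{-rd}\bigl( X^1_{t+d} - X^0_{t+d} \bigr)$, i.e.\ (multiplying through by $e^{-rt}$) $h(t) \geq h(t+d)$. Hence part \ref{item:ic:delay} holding for all $t \in \R_+$ and all $d > 0$ is exactly the statement that $h$ is decreasing. Similarly, part \ref{item:ic:never} says $X^1_t \geq X^0_t$, equivalently $h(t) \geq 0$, so it holds for all $t$ precisely when $h$ is non-negative. Combining the two equivalences gives the observation.

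The only point requiring care — and the sole (routine) obstacle — is bookkeeping with infinite values: $X^1_t$ may equal $+\infty$ by definition, and $X^0_t$ may equal $+\infty$ when $x^0$ is large, so all the rearrangements above must be read in $[-\infty,\infty]$ with the paper's convention that $h(t) \coloneqq 0$ when $X^1_t = X^0_t = \infty$. I would note that $X^0$ has the monotonicity property that $X^0_{t+d} = \infty$ forces $X^0_t = \infty$ (the omitted integral term is non-negative), and then check the handful of degenerate cases directly: whenever $X^0_t = \infty$, both the never-disclosure constraint and $h(t) \geq 0$ fail unless $X^1_t = \infty$ as well, in which case $h(t) = 0$; and whenever $X^1_{t+d} = \infty$, part \ref{item:ic:delay} forces $X^1_t = \infty$, consistent with $h(t) \geq h(t+d)$. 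In every case the equivalences of the previous paragraph survive, which completes the argument.
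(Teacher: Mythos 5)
Your argument is correct and follows exactly the route the paper takes — the paper's own proof is a single sentence asserting that part \ref{item:ic:delay} of \Cref{definition:ic} "requires precisely that $h$ be decreasing" and part \ref{item:ic:never} that $h$ be non-negative — and your contribution is to make explicit the splitting identity $r\int_t^{t+d} e^{-r(s-t)} x^0_s \dd s = X^0_t - e^{-rd}X^0_{t+d}$ that underlies this translation. The core algebra is right: dividing through by $e^{rt}$ converts part \ref{item:ic:delay} into $h(t) \geq h(t+d)$ and part \ref{item:ic:never} into $h(t)\geq 0$.

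Your final paragraph on the degenerate cases, however, overstates slightly when it concludes that "in every case the equivalences survive." The case you do not cover is $X^0_t = \infty$ with $X^0_{t+d} < \infty$ (possible when $x^0$ fails to be locally integrable just after $t$) together with $X^1 \equiv \infty$. Both parts of \Cref{definition:ic} then read $\infty \geq \infty$ or $\infty \geq \text{finite}$ and hold, so the mechanism is IC; yet the paper's convention sets $h(t) = 0$ while $h(t+d) = e^{-r(t+d)}(\infty - X^0_{t+d}) = \infty$, so $h$ is \emph{not} decreasing, and the "$\Rightarrow$" direction of the observation fails. This is not a defect specific to your write-up — the paper's one-line proof is silent on the same pathology, and the issue evaporates once one restricts attention (as \Cref{lemma:lequ0} does) to $x^0 \leq u^0$ a.e., which makes $X^0$ bounded. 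But as a literal statement about \emph{all} mechanisms under the $h \coloneqq 0$ convention, your closing sentence is a genuine gap, and it is worth flagging that the observation is really meant for mechanisms with $X^0$ finite.
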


\begin{proof}
	Part~\ref{item:ic:delay} (part~\ref{item:ic:never}) of the definition of incentive-compatibility on \cpageref{definition:ic} requires precisely that
	$h$
	be decreasing (non-negative).
\end{proof}

\begin{namedthm}[Continuity lemma.]
	\label{lemma:indiff_disc}
	Any undominated IC mechanism has $h$ continuous.
\end{namedthm}

\begin{proof}
	We prove the contrapositive. Fix an IC mechanism $\left( x^0, X^1 \right)$.

	Suppose that $h$ is discontinuous at some $t \in (0,\infty)$.
	Since $h$ is decreasing and $X^0$ is continuous, $\lim_{s \uparrow t} X^1_s$ and $\lim_{s \downarrow t} X^1_s$ exist and satisfy
	$\lim_{s \uparrow t} X^1_s \geq X^1_t \geq \lim_{s \downarrow t} X^1_s$,
	with one of the inequalities strict.
	We shall assume that
	\begin{equation*}
		\lim_{s \uparrow t} X^1_s = X^1_t > \lim_{s \downarrow t} X^1_s ,
	\end{equation*}
	omitting the similar arguments for the other two cases.
	If $\lim_{s \downarrow t} X^1_s < u^1$, then we may increase $X^1$ toward $u^1$ on a small interval $(t,t+\eps)$ while keeping $h$ decreasing.%
		\footnote{Choose an $\eps>0$ small enough that
		$X^1 + \eps < \min\{u^1,X^1_t\}$ on $(t,t+\eps)$.
		Let $X^{1\dag}_s \coloneqq X^1_s - (s-t) + \eps$ for $s \in (t,t+\eps)$
		and $X^{1\dag} \coloneqq X^1$ off $(t,t+\eps)$.
		Then $X^1 \leq X^{1\dag} \leq u^1$, with the first inequality strict on $(t,t+\eps)$.
		We have $h^\dag \geq h \geq 0$,
		and $h^\dag$ is clearly decreasing on $[0,t]$ and on $(t,\infty)$.
		At $t$, we have $h^\dag(t)-\lim_{s \downarrow t} h^\dag(s) = e^{-rt}(X^1_t-\lim_{s \downarrow t} X^1_s -\eps) \geq 0$.}
	If instead $\lim_{s \downarrow t} X^1_s \geq u^1$,
	then $\lim_{s \uparrow t} X^1_s = X^1_t > u^1$,
	so that we may decrease $X^1$ toward $u^1$ on a small interval $(t-\eps,t]$ while keeping $h$ decreasing.%
		\footnote{Choose an $\eps \in (0,1/r)$ small enough that
		$X^1 - \eps > \lim_{s \downarrow t} X^1_s$
		and $h > \eps$
		on $(t-\eps,t]$.
		Let $X^{1\dag}_s \coloneqq X^1_s + t-s-\eps$ for $s \in (t-\eps,t]$
		and $X^{1\dag} \coloneqq X^1$ off $(t-\eps,t]$.
		Then $u^1 \leq X^{1\dag} \leq X^1$, with the second inequality strict on $(t-\eps,t]$.
		Clearly $h^\dag$ is non-negative,
		and is decreasing on $[0,t-\eps]$ and on $(t,\infty)$.
		It is decreasing on $[t-\eps,t]$ since $h^\dag(s) - h(s) = e^{-rs}(t-s-\eps)$ is (by our choice of $\eps<1/r$).
		And at $t$, $h^\dag(t)-\lim_{s \downarrow t} h^\dag(s) = e^{-rt}( X^1_t - \eps - \lim_{s \downarrow t} X^1_s ) \geq 0$.}
	In either case,
	IC is preserved,
	and the principal's payoff $\Pi_G$ is (strictly) increased under any (full-support) distribution $G$.

	Suppose instead that $h$ is discontinuous at $t=0$;
	then $X^1_0 > \lim_{s \downarrow 0} X^1_s$ by IC and the continuity of $X^0$.
	The case $\lim_{s \downarrow 0} X^1_s < u^1$ may be dealt with as above.
	If $\lim_{s \downarrow 0} X^1_s \geq u^1$,
	then lowering $X^1_0$ toward $\lim_{s \downarrow 0} X^1_s$ preserves IC and (strictly) increases $\Pi_G$ for any distribution $G$ (with $G(0)>0$).
\end{proof}

\begin{proof}[Proof of \Cref{proposition:indiff}]
	Let $\left( x^0, X^1 \right)$ be an IC mechanism,
	so that $h$ is non-negative and decreasing,
	and suppose that $h$ is not identically zero.
	By the \hyperref[lemma:indiff_disc]{continuity lemma}, we may assume that $h$ (and thus $X^1$) is continuous.

	We consider three cases.
	(The first two concern slack `\ref{eq:delay_IC}':
	Case 1 [Case 2] corresponds to the sketch proof's
	\hyperlink{item:indiff_discrete_pf:case2}{case~(ii)}
	[cases~\hyperlink{item:indiff_discrete_pf:case1}{(i)} and \hyperlink{item:indiff_discrete_pf:case3}{(iii)}].
	Case 3 is where `\ref{eq:delay_IC}' binds, but `\ref{eq:nondisc_IC}' is slack.)
	In each case, we shall construct an incentive-compatible mechanism $\bigl( x^{0\dag}, X^{1\dag} \bigr)$
	such that
	\begin{equation*}
		\Pi_G\left( x^{0\dag}, X^{1\dag} \right)
		\geq \mathrel{(>)} \Pi_G\left( x^0, X^1 \right)
		\quad \text{for every (full-support) $G$.}
		\label{eq:strong_dominance}
		\tag{D}
	\end{equation*}
	Define
	$A \coloneqq \left\{
	t \in \R_+ :
	\text{$h$ is differentiable at $t$ and $h'(t) < 0$}
	\right\}$.

	\emph{Case 1: $\left\{ t \in A: x^0_t < u^0\right\}$
	is non-null.}
	Since $h>0$ on $A$,%
		\footnote{Since $h \geq 0$,
		$h(t)=0$ implies $\liminf_{t' \downarrow t} [h(t')-h(t)] / (t'-t) \geq 0$
		and thus $t \notin A$.}
	there is an $\eps>0$ for which the set
	\begin{equation*}
		A_\eps \coloneqq \left\{ t \in A :
		\text{$x^0_t + \eps < u^0$,
		$h(t) \geq \eps$ and
		$h'(t) + r\eps \leq 0$}
		\right\}
	\end{equation*}
	is non-null.%
		\footnote{$A_0 = \Union_{n \in \N} A_{1/n}$ is non-null,
		so continuity of measures (with $\lambda$ denoting the Lebesgue measure) yields
		$0
		< \lambda(A_0)
		= \lim_{n \to \infty} \lambda( A_{1/n} )$,
		whence $\lambda( A_{1/n} ) > 0$ for some $n \in \N$.}
	Define $x^{0\dag} \coloneqq x^0 + \eps \1_{A_\eps}$,
	and consider the mechanism $\bigl( x^{0\dag}, X^1 \bigr)$.
	Clearly $x^0 \leq x^{0\dag} \leq u^0$, and $x^{0\dag} \neq x^0$ on the non-null set $A_\eps$,
	so that \eqref{eq:strong_dominance} holds
	by the strict monotonicity of $F^0$ on $\left[ 0, u^0 \right]$.
	$h^\dag$ is decreasing since for any $t<t'$ in $\R_+$,
	\begin{align*}
		h^\dag(t') - h^\dag(t)
		&= h(t') - h(t)
		+ r\eps \int_t^{t'} e^{-rs} \1_{A_\eps}(s) \dd s
		\\
		&\leq \int_t^{t'} h' \1_{A_\eps}
		+ r\eps \int_t^{t'}e^{-rs} \1_{A_\eps}(s) \dd s
		\leq 0 ,
	\end{align*}
	where the first inequality holds since $h$ is decreasing,%
		\footnote{Recall the Lebesgue decomposition $h = h_a + h_s$
		where $h_a$ is decreasing and absolutely continuous
		and $h_s$ is decreasing with $h_s'=0$ a.e.
		\parencite[e.g.][p. 150]{SteinShakarchi2005}.}
	and the second holds by definition of $A_\eps$.
	As for non-negativity,
	we have $h^\dag = h \geq 0$ on $\left(\sup A_\eps,\infty\right)$,
	while $h^\dag \geq 0$ on $\left[0,\sup A_\eps\right)$
	since $h^\dag$ is decreasing
	and $h^\dag \geq h - \eps \geq 0$ on $A_\eps$ by definition of the latter.
	Thus $\bigl( x^{0\dag}, X^1 \bigr)$ is incentive-compatible.

	\emph{Case 2: There are $t' < t''$ in $\R_+$ such that $h(t') > h(t'')$ and $X^1 \ne u^1$ on $[t',t'']$.}
	Since $X^1$ is continuous,
	we have either $X^1 > u^1$ on $[t',t'']$
	or $X^1 < u^1$ on $[t',t'']$.
	We shall assume the former,
	omitting the similar argument for the latter case.
	Because $s \mapsto e^{rs}h(t'') + X^0_s$ is continuous and takes the value $X^1_{t''} > u^1$ at $s=t''$,
	\begin{equation*}
		t^\star
		\coloneqq \inf \left\{ t \in \left[ t', t'' \right] :
		\text{$e^{rs}h(t'') + X^0_s \geq u^1$
		for all $s \in [t,t'']$}
		\right\}
	\end{equation*}
	is well-defined and strictly smaller than $t''$.
	Define
	\begin{equation*}
		X^{1\dag}_t
		\coloneqq
		\begin{cases}
			e^{rt}h(t'') + X^0_t
			& \text{for $t \in \left[ t^\star, t'' \right)$}
			\\
			X^1_t
			& \text{for $t \notin \left[ t^\star, t'' \right)$,}
		\end{cases}
	\end{equation*}
	and consider the mechanism $\bigl( x^0, X^{1\dag} \bigr)$.
	This mechanism is IC since
	$h^\dag = h + [ h(t'')-h ] \1_{\left[ t^\star, t'' \right)}$
	is clearly decreasing and non-negative.

	It remains to show that $\bigl( x^0, X^{1\dag} \bigr)$ satisfies \eqref{eq:strong_dominance}.
	Since $X^1$ and $X^{1\dag}$ differ only on $\left[ t^\star, t'' \right)$
	and $F^1$ is strictly decreasing on $\left[ u^1, \infty \right)$,
	it suffices to prove that
	\begin{equation*}
		u^1 \leq X^{1\dag}_t \leq \mathrel{(<)} X^1_t
		\quad \text{for every (some) $t \in \left[ t^\star, t'' \right)$.\footnotemark}
	\end{equation*}
	\footnotetext{It is enough for the inequality to be strict at a single time $t \in [ t^\star, t'' )$, since it then holds strictly on a proper interval
	by the continuity of $X^1$ and $X^{1\dag}$ on $[ t^\star, t'' )$.}
	The first inequality holds by definition of $t^\star$.
	For the second, observe that
	\begin{equation*}
		X^{1\dag}_t - X^1_t
		= e^{rt}\left[h^\dag(t)-h(t)\right]
		= e^{rt} \left[ h(t'')-h(t) \right]
		\leq 0
		\quad \text{for $t \in \left[ t^\star, t'' \right)$}
	\end{equation*}
	since $h$ is decreasing.
	We claim that the inequality is strict at $t=t^\star$.
	If $t^\star = t'$, then this is true because $h(t') > h(t'')$.
	And if not, then $t^\star \in (t',t'')$,
	in which case $X^{1\dag}_{t^\star} = u^1 < X^1_{t^\star}$
	by continuity of $X^0$ and $X^1 > u^1$.

	\emph{Case 3: neither Case 1 nor Case 2.}
	Since $X^1$ is continuous,
	every $t \in \R_+$ belongs either to a maximal open interval on which $X^1 \neq u^1$
	or else to a maximal closed interval on which $X^1 = u^1$.
	$h$ is increasing on any interval of the former kind since we are not in Case 2.
	We shall show that $h$ is also increasing on each interval of the latter kind;
	then since $h$ is continuous, it is increasing and thus constant.

	So fix an interval $I$ of the latter kind.
	Since $h$ is decreasing, its derivative $h'(t) = re^{-rt}\left(x^0_t-u^1\right)$
	exists a.e. on $I$.
	As we are not in Case 1, we have for a.e. $t \in I$
	that either $h'(t)=0$ or $x^0_t = u^0$,
	and in the latter case $h'(t) = re^{-rt}(u^0-u^1) > 0$.
	Assuming wlog that $x^0 \leq u^0$,%
		\footnote{Otherwise the IC mechanism $(\min\{x^0,u^0\},X^1)$ would satisfy \eqref{eq:strong_dominance}.}
	the expression for $h'$ implies that $h$ is $ru^0$-Lipschitz on $I$.
	Thus $h$ is increasing on $I$,
	as desired.

	Since (by hypothesis) $h$ is not identically zero,
	it is constant at some $k>0$,
	so that $X^1_t = X^0_t + e^{rt} k$ for every $t \in \R_+$.
	Thus $X^{1\dag} \coloneqq \min\left\{ X^1, X^0 + u^1 \right\}$
	is strictly smaller than $X^1$ after some time $T>0$,
	so that $\bigl( x^0, X^{1\dag} \bigr)$ satisfies \eqref{eq:strong_dominance}.
	And it is incentive-compatible.%
		\footnote{We have $h^\dag(t) = e^{-rt} u^1 \in (0,h^\dag(T))$ for $t > T$, and this expression is decreasing.}
\end{proof}

\subsection{Proof of \texorpdfstring{\Cref{theorem:deadline}}{Theorem~\ref{theorem:deadline}} (p.~\pageref{theorem:deadline})}
\label{app:pf_theorem_deadline}

Fix a non-deadline mechanism $(x,X)$ with $x \leq u^0$ a.e.;%
	\footnote{IC mechanisms not of this form are dominated, by \Cref{lemma:lequ0} and \Cref{proposition:indiff}.}
we will show that it is dominated by the deadline mechanism $\bigl( x^\dag, X^\dag \bigr)$ whose deadline $T$ satisfies
\begin{equation*}
	\left( 1 - e^{-rT} \right) u^0 + e^{-rT} u^\star
	\equiv X^\dag_0
	= X_0 \join u^\star ,
\end{equation*}
where `$\join$' denotes the pointwise maximum.

\begin{namedthm}[Claim.]
	\label{claim:dl_leq_origjoin}
	$X^\dag \leq X \join u^\star$.
\end{namedthm}

\begin{proof}[Proof]%
	\renewcommand{\qedsymbol}{$\square$}
	For $t \geq T$, we have $X^\dag = u^\star \leq X \join u^\star$.
	For $t < T$, suppose first that $X^\dag_0 = X_0$;
	then since $x^\dag = u^0 \geq x$ on $[0,t] \subseteq [0,T]$, we have
	\begin{align*}
		e^{-rt} X^\dag_t
		&= X^\dag_0 - r \int_0^t e^{-rs} x_s^\dag \dd s
		\\
		&\leq X_0 - r \int_0^t e^{-rs} x_s \dd s
		= e^{-rt} X_t
		\leq e^{-rt} \left( X_t \join u^\star \right) .
	\end{align*}
	If instead $X^\dag_0 = u^\star$,
	then the fact that $x^\dag \geq u^\star$ yields
	\begin{align*}
		e^{-rt} X^\dag_t
		&= X^\dag_0 - r \int_0^t e^{-rs} x_s^\dag \dd s
		\\
		&\leq u^\star - r \int_0^t e^{-rs} u^\star \dd s
		= e^{-rt} u^\star
		\leq e^{-rt} \left( X_t \join u^\star \right) .
		\qedhere
	\end{align*}
\end{proof}%
\renewcommand{\qedsymbol}{$\blacksquare$}

The concave function $F^1 - F^0$ is uniquely maximised at $u^\star$,
so is strictly increasing on $\left[ 0, u^\star \right]$
and strictly decreasing on $\left[ u^\star, u^0 \right]$.
Since $u^\star \leq X^\dag \leq X \join u^\star$ by the \hyperref[claim:dl_leq_origjoin]{claim}, it follows that
\begin{equation}
	\left[ F^1 - F^0 \right]\left( X^\dag \right)
	\geq \left[ F^1 - F^0 \right]\left( X \join u^\star \right) .
	\label{eq:deadline_pf_ineq1}
\end{equation}
Since $X \join u^\star \geq X$,
and the two differ only when both are in $\left[ 0, u^\star \right]$, we have
\begin{equation}
	\left[ F^1 - F^0 \right]\left( X \join u^\star \right)
	\geq \left[ F^1 - F^0 \right]\left( X \right) ,
	\label{eq:deadline_pf_ineq2}
\end{equation}
which chained together with the preceding inequality yields
\begin{equation}
	\left[ F^1 - F^0 \right]\left( X^\dag \right)
	\geq \left[ F^1 - F^0 \right]\left( X \right) .
	\label{eq:deadline_pf_ineq3}
\end{equation}
The facts that $X^\dag_0 = X_0 \join u^\star \geq X_0$
and that $F^0$ is increasing on $\left[ 0, u^0 \right]$
together imply
\begin{equation}
	F^0\left( X^\dag_0 \right)
	\geq F^0\left( X_0 \right) .
	\label{eq:deadline_pf_ineq4}
\end{equation}
Thus for any distribution $G$, using the expression for the principal's payoff derived in the sketch proof (\cpageref{eq:affine_payoff}), we have
\begin{align*}
	\Pi_G\left( x^\dag, X^\dag \right)
	&= F^0\left( X_0^\dag \right)
	+ \E_G\left(
	e^{-r\tau} \left[ F^1 - F^0 \right]\left( X^\dag_\tau \right)
	\right)
	&&
	\\
	&\geq F^0\left( X^\dag_0 \right)
	+ \E_G\left(
	e^{-r\tau} \left[ F^1 - F^0 \right]\left( X_\tau \right)
	\right)
	&& \text{by \eqref{eq:deadline_pf_ineq3}}
	\\
	&\geq F^0\left( X_0 \right)
	+ \E_G\left(
	e^{-r\tau} \left[ F^1 - F^0 \right]\left( X_\tau \right)
	\right)
	&& \text{by \eqref{eq:deadline_pf_ineq4}}
	\\
	&= \Pi_G(x,X) .
	&&
\end{align*}

It remains show that $\bigl( x^\dag, X^\dag \bigr)$ delivers a \emph{strict} improvement for some distribution $G$.
We shall accomplish this by showing that the inequality \eqref{eq:deadline_pf_ineq3} holds strictly on a non-null set of times, so that the first inequality in the above display is strict for any distribution $G$ with full support.
Since $X^\dag \leq X \join u^\star$ by the \hyperref[claim:dl_leq_origjoin]{claim} and $X,X^\dag$ are continuous,
there are two cases: either
\hyperdest{item:deadline_pf_casea}(a)~$X^\dag < X \join u^\star$ on a non-null set of times, or
\hyperdest{item:deadline_pf_caseb}(b)~$X^\dag = X \join u^\star$.

\emph{Case \hyperlink{item:deadline_pf_casea}{(a)}:
$X^\dag < X \join u^\star$ on a non-null set $\mathcal{T}$.}
In this case, the inequality \eqref{eq:deadline_pf_ineq1} holds strictly on $\mathcal{T}$,
and thus so does \eqref{eq:deadline_pf_ineq3}.

\emph{Case \hyperlink{item:deadline_pf_caseb}{(b)}:
$X^\dag = X \join u^\star$.}
Since the original mechanism $(x,X)$ is not a deadline mechanism,
there must be a non-null set of times on which $x \neq x^\dag$,
and thus $X \neq X^\dag = X \join u^\star$ on some non-null set $\mathcal{T}$,
so that $X < X \join u^\star$ on $\mathcal{T}$.
Then \eqref{eq:deadline_pf_ineq2} is strict on $\mathcal{T}$,
and thus so is \eqref{eq:deadline_pf_ineq3}.
\qed

\subsection{Proof of \texorpdfstring{\Cref{proposition:dl_charac}}{Proposition~\ref{proposition:dl_charac}} (p.~\pageref{proposition:dl_charac})}
\label{app:deadline_charac}

Write $\bigl( x^T, X^T \bigr)$ for the deadline mechanism with deadline $T$,
and $\pi_G(T)$ for its payoff under a distribution $G$.
By \Cref{theorem:deadline}, any undominated mechanism is a deadline mechanism.
We showed in the text (§\ref{sec:deadline:undom_deadline}, \cpageref{sec:deadline:undom_deadline}) that those with deadline $T < \underline{T}$ are dominated,
so it remains only to show that those with deadline $T \geq \underline{T}$ are not.
By \Cref{theorem:deadline}, it suffices to prove that $( x^T, X^T )$ for $T \in \left[ \underline{T}, \infty \right]$
is not dominated by another deadline mechanism.%
	\footnote{Were $( x^T, X^T )$ dominated, it would be dominated by an undominated mechanism (\Cref{proposition:dominated_by_undominated}, \cref{suppl:undom_opt_properties}), which by \Cref{theorem:deadline} must be a deadline mechanism.}

\emph{Part~1: finite deadlines.}
Fix a deadline $T \in \left[ \underline{T}, \infty \right)$;
we shall identify a distribution $G$ under which the deadline $T$ yields a strictly higher payoff than any other deadline.
In particular, consider the point mass at $T - \underline{T}$.
The mechanism $\bigl( x^T, X^T \bigr)$
has $x = u^0$ on $\left[ 0, T-\underline{T} \right] \subseteq \left[ 0, T \right]$
and
\begin{equation*}
	X^T_{T-\underline{T}}
	= \left( 1 - e^{-r\underline{T}} \right) u^0
	+ e^{-r\underline{T}} u^\star
	= u^1
\end{equation*}
by \eqref{eq:deadline_X} on \cpageref{eq:deadline_X} and the definition of $\underline{T}$.
Thus $\bigl( x^T, X^T \bigr)$ provides flow payoff $F^0\left(u^0\right)$ before the breakthrough and $F^1\left(u^1\right)$ afterwards, which is the first-best.
Any other deadline $T'$ has $X^{T'}_{T-\underline{T}} \neq u^1$,
so provides a strictly lower post-disclosure payoff and a no higher pre-disclosure payoff.

\emph{Part~2: the infinite deadline.}
Fix an arbitrary finite deadline $T \in [0,\infty)$;
we must show that $\bigl( x^T, X^T \bigr)$
does not dominate $\left( x^\infty, X^\infty \right)$.
To that end, we shall identify a distribution $G$ under which the former mechanism is strictly worse.
In particular, let $G^t$ denote the point mass at some $t \geq T$.
Under this distribution, the payoff difference between the two mechanisms is
\begin{multline*}
	\pi_{G^t}(T)
	- \pi_{G^t}(\infty)
	= e^{-rt} \left\{
	\left[ F^1\left( u^\star \right) - F^1\left( u^0 \right) \right]
	- \left[ F^0\left( u^\star \right) - F^0\left( u^0 \right) \right]
	\right\}
	\\
	+ e^{-rT}
	\left[ F^0\left( u^\star \right) - F^0\left( u^0 \right) \right] .
\end{multline*}
The second term is strictly negative since $F^0$ is uniquely maximised at $u^0$ and $u^\star \leq u^1 < u^0$.
By choosing $t \geq T$ large enough, we can make the first term as small as we wish,
so that the payoff difference is strictly negative.
\qed

\subsection{Generalisation and proof of \texorpdfstring{\Cref{proposition:opt_deadline}}{Proposition~\ref{proposition:opt_deadline}} (p.~\pageref{proposition:opt_deadline})}
\label{app:opt_deadline_pf}

In this appendix,
we obtain a general characterisation of optimal deadlines which entails \Cref{proposition:opt_deadline} and which delivers comparative statics.
Write $\bigl( x^T, X^T \bigr)$ for the deadline mechanism with deadline $T \in [0,\infty]$,
and consider the first-order condition
\begin{align*}
	&\left[ 1 - G(T) \right] \alpha
	+ \int_{[0,T]} F^{1+}\left( X^T_t \right) G( \dd t )
	\nonumber
	\\
	\leq 0
	\leq{}
	&\left[ 1 - G(T-) \right] \alpha
	+ \int_{[0,T)} F^{1-}\left( X^T_t \right) G( \dd t ) ,
	\tag{$\partial$}
	\label{eq:foc_general}
\end{align*}
where $F^{1-}$ ($F^{1+}$) is the left-hand (right-hand) derivative of $F^1$,%
	\footnote{These are well-defined since $F^1$ is concave.}
\begin{equation*}
	\alpha
	\coloneqq \frac{ F^0\left( u^0 \right) - F^0\left( u^\star \right) }
	{ u^0 - u^\star } ,
\end{equation*}
and $G(T-) \coloneqq \lim_{t \uparrow T} G(t)$ for $T>0$,
$G(0-) \coloneqq G(0)$ and $G(\infty) \coloneqq 1$.

\begin{remark}
	\label{remark:foc_general_specific}
	If $F^1$ is differentiable on $\left( 0, u^0 \right)$,
	then \eqref{eq:foc_general} reads
	\begin{equation*}
		\left[ G(T) - G(T-) \right]
		\left[ F^{1\prime}\left( u^\star \right) - \alpha \right]
		\leq \left[ 1 - G(T) \right] \alpha
		+ \int_{[0,T]} F^{1\prime}\left( X^T_t \right) G( \dd t )
		\leq 0 .%
			\footnote{In case $u^\star = 0$, we write $F^{1\prime}(0) \coloneqq F^{1+}(0)$ and assume that the latter is finite.}
	\end{equation*}
	If in addition $F^0$ is affine on $\left[ 0, u^0 \right]$ and $u^\star$ strictly exceeds zero,
	then
	\begin{align*}
		\alpha
		= F^{0\prime}\left( u^\star \right)
		= F^{1\prime}\left( u^\star \right)
		= F^{1\prime}\left( X^T_t \right)
		\quad \text{for any $t \geq T$,}
	\end{align*}
	and thus \eqref{eq:foc_general} may be written $\E_G\left( F^{1\prime}\left( X^T_\tau \right) \right) = 0$, as in \Cref{proposition:opt_deadline}.
\end{remark}

Whether or not it is exactly optimal to use a deadline mechanism,
\eqref{eq:foc_general} is a necessary condition for optimal choice \emph{among deadline mechanisms:}

\begin{lemma}
	\label{lemma:opt_deadline_nec}
	Among deadline mechanisms with finite deadline,
	the best for $G$ satisfy \eqref{eq:foc_general}.
\end{lemma}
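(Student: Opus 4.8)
Write $\bigl(x^T,X^T\bigr)$ for the deadline mechanism with deadline $T\in[0,\infty)$ and $\pi_G(T)\coloneqq\Pi_G\bigl(x^T,X^T\bigr)$, and fix a finite deadline $T^\star$ that maximises $\pi_G$ over $[0,\infty)$. The plan is to compute the one-sided derivatives of $\pi_G$ at $T^\star$ and to read off \eqref{eq:foc_general} from the first-order conditions $\pi_G'(T^\star+)\leq 0$ and (when $T^\star>0$) $\pi_G'(T^\star-)\geq 0$, where $\pi_G'(T+)$ and $\pi_G'(T-)$ denote the right- and left-hand derivatives.

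First I would record the relevant formulae. By \eqref{eq:deadline_X}, $X^T_t=u^0-(u^0-u^\star)e^{-r(T-t)}$ for $t\leq T$ and $X^T_t=u^\star$ for $t\geq T$; in particular $X^T_0=u^0-(u^0-u^\star)e^{-rT}$, and for each fixed $t$ the map $T\mapsto X^T_t$ is continuous, constant at $u^\star$ on $[0,t]$, and continuously differentiable and strictly increasing on $[t,\infty)$ with $\partial_T X^T_t=r(u^0-u^\star)e^{-r(T-t)}$ there (a right-derivative at $T=t$). Also $F^0(u^0)-F^0(u^\star)=\alpha(u^0-u^\star)$ by definition of $\alpha$.

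Next, the forward derivative. For $h>0$ I would compare $\bigl(x^{T+h},X^{T+h}\bigr)$ with $\bigl(x^{T},X^{T}\bigr)$ breakthrough-time by breakthrough-time inside the payoff $\Pi_G$ (§\ref{sec:problem:opt}): (i) for $\tau\leq T$ the pre-disclosure integral is unchanged while the disclosure reward rises, with $h^{-1}e^{-r\tau}\bigl[F^1(X^{T+h}_\tau)-F^1(X^T_\tau)\bigr]\to r(u^0-u^\star)e^{-rT}F^{1+}(X^T_\tau)$, the right-hand derivative of $F^1$ appearing because the argument increases; (ii) for $\tau>T+h$ the reward is unchanged ($=u^\star$) while the pre-disclosure integral rises because the flow switches from $u^\star$ to $u^0$ on $(T,T+h)$, the increment over $h$ tending to $re^{-rT}\bigl(F^0(u^0)-F^0(u^\star)\bigr)=r(u^0-u^\star)e^{-rT}\alpha$; (iii) the states $\tau\in(T,T+h]$ contribute $O(h)$ per state over $G$-mass $G(T+h)-G(T)\to 0$, hence are negligible. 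Since $\tau=T$ falls under (i) with $X^T_T=u^\star$, passing to the limit (justified by monotone/dominated convergence, the difference quotients being controlled using concavity of $F^1$ and $X^T_\cdot\in[u^\star,u^0]$) yields
\[
	\pi_G'(T+)
	=r(u^0-u^\star)e^{-rT}\Bigl(\bigl[1-G(T)\bigr]\alpha+\int_{[0,T]}F^{1+}\bigl(X^T_t\bigr)\,G(\dd t)\Bigr).
\]
A symmetric computation, comparing $\bigl(x^{T},X^{T}\bigr)$ with $\bigl(x^{T-h},X^{T-h}\bigr)$ for $h>0$, gives
\[
	\pi_G'(T-)
	=r(u^0-u^\star)e^{-rT}\Bigl(\bigl[1-G(T-)\bigr]\alpha+\int_{[0,T)}F^{1-}\bigl(X^T_t\bigr)\,G(\dd t)\Bigr),
\]
where now the states $\tau<T$ contribute through the reward alone, with the left-hand derivative $F^{1-}$ (the argument decreases) and the index set $[0,T-h]$ increasing to $[0,T)$, the states $\tau\geq T$ contribute through the pre-disclosure switch on $(T-h,T)$ over $G$-mass $1-G(T-)$, and the atom at $\tau=T$, if any, lands on the pre-disclosure side.

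Finally, since $r(u^0-u^\star)e^{-rT^\star}>0$, the necessary conditions $\pi_G'(T^\star+)\leq 0$ and $\pi_G'(T^\star-)\geq 0$ are precisely the two inequalities in \eqref{eq:foc_general}; when $T^\star=0$ the second is automatic, its left-hand side being $[1-G(0)]\alpha\geq 0$. The main obstacle is the bookkeeping around atoms of $G$: one must keep track of which breakthrough times see a reward change versus a pre-disclosure-flow change under a one-sided perturbation of the deadline, so as to land on the exact endpoint conventions — $\int_{[0,T]}$ with $F^{1+}$ versus $\int_{[0,T)}$ with $F^{1-}$, and $1-G(T)$ versus $1-G(T-)$ — and to confirm that the swept interval, $(T,T+h]$ forward and $(T-h,T)$ backward, is always negligible. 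The differentiation-under-the-integral step itself is routine given compactness of $[u^\star,u^0]$ and concavity of $F^1$, modulo the familiar caveat at $u^\star=0$, where $F^{1+}(u^\star)$ may be $+\infty$ but the integrals remain well defined in $(-\infty,+\infty]$.
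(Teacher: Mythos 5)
Your proposal is correct and takes essentially the same approach as the paper: compute the one-sided derivatives $\pi_G^{\pm}(T)$ of the principal's deadline payoff, then read off \eqref{eq:foc_general} from the first-order conditions. The paper's own proof is a three-line application of Observation~\ref{observation:piG_derivs} (which states these derivative formulas without proof); you re-derive them from scratch via the state-by-state decomposition, so your argument is more self-contained. The one place you diverge is the boundary $T^\star=0$: the paper rules this case out by asserting $\pi_G^+(0)>0$, whereas you simply observe that the second inequality of \eqref{eq:foc_general} reduces there to $0\leq[1-G(0)]\alpha$ and holds automatically. Your handling is the more robust of the two, since $\pi_G^+(0) = K\bigl([1-G(0)]\alpha + G(0)F^{1+}(u^\star)\bigr)$ can in principle be non-positive when $u^\star=u^1$ (so $F^{1+}(u^\star)\le 0$, equivalently $\underline{T}=0$) and $G$ has enough mass at zero — but then $T=0$ satisfies \eqref{eq:foc_general} anyway, which is exactly what your argument shows.
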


In the affine case, \eqref{eq:foc_general} is both necessary and sufficient:

\begin{namedthm}[\Cref*{proposition:opt_deadline}$\boldsymbol{'}$.]
	\label{proposition:opt_deadline_general}
	If the old frontier $F^0$ is affine on $\left[ 0, u^0 \right]$,
	then a mechanism is optimal for $G$
	iff it is a deadline mechanism with deadline satisfying \eqref{eq:foc_general}.
\end{namedthm}

In light of \Cref{remark:foc_general_specific}, this result immediately implies \Cref{proposition:opt_deadline}.
Finally, optimal deadlines are monotone in the distribution $G$:

\begin{proposition}[comparative statics]
	\label{proposition:opt_deadline_comp}
	If the old frontier $F^0$ is affine on $\left[0,u^0\right]$
	and $G$ first-order stochastically dominates $G^\dag$,
	then $T \geq T^\dag$
	for some deadlines $T$ and $T^\dag$
	that are optimal for $G$ and $G^\dag$, respectively.
\end{proposition}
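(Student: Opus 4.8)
The plan is to show that the payoff $\pi_G(T)$ of the deadline mechanism $\left( x^T, X^T \right)$ has increasing differences in the deadline $T$ and the distribution $G$ (the latter ordered by first-order stochastic dominance), and then to read off the comparative statics by comparing largest maximisers. Since $F^0$ is affine on $\left[ 0, u^0 \right]$, the payoff formula from the sketch proof of \Cref{theorem:deadline} gives
\[
	\pi_G(T)
	= F^0\left( X^T_0 \right)
	+ \int_{[0,\infty)} e^{-rt} \left[ F^1 - F^0 \right]\left( X^T_t \right) G( \dd t ) ,
\]
with $X^T$ as in \eqref{eq:deadline_X}; the first term is nondecreasing in $T$ (since $F^0$ is increasing on $\left[ 0, u^0 \right]$ and $T \mapsto X^T_0$ is increasing) and does not depend on $G$.

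The crux is the claim that, for finite deadlines $T < T'$, the function
\[
	t \longmapsto \Delta(t) \coloneqq e^{-rt} \left\{ \left[ F^1 - F^0 \right]\left( X^{T'}_t \right) - \left[ F^1 - F^0 \right]\left( X^T_t \right) \right\}
\]
is nondecreasing and bounded. Given this, first-order stochastic dominance of $G$ over $G^\dag$ yields $\int \Delta \dd G \geq \int \Delta \dd G^\dag$, hence $\pi_G(T') - \pi_G(T) \geq \pi_{G^\dag}(T') - \pi_{G^\dag}(T)$ for all finite $T < T'$; a dominated-convergence argument (using $X^{T'}_t \uparrow u^0$ as $T' \to \infty$) extends the inequality to $T' = \infty$. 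To prove the claim, observe that $u^\star \leq X^T_t \leq X^{T'}_t \leq u^0$ for every $t$, that $F^1 - F^0$ is concave with maximiser $u^\star$, and that $\dot X^T_t = r\left( X^T_t - u^0 \right)$ for $t < T$; substituting these, the inequality $\dot\Delta(t) \geq 0$ reduces, on $[0,T]$, to the monotonicity on $\left[ u^\star, u^0 \right]$ of
\[
	x \longmapsto \left[ F^1 - F^0 \right]'(x) \left( x - u^0 \right) - \left[ F^1 - F^0 \right](x) ,
\]
which holds because its derivative is $\left[ F^1 - F^0 \right]''(x)\left( x - u^0 \right) \geq 0$ by concavity and $x \leq u^0$; on $(T, T')$ the desired inequality is immediate since $\left[ F^1 - F^0 \right]$ is decreasing on $\left[ u^\star, u^0 \right]$, and on $\left[ T', \infty \right)$ one has $\Delta \equiv 0$. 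I expect this computation — and especially its non-smooth version, which must be carried out with supergradients in place of $\left[ F^1 - F^0 \right]'$ — to be the main obstacle; everything else is soft.

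Finally, for any distribution $H$, the map $\pi_H$ is upper semi-continuous on the compact lattice $[0,\infty]$, so it attains its maximum; let $\sigma(H)$ denote its \emph{largest} maximiser. A maximiser of $\pi_H$ strictly below $\underline{T}$ would be dominated by $\underline{T}$ (as shown in §\ref{sec:deadline:undom_deadline}), which would then also be a maximiser; hence $\sigma(H) \geq \underline{T}$, so $\bigl( x^{\sigma(H)}, X^{\sigma(H)} \bigr)$ is undominated by \Cref{proposition:dl_charac}. It also maximises $\pi_H$, i.e.\ $\Pi_H$ over deadline mechanisms, hence — since an optimal mechanism exists (\cref{suppl:undom_opt_properties}) and is a deadline mechanism by \Cref{theorem:deadline} — over all incentive-compatible mechanisms; so $\bigl( x^{\sigma(H)}, X^{\sigma(H)} \bigr)$ is optimal for $H$. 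Now put $T \coloneqq \sigma(G)$ and $T^\dag \coloneqq \sigma\left( G^\dag \right)$, both optimal. If $T < T^\dag$, then by increasing differences $\pi_G\left( T^\dag \right) - \pi_G(T) \geq \pi_{G^\dag}\left( T^\dag \right) - \pi_{G^\dag}(T) \geq 0$, so $T^\dag$ is a maximiser of $\pi_G$ exceeding $\sigma(G)$ — a contradiction. Hence $T \geq T^\dag$.
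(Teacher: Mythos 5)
Your proof follows the same strategy as the paper's---establish that $\pi_G(T)$ has increasing differences in $(T,G)$ under the FOSD order, then read off monotonicity of the optimal deadline---but the two arguments diverge in exactly the place you flag as ``the main obstacle.'' You show increasing differences by computing $\Delta(t) = e^{-rt}\{[F^1-F^0](X^{T'}_t)-[F^1-F^0](X^T_t)\}$ and differentiating it, which pushes the burden onto the map $x \mapsto [F^1-F^0]'(x)(x-u^0)-[F^1-F^0](x)$ and ultimately onto $[F^1-F^0]''$; the model, however, only imposes concavity of $F^1$, so a superdifferential reworking would be needed. The paper avoids this entirely by differentiating $\pi_G$ in $T$ rather than $F^1-F^0$ in $u$: using the explicit formulae for the one-sided derivatives $\pi^\pm_G(T)$ (\Cref{observation:piG_derivs}), the inequalities $\pi^\pm_G \geq \pi^\pm_{G^\dag}$ reduce to the claim that $t \mapsto \1_{[0,T]}(t)\,F^{1\pm}\!\left(X^T_t\right) + \1_{(T,\infty)}(t)\,\alpha$ is increasing, which follows at once from $F^{1\pm}$ and $X^T$ being decreasing and $F^{1\pm}\leq\alpha$ on $\left[u^\star,u^0\right]$, with no second-order regularity required. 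So the paper's route is tighter precisely where you anticipate trouble. Your endgame---identifying the largest maximiser $\sigma(H)$ of $\pi_H$ on the compact set $[0,\infty]$, observing $\sigma(H)\geq\underline{T}$ so that it is undominated and hence optimal, and deriving $T\geq T^\dag$ by contradiction from increasing differences---is a correct and self-contained substitute for the paper's appeal to Topkis's theorem. To make your version watertight, either carry out the supergradient version of the $\Delta$ computation, or (more simply) switch to differentiating $\pi_G$ in $T$ as the paper does.
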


To prove the above results,
we rely on two observations:

\begin{observation}
	\label{observation:undom_not_optimal}
	A deadline $T \in \left[0,\infty\right]$ satisfies \eqref{eq:foc_general} for \emph{some} distribution $G$ exactly if it belongs to $\left[ \underline{T}, \infty \right)$.%
		\footnote{Each deadline $T \in \left[ \underline{T}, \infty \right)$
		satisfies \eqref{eq:foc_general} when $G$ is the point mass at $T-\underline{T}$.
		Conversely, any $T < \underline{T}$ violates the first inequality in \eqref{eq:foc_general} since then $X^T < u^1$ and thus $F^{1+}( X^T ) > 0$,
		while $T=\infty$ violates the second inequality because $F^{1-}( X^\infty ) \equiv F^{1-}(u^0) < 0$.}
\end{observation}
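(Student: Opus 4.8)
The statement is an equivalence about deadlines $T \in [0,\infty]$; I would prove the two implications separately, exhibiting a witnessing distribution in one direction and ruling one out in the other. The ingredients are the explicit formula \eqref{eq:deadline_X} for the disclosure reward $X^T$ of the deadline mechanism with deadline $T$ (from which $t \mapsto X^T_t$ is manifestly nonincreasing, since $u^0 > u^\star$); the defining equation $\left(1 - e^{-r\underline{T}}\right)u^0 + e^{-r\underline{T}}u^\star = u^1$ of $\underline{T}$; the facts that the concave function $F^1$ has unique peak $u^1$, so that $F^{1+}(u) > 0$ for $u < u^1$, $F^{1-}(u) < 0$ for $u > u^1$, and $F^{1+}(u^1) \leq 0 \leq F^{1-}(u^1)$; and that $\alpha > 0$, which holds because $u^\star < u^0$ ($0$ cannot be a common supergradient of $F^0$ and $F^1$ at $u^0$) while $F^0$ has its unique peak at $u^0$.

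For the ``if'' direction, fix $T \in \left[\underline{T},\infty\right)$ and take $G$ to be the point mass at $t_\circ \coloneqq T - \underline{T} \in [0,T]$. By \eqref{eq:deadline_X} and the definition of $\underline{T}$, $X^T_{t_\circ} = u^1$. When $\underline{T} > 0$ we have $G(T) = G(T-) = 1$, so both bracketed terms in \eqref{eq:foc_general} vanish and the two integrals collapse to $F^{1+}(u^1) \leq 0$ and $F^{1-}(u^1) \geq 0$ — exactly \eqref{eq:foc_general}. When $\underline{T} = 0$ (so $u^\star = u^1$ and the atom sits at $T$ itself) one instead has $G(T-) = 0$; the first integral is still $F^{1+}(u^\star) = F^{1+}(u^1) \leq 0$, while the right-hand side of \eqref{eq:foc_general} equals $\left[1 - G(T-)\right]\alpha = \alpha > 0$, so \eqref{eq:foc_general} holds again.

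For the ``only if'' direction I would show that every $T \notin \left[\underline{T},\infty\right)$ fails \eqref{eq:foc_general} under \emph{every} $G$. If $T < \underline{T}$, then since $\theta \mapsto \left(1 - e^{-r\theta}\right)u^0 + e^{-r\theta}u^\star$ is strictly increasing and hits $u^1$ at $\underline{T}$, we get $X^T_t \leq X^T_0 < u^1$ for all $t \in [0,T]$, whence $F^{1+}\!\left(X^T_t\right) \geq F^{1+}\!\left(X^T_0\right) > 0$ there; since also $\alpha > 0$, the left-hand side of \eqref{eq:foc_general} is at least $\left[1 - G(T)\right]\alpha + F^{1+}\!\left(X^T_0\right) G(T) > 0$ (one of the two summands being positive), violating the first inequality. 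If $T = \infty$, then $X^\infty_t \equiv u^0 > u^1$, so $F^{1-}\!\left(X^\infty_t\right) \equiv F^{1-}(u^0) < 0$; as $G(\infty-) = 1$, the right-hand side of \eqref{eq:foc_general} equals $F^{1-}(u^0) < 0$, violating the second inequality.

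I expect no genuine difficulty here — the content is essentially bookkeeping. The two points deserving care are the boundary convention for $G(T-)$ when the witnessing point mass lies exactly at $T$ (the degenerate case $\underline{T} = 0$, $u^\star = u^1$), and making sure that the \emph{strict} signs $F^{1+} > 0$ and $F^{1-} < 0$ away from the peak are drawn from \emph{uniqueness} of the peak of $F^1$, not merely from concavity.
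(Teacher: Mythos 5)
Your proposal is correct and matches the paper's own (footnoted) argument step for step: sufficiency via the point mass at $T-\underline{T}$, which forces $X^T_\tau = u^1$ so that $F^{1+}(u^1)\leq 0 \leq F^{1-}(u^1)$; necessity via $F^{1+}\bigl(X^T\bigr)>0$ when $X^T<u^1$ and $F^{1-}\bigl(u^0\bigr)<0$ when $T=\infty$. One cosmetic slip in the sub-case $T=\underline{T}=0$: the paper's convention sets $G(0-)\coloneqq G(0)=1$, not $0$ as you write, but \eqref{eq:foc_general} then reduces to $F^{1+}(u^1)\leq 0\leq 0$ and the conclusion is unaffected.
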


\begin{observation}
	\label{observation:piG_derivs}
	Write $\pi_G(T)$ for the principal's payoff under $G$ from deadline $T$.
	Letting `$\meet$' denote the minimum, $\pi_G(T)$ is equal to
	\begin{equation*}
		\int_{\R_+} \left[
		r \int_0^{t \meet T} e^{-rs} F^0\left( u^0 \right) \dd s
		+ r \int_{t \meet T}^t e^{-rs} F^0\left( u^\star \right) \dd s
		+ e^{-rt} F^1\left( X^T_t \right)
		\right] G( \dd t ) .
	\end{equation*}
	Its right- and left-hand derivatives are (for a constant $K>0$)
	\begin{align*}
		\pi_G^+(T)
		&= e^{-rT} K \left( \left[ 1 - G(T) \right]
		\alpha
		+ \int_{[0,T]} F^{1+}\left( X^T_t \right) G( \dd t )
		\right)
		\quad &&\text{for $T \in [0,\infty)$}
		\\
		\pi_G^-(T)
		&=
		e^{-rT} K \left(
		\left[ 1 - G(T-) \right]
		\alpha
		+ \int_{[0,T)} F^{1-}\left( X^T_t \right) G( \dd t )
		\right)
		&&\text{for $T \in (0,\infty)$.}
	\end{align*}
\end{observation}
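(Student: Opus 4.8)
The plan is in two steps: derive the closed form for $\pi_G(T)$ by substituting the deadline mechanism into the payoff functional, and then obtain the one-sided derivatives by differentiating under the integral sign, the only delicate point being the atom that $G$ may place at~$T$.

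For the closed form, recall that $\bigl( x^T, X^T \bigr)$ has pre-disclosure flow $x^T_t = u^0$ for $t \leq T$ and $x^T_t = u^\star$ for $t > T$, so that $F^0\bigl( x^T_t \bigr)$ equals $F^0\bigl( u^0 \bigr)$ on $[0,T]$ and $F^0\bigl( u^\star \bigr)$ on $(T,\infty)$, while by \eqref{eq:deadline_X} its disclosure reward is $X^T_t = u^0 - e^{-r(T-t)}\bigl( u^0 - u^\star \bigr)$ for $t \leq T$ and $X^T_t = u^\star$ for $t > T$. Since the mechanism is IC, its payoff is $\E_G\bigl( r \int_0^\tau e^{-rs} F^0( x^T_s ) \dd s + e^{-r\tau} F^1( X^T_\tau ) \bigr)$; splitting the first integral at $\tau \meet T$ and writing the expectation as an integral against $G$ gives the displayed expression for $\pi_G(T)$.

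For the derivatives, write $\pi_G(T) = \int_{\R_+} g(t,T)\, G(\dd t)$ with $g(t,T)$ the bracketed integrand, and fix a finite $T$. For each $t \neq T$ the map $T \mapsto g(t,T)$ is differentiable: if $t > T$ only the two $F^0$-integrals vary with $T$, giving $\partial_T g(t,T) = r e^{-rT}[ F^0(u^0) - F^0(u^\star) ] = e^{-rT} K \alpha$ with $K \coloneqq r( u^0 - u^\star ) > 0$; if $t < T$ only $e^{-rt} F^1( X^T_t )$ varies, and since $\partial_T X^T_t = r( u^0 - u^\star ) e^{-r(T-t)} > 0$, the chain rule gives $\partial_T g(t,T) = e^{-rT} K F^{1\prime}( X^T_t )$ where $F^1$ is differentiable at $X^T_t$, while at a kink the right and left $T$-derivatives of $g$ use $F^{1+}( X^T_t )$ and $F^{1-}( X^T_t )$ respectively (as $X^T_t$ increases in $T$). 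At the single point $t = T$, where $X^T_T = u^\star$, perturbing $T$ upward moves the atom into the ``$t \leq T$'' regime (contribution $e^{-rT} K F^{1+}( u^\star )$), while perturbing it downward moves it into the ``$t > T$'' regime (contribution $e^{-rT} K \alpha$); this asymmetry is exactly what produces $G(T)$ in the formula for $\pi_G^+$ and $G(T-)$ in that for $\pi_G^-$. Collecting the pointwise $T$-derivatives over $[0,T]$ and $(T,\infty)$ (respectively $[0,T)$ and $[T,\infty)$), and using $G( (T,\infty) ) = 1 - G(T)$ (respectively $G( [T,\infty) ) = 1 - G(T-)$), gives the two claimed formulas, once the interchange of derivative and integral is justified.

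That interchange is the main obstacle. For the right derivative I would bound the difference quotient $\delta^{-1}[ g(t,T+\delta) - g(t,T) ]$ uniformly in small $\delta > 0$: on $\{ t \leq T \}$ it equals $\delta^{-1} e^{-rt}[ F^1( X^{T+\delta}_t ) - F^1( X^T_t ) ]$ with $u^\star \leq X^T_t \leq X^{T+\delta}_t \leq u^0$ and $X^{T+\delta}_t - X^T_t \leq r\delta( u^0 - u^\star )$, so when $u^\star > 0$ the Lipschitz bound for $F^1$ on $[ u^\star, u^0 ] \subseteq (0,u^0]$ gives a constant bound; on $\{ t > T+\delta \}$ the quotient equals $\delta^{-1} r[ F^0(u^0) - F^0(u^\star) ] \int_T^{T+\delta} e^{-rs} \dd s$, also constant-bounded; and on the shrinking set $( T, T+\delta ]$ the quotient is $O(1)$ while $G( ( T, T+\delta ] ) = G(T+\delta) - G(T) \to 0$ by right-continuity of $G$, so this piece contributes $o(1)$. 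Dominated convergence then yields $\pi_G^+(T) = \int_{\R_+} \partial^+_T g(t,T)\, G(\dd t)$, and the left derivative follows symmetrically, now using left-continuity of $s \mapsto G(s-)$ to kill the boundary piece. The one genuinely awkward sub-case is $u^\star = 0$ with $F^{1+}(0) = \infty$, where $F^1$ is not Lipschitz near $u^\star$ and $\pi_G^+(T)$ may equal $+\infty$ when $G$ has an atom at $T$; there one replaces dominated by monotone convergence, and the formula stays correct under the convention that the integral is then $+\infty$.
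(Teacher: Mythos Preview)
The paper states this observation without proof, as a routine computation. Your derivation is correct and is exactly that computation: the identification $K = r(u^0 - u^\star)$, the pointwise one-sided $T$-derivatives of the integrand, the handling of a possible $G$-atom at $T$ (which is what produces the $G(T)$ vs.\ $G(T-)$ asymmetry), and the dominated-convergence justification for the interchange all check out, including your remark on the edge case $u^\star = 0$.
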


\begin{proof}[Proof of \Cref{lemma:opt_deadline_nec}]
	$\pi_G^+(T) \leq 0$ is necessary for $T \in [0,\infty)$ to be best,
	and this rules out $T=0$ since $\pi_G^+(0) > 0$.
	Furthermore, $\pi_G^-(T) \geq 0$ is necessary for $T \in (0,\infty)$ to be best.
	So any best $T<\infty$ satisfies \eqref{eq:foc_general}.
\end{proof}

\begin{proof}[Proof of {\hyperref[proposition:opt_deadline_general]{\Cref*{proposition:opt_deadline}$\xslantmath{'}$}}]
	All optimal mechanisms are deadline mechanisms by \Cref{theorem:deadline} (\cpageref{theorem:deadline}),
	and their deadlines satisfy \eqref{eq:foc_general} if finite by \Cref{lemma:opt_deadline_nec}.
	To rule out the infinite deadline,
	define $\smash{\phi_T \coloneqq F^{1-}\bigl( X^T \bigr) \1_{[0,T)}}$
	for each $T \in \R_+$,
	and note that $\phi_T \to F^{1-}\left(u^0\right)$ pointwise as $T \to \infty$
	(since
	$X^T \uparrow u^0$ pointwise and
	$F^{1-}$ is left-continuous)
	and that $(\phi_T)_{T \in \R_+}$
	is uniformly bounded above by $\alpha$.%
		\footnote{Since $X^T > u^\star$ on $[0,T)$,
		we need only show that $F^{1-} \leq \alpha$ on $(u^\star,u^0)$.
		If $u^\star=0$, then $F^{1-} < \alpha$ on $(0,u^0)$ by definition of $u^\star$.
		And if $u^\star>0$, then $F^{1-} \leq F^{1+}(u^\star) \leq \alpha$ on $(u^\star,u^0)$
		since $F^1$ is concave
		and $\alpha$ is a supergradient of $F^1$ at $u^\star$ (by definition of $u^\star$).}
	Thus by Fatou's lemma,
	\begin{equation*}
		\limsup_{T \to \infty}
		\int_{[0,T)} F^{1-}\left( X^T_t \right) G( \dd t )
		= \limsup_{T \to \infty}
		\int_{\R_+} \phi_T \dd G
		\leq F^{1-}\left( u^0 \right)
		< 0 ,
	\end{equation*}
	so that $\pi_G^-(T) < 0$ for all sufficiently large $T \in \R_+$.
	Hence $\pi_G$ is eventually strictly decreasing,
	so that any sufficiently late deadline $T<\infty$
	is strictly better than $\infty$:
	namely, $\pi_G(T)
	> \lim_{T' \to \infty} \pi_G(T')
	= \pi_G(\infty)$.

	For the converse, consider a deadline mechanism $\bigl( x^T, X^T \bigr)$ that satisfies \eqref{eq:foc_general}.
	Then $T \geq \underline{T}$ by \Cref{observation:undom_not_optimal},
	so that $\bigl( x^T, X^T \bigr)$ is undominated by \Cref{proposition:dl_charac} (\cpageref{proposition:dl_charac}).
	It remains to show that $\bigl( x^T, X^T \bigr)$ maximises the principal's payoff under $G$,
	for which it suffices that $T$ maximise $\pi_G$.%
		\footnote{Then $( x^T, X^T )$ is better under $G$ than any other deadline mechanism.
		And it is better than any \emph{non-}deadline mechanism $(x,X)$
		because any such is dominated by some undominated mechanism (by \Cref{proposition:dominated_by_undominated} in \cref{suppl:undom_opt_properties}),
		which by \Cref{theorem:deadline} must be a deadline mechanism $\smash{( x^{T'}, X^{T'} )}$,
		so that $\smash{\Pi_G( x^T, X^T ) \geq \Pi_G( x^{T'}, X^{T'} ) \geq \Pi_G(x,X)}$.}

	Since $T$ satisfies \eqref{eq:foc_general},
	we need only show that this first-order condition is sufficient for maximisation of $\pi_G$,
	by establishing that $\pi_G^+$ and $\pi_G^-$ are down-crossing.%
		\footnote{I.e. that $\pi_G^+(T) \leq \mathrel{(<)} 0$
		implies $\pi_G^+(T') \leq \mathrel{(<)} 0$ for any $T<T'$,
		and similarly for $\pi_G^-$.}
	It suffices to show that $T \mapsto e^{rT} \pi_G^+(T)$ and $T \mapsto e^{rT} \pi_G^-(T)$ are decreasing.
	For the former, take $T < T'$ and compute
	\begin{multline*}
		\frac{ e^{rT} \pi_G^+(T') - e^{rT} \pi_G^+(T) }{ K }
		= \left[ -G(T') - G(T) \right] \alpha
		+ \int_{(T,T']} F^{1+}\left( X^{T'}_t \right) G( \dd t )
		\\
		\qquad\qquad + \int_{[0,T]} \left[
		F^{1+}\left( X^{T'}_t \right)
		- F^{1+}\left( X^T_t \right)
		\right] G( \dd t )
		\\
		= \int_{(T,T']} \left[
		F^{1+}\left( X^{T'}_t \right) - \alpha
		\right] G( \dd t )
		+ \int_{[0,T]} \left[
		F^{1+}\left( X^{T'}_t \right)
		- F^{1+}\left( X^T_t \right)
		\right] G(\dd t) .
	\end{multline*}
	The first term is non-positive since $F^{1+} \leq \alpha$ on $\left[ u^\star, u^0 \right] \ni X^{T'}$, and the second is non-positive since $F^{1+}$ is decreasing and $\smash{X^{T'} \geq X^T}$.
	Similarly,
	\begin{multline*}
		\frac{ e^{rT} \pi_G^-(T') - e^{rT} \pi_G^-(T) }{ K }
		\\
		= \int_{[T,T')} \left[
		F^{1-}\left( X^{T'}_t \right) - \alpha
		\right] G( \dd t )
		+ \int_{[0,T)} \left[
		F^{1-}\left( X^{T'}_t \right)
		- F^{1-}\left( X^T_t \right)
		\right] G(\dd t) ,
	\end{multline*}
	where the second term is non-positive since $F^{1-}$ is decreasing.
	The first term is also non-positive because
	$\smash{F^{1-}\bigl( X^{T'}_t \bigr)
	\leq F^{1+}\left( u^\star \right)
	\leq \alpha}$ for each $t \in [T,T')$,
	where the first inequality holds since $F^1$ is concave and $\smash{X^{T'}_t > u^\star}$ for every $t<T'$,
	and the second holds by definition of $u^\star$.
\end{proof}

\begin{proof}[Proof of \Cref{proposition:opt_deadline_comp}]
	By Topkis's theorem,%
		\footnote{See e.g. Theorem~2.8.1 in \textcite[p. 76]{Topkis1998}.}
	it suffices to show that $\pi_G^+ \geq \pi_{G^\dag}^+$ and $\pi_G^- \geq \pi_{G^\dag}^-$ (`increasing differences').
	We have for any $T \in \R_+$ that
	\begin{align*}
		\frac{ e^{rT} \pi_G^+(T) }{ K }
		&= \E_G\left(
		\1_{[0,T]}(\tau) \times F^{1+}\left( X^T_\tau \right)
		+ \1_{(T,\infty)}(\tau) \times \alpha
		\right)
		\\
		&\geq \E_{G^\dag}\left(
		\1_{[0,T]}(\tau) \times F^{1+}\left( X^T_\tau \right)
		+ \1_{(T,\infty)}(\tau) \times \alpha
		\right)
		= \frac{ e^{rT} \pi_{G^\dag}^+(T) }{ K } ,
	\end{align*}
	where the equalities hold by \Cref{observation:piG_derivs},
	and the inequality holds because
	$G$ first-order stochastically dominates $G^\dag$
	and the map
	\begin{equation*}
		t
		\mapsto \1_{[0,T]}(t) \times F^{1+}\left( X^T_t \right)
		+ \1_{(T,\infty)}(t) \times \alpha
	\end{equation*}
	is increasing since $F^{1+}$ and $X^T$ are decreasing
	and we have $F^{1+} \leq \alpha$ on $\left[u^\star,u^0\right] \ni X^T$.
	A similar argument shows that $\pi_G^- \geq \pi_{G^\dag}^-$: the map is
	\begin{equation*}
		t
		\mapsto \1_{[0,T)}(t) \times F^{1-}\left( X^T_t \right)
		+ \1_{[T,\infty)}(t) \times \alpha ,
	\end{equation*}
	which is increasing since $F^{1+}$ and $X^T$ are decreasing
	and $F^{1-}\bigl( X^T_t \bigr) \leq F^{1+}\left( u^\star \right) \leq \alpha$ for $t<T$,
	where the first inequality holds since $F^1$ is concave and $X^T_t > u^\star$ for $t<T$,
	and the second follows from the definition of $u^\star$.
\end{proof}

\subsection{A superdifferential Euler equation}
\label{app:Euler}

In this appendix, we argue that optimal mechanisms are described by an Euler equation, and that this equation admits a solution with nice properties.

Formalising these claims poses two technical challenges:
(a)~the frontiers $F^0,F^1$ are (concave but) not necessarily differentiable, requiring us to use the \emph{superdifferential} calculus, and (b)~the frontiers' slopes (supergradients) may be unbounded.
Some work is required to overcome these challenges, both in this appendix and in the two that follow it. We view this work as worthwhile, since the alternative would be to assume away challenges (a) and (b), thereby ruling out many applications (e.g. \Cref{fig:projects} on \cpageref{fig:projects}). We shall rely heavily on the convex-analysis concepts
reviewed in \cref{app:background} (\cpageref{app:background}).

In the present appendix, our task is to define a superdifferential Euler equation for the principal's problem and relate it to optimality (§\ref{app:Euler:Euler_lemma}) and to construct a solution (§\ref{app:Euler:construction}).
These tools will be used in the next two appendices to prove \Cref{theorem:opt} and \Cref{proposition:opt_transition} (\cpageref{theorem:opt,proposition:opt_transition}).

\subsubsection{The Euler equation and optimality}
\label{app:Euler:Euler_lemma}

\begin{definition}
	\label{definition:Euler_general}
	Given a distribution $G$,
	a mechanism $(x,X)$ \emph{satisfies the Euler equation (for $G$)} iff
	there is a measurable $\phi^0 : \R_+ \to [0,\infty]$ and a $G$-integrable $\phi^1 : \R_+ \to [-\infty,\infty]$ such that
	$\phi^0(t)$ is a supergradient of $F^0$ at $x_t$ for almost all $t \in \R_+$ such that $G(t) < 1$,
	$\phi^1(t)$ is a supergradient of $F^1$ at $X_t$ for $G$-almost all $t \in \R_+$,
	and
	\begin{equation*}
		\left[ 1 - G(t) \right] \phi^0(t)
		+ \int_{[0,t]} \phi^1 \dd G = 0 \quad \text{for every $t \in \R_+$.}
		\label{eq:euler}
		\tag{E}
	\end{equation*}
\end{definition}

Let $\mathcal{X}$ be the set of all measurable maps $\R_+ \to \left[0,u^0\right]$.
For a given breakthrough distribution $G$, define $\pi_G : \mathcal{X} \to [-\infty,\infty)$ by
\begin{equation*}
	\pi_G(x)
	\coloneqq \Pi_G\left(x,X\right)
	= \E_G\left( r \int_0^\tau e^{-rs} F^0(x_s) \dd s
	+ e^{-r\tau} F^1\left(X_\tau\right)\right) .
\end{equation*}
This is the principal's payoff under $G$ from the mechanism $(x,X)$.

\begin{namedthm}[Euler lemma.]
	\label{lemma:euler}
	Let $G$ be any distribution,
	and suppose that a mechanism $(x,X)$ with $x \in \mathcal{X}$ satisfies the Euler equation (with some $\phi^0,\phi^1$).
	Then $x \in \argmax_{\mathcal{X}} \pi_G$.
	Moreover, any mechanism $\bigl( x^\dag, X^\dag \bigr)$
	with $x^\dag \in \argmax_{\mathcal{X}} \pi_G$ satisfies the Euler equation with (the same) $\phi^0,\phi^1$.
\end{namedthm}

The proof is in \cref{suppl:Euler_lemma_pf}.

To interpret the Euler equation, recall the sketch proof of \Cref{proposition:opt_transition} (\cpageref{proposition:opt_transition}), and note that if the the frontiers $F^0,F^1$ are differentiable on $\left( 0, u^0 \right)$, then a mechanism $(x,X)$ with $0 < x < u^0$ satisfies the Euler equation exactly if \eqref{eq:euler_true} on \cpageref{eq:euler_true} holds for a.e. $t \in \R_+$.%
	\footnote{For a detailed proof of this equivalence, see \Cref{observation:euler_differentiable} in §\ref{app:Euler:construction} below.}

For bounded $\phi^0$, the backward-looking integral equation \eqref{eq:euler}
is equivalent to a forward-looking integral equation plus an initial condition:

\begin{observation}
	\label{observation:euler_forward}
	For a distribution $G$,
	a bounded and measurable $\phi^0 : \R_+ \to \R$
	and a $G$-integrable $\phi^1 : \R_+ \to [-\infty,\infty]$,
	equation \eqref{eq:euler} holds iff
	$\E_G\left(\phi^1(\tau)\right) = 0$ and
	\begin{equation}
		\phi^0(t)
		= \E_G\left(\phi^1(\tau) \middle| \tau > t\right)
		\quad \text{for every $t \in \R_+$ such that $G(t) < 1$.}
		\label{eq:euler_forward}
	\end{equation}
\end{observation}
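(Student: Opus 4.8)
The plan is to exploit additivity of the $G$-integral. Since $\phi^1$ is $G$-integrable, for each $t \in \R_+$ we may split $\int_{\R_+} \phi^1 \dd G = \int_{[0,t]} \phi^1 \dd G + \int_{(t,\infty)} \phi^1 \dd G$, and moreover $\int_{(t,\infty)} \phi^1 \dd G = [1-G(t)]\,\E_G\!\left(\phi^1(\tau) \mid \tau > t\right)$ whenever $G(t) < 1$, since $\{\tau > t\}$ then has positive probability $1-G(t)$. With this identity in hand, equation \eqref{eq:euler} and the pair consisting of the initial condition $\E_G(\phi^1(\tau)) = 0$ together with \eqref{eq:euler_forward} become algebraic rearrangements of one another, apart from one limiting argument.

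For the `if' direction, suppose $\E_G(\phi^1(\tau)) = 0$ and \eqref{eq:euler_forward} holds, and fix $t \in \R_+$. If $G(t) < 1$, then \eqref{eq:euler_forward} gives $[1-G(t)]\phi^0(t) = \int_{(t,\infty)} \phi^1 \dd G = -\int_{[0,t]} \phi^1 \dd G$ (using $\int_{\R_+} \phi^1 \dd G = 0$), which is precisely \eqref{eq:euler} at $t$. If instead $G(t) = 1$, then $1 - G(t) = 0$ and $(t,\infty)$ is $G$-null, so the left-hand side of \eqref{eq:euler} equals $\int_{[0,t]} \phi^1 \dd G = \int_{\R_+} \phi^1 \dd G = 0$. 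Either way \eqref{eq:euler} holds.

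For the `only if' direction, assume \eqref{eq:euler} for every $t$. First I would obtain the initial condition by letting $t \to \infty$: the term $[1-G(t)]\phi^0(t)$ tends to $0$ because $G(t) \to 1$ while $\phi^0$ is bounded, and $\int_{[0,t]} \phi^1 \dd G = \int_{\R_+} \1_{[0,t]} \phi^1 \dd G \to \int_{\R_+} \phi^1 \dd G$ by dominated convergence, the dominating function being the $G$-integrable $\abs*{\phi^1}$; hence $\E_G(\phi^1(\tau)) = 0$. Then for any $t$ with $G(t) < 1$, rearranging \eqref{eq:euler} gives $[1-G(t)]\phi^0(t) = -\int_{[0,t]} \phi^1 \dd G = \int_{(t,\infty)} \phi^1 \dd G$, and dividing by $1 - G(t) > 0$ recovers \eqref{eq:euler_forward}.

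The only step needing care is the passage $t \to \infty$ in the second direction, where both hypotheses — boundedness of $\phi^0$ and $G$-integrability of $\phi^1$ — are genuinely used; the rest is bookkeeping with the integral decomposition and the trivial case $G(t) = 1$.
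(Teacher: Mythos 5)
Your proposal is correct and follows essentially the same route as the paper: you use the same decomposition of $\int_{\R_+} \phi^1 \dd G$ into $[0,t]$ and $(t,\infty)$ pieces to show equivalence directly, handle the $G(t)=1$ case separately, and for the converse obtain the initial condition $\E_G(\phi^1(\tau))=0$ by letting $t \to \infty$ using boundedness of $\phi^0$ and dominated convergence. The only stylistic difference is that the paper states a single algebraic identity covering both cases and reads off both directions from it, whereas you argue the two directions separately, but the underlying calculation is identical.
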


\begin{proof}
	For any $t \in \R_+$,
	$\int_{(t,\infty)} \phi^1 \dd G$ is finite since $\phi^1$ is $G$-integrable,
	so we may add and subtract it to obtain
	\begin{multline*}
		[1-G(t)] \phi^0(t)
		+ \int_{[0,t]} \phi^1 \dd G
		\\
		=
		\begin{cases}
			[1-G(t)] \left[ \phi^0(t)
			- \E_G\left( \phi^1(\tau) \middle| \tau>t \right) \right]
			+ \E_G\left( \phi^1(\tau) \right)
			& \text{if $G(t)<1$}
			\\
			\E_G\left( \phi^1(\tau) \right)
			& \text{if $G(t)=1$.}
		\end{cases}
	\end{multline*}
	Thus $\E_G\left(\phi^1(\tau)\right) = 0$ and \eqref{eq:euler_forward} imply \eqref{eq:euler}.
	Conversely, if \eqref{eq:euler} holds,
	then letting $t \to \infty$
	and using the boundedness of $\phi^0$ yields
	\begin{equation*}
		0
		= - \lim_{t \to \infty} [1-G(t)] \phi^0(t)
		= \lim_{t \to \infty} \int_{\R_+} \phi^1 \1_{[0,t]} \dd G
		= \int_{\R_+} \phi^1 \dd G
		= \E_G\left( \phi^1(\tau) \right) ,
	\end{equation*}
	where the third equality holds by dominated convergence;
	thus \eqref{eq:euler_forward} holds.
\end{proof}

\subsubsection{Constructing a solution of the Euler equation}
\label{app:Euler:construction}

\begin{definition}
	\label{definition:simple_Fs}
	$F^0,F^1$ are \emph{simple} if they are strictly concave and possess bounded derivatives,
	$F^{1\prime}$ is Lipschitz continuous on $\left[u^\star,u^0\right]$,
	and $u^\star > 0$.
\end{definition}

\begin{observation}
	\label{observation:euler_differentiable}
	If $F^0,F^1$ are simple,
	then a mechanism $(x,X)$ with $u^\star \leq x \leq u^0$ satisfies the Euler equation iff
	it satisfies \eqref{eq:euler_true} on \cpageref{eq:euler_true} for a.e. $t \in \R_+$,
	or equivalently (by \Cref{observation:euler_forward} above)
	$\E_G\left( F^{1\prime}(X_\tau) \right) = 0$ and
	\begin{equation*}
		F^{0\prime}(x_t)
		= \E_G\left( F^{1\prime}(X_\tau) \middle| \tau > t\right)
		\quad \text{for a.e. $t \in \R_+$ such that $G(t) < 1$.}
	\end{equation*}
\end{observation}

\begin{proof}
	Fix $(x,X)$ with $u^\star \leq x \leq u^0$.
	If $(x,X)$ satisfies the Euler equation with $\phi^0,\phi^1$,
	then $\phi^1(s) = F^{1\prime}(X_s)$ for $G$-a.e. $s \in \R_+$,
	so that \eqref{eq:euler} reads
	\begin{equation*}
		\left[ 1 - G(t) \right] \phi^0(t)
		+ \int_{[0,t]} F^{1\prime}(X_s) G( \dd s ) = 0
		\quad \text{for every $t \in \R_+$,}
	\end{equation*}
	and thus \eqref{eq:euler_true} holds for a.e. $t \in \R_+$ since $\phi^0(t) = F^{0\prime}(x_t)$ for a.e. $t \in \R_+$ with $G(t)<1$.

	Suppose instead that $(x,X)$ satisfies \eqref{eq:euler_true} for a.e. $t \in \R_+$.
	Let $T \coloneqq \inf\{ t \in \R_+ : G(t)=1 \}$ with the convention that $\inf \varnothing \coloneqq \infty$,
	and define $\phi^0,\phi^1 : \R_+ \to \R$ by
	\begin{equation*}
		\phi^0(t)
		\coloneqq
		\begin{cases}
			-\frac{1}{1-G(t)} \int_{[0,t]} F^{1\prime}(X_s) G( \dd s )
			& \text{for $t < T$} \\
			F^{0\prime}(x_t)
			& \text{for $t \geq T$}
		\end{cases}
	\end{equation*}
	and $\phi^1(t) \coloneqq F^{1\prime}(X_t)$ for every $t \in \R_+$.
	Then $\phi^0(t) = F^{0\prime}(x_t)$ for a.e. $t \in \R_+$ by \eqref{eq:euler_true},
	and $\phi^0,\phi^1$ satisfy \eqref{eq:euler}.
\end{proof}

Let $\mathcal{X}'$ be the set of all decreasing maps $\R_+ \to \left[u^\star,u^0\right]$, endowed with the topology of pointwise convergence.
Given a sequence of technologies $\left(F^0_n,F^1_n\right)_{n \in \N}$ satisfying our model assumptions, write $u^0_n$, $u^\star_n$, and $\mathcal{X}'_n$ for the analogues of $u^0$, $u^\star$ and $\mathcal{X}'$, respectively.

\begin{observation}
	\label{observation:thmC_pf_sequential_compactness}
	For technologies $F^0,F^1$
	and $\left(F^0_n,F^1_n\right)_{n \in \N}$ such that $u^\star_n \to u^\star$ and $u^0_n \uparrow u^0$,
	any sequence $(x^n)_{n \in \N}$ with $x^n \in \mathcal{X}'_n$ for each $n \in \N$
	admits a convergent subsequence with limit in $\mathcal{X}'$.
	(Thus $\mathcal{X}'$ is sequentially compact.)
\end{observation}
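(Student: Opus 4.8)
This is Helly's selection theorem, with a little extra bookkeeping for the moving endpoints $u^\star_n,u^0_n$. The plan is as follows. First I note uniform boundedness: since $u^0_n \uparrow u^0$ we have $u^0_n \leq u^0$, so every map $x^n$ takes values in the fixed compact interval $[0,u^0]$. Next I perform the standard diagonal extraction over a countable dense set. Enumerating $\Q_+ = \{q_1,q_2,\dots\}$, I extract nested subsequences along which $x^n(q_1), x^n(q_2),\dots$ converge (each sequence lies in $[0,u^0]$) and pass to the diagonal, obtaining a subsequence $(x^{n_k})_k$ that converges at every rational to a limit $g$; since $g$ is a pointwise limit of decreasing functions, it is decreasing on $\Q_+$.

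Then I extend $g$ to all of $\R_+$ by $\bar x(t) \coloneqq \sup\{ g(q) : q \in \Q_+,\ q > t\}$, which is a decreasing function and hence has at most countably many discontinuities. At each continuity point $t$ of $\bar x$, sandwiching $x^{n_k}(q') \leq x^{n_k}(t) \leq x^{n_k}(q)$ for rationals $q < t < q'$ (using that $x^{n_k}$ is decreasing), letting $k \to \infty$, and then letting $q \uparrow t$ and $q' \downarrow t$, shows $x^{n_k}(t) \to \bar x(t)$. A second diagonal extraction over the (countably many) discontinuity points of $\bar x$ — redefining $\bar x$ there to equal the new limits, which lie in $[\bar x(t),\bar x(t-)]$ and hence preserve monotonicity — yields, after relabelling, a subsequence with $x^{n_k}(t) \to \bar x(t)$ for \emph{every} $t \in \R_+$; that is, $x^{n_k} \to \bar x$ in the topology of pointwise convergence, with $\bar x$ decreasing. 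To see that $\bar x \in \mathcal{X}'$, fix $t \in \R_+$: from $u^\star_{n_k} \leq x^{n_k}(t) \leq u^0_{n_k}$ for all $k$, letting $k \to \infty$ and using $u^\star_{n_k} \to u^\star$ and $u^0_{n_k} \to u^0$ gives $u^\star \leq \bar x(t) \leq u^0$. Finally, sequential compactness of $\mathcal{X}'$ is the special case $(F^0_n,F^1_n) \equiv (F^0,F^1)$, for which $u^\star_n = u^\star \to u^\star$, $u^0_n = u^0 \uparrow u^0$ and $\mathcal{X}'_n = \mathcal{X}'$.

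I do not expect any genuine obstacle; the argument is routine. The only point requiring a little care is the upgrade from convergence on the dense set $\Q_+$ (and at continuity points of the candidate limit) to convergence at every point of $\R_+$, which is handled by the second diagonal extraction over the countably many discontinuities together with the fact that a pointwise limit of decreasing functions is decreasing.
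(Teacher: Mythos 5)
Your proof is correct and takes essentially the same approach as the paper: uniform boundedness via $x^n \leq u^0_n \leq u^0$, Helly's selection theorem for a pointwise-convergent subsequence with decreasing limit, and then passing the bounds $u^\star_n \leq x^n \leq u^0_n$ to the limit to locate $\bar x$ in $[u^\star, u^0]$. The only difference is presentational: the paper cites the Helly selection theorem while you reprove it from scratch via the double diagonal argument.
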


\begin{proof}
	The sequence $(x^n)_{n \in \N}$ lives in $\left[0,u^0\right]$
	since $x^n \leq u^0_n \leq u^0$ for each $n \in \N$.
	Thus by the Helly selection theorem \parencite[e.g.][p. 167]{Rudin1976},
	$(x^n)_{n \in \N}$ admits a subsequence
	along which it converges pointwise to some decreasing $x : \R_+ \to \left[0,u^0\right]$.
	We have $x \geq u^\star$
	since $x^n \geq u^\star_n$ for each $n \in \N$
	and $u^\star_n \to u^\star$.
	By considering the constant sequence $\left(F^0_n,F^1_n\right) \equiv \left(F^0,F^1\right)$,
	we see that $\mathcal{X}'$ is sequentially compact.
\end{proof}

The following three lemmata construct a solution of the Euler equation.
Their (tedious) proofs are relegated to \cref{suppl:construction}.

\begin{lemma}
	\label{lemma:existence_simple}
	If $F^0,F^1$ are simple and $G$ has finite support,
	then there exists an $x \in \mathcal{X}'$ such that $(x,X)$ satisfies the Euler equation.
\end{lemma}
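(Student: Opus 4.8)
The plan is to reduce the Euler equation to a finite-dimensional shooting problem. By \Cref{observation:euler_differentiable}, since $F^0,F^1$ are simple it suffices to exhibit a decreasing $x:\R_+\to\left[u^\star,u^0\right]$ whose disclosure reward $X$ satisfies $\left[1-G(t)\right]F^{0\prime}(x_t)+\int_{[0,t]}F^{1\prime}(X_s)\,G(\dd s)=0$ for almost every $t$. Write the support of $G$ as $t_1<\dots<t_m$ with masses $g_1,\dots,g_m>0$, put $t_0\coloneqq 0$ and $t_{m+1}\coloneqq\infty$, and observe that on each interval $\left[t_j,t_{j+1}\right)$ both $G(t)$ and $\int_{[0,t]}F^{1\prime}(X_s)\,G(\dd s)$ are constant; hence any solution is constant on $\left[t_j,t_{j+1}\right)$ for $j<m$ and unconstrained on $\left[t_m,\infty\right)$, where $1-G\equiv 0$ leaves only the global requirement $\sum_{k=1}^m g_k F^{1\prime}(X_{t_k})=0$. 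So I look for $x$ equal to a level $c_j\in\left[u^\star,u^0\right]$ on each $\left[t_j,t_{j+1}\right)$; then $X_{t_k}=\sum_{i=k}^m w_{k,i}c_i$ with discount weights $w_{k,i}\coloneqq e^{-r(t_i-t_k)}-e^{-r(t_{i+1}-t_k)}\geq 0$ that sum to $1$.

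I would build the levels by backward recursion with one free parameter. Let $\gamma:\R\to\left[u^\star,u^0\right]$ invert $F^{0\prime}$ on $\left[u^\star,u^0\right]$ and be clamped outside (so $\gamma(s)=u^0$ for $s\leq 0=F^{0\prime}(u^0)$ and $\gamma(s)=u^\star$ for $s\geq F^{0\prime}(u^\star)$); $\gamma$ is continuous and decreasing. For each terminal level $z\in\left[u^\star,u^0\right]$ set $c_m^z\coloneqq z$ and, for $j=m-1,\dots,0$,
\begin{equation*}
	c_j^z\coloneqq\gamma\!\left(\frac{\sum_{k=j+1}^m g_k\,F^{1\prime}(X^z_{t_k})}{\sum_{k=j+1}^m g_k}\right),
	\qquad X^z_{t_k}=\sum_{i=k}^m w_{k,i}c_i^z,
\end{equation*}
which is well-defined since $X^z_{t_k}$ depends only on the already-chosen $c_k^z,\dots,c_m^z$; call the resulting step function $x^z$. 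Two things need checking. First, $x^z\in\mathcal{X}'$: I would show $c_j^z\geq c_{j+1}^z$ by downward induction, the base case $c_{m-1}^z\geq z$ following from $F^{1\prime}<F^{0\prime}$ on $\left(u^\star,u^0\right]$ (true because $u^\star$ is the rightmost equal-slope point and a strict local max of $F^1-F^0$), and the step from the fact that $c_{j+1}^z\geq\dots\geq c_m^z$ makes each $X^z_{t_k}$ (a convex combination of $c_k^z,\dots,c_m^z$) decreasing in $k$, hence $k\mapsto F^{1\prime}(X^z_{t_k})$ increasing, hence the average defining $c_j^z$ no larger than that defining $c_{j+1}^z$. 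Secondly, $z\mapsto c_j^z$---and thus $z\mapsto X^z_{t_k}$ and $\Psi(z)\coloneqq\sum_{k=1}^m g_k F^{1\prime}(X^z_{t_k})$---is continuous, being assembled from $\gamma$, $F^{1\prime}$ and affine maps.

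Then I would shoot on $z$. At $z=u^\star$ the recursion returns $c_j^z\equiv u^\star$ (as $F^{1\prime}(u^\star)=F^{0\prime}(u^\star)$ and $\gamma(F^{0\prime}(u^\star))=u^\star$), so $\Psi(u^\star)=F^{1\prime}(u^\star)=F^{0\prime}(u^\star)>0$; at $z=u^0$ it returns $c_j^z\equiv u^0$ (as $F^{1\prime}(u^0)<0$ is clamped up), so $\Psi(u^0)=F^{1\prime}(u^0)<0$. The intermediate value theorem gives $z^\star$ with $\Psi(z^\star)=0$; put $x\coloneqq x^{z^\star}$. To conclude, note that $\Psi(z^\star)=0$ makes the increasing sequence $a_k\coloneqq F^{1\prime}(X_{t_k})$ have $\sum_k g_k a_k=0$, so every upper-tail average $\sum_{k\geq j+1}g_k a_k\big/\sum_{k\geq j+1}g_k$ lies in $\left[0,a_m\right]\subseteq\left[0,F^{0\prime}(u^\star)\right]=F^{0\prime}\!\left(\left[u^\star,u^0\right]\right)$; hence the clamp in $\gamma$ never binds and $F^{0\prime}(c_j^{z^\star})$ equals that tail average exactly. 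Substituting this together with $\sum_k g_k a_k=0$ verifies $\left[1-G(t)\right]F^{0\prime}(x_t)+\int_{[0,t]}F^{1\prime}(X_s)\,G(\dd s)=0$ on each interval $\left[t_j,t_{j+1}\right)$, $j=0,\dots,m$; \Cref{observation:euler_differentiable} then gives that $(x,X)$ satisfies the Euler equation.

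The main obstacle is the monotonicity bookkeeping: proving $x^z\in\mathcal{X}'$ for every $z$, and that at $z^\star$ the clamp is inactive, so that the superdifferential equation \eqref{eq:euler} collapses to the pointwise identity \eqref{eq:euler_differentiable} with $\phi^0=F^{0\prime}\circ x$. Both reduce to monotonicity of $k\mapsto F^{1\prime}(X_{t_k})$ and to the sign and range of $F^{0\prime}$ near $u^\star$ and $u^0$---the routine but laborious checks the paper defers to the supplement.
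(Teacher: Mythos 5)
Your proposal is correct and takes essentially the same route as the paper's own proof in \cref{suppl:construction:pf_existence_simple}: you both parameterise by the terminal reward level, run a backward recursion through the clamped inverse of $F^{0\prime}$ applied to a conditional tail-average of $F^{1\prime}(X)$, establish monotonicity of the resulting step function by backward induction on the atoms of $G$, and then shoot with the intermediate value theorem on the aggregate condition $\E_G(F^{1\prime}(X_\tau))=0$. The only cosmetic difference is that you index constants $c_j$ on the inter-atom intervals while the paper indexes values $x^\lambda_{t_k}$ at the atoms themselves, and you spell out explicitly (correctly, and somewhat more carefully than the paper) that the clamp in $\gamma=\inv F^{0\prime}$ is inactive at the root $z^\star$, which is what lets \eqref{eq:euler_differentiable} hold with equality rather than merely in the superdifferential sense.
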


\begin{lemma}
	\label{lemma:approx_G}
	Let $F^0,F^1$ be simple,
	and let $(G_n)_{n \in \N}$ be a sequence of finite-support CDFs converging pointwise to a CDF $G$.
	Let $(x^n)_{n \in \N}$ be a sequence in $\mathcal{X}'$ such that $\left(x^n,X^n\right)$ satisfies the Euler equation for $\left(F^0,F^1,G_n\right)$ for each $n \in \N$,
	and suppose that $(x^n)_{n \in \N}$ converges pointwise to some $x \in \mathcal{X}'$.
	Then $(x,X)$ satisfies the Euler equation for $\left(F^0,F^1,G\right)$.
\end{lemma}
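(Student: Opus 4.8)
The plan is to route through the \hyperref[lemma:euler]{Euler lemma} rather than attempting to pass to the limit directly in the differential form \eqref{eq:euler_differentiable}; the latter is awkward because the null set of times on which \eqref{eq:euler_differentiable} might fail varies with $n$. Since $F^0,F^1$ are simple, $\mathcal{X}' \subseteq \mathcal{X}$ and the ambient set $\mathcal{X}$ of the \hyperref[lemma:euler]{Euler lemma} does not depend on $G$ (only on $u^0$, which is fixed here). By the \hyperref[lemma:euler]{Euler lemma}, the hypothesis that $\left(x^n,X^n\right)$ satisfies the Euler equation for $\left(F^0,F^1,G_n\right)$ is equivalent to $x^n \in \argmax_{\mathcal{X}} \pi_{G_n}$, i.e.\ $\pi_{G_n}(x^n) \geq \pi_{G_n}(y)$ for every $y \in \mathcal{X}$. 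It therefore suffices to show that $x \in \argmax_{\mathcal{X}} \pi_G$, because the ``moreover'' clause of the \hyperref[lemma:euler]{Euler lemma} then yields that $(x,X)$ satisfies the Euler equation for $\left(F^0,F^1,G\right)$, which is the claim.

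To prove $x \in \argmax_{\mathcal{X}} \pi_G$, I would fix $y \in \mathcal{X}$ and take $n \to \infty$ in $\pi_{G_n}(x^n) \geq \pi_{G_n}(y)$. Writing $g^y(t) \coloneqq r\int_0^t e^{-rs} F^0(y_s) \dd s + e^{-rt} F^1(Y_t)$, so that $\pi_{G_n}(y) = \int_{\R_+} g^y \dd G_n$, and similarly $g^n$, $g^x$ for the sequence $x^n$ and the limit $x$: since $F^0,F^1$ are simple they are Lipschitz, hence bounded and continuous, on $\left[0,u^0\right]$, and since $y, x^n, x$ and all the associated present values live in $\left[0,u^0\right]$, the functions $g^y, g^n, g^x$ are continuous and uniformly bounded by some constant $C$. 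Pointwise convergence of the CDFs $G_n$ to $G$ is weak convergence, so $\pi_{G_n}(y) = \int g^y \dd G_n \to \int g^y \dd G = \pi_G(y)$ because $g^y$ is bounded and continuous. For the $x^n$ term, dominated convergence gives $X^n \to X$ pointwise, hence $g^n \to g^x$ pointwise; moreover $\dot{X}^n_t = r(X^n_t - x^n_t)$ and $\tfrac{\dd}{\dd t}g^n$ are bounded uniformly in $n$ (here using that $F^{1\prime}$ is bounded), so the $g^n$ are equi-Lipschitz and thus $g^n \to g^x$ uniformly on compacts.

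The remaining step is to deduce $\int g^n \dd G_n \to \int g^x \dd G$. Given $\eps > 0$, I would pick $M$ a continuity point of $G$ with $1 - G(M) < \eps$; then $1 - G_n(M) < 2\eps$ for large $n$, so the tails $\int_{(M,\infty)} g^n \dd G_n$ and $\int_{(M,\infty)} g^x \dd G$ are bounded by $2C\eps$ and $C\eps$. On $[0,M]$ one splits $\int_{[0,M]} g^n \dd G_n - \int_{[0,M]} g^x \dd G$ into $\int_{[0,M]}(g^n - g^x) \dd G_n$, which vanishes by uniform convergence on $[0,M]$, and $\int g^x \1_{[0,M]} \dd G_n - \int g^x \1_{[0,M]} \dd G$, which vanishes because $g^x \1_{[0,M]}$ is bounded with its only discontinuity at $M$, a $G$-null point (portmanteau). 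Letting $n \to \infty$ and then $\eps \to 0$ gives $\pi_{G_n}(x^n) \to \pi_G(x)$, whence $\pi_G(x) = \lim_n \pi_{G_n}(x^n) \geq \lim_n \pi_{G_n}(y) = \pi_G(y)$, so $x \in \argmax_{\mathcal{X}} \pi_G$ and we are done. I expect the main obstacle to be this last limit-exchange: one needs both the equi-Lipschitz estimate to upgrade pointwise to uniform-on-compacts convergence of the $g^n$, and the tail control above to handle the non-compact domain $\R_+$; but conceptually the key move is recognizing that optimality (via the \hyperref[lemma:euler]{Euler lemma}) is a more robust object to pass to the limit than the Euler equation itself.
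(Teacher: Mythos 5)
Your continuity argument for $\pi_{G_n}(x^n) \to \pi_G(x)$ and $\pi_{G_n}(y) \to \pi_G(y)$ looks sound: the equi-Lipschitz bound on the $g^n$ does upgrade pointwise to uniform-on-compacts convergence, and the tail estimate together with the portmanteau theorem at a $G$-continuity point handles the non-compactness of $\R_+$. So you do establish $x \in \argmax_{\mathcal{X}} \pi_G$. That part is fine, and it is indeed a different route from the paper, which passes to the limit directly in the differential form of the Euler equation (via \Cref{observation:euler_differentiable}), using the Lipschitz continuity of $F^{1\prime}$ and a tail estimate in $T$.

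The gap is in the final step. You write that ``the \hyperref[lemma:euler]{Euler lemma} gives an equivalence'' and then invoke its moreover clause to go from $x \in \argmax_{\mathcal{X}}\pi_G$ to $(x,X)$ satisfying the Euler equation for $G$. But the \hyperref[lemma:euler]{Euler lemma} is not an unconditional biconditional: \emph{both} of its conclusions are premised on the supposition that \emph{some} mechanism $(x',X')$ with $x' \in \mathcal{X}$ already satisfies the Euler equation for $G$, and the moreover clause says that any maximiser then satisfies it with that mechanism's $\phi^0,\phi^1$. For $G_n$ this premise is available (you are given that $(x^n,X^n)$ satisfies the Euler equation for $G_n$), but for the limit $G$ it is exactly what you are trying to establish. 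And you cannot appeal to the \hyperref[corollary:existence]{existence corollary} to supply the premise, because that corollary is proved \emph{using} \Cref{lemma:approx_G} — so the argument would be circular. To close the gap you would have to re-derive the Euler equation directly from $x \in \argmax_{\mathcal{X}}\pi_G$ (e.g.\ by a first-order-condition argument using that $F^0,F^1$ are simple, so $\phi^0 = F^{0\prime}(x_\cdot)$ and $\phi^1 = F^{1\prime}(X_\cdot)$ are the only candidates), but then you would essentially be reproving a piece of the \hyperref[lemma:euler]{Euler lemma} rather than invoking it. The paper's more pedestrian limit-passing in the differential form avoids this chicken-and-egg problem entirely.
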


\begin{lemma}
	\label{lemma:approx_F}
	Given $F^0,F^1$, there exists a sequence $\left(F^0_n,F^1_n\right)_{n \in \N}$ of simple technologies
	such that $u^0_n \uparrow u^0$ and $u^\star_n \to u^\star$ as $n \to \infty$
	and, for any CDF $G$ with unbounded support and any mechanism $(x,X)$,
	if $x$ is the pointwise limit of a sequence $(x^n)_{n \in \N}$
	along which $(x^n,X^n)$ satisfies the Euler equation for $\left(F^0_n,F^1_n,G\right)$
	and $x^n \in \mathcal{X}'_n$ for each $n \in \N$,
	then $(x,X)$ satisfies the Euler equation for $\left(F^0,F^1,G\right)$ with some increasing $\phi^0,\phi^1$.
\end{lemma}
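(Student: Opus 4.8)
The plan is to (1) construct the sequence of simple technologies by an explicit three-stage smoothing of $F^0,F^1$, and then (2) show by a compactness argument that solutions of the Euler equation for the smoothed data converge, along a subsequence, to a solution for the original data. For stage~(1), I would first cure the infinite slope near the origin: fix $\delta_n\downarrow0$ and replace each $F^j$ on $[0,\delta_n]$ by the affine function through $\bigl(\delta_n,F^j(\delta_n)\bigr)$ with slope $F^{j+}(\delta_n)\in\R$, leaving $F^j$ unchanged on $[\delta_n,\infty)$. Next I would add a common smooth, strictly concave $g_n$ with $g_n\to0$ and $g_n',g_n''$ uniformly bounded and tending to $0$; since this is added to \emph{both} frontiers, $F^1_n-F^0_n$ is unchanged off $[0,\delta_n]$, so the equal-slope locus is undisturbed there. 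Finally I would mollify to obtain $C^1$ frontiers with $F^{1\prime}_n$ Lipschitz on compact subsets of $(0,\infty)$. One then checks that the peaks $u^0_n$ can be arranged to increase to $u^0$; that $u^\star_n$, the rightmost point of $[0,u^0_n]$ at which $F^0_n,F^1_n$ have equal slopes, converges to $u^\star$ — when $u^\star>0$ this uses the model assumption that $u^\star$ is a \emph{strict} local maximum of $F^1-F^0$, while if $u^\star=0$ one simply takes $u^\star_n\downarrow0$; and that the (now finite) slopes at $u^0_n$ rule out $u^\star_n=u^0_n$. Hence the $F^0_n,F^1_n$ are simple.

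For stage~(2), fix $G$ with unbounded support and a sequence $x^n\in\mathcal{X}'_n$ with $(x^n,X^n)$ solving the Euler equation for $(F^0_n,F^1_n,G)$ and converging pointwise to $x\in\mathcal{X}'$. Since $0\le x^n\le u^0$, dominated convergence gives $X^n_t\to X_t$ for every $t$. By \Cref{observation:euler_differentiable}, writing $\phi^0_n\coloneqq F^{0\prime}_n(x^n_\cdot)$ and $\phi^1_n\coloneqq F^{1\prime}_n(X^n_\cdot)$,
\begin{equation*}
	[1-G(t)]\,\phi^0_n(t)+\int_{[0,t]}\phi^1_n\,\dd G=0
	\quad\text{for a.e. }t,
	\qquad\text{and}\qquad
	\E_G\bigl(\phi^1_n(\tau)\bigr)=0 .
\end{equation*}
Because $F^{j\prime}_n$ is decreasing and $x^n,X^n$ are decreasing, each $\phi^0_n,\phi^1_n$ is an \emph{increasing} function on $\R_+$; their negative parts are uniformly bounded (dominated by $|F^{1\prime}_n(u^0_n)|$, which stays bounded as $u^0_n\to u^0$ since $F^1$ has finite one-sided derivatives there), and for each fixed $t$ the positive parts are bounded on $[0,t]$ by the convergent sequence $\bigl(\phi^1_n(t)\bigr)^+$, whence also $\phi^0_n(t)=-[1-G(t)]^{-1}\int_{[0,t]}\phi^1_n\,\dd G$ is bounded in $n$. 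A diagonal application of the Helly selection theorem extracts a subsequence along which $\phi^0_n\to\phi^0$ and $\phi^1_n\to\phi^1$ pointwise, with $\phi^0,\phi^1$ increasing and $\phi^1$ $G$-integrable (by Fatou applied to the positive parts). Passing to the limit in the supergradient inequalities (using the locally uniform convergence $F^j_n\to F^j$ on $(0,\infty)$ for a dense set of test points, then upper semi-continuity) shows that $\phi^0(t)$ is a supergradient of $F^0$ at $x_t$ and $\phi^1(s)$ one of $F^1$ at $X_s$; and letting $n\to\infty$ in the displayed integral equation — splitting $\phi^1_n$ into its bounded-on-$[0,t]$ positive part and uniformly bounded negative part — yields $[1-G(t)]\phi^0(t)+\int_{[0,t]}\phi^1\,\dd G=0$ for a.e. $t$, hence for every $t$ by right-continuity of both sides. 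Thus $(x,X)$ satisfies the Euler equation for $(F^0,F^1,G)$ with the increasing pair $(\phi^0,\phi^1)$.

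The main obstacle is stage~(1): arranging the approximants to be \emph{genuinely} simple — strict concavity, $C^1$ with bounded slope, $F^{1\prime}_n$ Lipschitz on $[u^\star_n,u^0_n]$, and $u^\star_n>0$ — while \emph{simultaneously} pinning down $u^\star_n$, the \emph{rightmost} equal-slope point of $[0,u^0_n]$, so that it converges to $u^\star$ without spurious equal-slope points creeping in to its right and without the slopes blowing up near the origin. The strict-local-maximum hypothesis on $F^1-F^0$ at $u^\star$, together with the shrinking window $[0,\delta_n]$ on which the near-origin surgery is carried out, is what makes this possible; the degenerate case $u^\star=0$ (where $u^\star_n\downarrow0$ and the limiting supergradients may be unbounded) is where the unbounded-support assumption on $G$ is needed in stage~(2), to force enough integrability of $\phi^1$ for the limit equation to be well posed.
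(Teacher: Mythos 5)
Your broad strategy (explicit smoothing of $F^0,F^1$ into simple technologies, then a Helly-plus-Fatou compactness argument to pass to the limit in the Euler equation) is the same as the paper's; the paper delegates the explicit construction in stage~(1) to an omitted companion reference and instead works with an abstract list of properties (a)--(d) for the approximating sequence, so your smoothing construction is not an issue. The gap is in stage~(2), specifically in the step where you establish uniform bounds on $\phi^0_n,\phi^1_n$ before invoking Helly.

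You assert that the negative parts are dominated by $|F^{1\prime}_n(u^0_n)|$, ``which stays bounded as $u^0_n\to u^0$ since $F^1$ has finite one-sided derivatives there,'' and that the positive parts on $[0,t]$ are controlled by the ``convergent sequence $(\phi^1_n(t))^+$.'' Neither is justified by the model assumptions. The model only guarantees $F^0$ finite on $(0,u^0]$ and $F^1\geq F^0$, so $F^{1-}(u^0)$ can perfectly well be $-\infty$ (take $F^1$ with a vertical tangent at $u^0$, or $F^1=-\infty$ on $(u^0,\infty)$); likewise $F^{0+}(0)$ and $F^{1+}(0)$ can be $+\infty$. In those cases $F^{1\prime}_n(X^n_0)$ or $F^{0\prime}_n(x^n_t)$ can diverge along any approximating sequence whose peaks/kinks shrink to $0$ or $u^0$, and your Helly extraction would then only produce $\phi^0,\phi^1$ taking infinite values, which are not supergradients and for which the limit integral equation is ill-posed. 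The paper's proof fills exactly this hole with a dedicated claim showing that when $F^{1-}(u^0)=-\infty$ then necessarily $X_0<u^0$, and when $F^{0+}(0)$ or $F^{1+}(0)$ is $+\infty$ then necessarily $X>0$ everywhere; it proves this by a Fatou/contradiction argument starting from the Euler equation \eqref{eq:euler_n} itself (if the limit hit the bad boundary, the integral term would blow up and contradict non-negativity of $\phi^0_n$ or boundedness of $F^{0-}(u^0)$), and this claim then yields a safe interval $[u',u'']\subset(0,u^0)$ that eventually contains $x^n,X^n$ on $[0,t]$, giving the uniform bounds. You gesture at this difficulty in your final paragraph (``the limiting supergradients may be unbounded''), but you attribute it to the unbounded-support assumption on $G$ and do not supply the argument; the assumption on $G$ is indeed used in the paper's claim, but only as one ingredient of the contradiction, and the bound on $x,X$ away from the boundary must be \emph{derived}, not assumed. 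Without this step the compactness argument does not close.
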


The following will be used in the next appendix to prove \Cref{theorem:opt}.

\begin{namedthm}[Existence corollary.]
	\label{corollary:existence}
	For any distribution $G$ with unbounded support,
	there is a mechanism $(x,X)$ with $x \in \mathcal{X}'$
	which satisfies the Euler equation for $G$ with some increasing $\phi^0,\phi^1 : \R_+ \to \R$.
\end{namedthm}

\begin{proof}
	Let $\left(F^0_n,F^1_n\right)_{n \in \N}$ be the simple technologies delivered by \Cref{lemma:approx_F}.
	Choose a sequence $(G_m)_{m \in \N}$ of finite-support distributions
	converging pointwise to $G$.

	Fix an arbitrary $n \in \N$.
	For every $m \in \N$,
	\Cref{lemma:existence_simple} assures us of the existence of an $x^{nm} \in \mathcal{X}'_n$ such that $(x^{nm},X^{nm})$
	satisfies the Euler equation for $\left(F^0_n,F^1_n,G_m\right)$.
	Since $\mathcal{X}'_n$ is sequentially compact by \Cref{observation:thmC_pf_sequential_compactness},
	we may assume (passing to a subsequence if necessary) that
	$x^{nm}$ converges pointwise as $m \to \infty$ to some $x^n \in \mathcal{X}'_n$.
	Since $u^0_n \to u^0$ and $u^\star_n \to u^\star$ as $n \to \infty$,
	\Cref{observation:thmC_pf_sequential_compactness} permits us to assume (again passing to a subsequence if required) that
	$x^n$ converges pointwise to some $x \in \mathcal{X}'$ as $n \to \infty$.

	By \Cref{lemma:approx_G}, $(x^n,X^n)$ satisfies the Euler equation for $\left(F^0_n,F^1_n,G\right)$ for each $n \in \N$.
	Hence by \Cref{lemma:approx_F}, $(x,X)$ satisfies the Euler equation for $\left(F^0,F^1,G\right)$ with some increasing $\phi^0,\phi^1$.
\end{proof}

\subsection{Proof of \texorpdfstring{\Cref{theorem:opt}}{Theorem~\ref{theorem:opt}} (p.~\pageref{theorem:opt})}
\label{app:opt}

We shall argue as follows. Fix an optimal mechanism $(x,X)$. We first show that if $x$ is decreasing, then $\lim_{t \to 0} x_t = u^0$ and $\lim_{t \to \infty} x_t = u^\star$ (\Cref{lemma:limits} below). To establish monotonicity, we rely on the Euler equation \eqref{eq:euler} (\cref{app:Euler}, \cpageref{eq:euler}), which $(x,X)$ must satisfy by the \hyperref[lemma:euler]{Euler lemma} (\cref{app:Euler}). We first show that the Euler equation implies that $x$ decreases in periods $t$ that lie in the support of $G$ and have $x_t > u^\star$ (\hyperref[claim:pf_opt:1]{claim 1} below). We then show, using a `local' version of the front-loading logic from the proof of \Cref{theorem:deadline}, that $x$ must decrease also in periods $t$ outside the support of $G$ (\hyperref[claim:pf_opt:2]{claim 2}) and in periods $t$ with $x_t = u^\star$ (\hyperref[claim:pf_opt:3]{claim 3}).

Recall from §\ref{app:Euler:construction} the definition of $\mathcal{X}'$.

\begin{lemma}
	\label{lemma:limits}
	Suppose that $G$ satisfies $G(0)=0$ and has unbounded support.
	Let $(x,X)$ with $x \in \mathcal{X}'$ satisfy the Euler equation with some $\phi^0,\phi^1$ such that $\phi^0$ is increasing.
	Then $\lim_{t \to 0} x_t = u^0$ and $\lim_{t \to \infty} x_t = u^\star$.
\end{lemma}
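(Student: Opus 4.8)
Since $x \in \mathcal{X}'$ is decreasing and valued in $[u^\star,u^0]$, the limits $\overline{x} \coloneqq \lim_{t \to 0} x_t = \sup_t x_t$ and $\underline{x} \coloneqq \lim_{t \to \infty} x_t = \inf_t x_t$ exist, with $u^\star \le \underline{x} \le \overline{x} \le u^0$; the task is to show $\overline{x} = u^0$ and $\underline{x} = u^\star$. I would first record two standing facts: (i) $F^0$ is strictly increasing on $[0,u^0]$ — concavity together with the uniqueness of the peak $u^0$ forces this — so that $F^{0+} > 0$ on $[0,u^0)$; and (ii) $u^\star < u^0$, since $F^1-F^0$ is strictly decreasing just to the left of $u^0$ (using (i)), so the genericity assumption excludes $u^\star = u^0$. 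I will also use that $X$ is decreasing with $\underline{x} \le X_t \le x_t$ for every $t$ (the present value of a decreasing flow), hence $X_t \to \underline{x}$.

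For the first limit I argue by contradiction: suppose $\overline{x} < u^0$. Because $G(0) = 0$ we have $G(\{0\}) = 0$, so evaluating \eqref{eq:euler} at small $t$ and letting $t \downarrow 0$ — with $G(t) \to 0$ and, by dominated convergence (dominate the integrand by the $G$-integrable $\abs{\phi^1}$), $\int_{[0,t]}\phi^1\dd G \to 0$ — gives $\phi^0(t) \to 0$ as $t \downarrow 0$. But for a.e.\ small $t$ (in particular those with $G(t) < 1$), $\phi^0(t)$ is a supergradient of $F^0$ at $x_t \le \overline{x}$, so $\phi^0(t) \ge F^{0+}(x_t) \ge F^{0+}(\overline{x}) > 0$; choosing such a $t$ small enough forces $\phi^0(t) < F^{0+}(\overline{x})$, a contradiction, so $\overline{x} = u^0$.

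For the second limit I again argue by contradiction: suppose $\underline{x} > u^\star$, so $0 < u^\star < \underline{x} \le u^0$. Since $G$ has unbounded support, $G(t) < 1$ for every $t$, so \eqref{eq:euler} reads $[1-G(t)]\phi^0(t) = -\int_{[0,t]}\phi^1\dd G$. The supergradients of $F^0$ at points of $[\underline{x},u^0]$ are bounded above by $F^{0-}(\underline{x}) < \infty$ and $\phi^0 \ge 0$, so along a sequence $t_n \to \infty$ in the co-null set where $\phi^0(t_n)$ is such a supergradient, $[1-G(t_n)]\phi^0(t_n) \to 0$; with $\int_{[0,t_n]}\phi^1\dd G \to \int_{\R_+}\phi^1\dd G$ (dominated convergence) this forces $\int_{\R_+}\phi^1\dd G = 0$. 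This already rules out $\underline{x} = u^0$: then $x \equiv u^0$, hence $X \equiv u^0$, so $\phi^1$ is $G$-a.e.\ a supergradient of $F^1$ at $u^0 > u^1$ and thus strictly negative, making the integral negative. Hence $\underline{x} \in (u^\star,u^0)$, and \eqref{eq:euler} reduces to $\phi^0(t) = [1-G(t)]^{-1}\int_{(t,\infty)}\phi^1\dd G = \E_G(\phi^1(\tau)\mid\tau>t)$ for every $t$. Now let $\phi^0_\infty \coloneqq \lim_{t\to\infty}\phi^0(t)$, which exists since $\phi^0$ is increasing. Using the forward identity and $X_s \to \underline{x}$: for $\epsilon>0$ and $s$ past some threshold, $\phi^1(s)$ ($G$-a.e.) lies in $[F^{1+}(\underline{x}+\epsilon),F^{1-}(\underline{x})]$, so averaging over $\{\tau>t\}$ and letting $t\to\infty$, then $\epsilon\downarrow 0$ (right-continuity of $F^{1+}$), gives $\phi^0_\infty \in [F^{1+}(\underline{x}),F^{1-}(\underline{x})]$. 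Symmetrically, $\phi^0(t) \in [F^{0+}(x_t),F^{0-}(x_t)]$ for a.e.\ $t$, so $F^{0+}(\underline{x}+\epsilon) \le \phi^0(t) \le F^{0-}(\underline{x})$ for a.e.\ large $t$; monotonicity of $\phi^0$ upgrades this to \emph{all} large $t$, whence $\phi^0_\infty \in [F^{0+}(\underline{x}),F^{0-}(\underline{x})]$. Thus $\phi^0_\infty$ is a common supergradient of $F^0$ and $F^1$ at $\underline{x} \in (u^\star,u^0]$, contradicting that $u^\star$ is the \emph{rightmost} point of $[0,u^0]$ at which the frontiers share a supergradient; hence $\underline{x} = u^\star$.

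I expect the second limit to be the main obstacle: the work is in turning the backward integral equation \eqref{eq:euler} into the forward conditional-expectation form and then squeezing $\phi^0_\infty$ simultaneously between the one-sided derivatives of $F^0$ and of $F^1$ at $\underline{x}$. The monotonicity of $\phi^0$ and the one-sided continuity of the directional derivatives $F^{j\pm}$ are precisely what justify the limit interchanges, and the hypotheses $G(0)=0$ and unbounded support enter only to make \eqref{eq:euler} rearrange cleanly at $0$ and at $\infty$, respectively.
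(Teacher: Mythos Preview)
Your proof is correct and follows the paper's strategy closely. For the first limit your argument is essentially identical to the paper's: both let $t \downarrow 0$ in \eqref{eq:euler} (using $G(0)=0$) to force $\phi^0(t) \to 0$, then compare with the supergradient lower bound $\phi^0(t) \geq F^{0+}(\widebar{u})$.

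For the second limit both you and the paper bound $\phi^0$ and pass to the forward form of the Euler equation, but the endgames differ. The paper bounds $\phi^0(t)$ above by $F^{1-}(\underline{u})$ (since $\phi^1(s) \leq F^{1-}(X_s) \leq F^{1-}(\underline{u})$ $G$-a.e.) and below by $F^{0+}(x_t)$, then lets $t \to \infty$ to obtain the single inequality $F^{0+}(\underline{u}) \leq F^{1-}(\underline{u})$, which it asserts implies $\underline{u} \leq u^\star$ ``by definition of the latter''. Your route is longer---you first establish $\E_G(\phi^1)=0$, separately dispose of $\underline{x}=u^0$, and then squeeze $\phi^0_\infty$ from both sides---but it buys you something: you exhibit $\phi^0_\infty$ as an explicit \emph{common} supergradient of $F^0$ and $F^1$ at $\underline{x}$, so the contradiction with the definition of $u^\star$ as the rightmost equal-slopes point is entirely transparent. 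The paper's one-sided inequality, by contrast, relies on the (unstated but true) fact that for $u \in (u^\star,u^0]$ the failure of a common supergradient always takes the form $F^{0+}(u) > F^{1-}(u)$ rather than the reverse.
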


\begin{proof}
	Since $x$ is decreasing
	with $u^\star \leq x \leq u^0$, the limits
	\begin{equation*}
		\widebar{u} \coloneqq \lim_{t \to 0} x_t
		\quad \text{and} \quad
		\underline{u} \coloneqq \lim_{t \to \infty} x_t
	\end{equation*}
	exist and satisfy $u^\star \leq \underline{u} \leq \widebar{u} \leq u^0$.
	As $G$ has unbounded support,
	$\phi^0$ is a supergradient of $F^0$ at $x_t$ for a.e. $t \in \R_+$.

	To show that $\widebar{u} \geq u^0$,
	note that for a.e. $t \in \R_+$,
	$\phi^0(t)$ is a supergradient at $x_t \leq \widebar{u}$ of the concave function $F^0$,
	so that $\phi^0(t) \geq F^{0+}(x_t) \geq F^{0+}\left(\widebar{u}\right)$.
	Thus $\phi^0 \geq F^{0+}\left(\widebar{u}\right)$ on $(0,\infty)$ since $\phi^0$ is increasing.
	Letting $t \to 0$ in \eqref{eq:euler} (\cpageref{eq:euler}) then yields
	\begin{equation*}
		0
		= \lim_{t \to 0} \phi^0(t)
		\geq F^{0+}\left( \widebar{u} \right) ,
	\end{equation*}
	which implies that $\widebar{u} \geq u^0$ since $F^{0+} > 0$ on $\left[0,u^0\right)$
	by definition of $u^0$.

	To show that $\underline{u} \leq u^\star$, assume without loss that $\underline{u} > 0$.
	Then $\phi^0$ is bounded, since it is increasing and $\phi^0(t)$ is a supergradient of the concave function $F^0$ at $x_t$ for a.e. $t \in \R_+$.
	Hence, we may use \Cref{observation:euler_forward} (\cpageref{observation:euler_forward}) to obtain
	\begin{equation*}
		F^{0+}(x_t)
		\leq \phi^0(t)
		= \E_G\left( \phi^1(\tau) \middle| \tau > t \right)
		\leq F^{1-}\left( \underline{u} \right)
		\quad \text{for a.e. $t \in \R_+$,}
	\end{equation*}
	where the first (second) inequality holds since $\phi^0(t)$ ($\phi^1(s)$) is a supergradient of the concave function $F^0$ at $x_t$ for a.e. $t \in \R_+$ (of $F^1$ at $X_s \geq \underline{u}$ for $G$-a.e. $s \in \R_+$).
	Then $F^{0+}(x_t) \leq F^{1-}\left(\underline{u}\right)$ for \emph{every} $t \in \R_+$
	since $F^{0+}$ and $x$ are decreasing.
	Since $F^{0+}$ is right-continuous,
	letting $t \to \infty$ yields $F^{0+}\left(\underline{u}\right) \leq F^{1-}\left(\underline{u}\right)$,
	which implies that $\underline{u} \leq u^\star$ by definition of the latter.
\end{proof}

Recall from \cref{app:Euler} the definitions of $\mathcal{X}$ and $\pi_G$,
the \hyperref[lemma:euler]{Euler lemma},
and the \hyperref[corollary:existence]{existence corollary}.

\begin{proof}[Proof of \Cref{theorem:opt}]
	Let $G$ be a distribution with $G(0)=0$ and unbounded support.
	By the \hyperref[corollary:existence]{existence corollary}, there is a mechanism $\bigl(x^\dag,X^\dag\bigr)$ with $x^\dag \in \mathcal{X}'$
	which satisfies the Euler equation for $G$ with some increasing $\phi^0,\phi^1$.
	By the \hyperref[lemma:euler]{Euler lemma}, $x^\dag$ belongs to $\argmax_{\mathcal{X}} \pi_G$.

	Let $( x, X )$ be optimal for $G$;
	we must show that it has the properties asserted by \Cref{theorem:opt}.
	By \Cref{lemma:lequ0} (\cpageref{lemma:lequ0}), it must be that $x \in \mathcal{X}$.
	Thus $x$ belongs to $\argmax_{\mathcal{X}} \pi_G$,
	so by the \hyperref[lemma:euler]{Euler lemma} again,
	$( x, X )$ satisfies the Euler equation with (the above increasing) $\phi^0,\phi^1$.

	It suffices to show that some version%
		\footnote{Recall from \cref{footnote:mech_defn_formal} (\cpageref{footnote:mech_defn_formal})
		that $\widetilde{x}$ a \emph{version} of $x$ exactly if $\widetilde{x} = x$ a.e.}
	of $x$ is decreasing and $\geq u^\star$,
	since it then belongs to $\mathcal{X}'$,
	so that the remaining properties $\lim_{t \to 0} x_t = u^0$ and $\lim_{t \to \infty} x_t = u^\star$ follow by \Cref{lemma:limits}.

	Adopt the convention that $F^{0-}(0) \coloneqq \infty$.

	\begin{namedthm}[Claim 0.]
		\label{claim:pf_opt:0}
		$\phi^0 \leq F^{0-}\left(u^\star\right)$, strictly on a neighbourhood of $t=0$.
	\end{namedthm}

	\begin{proof}%
		\renewcommand{\qedsymbol}{$\square$}
		The result is immediate if $u^\star=0$, so suppose that $u^\star>0$.
		Since $\bigl(x^\dag,X^\dag\bigr)$ satisfies the Euler equation with $\phi^0,\phi^1$ and $G$ has unbounded support, $\phi^0(t)$ is a supergradient of $F^0$ at $\smash{x^\dag_t}$ for a.e. $t \in \R_+$.
		Thus since $F^0$ is concave and $x^\dag \geq u^\star$ (because $x^\dag \in \mathcal{X}'$), we have
		$\smash{\phi^0(t) \leq F^{0-}\bigl( x^\dag_t \bigr) \leq F^{0-}\left(u^\star\right)}$ for a.e. $t \in \R_+$.
		Hence $\phi^0 \leq F^{0-}\left(u^\star\right)$ since $\phi^0$ is increasing.

		Letting $t \to 0$ in \eqref{eq:euler} (\cref{app:Euler}, \cpageref{eq:euler})
		yields $\lim_{t \to 0} \phi^0(t) = 0 < F^{0-}\left(u^\star\right)$,
		so that $\phi^0(0) \leq \phi^0(t) < F^{0-}\left(u^\star\right)$ for all sufficiently small $t>0$.
	\end{proof}%
	\renewcommand{\qedsymbol}{$\blacksquare$}

	Write $T$ for the (possibly infinite) time
	at which $\phi^0$ hits $F^{0-}\left(u^\star\right)$:
	\begin{equation*}
		T
		\coloneqq \inf\left\{ t \in \R_+ :
		\phi^0(t) \geq F^{0-}\left(u^\star\right) \right\} ,
	\end{equation*}
	with the convention that $\inf \varnothing \coloneqq \infty$.
	$T$ is strictly positive by \hyperref[claim:pf_opt:0]{claim 0}.

	The (increasing) function $\phi^0$ is called \emph{non-constant at $t \in \R_+$}
	if $\phi^0(s) \neq \phi^0(t)$ for every $s \neq t$,
	and \emph{constant at $t$} otherwise.
	Clearly if $\phi^0$ is constant at $t$, then it is constant on a proper interval containing $t$.%
		\footnote{But $t$ need not be in the \emph{interior} of such an interval.}

	\begin{namedthm}[Claim 1.]
		\label{claim:pf_opt:1}
		$x = x^\dag$ a.e. on $\left\{ t \in \R_+ : \text{$\phi^0$ is non-constant at $t$} \right\}$.
	\end{namedthm}

	A set of times is \emph{prior to $T$} iff
	its intersection with $(T,\infty)$ is empty.
	(The set of times in \hyperref[claim:pf_opt:1]{claim 1} is prior to $T$,
	by \hyperref[claim:pf_opt:0]{claim 0} and the definition of $T$.)

	\begin{namedthm}[Claim 2.]
		\label{claim:pf_opt:2}
		On any proper interval of $\R_+$ prior to $T$ on which $\phi^0$ is constant,
		some version of $x$ is decreasing.
	\end{namedthm}

	\begin{namedthm}[Claim 3.]
		\label{claim:pf_opt:3}
		If $T<\infty$, then on $[T,\infty)$,
		some version of $x$ is decreasing and bounded below by $u^\star$.
	\end{namedthm}

	For each maximal proper interval of $\R_+$ prior to $T$
	on which $\phi^0$ is constant at some $\alpha \in \R$,
	\hyperref[claim:pf_opt:2]{claim 2} delivers a version $x^\alpha$ of $x$ that is decreasing on this interval.
	If $T<\infty$, then \hyperref[claim:pf_opt:3]{claim 3} provides a version $x^\star$ of $x$ that is decreasing on $[T,\infty)$ and bounded below by $u^\star$.
	Define $\widetilde{x} : \R_+ \to \R$ by
	\begin{equation*}
		\widetilde{x}_t
		\coloneqq
		\begin{cases}
			x^{\phi^0(t)}_t	& \text{if $t<T$ and $\phi^0$ is constant at $t$} \\
			x^\dag_t		& \text{if ($t<T$ and) $\phi^0$ is non-constant at $t$} \\
			x^\star_t		& \text{if $t \geq T$.}
		\end{cases}
	\end{equation*}
	We have $\widetilde{x} = x^\dag = x$ a.e. on $\left\{ t \in \R_+ : \text{$\phi^0$ is non-constant at $t$} \right\}$ by \hyperref[claim:pf_opt:1]{claim 1}.
	Thus $\widetilde{x}$ is a version of $x$.%
		\footnote{Since $\phi^0$ is increasing,
		it is constant on at most countably many intervals.
		So the definition of $\widetilde{x}$ has at most countably many cases,
		in each of which $\widetilde{x}$ equals a version of $x$.
		}

	Let $\mathcal{T}$ be the set of times $t \in \R_+$ at which $\phi^0(t)$ is a supergradient of $F^0$ at $\widetilde{x}_t$.
	Its complement $\R_+ \setminus \mathcal{T}$ is null
	since $\widetilde{x}$ is a version of $x$,
	$(x,X)$ satisfies the Euler equation with $\phi^0,\phi^1$,
	and $G$ has unbounded support.
	It therefore suffices to show that $\widetilde{x}$ is decreasing and bounded below by $u^\star$ on $\mathcal{T}$.%
		\footnote{Then define $\widebar{x}_t \coloneqq \sup_{[t,\infty) \intersect \mathcal{T}} \widetilde{x}$ for each $t \in \R_+$.
		This $\widebar{x}$ is a version of $\widetilde{x}$ (and thus of $x$),
		and is (everywhere) decreasing and bounded below by $u^\star$.}

	To see that $\widetilde{x}$ is decreasing on $\mathcal{T}$,
	fix any $s<t$ in $\mathcal{T}$; we must show that $\widetilde{x}_s \geq \widetilde{x}_t$.
	If $\phi^0(s) \neq \phi^0(t)$, then $\phi^0(s) < \phi^0(t)$ since $\phi^0$ is increasing. Since $s,t$ belong to $\mathcal{T}$ and $F^0$ is concave, it follows that
	\begin{equation*}
		F^{0+}\left(\widetilde{x}_s\right)
		\leq \phi^0(s)
		< \phi^0(t)
		\leq F^{0-}\left(\widetilde{x}_t\right) ,
	\end{equation*}
	which implies that $\widetilde{x}_s \geq \widetilde{x}_t$ since $F^0$ is concave.
	If instead $\phi^0(s) = \phi^0(t)$, then we have either $s,t < T$ or $s,t \geq T$.%
		\footnote{$s,t$ must be on the same side of $T$
		since $\phi^0$ (being increasing) is constant on $[s,t]$,
		whereas $\phi^0(T-\eps) < \phi^0(T)$ for any $\eps \in [0,T)$ by (\hyperref[claim:pf_opt:0]{claim 0} and) the definition of $T$.}
	In the former (latter) case, $\widetilde{x}$ equals the decreasing function $x^{\phi^0(t)}$ (the decreasing function $x^\star$) on $[s,t]$.

	It remains to show that $\widetilde{x} \geq u^\star$ on $\mathcal{T}$.
	If $T<\infty$, then this holds because $\widetilde{x}$ is decreasing and $\widetilde{x} = x^\star \geq u^\star$ on $[T,\infty)$.
	If instead $T=\infty$, then
	\begin{equation*}
		F^{0+}\left( \widetilde{x}_t \right)
		\leq \phi^0(t)
		< F^{0-}\left( u^\star \right)
		\quad \text{for every $t \in \mathcal{T}$}
	\end{equation*}
	by definition of $\mathcal{T}$ and the concavity of $F^0$ (weak inequality)
	and by definition of $T$ (strict inequality).
	Since $F^0$ is concave, this implies that $\widetilde{x} \geq u^\star$ on $\mathcal{T}$.

	The rest of the proof is devoted to establishing claims \hyperref[claim:pf_opt:1]{1}, \hyperref[claim:pf_opt:2]{2} and \hyperref[claim:pf_opt:3]{3}.
	The argument for the first is straightforward,
	while those for the latter two are (local) `front-loading' arguments similar to the proof of \Cref{theorem:deadline} (\cpageref{theorem:deadline}).

	\begin{proof}[Proof of {\hyperref[claim:pf_opt:1]{claim 1}}]%
		\renewcommand{\qedsymbol}{$\square$}
		Write $I \coloneqq \left\{ t \in \R_+ : \text{$\phi^0$ is non-constant at $t$} \right\}$;
		we must show that $x = x^\dag$ a.e. on $I$.
		By definition, $\phi^0$ is strictly increasing on $I$.

		Let $A$ be the set of all $\alpha \in \R$
		that are supergradients of $F^0$
		at more than one $u \in \left[0,u^0\right]$.
		$A$ is at most countable since $F^0$ is concave.
		Thus $I' \coloneqq \left\{ t \in I : \phi^0(t) \in A \right\}$ is null since $\phi^0$ is strictly increasing on $I$.

		Since $(x,X)$ and $\bigl(x^\dag,X^\dag\bigr)$ satisfy the Euler equation with $\phi^0,\phi^1$ and $G$ has unbounded support,
		$\phi^0(t)$ is a supergradient of $F^0$ at both $x_t$ and $\smash{x^\dag_t}$
		for a.e. $t \in \R_+$.
		The same therefore holds for a.e. $t \in I \setminus I'$,
		and $\smash{x_t = x^\dag_t}$ at each such $t$ by definition of $I'$ (and $A$).
		Thus $x=x^\dag$ a.e. on $I$ since $I'$ is null.
	\end{proof}%
	\renewcommand{\qedsymbol}{$\blacksquare$}

	To prove claims \hyperref[claim:pf_opt:2]{2} and \hyperref[claim:pf_opt:3]{3},
	we shall utilise a forward-looking variant of Euler equation.%
		\footnote{Similar to \Cref{observation:euler_forward} in \cref{app:Euler} (\cpageref{observation:euler_forward}), but without boundedness of $\phi^0$.}
	For any $t \in \R_+$,
	$\smash{\int_{(t,\infty)} \phi^1 \dd G}$ is finite since $\phi^1$ is $G$-integrable,
	and $G(t)<1$ since $G$ has unbounded support.
	We may therefore add and subtract $\smash{\int_{(t,\infty)} \phi^1 \dd G}$ in \eqref{eq:euler} (\cpageref{eq:euler}) and divide by $1-G(t)$ to obtain
	\begin{equation*}
		\phi^0(t)
		= \E_G\left(\phi^1(\tau) \middle| \tau > t\right)
		- \frac{\E_G\left(\phi^1(\tau)\right)}{1-G(t)}
		\quad \text{for all $t \in \R_+$.}
		\label{eq:euler_forward_general}
		\tag{E$'$}
	\end{equation*}
	Moreover, \eqref{eq:euler} and the non-negativity of $\phi^0$ imply that $\smash{\int_{[0,t]} \phi^1 \dd G \leq 0}$ for every $t \in \R_+$,
	so letting $t \to \infty$ and using dominated convergence yields%
		\footnote{In detail, $0
		\geq \lim_{t \to \infty} \int_{\R_+} \phi^1 \1_{[0,t]} \dd G
		= \int_{\R_+} \phi^1 \dd G
		= \E_G\left( \phi^1(\tau) \right)$.}
	\begin{equation*}
		\E_G\left(\phi^1(\tau)\right) \leq 0
		\label{eq:euler_average_general}
		\tag{$\infty$}
	\end{equation*}

	We next prove \hyperref[claim:pf_opt:3]{claim 3}.
	This requires a supporting claim:

	\begin{namedthm}[Claim 4.]
		\label{claim:pf_opt:4}
		If $T < \infty$,
		then $X_T \geq u^\star$.
	\end{namedthm}

	\begin{proof}%
		\renewcommand{\qedsymbol}{$\square$}
		The result is trivial if $u^\star=0$, so suppose $u^\star>0$.
		Fix any $\eps \in (0,T)$.
		(Recall that $T>0$, by \hyperref[claim:pf_opt:0]{claim 0}.)
		$\phi^0$ is not constant on $[T-\eps,T+\eps]$,
		and thus \eqref{eq:euler} (\cref{app:Euler}, \cpageref{eq:euler}) requires that $G(T-\eps) < G(T+\eps)$ since $G$ has unbounded support.
		Then since $(x,X)$ satisfies the Euler equation with $\phi^0,\phi^1$,
		it must be that $\phi^1(t)$ is a supergradient of $F^1$ at $X_t$ for some $t \in (T-\eps,T+\eps]$.

		Fix any $u \in \left[0,u^\star\right)$.
		Since $u^\star$ is a strict local maximum of $F^1-F^0$, $F^1-F^0$ is not decreasing on $[u,u^\star]$,
		and thus there is a $u' \in [u,u^\star)$ at which $F^{1+}(u') > F^{0-}\left(u^\star\right)$.%
			\footnote{If not, then $F^1-F^0$ would be decreasing on $[u,u^\star]$ since
			$( F^1 - F^0 )^+
			= F^{1+} - F^{0+}
			\leq F^{1+} - F^{0-}(u^\star)
			\leq 0$ on $[u,u^\star)$,
			where the first inequality holds since $F^0$ is concave.}
		Then
		\begin{equation*}
			F^{1+}(u')
			> F^{0-}\left(u^\star\right)
			\geq \phi^0(t)
			=
			\E_G\left(\phi^1(\tau) \middle| \tau > t\right)
			- \frac{\E_G\left(\phi^1(\tau)\right)}{1-G(t)}
			\geq \phi^1(t)
		\end{equation*}
		where the second inequality holds by \hyperref[claim:pf_opt:0]{claim 0},
		the equality is \eqref{eq:euler_forward_general},
		and the last inequality holds by \eqref{eq:euler_average_general} and the fact that $\phi^1$ is increasing.
		Thus $X_t > u' \geq u$
		since $\phi^1(t)$ is a supergradient at $X_t$ of the concave function $F^1$.

		Since $\eps \in (0,T)$ and $u \in [0,u^\star)$ were arbitrary and $X$ is continuous,
		it follows that $X_T \geq u^\star$.
	\end{proof}%
	\renewcommand{\qedsymbol}{$\blacksquare$}

	Write $\pi_t \coloneqq \pi_{G^t}$ for $t>0$, where $G^t$ denotes the point mass at $t$.

	\begin{proof}[Proof of {\hyperref[claim:pf_opt:3]{claim 3}}]%
		\renewcommand{\qedsymbol}{$\square$}
		Let $u' \in \left[u^\star,u^0\right]$ be the largest $u \in \R_+$
		at which $F^0$ admits $F^{0-}\left(u^\star\right)$ as a supergradient.
		We may have $u'=u^\star$;
		if not, then $F^0$ is affine on the interval $\left[u^\star,u'\right]$,
		with slope $F^{0-}\left(u^\star\right)$.

		We have $\phi^0 = F^{0-}\left(u^\star\right)$ on $(T,\infty)$
		by \hyperref[claim:pf_opt:0]{claim 0},
		the definition of $T$
		and the fact that $\phi^0$ is increasing.
		Then since $(x,X)$ satisfies the Euler equation with $\phi^0,\phi^1$ and $G$ has unbounded support,
		we must have $x \leq u'$ a.e. on $(T,\infty)$.
		It follows that $X \leq u'$ on $[T,\infty)$.

		On the other hand, we have $X_T \geq u^\star$ by \hyperref[claim:pf_opt:4]{claim 4}.
		If $u'=u^\star$, then we are done: $X=u^\star$ on $[T,\infty)$,
		and thus $x = u^\star$ a.e. on $(T,\infty)$,
		which obviously has a version that is decreasing and bounded below by $u^\star$ on $[T,\infty)$.

		It remains to consider the case in which $u'>u^\star$,
		meaning that $F^0$ has an affine segment with slope $F^{0-}\left(u^\star\right)$ extending from $u^\star$ to $u'$.
		We shall front-load the mechanism $(x,X)$ over this affine segment,
		much as in the proof of \Cref{theorem:deadline} (\cpageref{theorem:deadline}).
		In particular, given a deadline $T' \in [T,\infty]$, consider
		\begin{equation*}
			x^\star_t
			=
			\begin{cases}
				x_t		& \text{for $t \in [0,T)$} \\
				u'		& \text{for $t \in \left[T,T'\right)$} \\
				u^\star	& \text{for $t \in \left[T',\infty\right)$.}
			\end{cases}
		\end{equation*}
		Since $u^\star \leq X_T \leq u'$,
		we may choose the deadline $T'$ so that $X^\star_T = X_T$.

		We will show that the front-loaded mechanism $\left(x^\star,X^\star\right)$
		dominates $(x,X)$
		unless $X^\star=X$.
		This suffices because $(x,X)$ is undominated (being optimal for $G$),
		so that we must have $X = X^\star$
		and thus $x = x^\star$ a.e.;
		and $x^\star$ is decreasing and bounded below by $u^\star$ on $[T,\infty)$.

		Clearly $\pi_t\left(x^\star\right) = \pi_t(x)$ for all $t \leq T$;
		we will show that for each $t > T$,
		we have $\pi_t\left(x^\star\right) \geq \pi_t(x)$,
		with equality only if $X^\star_t = X_t$.
		Define
		\begin{equation*}
			\widehat{F}^0(u)
			\coloneqq F^0\left(u'\right)
			- \left(u'-u\right) F^{0-}\left(u'\right)
			\quad \text{for each $u \in \R_+$.}
		\end{equation*}
		We have $\widehat{F}^0 \geq F^0$ (with equality on $\left[u^\star,u'\right]$)
		since $F^{0-}\left(u'\right) = F^{0-}\left(u^\star\right)$ is a supergradient of $F^0$ at $u'$
		(at every $u \in \left[u^\star,u'\right]$).
		Thus for any $t > T$, we have
		\begin{align*}
			\pi_t\left(x\right)-\pi_t\left(x^\star\right)
			&= r \int_T^t e^{-rs}
			\left[ F^0\left(x_s\right) - F^0(x^\star_s) \right] \dd s
			+ e^{-rt} \left[ F^1\left(X_t\right)
			- F^1\left(X^\star_t\right) \right]
			\\
			&\leq r \int_T^t e^{-rs}
			\left[\widehat{F}^0\left(x_s\right)
			- \widehat{F}^0(x^\star_s) \right] \dd s
			+ e^{-rt}\left[ F^1\left(X_t\right)
			- F^1\left(X^\star_t\right) \right]
			\\
			&= e^{-rt} \left[ \left(F^1-\widehat{F}^0\right)\left(X_t\right)
			- \left(F^1-\widehat{F}^0\right)(X^\star_t) \right] ,
		\end{align*}
		where the first equality holds since $x = x^\star$ on $[0,T]$,
		the inequality holds since $F^0 \leq \widehat{F}^0$
		with equality on $\left[u^\star,u'\right] \ni x^\star$,
		and the final equality holds since $\widehat{F}^0$ is affine on $[0,u']$ and $X_T = X^\star_T$.

		Since $u^\star$ is a strict local maximum of $F^1-F^0$
		and $F^1-\widehat{F}^0 \leq F^1-F^0$ with equality at $u^\star$,
		it must be that $u^\star$ is a strict local maximum of $F^1-\widehat{F}^0$.
		Thus since $F^1-\widehat{F}^0$ is concave,
		it is strictly increasing on $\left[0,u^\star\right]$ and strictly decreasing on $\left[u^\star,u'\right]$.
		It thus suffices to show that $X^\star$ lies between $X$ and $u^\star$.
		And this holds because $X \geq X^\star \geq u^\star$ on $(T,T')$,%
			\footnote{\label{footnote:FL_lower_X}%
			The first inequality holds because for any $t \in (T,T')$,
			\begin{equation*}
				\textstyle
				X_t
				= e^{r(t-T)} X_T - r \int_T^t e^{-r(s-t)} x_s \dd s
				\geq e^{r(t-T)} X_T - r \int_T^t e^{-r(s-t)} u' \dd s
				= X^\star_t ,
			\end{equation*}
			where the inequality holds since $x \leq u'$ a.e. on $(T,\infty)$,
			and the last equality holds because $X_T = X^\star_T$ and $x^\star = u'$ on $[T,T')$.}
		while $X^\star = u^\star$ on $[T',\infty)$.
	\end{proof}%
	\renewcommand{\qedsymbol}{$\blacksquare$}

	It remains only to prove \hyperref[claim:pf_opt:2]{claim 2}.

	\begin{proof}[Proof of {\hyperref[claim:pf_opt:2]{claim 2}}]%
		\renewcommand{\qedsymbol}{$\square$}
		Fix a maximal proper interval $J$ of $\R_+$ prior to $T$
		on which $\phi^0$ is constant,
		and let $\alpha \in \R$ be the value that $\phi^0$ takes on $J$.

		Since $F^0$ is concave,
		the set of $u \in \left[0,u^0\right]$
		at which $\alpha$ is a supergradient of $F^0$
		is an interval $\left[u',u''\right]$,
		where $u^\star \leq u' \leq u'' \leq u^0$.%
			\footnote{$u^\star \leq u'$ obtains since
			$F^0$ is concave and
			$F^{0+}(u') = \alpha = \phi^0 < F^{0-}(u^\star)$ on $J$,
			where the inequality holds since $J$ is prior to $T$.
			As for $u'' \leq u^0$, letting $t \to 0$ in \eqref{eq:euler} (\cref{app:Euler}, \cpageref{eq:euler})
			yields $\lim_{t \to 0} \phi^0(t) = 0$.
			Since $\phi^0$ is increasing and $J$ is a proper interval,
			it follows that $\alpha \geq 0$.
			Thus $F^0$ is increasing on $\left[u',u''\right]$,
			so that $u'' \leq u^0$ by definition of the latter.}
		Since $(x,X)$ satisfies the Euler equation with $\phi^0,\phi^1$
		and $G$ has unbounded support,
		$\alpha$ is a supergradient of $F^0$ at $x_t$ for a.e. $t \in J$.
		This implies that $u' \leq x \leq u''$ a.e. on $J$.

		If $u'=u''$, then we are done:
		$x$ is a.e. constant at $u''=u' \geq u^\star$ on $J$,
		so obviously admits a version that is decreasing.

		Suppose instead that $u'<u''$,
		meaning that $F^0$ has an affine segment with slope $\alpha$
		extending from $u'$ to $u''$.
		We shall front-load the mechanism $(x,X)$ over this affine segment,
		imitating the proof of \Cref{theorem:deadline} (\cpageref{theorem:deadline}).
		In particular, given a deadline $T' \in \cl J$, define
		\begin{equation*}
			x^\star_t \coloneqq
			\begin{cases}
				x_t		& \text{for $t \notin J$} \\
				u''		& \text{for $t \leq T'$ in $J$} \\
				u'		& \text{for $t > T'$ in $J$.}
			\end{cases}
		\end{equation*}
		Since $u' \leq x \leq u''$ a.e. on $J$,
		we may choose the deadline $T' \in \cl J$ so that $X^\star_{\inf J} = X_{\inf J}$.

		We shall show that the front-loaded mechanism $\left( x^\star, X^\star \right)$ dominates $(x,X)$ unless $X^\star = X$.
		This is sufficient because $(x,X)$ is undominated (being optimal for $G$),
		so must then satisfy $x = x^\star$ a.e.;
		and $x^\star$ is decreasing on $J$.

		We have $\pi_t\left(x^\star\right) = \pi_t(x)$ for every $t \notin J$
		since $F^0$ is affine on $[u',u'']$ and $X^\star = X$ off $J$.%
			\footnote{Replicate the payoff-rewriting exercise
			in the sketch proof of \Cref{theorem:deadline} (\cpageref{theorem:deadline}).}
		It remains to show that $\pi_t\left(x^\star\right) \geq \pi_t(x)$ for every $t \in J$, with equality only if $X^\star_t = X_t$.
		Define
		\begin{equation*}
			\psi(u) \coloneqq F^1(u) - \alpha u
			\quad \text{for each $u \in \R_+$.}
		\end{equation*}
		Since $F^0$ is affine with slope $\alpha$ on $[u',u'']$
		and $X^\star_{\inf J} = X_{\inf J}$, we have
		\begin{equation*}
			\pi_t\left(x^\star\right) - \pi_t\left(x\right)
			= e^{-rt}
			\bigl[ \psi\left(X^\star_t\right)
			- \psi\left(X_t\right) \bigr]
			\quad \text{for each $t \in J$.}
		\end{equation*}
		Since $X^\star_{\inf J} = X_{\inf J}$ and $u' \leq x \leq u''$ a.e. on $J$,
		we have $X^\star \leq X$ on $J$.%
			\footnote{The idea is that front-loading lowers $X$ pointwise;
			we saw this in the sketch proof of \Cref{theorem:deadline} (\cpageref{theorem:deadline}) and in \cref{footnote:FL_lower_X}.}
		It therefore suffices to show that $\psi$ is strictly decreasing on $\left[ \inf_J X^\star, \infty \right)$.

		Suppose that $X \geq u'$ on $J$.
		Then $X_{\sup J} \geq u'$ since $X$ is continuous,
		so that $X^\star \geq u'$ on $J$ as well.
		We need thus only show that $\psi$ is strictly decreasing on $[u',\infty)$.
		It is strictly decreasing on $[u',u'']$
		since there we have $\psi = \left(F^1-F^0\right) + k$ for a constant $k \in \R$,
		and $F^1-F^0$ is strictly decreasing on $\left[u^\star,u^0\right] \supseteq [u',u'']$ by definition of $u^\star$.
		Since $\psi$ is concave, it must then be strictly decreasing on all of $[u',\infty)$.

		It remains to consider the case in which $X_s < u'$ for some $s \in J$.
		Write
		\begin{equation*}
			t' \coloneqq \inf J
			\quad \text{and} \quad
			t'' \coloneqq \sup J ,
		\end{equation*}
		noting that $t' < t''$ since $J$ is a proper interval.
		It must be that $t'' < \infty$,
		since otherwise we would have $x \geq u'$ a.e. on $(t',\infty)$
		and thus $X \geq u'$ on $J$.
		Since $X_{t''} \leq X$ on $[s,t'']$,%
			\footnote{If $X_t = \min_{[s,t'']} X$ for $t \in [s,t'']$, then
			since $x \geq u' > X_s \geq X_t$ a.e. on $[s,t'']$, we have
			\vspace{-4pt}%
			\begin{equation*}
				X_t
				= \textstyle r \int_t^{t''} e^{-r(z-t)} x_z \dd z + e^{-r(t''-t)} X_{t''}
				\geq (1-e^{-r(t''-t)}) X_t + e^{-r(t''-t)} X_{t''} .
			\end{equation*}\vspace{-12pt}}
		it suffices to show that $\psi$ is strictly decreasing on $\left[ X_{t''}, \infty \right)$.
		And for this, it is enough that $t \mapsto F^{1+}(X_t) - \alpha$
		be strictly negative at, or arbitrarily close to, $t''$.%
			\footnote{Since then $F^{1+} - \alpha < 0$ on $( X_{t''}, \infty )$,
			as $F^{1+}$ is decreasing.}

		Remark that since $\phi^0 = \alpha$ on $(t',t'')$,
		letting $t \uparrow t''$ in \eqref{eq:euler_forward_general} on \cpageref{eq:euler_forward_general} yields
		\begin{equation*}
			\E_G\left(\phi^1(\tau) \middle| \tau \geq t''\right)
			- \frac{\E_G\left(\phi^1(\tau)\right)}{1-\lim_{t \uparrow t''}G(t)}
			= \alpha .
			\label{eq:euler_forward_limit}
			\tag{$\uparrow$}
		\end{equation*}

		Suppose first that $G$ has an atom at $t''$.
		Then $\phi^1(t'')$ is a supergradient of $F^1$ at $X_{t''}$ since $(x,X)$ satisfies the Euler equation with $\phi^0,\phi^1$.
		Since $F^{1+}(X_{t''}) \leq \phi^1(t'')$ (as $F^1$ is concave),
		it suffices to show that $\phi^1(t'') < \alpha$.
		So suppose toward a contradiction that $\phi^1(t'') \geq \alpha$.
		Then
		\begin{equation*}
			\alpha
			\leq \phi^1(t'')
			\leq \E_G\left(\phi^1(\tau) \middle| \tau \geq t''\right)
			= \alpha
			+ \frac{\E_G\left(\phi^1(\tau)\right)}{1-\lim_{t \uparrow t''}G(t)}
			\leq \alpha
		\end{equation*}
		since $\phi^1$ is increasing (second inequality),
		by \eqref{eq:euler_forward_limit} (the equality)
		and by \eqref{eq:euler_average_general} on \cpageref{eq:euler_average_general} (final inequality).
		It follows that $\phi^1(t'') = \E_G\left( \phi^1(\tau) \middle| \tau \geq t'' \right) = \alpha$,
		so that $\phi^1 = \alpha$ $G$-a.e. on $[t'',\infty)$ since $\phi^1$ is increasing.
		But then $\phi^0 = \alpha$ on $(t',\infty)$ by \eqref{eq:euler_forward_general} on \cpageref{eq:euler_forward_general},
		which contradicts the fact that $t''<\infty$.

		Suppose instead that $G$ has no atom at $t''$.
		Then $t''$ belongs to $J$ since
		\begin{align*}
			\phi^0(t'')
			&= \E_G\left(\phi^1(\tau) \middle| \tau > t''\right) - \frac{\E_G\left(\phi^1(\tau)\right)}{1-G(t'')}
			\\
			&=
			\E_G\left(\phi^1(\tau) \middle| \tau \geq t''\right) - \frac{\E_G\left(\phi^1(\tau)\right)}{1-\lim_{t \uparrow t''}G(t)}
			= \alpha
		\end{align*}
		by \eqref{eq:euler_forward_general} on \cpageref{eq:euler_forward_general} (first equality)
		and \eqref{eq:euler_forward_limit} (last equality).
		Fix any $\eps>0$.
		Since $J$ is a maximal interval of constancy of $\phi^0$
		and $t''$ belongs to $J$,
		$\phi^0$ is not constant on $[t'',t''+\eps)$,
		and thus $[t'',t''+\eps)$ is $G$-non-null by \eqref{eq:euler_forward_general}.
		Since $(x,X)$ satisfies the Euler equation with $\phi^0,\phi^1$,
		it follows that $\phi^1(t)$ is a supergradient of $F^1$ at $X_t$
		for some $t \in [t'',t''+\eps)$.

		Now, since $G$ has no atom at $t''$ and $\phi^1$ is increasing, we must have
		\begin{equation*}
			\lim_{t \downarrow t''} \phi^1(t)
			< \E_G\left( \phi^1(\tau) \middle| \tau \geq t'' \right) ,
		\end{equation*}
		as otherwise $\phi^1$ would be $G$-a.e. constant on $(t'',\infty)$,
		which would contradict $t''<\infty$ by the argument above.
		Thus for $\eps>0$ sufficiently small, we have
		\begin{multline*}
			F^{1+}(X_t) \leq \phi^1(t) \leq \phi^1(t''+\eps)
			\\
			< \E_G\left( \phi^1(\tau) \middle| \tau \geq t'' \right)
			= \alpha
			+ \frac{\E_G\left(\phi^1(\tau)\right)}{1-\lim_{t \uparrow t''}G(t)}
			\leq \alpha
		\end{multline*}
		by the concavity of $F^1$ (first inequality),
		the monotonicity of $\phi^1$ (second inequality),
		\eqref{eq:euler_forward_limit} above (the equality)
		and \eqref{eq:euler_average_general} on \cpageref{eq:euler_average_general} (final inequality).

		Since $\eps>0$ may be chosen arbitrarily small and $t$ belongs to $[t'',t''+\eps)$,
		it follows that $F^{1+}(X_t) - \alpha < 0$ for arbitrarily small $t \geq t''$, as desired.
	\end{proof}%
	\renewcommand{\qedsymbol}{$\blacksquare$}

	With all three claims now established,
	the proof is complete.
\end{proof}

\subsection{Generalisation and proof of \texorpdfstring{\Cref{proposition:opt_transition}}{Proposition~\ref{proposition:opt_transition}} (p.~\pageref{proposition:opt_transition})}
\label{app:opt_transition}

Recall the (superdifferential) Euler equation
defined in \cref{app:Euler} (\cpageref{definition:Euler_general}).

\begin{namedthm}[\Cref*{proposition:opt_transition}$\boldsymbol{'}$.]
	\label{proposition:opt_transition_general}
	Let $G$ be a distribution with unbounded support.
	Any mechanism that is optimal for $G$ satisfies the Euler equation for $G$.
	Any undominated mechanism that satisfies the Euler equation for $G$ is optimal for $G$.
\end{namedthm}

This result refines \Cref{proposition:opt_transition} in two ways:
it provides that the Euler equation is necessary absent any auxiliary assumptions,
and furthermore asserts sufficiency.
To prove it, we shall rely on the \hyperref[lemma:euler]{Euler lemma}
and the \hyperref[corollary:existence]{existence corollary} in \cref{app:Euler} (\cpageref{lemma:euler,corollary:existence}).

\begin{proof}[Proof of {\hyperref[proposition:opt_transition_general]{\Cref*{proposition:opt_transition}$\xslantmath{'}$}}]
	Fix a distribution $G$.
	By \Cref{lemma:lequ0} and \Cref{proposition:indiff} (\cpageref{lemma:lequ0,proposition:indiff}),
	any undominated mechanism has the form $(x,X)$ with $x \in \mathcal{X}$.
	If $(x,X)$ is undominated and satisfies the Euler equation for $G$,
	then it maximises the principal's payoff under $G$ by (the first part of) the \hyperref[lemma:euler]{Euler lemma},
	so is optimal for $G$.
	Conversely, suppose that $(x,X)$ is optimal for $G$.
	By the \hyperref[corollary:existence]{existence corollary},
	there is a(nother) mechanism that satisfies the Euler equation for $G$.
	So by (the second part of) the \hyperref[lemma:euler]{Euler lemma},
	$(x,X)$ satisfies the Euler equation.
\end{proof}

\begin{proof}[Proof of \Cref{proposition:opt_transition}]
	Assume that
	$u^\star > 0$
	and that $F^0,F^1$ are differentiable on $\left( 0, u^0 \right)$,
	and let $(x,X)$ be optimal for a distribution $G$ with $G(0)=0$ and unbounded support.
	Then $x$ is decreasing with $0 < u^\star \leq X \leq x \leq u^0$
	and $\lim_{t \to \infty} x_t = u^\star \leq u^1 < u^0$
	by \Cref{theorem:opt} (\cpageref{theorem:opt}),
	and $(x,X)$ satisfies the Euler equation by \hyperref[proposition:opt_transition_general]{\Cref*{proposition:opt_transition}$'$}.

	Thus $0 < X < u^0$, so that $F^1$ is differentiable at $X_t$ for every $t \in \R_+$.
	Similarly, $F^0$ is differentiable at $x_t$ for every $t \in \R_+$ at which $x_t < u^0$.
	Hence by \Cref{observation:euler_forward} in \cref{app:Euler} (\cpageref{observation:euler_forward}), the Euler equation implies that
	$\E_G\left( F^{1\prime}\left( X_\tau \right) \right) = 0$ and
	\begin{equation*}
		F^{0\prime}(x_t)
		= \E_G\left( F^{1\prime}(X_\tau) \middle| \tau > t \right)
		\quad \text{for a.e. $t \in \R_+$ with $x_t < u^0$,}
	\end{equation*}
	and furthermore that
	\begin{equation*}
		F^{0-}(x_t)
		\geq \E_G\left( F^{1\prime}(X_\tau) \middle| \tau > t \right)
		\quad \text{for a.e. $t \in \R_+$ with $x_t = u^0$,}
	\end{equation*}
	since the left-hand derivative $F^{0-}\left( u^0 \right)$
	is the largest supergradient at $u^0$ of the concave function $F^0$.
	For any right-continuous version of $x$,%
		\footnote{E.g. $\widetilde{x}$ given by $\widetilde{x}_t = \sup_{s > t} x_s$ for each $t \in \R_+$.}
	the above (in)equalities must hold for \emph{every} $t \in \R_+$,
	since then both sides are right-continuous in $t$.%
		\footnote{The right-hand side is right-continuous in $t$ because
		$G$ is right-continuous and $\phi^1(s) \coloneqq F^{1\prime}(X_s)$ is $G$-integrable,
		so that for $t_n \downarrow t$ we have
		$G(t_n) \to G(t)$
		and (by dominated convergence) $\vphantom{\int_{\R_+}}\smash{\int_{\R_+} \phi^1 \1_{(t_n,\infty)} \dd G
		\to \int_{\R_+} \phi^1 \1_{(t,\infty)} \dd G}$.}
\end{proof}

\hyperref[proposition:opt_transition_general]{\Cref*{proposition:opt_transition}$'$}
implies the assertion made in \cref{footnote:X1_u1} on \cpageref{footnote:X1_u1}:

\begin{corollary}
	\label{corollary:opt_X1_u1}
	If $(x,X)$ is optimal for a distribution $G$ with $G(0)=0$ and unbounded support,
	then $X_0 > u^1$.
\end{corollary}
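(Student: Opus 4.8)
The plan is to argue by contradiction, using the Euler equation and the limiting behaviour established in \Cref{theorem:opt}. Let $(x,X)$ be optimal for $G$, where $G(0)=0$ and $G$ has unbounded support. By \hyperref[proposition:opt_transition_general]{\Cref*{proposition:opt_transition}$'$}, $(x,X)$ satisfies the superdifferential Euler equation \eqref{eq:euler} for $G$ with some $\phi^0 : \R_+ \to [0,\infty]$ and $G$-integrable $\phi^1$. First I would record the averaging consequence already obtained in the proof of \Cref{theorem:opt}: since $\phi^0 \geq 0$, \eqref{eq:euler} forces $\int_{[0,t]} \phi^1 \dd G \leq 0$ for every $t$, and letting $t \to \infty$ (dominated convergence, via $G$-integrability of $\phi^1$) gives the inequality \eqref{eq:euler_average_general}, namely $\E_G(\phi^1(\tau)) \leq 0$. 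By \Cref{theorem:opt} (p. \pageref{theorem:opt}) I may fix the decreasing version of $x$ with $\lim_{t \to 0} x_t = u^0$, so that $X_t = r\int_t^\infty e^{-r(s-t)} x_s \dd s$ is continuous and decreasing.

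Next, under the contradiction hypothesis $X_0 \leq u^1$, I would show that in fact $X_t < u^1$ for every $t > 0$. If $X_0 < u^1$ this is immediate from monotonicity of $X$. The only delicate case is $X_0 = u^1$: here, if $X_{t_0} = u^1$ for some $t_0 > 0$, then $X \equiv u^1$ on $[0,t_0]$ since $X$ is decreasing, so differentiating the identity defining $X$ gives $\dot X_t = r(X_t - x_t) = 0$ for a.e.\ $t \in (0,t_0)$, whence $x_t = u^1$ a.e.\ on $(0,t_0)$, contradicting $\lim_{t \to 0} x_t = u^0 > u^1$.

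Finally I would close the argument. Since $G(0)=0$ we have $\tau > 0$ for $G$-almost every $\tau$, and hence $X_\tau < u^1$ for $G$-almost every $\tau$; since the optimal payoff is finite, $X_\tau$ also lies in the effective domain $D^1$ of $F^1$ for $G$-a.e.\ $\tau$ (otherwise $F^1(X_\tau) = -\infty$ on a $G$-non-null set). As $F^1$ is concave with unique peak $u^1$, its right derivative $F^{1+}$ is strictly positive on $D^1 \intersect [0,u^1)$, so $F^{1+}(X_\tau) > 0$ $G$-a.s.; and because $\phi^1(\tau)$ is a supergradient of $F^1$ at $X_\tau$ for $G$-a.e.\ $\tau$, we get $\phi^1(\tau) \geq F^{1+}(X_\tau) > 0$ $G$-a.s. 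A $G$-integrable function that is $G$-a.s.\ strictly positive has strictly positive expectation, contradicting $\E_G(\phi^1(\tau)) \leq 0$. Hence $X_0 > u^1$.

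The main (really the only non-routine) obstacle is the boundary case $X_0 = u^1$ in the second step: one must rule out that $X$ is pinned at $u^1$ on an initial interval, and the leverage for this is exactly the conclusion $\lim_{t \to 0} x_t = u^0$ of \Cref{theorem:opt} — so it is through that theorem that the hypotheses $G(0)=0$ and unbounded support enter. Everything else is a direct application of \eqref{eq:euler} together with elementary facts about concave functions; the domain technicality (that $X_\tau \in D^1$ $G$-a.s.) is dispatched in one line from finiteness of the optimal payoff.
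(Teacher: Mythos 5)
Your proof is correct and follows essentially the same route as the paper: invoke \hyperref[proposition:opt_transition_general]{\Cref*{proposition:opt_transition}$'$} to get the Euler equation, argue $X_\tau < u^1$ $G$-a.s., note $F^{1+}(X_\tau) > 0$ and hence $\phi^1(\tau) > 0$ $G$-a.s., and contradict the non-positivity of $\int \phi^1 \dd G$ forced by $\phi^0 \geq 0$. Two small points of comparison: first, the paper splits on whether $u^\star = u^1$ or $u^\star < u^1$ and disposes of the former case immediately via \Cref{theorem:opt}, whereas you avoid that split entirely by handling the boundary case $X_0 = u^1$ head-on through the ODE $\dot X_t = r(X_t - x_t)$ and $\lim_{t\to 0}x_t = u^0$; your version is slightly cleaner and, notably, fills in a step the paper states rather tersely (the paper asserts $X_t < u^1$ for $t>0$ from $\lim_{t\to\infty}X_t = u^\star < u^1$ alone, which is not by itself enough when $X_0 = u^1$ — it is exactly the $\lim_{t\to 0}x_t = u^0$ fact, which you use, that closes that case). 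Second, you pass to $t\to\infty$ to use $\E_G(\phi^1) \leq 0$, while the paper uses the Euler equation at a finite $t$ with $G(t) > 0$; these are interchangeable. Your one-line treatment of the effective-domain issue (finiteness of the optimal payoff forces $X_\tau \in D^1$ $G$-a.s.) is a valid and appropriately explicit handling of a detail the paper leaves implicit.
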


\begin{proof}
	If $u^\star = u^1$,
	then $X_0 > u^\star = u^1$
	by \Cref{theorem:opt} (\cpageref{theorem:opt}).
	Assume for the remainder that $u^\star < u^1$,
	and suppose toward a contradiction that $X_0 \leq u^1$.
	Then $X_t < u^1$ for all $t>0$
	since $X$ is decreasing with $\lim_{t \to \infty} X_t = u^\star < u^1$
	by \Cref{theorem:opt} (\cpageref{theorem:opt}),
	and thus $X < u^1$ $G$-a.e. since $G(0)=0$.
	Since $F^1$ is strictly increasing and concave on $\left[0,u^1\right]$,
	it follows that $F^{1+}(X) > 0$ $G$-a.e.

	$(x,X)$ satisfies the Euler equation with some $\phi^0,\phi^1$ by \hyperref[proposition:opt_transition_general]{\Cref*{proposition:opt_transition}$'$},
	so $\phi^1(t)$ is a supergradient of $F^1$ at $X_t$ for $G$-a.e. $t \in \R_+$,
	equation \eqref{eq:euler} (\cpageref{eq:euler}) holds,
	and $\phi^0$ is non-negative.
	Thus for any $t \in \R_+$ with $G(t)>0$, we have
	\begin{equation*}
		0
		< \int_{[0,t]} F^{1+}(X_s) G( \dd s )
		\leq \int_{[0,t]} \phi^1 \dd G
		= - [1-G(t)] \phi^0(t)
		\leq 0 ,
	\end{equation*}
	which is absurd.
\end{proof}

\crefalias{section}{supplsec}
\crefalias{subsection}{supplsec}
\crefalias{subsubsection}{supplsec}
\section*{Supplemental appendices}
\label{suppl}
\addcontentsline{toc}{section}{Supplemental appendices}

\subsection{Details for the discussion of assumptions (§\ref{sec:model:discussion})}
\label{suppl:ext}

In this appendix, we provide the details underlying the discussion in §\ref{sec:model:discussion} of our model assumptions.

\subsubsection{If \texorpdfstring{$\boldsymbol{u^\star}$}{u*} is not a strict local maximum}
\label{suppl:ext:drop_F1-F0_strict}

Our assumption that $u^\star$ is a strict local maximum of $F^1-F^0$
requires merely that $u^\star$ be a strict local maximum on $\left[0,u^\star\right]$,
as the same is true on $\left[u^\star,u^0\right]$ by definition of $u^\star$.
This holds vacuously if $u^\star=0$,
while if $u^\star>0$ it amounts essentially to ruling out a saddle point.%
	\footnote{\label{footnote:trichotomy}%
	Precisely, $u^\star$ must be either a local maximum, a saddle point, or a point at which \emph{both} $F^0$ \emph{and} $F^1$ have a kink.
	We omit the details; see \textcite{omit}.}

In fact, nothing changes if we weaken our assumption that $u^\star$ is a strict local maximum of $F^1-F^0$ to demand only that there be no proper interval $\left[ u_\star, u^\star \right] \subseteq \left[0,u^0\right]$ on which $F^0,F^1$ are affine with equal slopes.
Dropping this weaker assumption merely generates some uninteresting multiplicity.
For concreteness, consider the case in which $F^0$ is affine, so that $F^1-F^0$ is concave and thus attains its maximum over $\left[ 0, u^0 \right]$ on an interval $\left[ u_\star, u^\star \right]$.

\begin{definition}
	\label{definition:interval_deadline}
	A mechanism $(x,X)$ is an \emph{interval deadline mechanism} iff
	for some $T \in [0,\infty]$, we have
	$x_t = u^0$ for $t \leq T$
	and $x_t \in \left[ u_\star, u^\star \right]$ for $t > T$.
\end{definition}

With small alterations, the proof of \Cref{theorem:deadline} in \cref{app:pf_theorem_deadline} delivers

\begin{namedthm}[\Cref*{theorem:deadline}$\boldsymbol{'}$.]
	\label{namedthm:deadline_interval}
	If the old frontier $F^0$ is affine on $\left[ 0, u^0 \right]$,
	then any undominated mechanism is an interval deadline mechanism.
\end{namedthm}

\subsubsection{If some agent utility levels are infeasible}
\label{suppl:ext:diff_domain}

Our model does not require that every agent utility level $u \in [0,\infty)$ be feasible.
Concretely, suppose that technology $j \in \{0,1\}$ can only provide the agent with utility in an interval $I^j \subseteq [0,\infty)$.%
	\footnote{$I^j$ is necessarily an interval
	because any convex combination of feasible utility levels can be attained by rapidly switching back and forth (or randomising).}

The frontier $F^j$ is a concave and upper semi-continuous function $I^j \to \R$.
(Recall that these assumptions are without loss.)
It is innocuous to extend $F^j$ continuously to $\cl I^j$.%
	\footnote{The principal can anyway attain utility arbitrarily close to $\lim_{u \downarrow \inf I^j} F^j(u)$ by choosing $u > \inf I^j$ small, and similarly for $\sup I^j$.}
(Note that $F^j$ may take the value $-\infty$ off $I^j$.)
Assume that $F^0$ has a unique peak $u^0 \in \cl I^0$.
Assume without loss of generality that
(i)~$\left[0,u^0\right] \subseteq \cl I^0$,%
	\footnote{Any mechanism $(x^0,X^1)$ satisfies $x^0 \geq \inf I^0$
	since utilities $<\inf I^0$ cannot be reached using the old technology.
	Thus IC mechanisms $(x^0,X^1)$ have $X^1 \geq X^0 \geq \inf I^0$.
	So without loss, we may consider the translated model with agent utility $\widetilde{u} \coloneqq u - \inf I^0 \in [0,\infty)$.}
(so that $F^0$ is finite on $\left(0,u^0\right]$,)
and
(ii)~$I^0 \subseteq I^1$.%
	\footnote{The new technology expands the set of available physical allocations,
	so any agent utility feasible before the breakthrough remains feasible afterwards.}

We now impose the remaining model assumptions.
First, $u^0>0$.
Secondly, $F^1$ has a unique peak $u^1 \in \cl I^1$,
which satisfies $u^1 < u^0$.
Thirdly, $F^1 \geq F^0$ (without loss, recall).
Finally, $u^\star$ is a strict local maximum of $F^1-F^0$.

Extend $F^j$ to all of $[0,\infty)$
by letting $F^j \coloneqq -\infty$ off $\cl I^j$.
Then $F^0,F^1$ satisfy our model assumptions.
Since utility levels at which $F^j=-\infty$ are never chosen when using technology $j$,
it is as if they were not feasible.

\subsubsection{Participation constraint instead of non-negativity}
\label{suppl:ext:participation}

Suppose that the agent's utility can take any value $u \in [-K,\infty)$, where $K>0$ is (arbitrarily) large.%
	\footnote{The lower bound does not bind.
	We impose it merely to avoid integrability issues.}
The agent can quit anytime, earning a continuation payoff worth zero (a normalisation).
We focus on the interesting case in which the principal prefers for the agent never to quit, and therefore chooses among participation-inducing IC mechanisms.

The frontiers $F^0,F^1$ are now defined on $[-K,\infty)$.
As in the text, $u^\star$ denotes the largest $u \in \left[ 0, u^0 \right]$ at which $F^0,F^1$ have equal slopes, with $u^\star \coloneqq 0$ if there is no such $u$.
Note well that $u^\star$ is non-negative by definition.

\begin{namedthm}[Claim.]
	\label{claim:participation}
	All of our results remain valid
	(with $u^\star$ defined as above).
\end{namedthm}

\begin{proof}
	Consider the formally equivalent model in which the agent's utility is $\widetilde{u} \coloneqq u+K \in [0,\infty)$,
	with frontiers $\widetilde{F}^j\left( \widetilde{u} \right) \coloneqq F^j\left( \widetilde{u} - K \right)$
	peaking at $\widetilde{u}^j \coloneqq u^j + K$.
	Let $\widetilde{u}^\star$ be the largest $\widetilde{u} \in \left[ 0, \widetilde{u}^0 \right]$ at which $\widetilde{F}^0,\widetilde{F}^1$ have equal slopes,
	with $\widetilde{u}^\star \coloneqq 0$ if there is no such $\widetilde{u}$.
	It need \emph{not} be that $\widetilde{u}^\star = u^\star + K$:
	rather, this holds iff $\widetilde{u}^\star \geq K$.%
		\footnote{If $\widetilde{u}^\star < K$, then $\widetilde{u}^\star < K \leq u^\star + K$.
		Conversely, if $\widetilde{u}^\star \geq K$, then $\widetilde{u}^\star$ is a fortiori the largest $\widetilde{u} \in [ K, \widetilde{u}^0 ]$ at which $\widetilde{F}^0,\widetilde{F}^1$ have equal slopes,
		which is to say that $\widetilde{u}^\star-K$ is the largest $u \in [ 0, u^0 ]$ at which $F^0,F^1$ have equal slopes,
		which is the definition of $u^\star$.}
	We next argue that this may be assumed without loss of generality.

	The participation constraints read
	\begin{equation*}
		\widetilde{X}^1_t \geq K
		\quad \text{and} \quad
		\widetilde{X}^0_t
		+ \E_G\left( e^{-r(\tau-t)} \left( \widetilde{X}^1_\tau - \widetilde{X}^0_\tau \right)
		\middle| \tau > t \right) \geq K
		\quad \text{for all $t \in \R_+$.}
		\label{eq:participation}
		\tag*{}
	\end{equation*}
	Due to the first constraint, it is immaterial what values the new frontier $\widetilde{F}^1$ takes on $[0,K)$.
	So assume without loss that it equals the concave upper envelope of
	$\1_{[0,K)} \widetilde{F}^0 + \1_{[K,\infty)} \widetilde{F}^1$.
	Then $\widetilde{F}^1$ is weakly steeper than $\widetilde{F}^0$ on $[0,K)$,%
		\footnote{The greatest supergradient $\widetilde{F}^1$ weakly exceeds that of $\widetilde{F}^0$, and likewise for the smallest.}
	so that $\widetilde{u}^\star \geq K$
	and thus $\widetilde{u}^\star = u^\star+K$.

	The principal's problem is as in the text, except that she must respect the \hyperref[eq:participation]{participation constraints}.
	We now show that these do not bind.

	First, when $F^0$ is affine on $\left[0,u^0\right]$,
	any undominated mechanism $( \widetilde{x}^0, \widetilde{X}^1 )$ in the relaxed problem that ignores the \hyperref[eq:participation]{participation constraints} (i.e. the problem in the text)
	satisfies $\widetilde{X}^1 = \widetilde{X}^0 \geq \widetilde{u}^\star \geq K$
	by \Cref{proposition:indiff,theorem:deadline} (\cpageref{proposition:indiff,theorem:deadline}).
	This implies the \hyperref[eq:participation]{participation constraints}.
	Thus undominated (optimal) mechanisms are characterised, in $\widetilde{u}$ units, by \Cref{theorem:deadline} and \Cref{proposition:dl_charac} (by \Cref{proposition:opt_deadline}).

	Similarly,
	ignoring \hyperref[eq:participation]{participation},
	any mechanism $( \widetilde{x}^0, \widetilde{X}^1 )$ that is optimal for a distribution $G$ with $G(0)=0$ and unbounded support
	satisfies $\widetilde{X}^1 = \widetilde{X}^0 \geq \widetilde{u}^\star \geq K$
	by \Cref{proposition:indiff} and \Cref{theorem:opt} (\cpageref{proposition:indiff,theorem:opt}),
	so that the \hyperref[eq:participation]{participation constraints} hold.
	Thus \Cref{theorem:opt} and \Cref{proposition:opt_transition} characterise optimal mechanisms, in $\widetilde{u}$ units.

	These characterisations translate straightforwardly back to $u$ units, except for one wrinkle:
	the long-run utility level appearing in \Cref{theorem:deadline,theorem:opt} is $\widetilde{u}^\star - K$, not $u^\star$.
	We showed, however, that these two are equal.
\end{proof}

\subsubsection{Uncertain technology}
\label{suppl:ext:random_F1}

In our model, the new technology $F^1$ is known in advance---only its date of arrival is uncertain. In this appendix, we show that all of our results remain valid if the new technology is uncertain, provided the agent is not privately informed about its realisation.

Let $\mathcal{F}$ be a finite set of of concave and upper semi-continuous functions $[0,\infty) \to [-\infty,\infty)$ with unique peaks.
The new frontier $\boldsymbol{F}$ is a random element of $\mathcal{F}$, drawn independently of the breakthrough time $\tau$.
Write $U^1(F)$ for the unique peak of $F \in \mathcal{F}$,
and $u^1 \coloneqq \E\left( U^1(\boldsymbol{F}) \right)$ for its expectation.
We assume that there is a conflict of interest: $u^1 < u^0$.

The agent privately observes when the breakthrough occurs, but she does not learn the realised value of the new technology $\boldsymbol{F}$.
This means that the agent cannot easily determine the payoff consequences for the principal of the new technology, which is natural in many (but not all) applications.

A mechanism specifies, for each period $t$,
the agent's utility $x^0_t$ if she has not already disclosed,
as well as the continuation utility $\widehat{X}_t(F)$ with which she is rewarded for disclosing at time $t$ if the realised new technology is $F \in \mathcal{F}$.
Since the agent does not know $F$ prior to disclosure, only the expectation
$X^1_t \coloneqq \E\bigl( \widehat{X}_t(\boldsymbol{F}) \bigr)$
matters for her incentives.

For a given value $X^1_t = u$ of this expectation,
the principal chooses $\widehat{X}_t : \mathcal{F} \to [0,\infty)$ to maximise
$\E\bigl( \boldsymbol{F}\bigl( \widehat{X}_t(\boldsymbol{F}) \bigr) \bigr)$
subject to $\E\bigl( \widehat{X}_t(\boldsymbol{F}) \bigr) = u$.
We write $F^1(u)$ for the value of this problem.%
	\footnote{A maximum exists (so that $F^1$ is well-defined)
	because the constraint set is compact in the pointwise topology (being a closed and bounded subset of the Euclidean space $[0,\infty)^{\abs*{\mathcal{F}}}$)
	and the maximand is upper semi-continuous since every element of $\mathcal{F}$ is.}

To characterise the pre-disclosure flow $x^0$ and expected disclosure reward $X^1$ in undominated mechanisms, we may study the deterministic model in which the new technology is $F^1$.
(The technology-contingent disclosure reward $\widehat{X}$ may be backed out from the above maximisation problem.)
This deterministic model satisfies our model assumptions:

\begin{lemma}
	\label{lemma:randomF1_assns}
	$F^1$ is concave and upper semi-continuous, with unique peak at $u^1 = \E\left( U^1\left( \boldsymbol{F} \right) \right)$.
\end{lemma}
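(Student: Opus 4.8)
The plan is to read the three claimed properties of $F^1$ directly off its variational definition, exploiting that $\mathcal{F}$ is \emph{finite} so that every limiting step collapses to a manipulation of finitely many terms. Throughout I would assume without loss that $\PP(\boldsymbol{F} = F) > 0$ for every $F \in \mathcal{F}$ (deleting null realisations affects neither the constraint $\E(\widehat{X}(\boldsymbol{F})) = u$ nor the maximand), and I would use freely that the maximum defining $F^1(u)$ is attained, as recorded in the footnote, and that each $F \in \mathcal{F}$, being concave, valued in $[-\infty,\infty)$, and peaking uniquely at $U^1(F)$, satisfies $F(x) \leq F(U^1(F)) < \infty$ for every $x \geq 0$; in particular $F^1 \leq \E(\boldsymbol{F}(U^1(\boldsymbol{F}))) < \infty$ on $[0,\infty)$. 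Concavity is then immediate: given $u_0, u_1 \in [0,\infty)$ with $F^1(u_0), F^1(u_1)$ finite (the inequality being trivial otherwise) and maximisers $\widehat{X}_0, \widehat{X}_1$ for $u_0$ and $u_1$ respectively, the convex combination $\widehat{X}_\lambda \coloneqq (1-\lambda) \widehat{X}_0 + \lambda \widehat{X}_1 \geq 0$ is feasible for $(1-\lambda) u_0 + \lambda u_1$ since $\E(\widehat{X}_\lambda(\boldsymbol{F})) = (1-\lambda) u_0 + \lambda u_1$, while concavity of each $F$ gives $\boldsymbol{F}(\widehat{X}_\lambda(\boldsymbol{F})) \geq (1-\lambda) \boldsymbol{F}(\widehat{X}_0(\boldsymbol{F})) + \lambda \boldsymbol{F}(\widehat{X}_1(\boldsymbol{F}))$, which upon taking expectations yields $F^1((1-\lambda)u_0 + \lambda u_1) \geq (1-\lambda) F^1(u_0) + \lambda F^1(u_1)$.

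For upper semi-continuity, let $u_n \to u$ in $[0,\infty)$; if $F^1(u_n) = -\infty$ for all large $n$ there is nothing to show, so I would pass to a subsequence along which $F^1(u_n)$ is finite and converges to $\limsup_n F^1(u_n)$ and choose a maximiser $\widehat{X}_n$ for each $u_n$. Since $\widehat{X}_n \geq 0$, $\E(\widehat{X}_n(\boldsymbol{F})) = u_n$, and $(u_n)$ is bounded, each $\widehat{X}_n(F)$ is bounded by $\sup_m u_m / \PP(\boldsymbol{F} = F)$, so along a further subsequence $\widehat{X}_n \to \widehat{X}$ pointwise on $\mathcal{F}$, and the limit is nonnegative with $\E(\widehat{X}(\boldsymbol{F})) = u$, hence feasible for $u$. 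Then, using that the $\limsup$ of a finite sum of quantities bounded above is at most the sum of their $\limsup$s, together with upper semi-continuity of each $F$ at the point $\widehat{X}(F)$,
\begin{equation*}
	\limsup_n F^1(u_n)
	= \limsup_n \sum_{F \in \mathcal{F}} \PP(\boldsymbol{F} = F)\, F\bigl( \widehat{X}_n(F) \bigr)
	\leq \sum_{F \in \mathcal{F}} \PP(\boldsymbol{F} = F)\, F\bigl( \widehat{X}(F) \bigr)
	\leq F^1(u) .
\end{equation*}

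For the peak, note that $F \leq F(U^1(F))$ pointwise for each $F$, so any feasible $\widehat{X}$ satisfies $\E(\boldsymbol{F}(\widehat{X}(\boldsymbol{F}))) \leq \E(\boldsymbol{F}(U^1(\boldsymbol{F})))$, whence $F^1 \leq \E(\boldsymbol{F}(U^1(\boldsymbol{F})))$ everywhere; and $\widehat{X} \coloneqq U^1$ is feasible for $u^1 = \E(U^1(\boldsymbol{F}))$ and meets this bound, so $u^1$ maximises $F^1$. For uniqueness, if $F^1(u)$ equals the bound for some $u$ with maximiser $\widehat{X}$, then $\boldsymbol{F}(\widehat{X}(\boldsymbol{F})) \leq \boldsymbol{F}(U^1(\boldsymbol{F}))$ pointwise while the two have equal, finite expectations, which forces $F(\widehat{X}(F)) = F(U^1(F))$ for every $F \in \mathcal{F}$; since each $F$ peaks uniquely, $\widehat{X}(F) = U^1(F)$ for all $F$, and so $u = \E(\widehat{X}(\boldsymbol{F})) = u^1$. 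The one step requiring genuine care is upper semi-continuity — ruling out that the maximisers $\widehat{X}_n$ escape to infinity, and justifying the interchange $\limsup \sum \leq \sum \limsup$ — and both difficulties are dissolved by the bounds $0 \leq \widehat{X}_n(F) \leq u_n / \PP(\boldsymbol{F} = F)$ and $F \leq F(U^1(F)) < \infty$, which is precisely where finiteness of $\mathcal{F}$ and the unique-peak hypothesis enter; everything else is routine.
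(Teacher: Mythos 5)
Your proof is correct: each of the three claims is read off the variational definition of $F^1$, with finiteness of $\mathcal{F}$ supplying the compactness needed for upper semi-continuity and the unique-peak hypothesis driving the uniqueness of the peak of $F^1$. The paper omits its own proof of this lemma, describing it as straightforward and deferring to an external reference, so there is no paper argument to compare against; yours is the natural direct verification.
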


Our results therefore remain valid, characterising the $x^0$ and $X^1$ of undominated mechanisms in the uncertain-technology model.
We omit the (straightforward) proof of \Cref{lemma:randomF1_assns}
\parencite[see][]{omit}.

\subsection{Revelation principle}
\label{suppl:revelation_principle}

A revelation principle for our environment must account for the verifiability of the agent's disclosures.
A \emph{direct mechanism} is one which solicits a cheap-talk report of the breakthrough's arrival, then instructs the agent when to deliver her hard evidence (her verifiable disclosure).
The standard revelation principle \parencite[Proposition 2]{Myerson1982} permits us to restrict attention to incentive-compatible direct mechanisms, meaning those in which the agent is willing to report promptly and to deliver her evidence at the appointed time.

It remains only to show that among such mechanisms, we may further restrict our attention to those involving prompt delivery of the evidence.
Modulo differences in detail, this follows from Bull and Watson's (\citeyear{BullWatson2007}) revelation principle (their Theorem~2).
The key requirement for their result, the `normality' of evidence,
is satisfied in our model: for each type of the agent (i.e. breakthrough time), there is a most-informative manner of verifiably disclosing: namely, disclosing promptly.

\subsection{Existence of undominated and optimal mechanisms}
\label{suppl:undom_opt_properties}

In this appendix, we prove that any dominated mechanism is dominated by an undominated mechanism (\Cref{proposition:dominated_by_undominated} below). This claim is used directly in some of our proofs, and it implies the existence of undominated and optimal mechanisms (\Cref{corollary:optimal_existence} below). Although the claim is economically straightforward, its proof involves some topological tedium.

Observe first that the proof of \Cref{lemma:lequ0} (\cpageref{lemma:lequ0}), which relies on no existence claim, implies the following slightly stronger result:

\begin{namedthm}[\Cref{lemma:lequ0}$\boldsymbol{'}$.]
	\label{namedthm:lequ0}
	Any dominated IC mechanism $\left(x^0,X^1\right)$ is dominated by an IC mechanism $\left(x^{0\star},X^{1\star}\right)$ with $x^{0\star} \leq u^0$.
\end{namedthm}

Besides this lemma, we shall rely on some standard results which may be found in \textcite{Warga1972}.
Given $\ell \in \N$ and a compact set $K \subseteq \R^\ell$, let $\mathcal{P}_K$ be the space of all probability measures on $K$, endowed with the topology of weak convergence.
Let $\mathcal{F}_K$ be the set of all maps $f : \R_+ \times K \to \R$ such that $f(\cdot,z)$ is Borel measurable for each $z \in K$, $f(t,\cdot)$ is continuous for each $t \in \R_+$, and $\int_{\R_+} \max_{z \in K}|f(t,z)| \dd t < \infty$.
Endow $\mathcal{P}_K$ with the Borel $\sigma$-algebra.
The following is immediate from Theorem~IV.1.6 in \textcite{Warga1972}.

\begin{lemma}
	\label{lemma:S_K_equivalent_defns}
	For any map $\nu : \R_+ \to \mathcal{P}_K$, the following are equivalent:
	(a)~$\nu$ is Borel measurable.
	(b)~For any continuous $\phi : K \to \R$, $t \mapsto \int_K \phi \dd \nu_t$ is Borel measurable.
	(c)~For any $f \in \mathcal{F}_K$, $t \mapsto \int_K f(t,z) \nu_t(\dd z)$ is (Borel) integrable.
\end{lemma}

Let $\mathcal{S}_K$ be space of Lebesgue measurable maps $\nu : \R_+ \to \mathcal{P}_K$, where we identify any two maps that are a.e. equal. The following is immediate from Theorem~IV.2.1 in \textcite{Warga1972}.

\begin{lemma}
	\label{lemma:SK_seq_compact}
	$\mathcal{S}_K$ is sequentially compact in the weak topology.
	That is, for any sequence $(\nu^n)_{n \in \N}$ in $\mathcal{S}_K$, there is a $\nu \in \mathcal{S}_K$
	such that
	\begin{equation*}
		\int_{\R_+} \int_K f(t,z)\nu^n_t(\dd z) \dd t
		\to
		\int_{\R_+} \int_K f(t,z)\nu_t(\dd z) \dd t
		\quad \text{for every $f \in \mathcal{F}_K$}
		\label{eq:seq_compact_convergence}
		\tag{$\dag$}
	\end{equation*}
	along some subsequence of $(\nu^n)_{n \in \N}$.
\end{lemma}

\Cref{lemma:S_K_equivalent_defns,lemma:SK_seq_compact} imply, by standard arguments, that the space of IC mechanisms is sequentially compact and that the principal's payoff $\Pi_G$ is upper semi-continuous:%
	\footnote{For sequential compactness, Helly's selection theorem \parencite[e.g.][p. 167]{Rudin1976} and a diagonalisation argument yield a subsequence along which $(X^{1n})_{n \in \N}$ converges pointwise, and \Cref{lemma:SK_seq_compact} yields a sub-subsequence along which $(x^{0n})_{n \in \N}$ converges weakly. Upper semi-continuity is established using monotone convergence and Fatou's lemma.}

\begin{lemma}
	\label{lemma:Pi_usc}
	For any sequence $\left(x^{0n},X^{1n}\right)_{n \in \N}$ of IC mechanisms, there exists an IC mechanism $\left(x^0,X^1\right)$ such that
	$X^{0n} \to X^0$ and $X^{1n} \to X^1$ pointwise along some subsequence, and
	\begin{equation*}
		\Pi_G\left(x^0,X^1\right) \geq \limsup_{n \to \infty} \Pi_G\left(x^{0n},X^{1n}\right) \quad \text{for any distribution $G$.}
	\end{equation*}
\end{lemma}

Now, fix an arbitrary dominated IC mechanism $\left(x^{0\star},X^{1\star}\right)$, and let $\mathcal{X}^\uparrow$ be the set comprising $\left(x^{0\star},X^{1\star}\right)$ and all IC mechanisms that dominate it. Let $G$ be some full-support distribution with $G(0)>0$ such that $\sup_{\mathcal{X}^\uparrow} \Pi_G$ is finite. By \Cref{lemma:Pi_usc} applied to a sequence $\left(x^{0n},X^{1n}\right)_{n \in \N}$ in $\mathcal{X}^\uparrow$ along which $\Pi_G\left(x^{0n},X^{1n}\right) \to \sup_{\mathcal{X}^\uparrow}\Pi_G$, there is an $\left(x^0,X^1\right) \in \mathcal{X}^\uparrow$ that maximises $\Pi_G$ on $\mathcal{X}^\uparrow$. Given that $G$ has full support, $\left(x^0,X^1\right)$ is obviously undominated. We have shown:

\begin{proposition}
	\label{proposition:dominated_by_undominated}
	Any dominated IC mechanism is dominated by an undominated IC mechanism.
\end{proposition}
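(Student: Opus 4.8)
The plan is to prove this by Zorn's lemma, working in the set $\mathcal{X}$ of mechanisms $(x,X)$ with $x:\R_+\to[0,u^0]$ measurable and $X_t=r\int_t^\infty e^{-r(s-t)}x_s\,\dd s$ — as the paper's setup already does. This loses nothing: every undominated mechanism is of this form by \Cref{lemma:lequ0} and \Cref{theorem:indiff}, and a general dominated IC mechanism is (strictly) dominated by one in $\mathcal{X}$, so "dominated'' and "undominated'' may be read relative to $\mathcal{X}$. The key preliminary is that, writing $\Phi_x(t)\coloneqq\Pi_{\delta_t}(x,X)=r\int_0^t e^{-rs}F^0(x_s)\,\dd s+e^{-rt}F^1(X_t)$ for the payoff under the point mass at $t$ (a continuous function of $t$), one has $\Pi_G(x,X)=\E_G\!\left(\Phi_x(\tau)\right)$, so that $x$ dominates $x^\dag$ exactly when $\Phi_x\geq\Phi_{x^\dag}$ pointwise on $\R_+$ with $\Phi_x\neq\Phi_{x^\dag}$: the inequality under every $G$ reduces, via point masses, to the pointwise one, and strictness under some $G$ is then strictness at some $t$. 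In particular domination is transitive, $\preceq$ is a genuine partial order (once mechanisms with identical $\Phi$ are identified, which is harmless), and every $\Phi_x$, hence every $\pi_G$, is bounded above by $\max\{F^0(u^0),F^1(u^1)\}$.

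Now fix a dominated $x_0\in\mathcal{X}$ and set $S\coloneqq\{x\in\mathcal{X}:x_0\preceq x\}$, which is nonempty because $x_0$ is dominated. A maximal element $x^\star$ of $(S,\preceq)$ is exactly what we want: if some mechanism $y$ dominated $x^\star$, then $y$ would dominate $x_0$ by transitivity, so $y\in S$, contradicting maximality; hence $x^\star$ is undominated. And $x^\star$ dominates $x_0$ rather than merely equalling it, since $x_0$ — being dominated — is not maximal in $S$. By Zorn's lemma it therefore remains only to show that every chain $C\subseteq S$ has an upper bound in $S$.

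Given such a chain $C$, the continuous functions $\{\Phi_x:x\in C\}$ are totally ordered pointwise; put $\Phi^\star\coloneqq\sup_{x\in C}\Phi_x$, which is finite and $\geq\Phi_{x_0}$. Regard $C$ as a net directed by $\preceq$; by compactness of $\mathcal{X}$ pass to a subnet converging to some $x^\star\in\mathcal{X}$. Upper semi-continuity of $\pi_{\delta_t}$ (\Cref{observation:pi_usc}) gives $\Phi_{x^\star}(t)\geq\limsup\Phi_{x_\beta}(t)$ along the subnet; since the subnet is cofinal in $C$ it is eventually $\succeq$ any fixed $x\in C$, whence $\liminf\Phi_{x_\beta}(t)\geq\Phi_x(t)$. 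Thus $\Phi_{x^\star}\geq\Phi_x$ on $\R_+$ for every $x\in C$, and in particular $\Phi_{x^\star}\geq\Phi_{x_0}$, so $x^\star\in S$ is an upper bound of $C$. This completes the Zorn argument.

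The one genuinely delicate step is the extraction of $x^\star$: under the topology of pointwise convergence $\mathcal{X}$ is compact (Tychonoff) but not sequentially compact, and $\pi_G$ is there only \emph{sequentially} upper semi-continuous — indeed it is not net-upper-semi-continuous, since $x\mapsto X_t$ fails to be continuous for nets — so the previous paragraph is not literally valid as stated. The remedy is to run the whole argument in the weak-$*$ topology of $L^\infty(\R_+)=L^1(\R_+)^*$: there $\mathcal{X}$ is convex, bounded and closed, hence weak-$*$ compact by Banach--Alaoglu and, as $L^1(\R_+)$ is separable, metrizable, so that compactness, sequential compactness and upper semi-continuity coincide; and $\pi_G$ remains weak-$*$ upper semi-continuous, because $x\mapsto X_t$ is weak-$*$ continuous (integration against a fixed $L^1$ kernel), whence $x\mapsto F^1(X_t)$ is weak-$*$ upper semi-continuous, the concave integral functional $x\mapsto\int_0^\tau e^{-rt}F^0(x_t)\,\dd t$ is weak-$*$ upper semi-continuous by the classical semicontinuity theory for concave integral functionals, and Fatou's lemma (licensed by the uniform upper bound) carries these through the expectation over $\tau$. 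With this substitution the proof goes through; mechanisms on which $F^0$ is evaluated at $-\infty$ on a positive-measure set have $\Phi_x\equiv-\infty$ past some time, are patently dominated, and may be dropped from the chains at the outset.
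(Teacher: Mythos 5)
Your proof takes the paper's broad route — Zorn's lemma on the domination partial order, compactness of the mechanism space, upper semi-continuity of $\pi_G$ — but with a genuinely different (and necessary) topological foundation. You are right that the pointwise topology is deficient: the paper's Observation \ref{observation:pi_usc} is a sequential Fatou argument, not topological usc, and product-topology compactness does not give sequential compactness. In fact the problem is worse than you note: the footnoted appeal to "pointwise limits of measurable functions are measurable" is itself a sequential fact, and $\mathcal{X}$ is actually dense, not closed, in $[0,u^0]^{\R_+}$ (a net of step functions converges pointwise to any target, measurable or not), so the claimed compactness fails outright. Your weak-$*$ replacement repairs this cleanly — $\mathcal{X}$ is weak-$*$ closed and bounded (hence compact by Banach--Alaoglu) and metrizable (separability of $L^1$), so compactness, sequential compactness and net-usc coincide — and your net/cofinality step also sidesteps the paper's unexplained extraction of a countable cofinal subsequence from an arbitrary chain. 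The point-mass reformulation of domination via $\Phi_x$ is a nice touch that makes the upper-bound bookkeeping transparent. Two claims are asserted where the paper offers no analogue to lean on and each deserves a line: weak-$*$ upper semi-continuity of the concave integral functional $x\mapsto\int_0^\infty(1-G(t))e^{-rt}F^0(x_t)\,\dd t$ (standard via a countable supremum of weak-$*$ continuous affine minorants, but not something the paper's Observation proves); and the "patently dominated, may be dropped" treatment of mechanisms with $F^0(x_t)=-\infty$ on a non-null set, which is not pointwise $\Phi$-dominance by some fixed competitor — it is cleaner to check directly that the $\Phi$-characterisation of domination and the Zorn bookkeeping tolerate $\Phi$ taking the value $-\infty$. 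Finally, your opening reduction that any dominated IC mechanism is dominated by one already in $\mathcal{X}$ does not follow immediately from \Cref{lemma:lequ0} and \Cref{theorem:indiff}, whose proofs produce improvements moving toward but not necessarily into $\mathcal{X}$; that loose end you inherit from the paper's own framing of supplemental appendix C rather than introduce.
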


\begin{corollary}
	\label{corollary:optimal_existence}
	For any distribution $G$, an optimal mechanism exists.
\end{corollary}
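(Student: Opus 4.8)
The plan is to maximise $\pi_G$ over the set of undominated mechanisms and then check that such a maximiser is in fact globally optimal. First I would invoke \Cref{corollary:undominated_existence} and \Cref{lemma:undominated_closed} to note that the set $\mathcal{U}$ of undominated mechanisms is a non-empty compact subset of $\mathcal{X}$. Since $\pi_G$ is upper semi-continuous by \Cref{observation:pi_usc}, the extreme value theorem applies to the restriction $\pi_G|_{\mathcal{U}}$, yielding some $x^\star \in \mathcal{U}$ with $\pi_G(x^\star) \geq \pi_G(x')$ for every $x' \in \mathcal{U}$.

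Next I would argue that $x^\star$ maximises $\pi_G$ over all of $\mathcal{X}$, not merely over $\mathcal{U}$. Suppose, for contradiction, that some $x \in \mathcal{X}$ satisfies $\pi_G(x) > \pi_G(x^\star)$. If $x$ is undominated, this contradicts the choice of $x^\star$ directly. If $x$ is dominated, then by \Cref{proposition:dominated_by_undominated} it is dominated by some undominated mechanism $x' \in \mathcal{U}$; since domination entails $\Pi_G(x') \geq \Pi_G(x)$ for the particular distribution $G$, we get $\pi_G(x') \geq \pi_G(x) > \pi_G(x^\star)$, again contradicting the choice of $x^\star$. Hence $x^\star$ both maximises $\pi_G$ on $\mathcal{X}$ and is undominated, so it is optimal for $G$ in the sense of \Cref{definition:opt}.

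The argument is routine given the groundwork already laid; I do not anticipate a genuine obstacle. The one step that needs care is the passage from ``maximal within $\mathcal{U}$'' to ``globally maximal'', and that is precisely the role of \Cref{proposition:dominated_by_undominated}: it guarantees that no dominated mechanism can strictly outperform the best undominated one. Compactness of $\mathcal{U}$ (rather than merely of $\mathcal{X}$) is what lets the upper semi-continuity of $\pi_G$ deliver an actual maximiser inside $\mathcal{U}$.
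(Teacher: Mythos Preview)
Your proposal is correct and takes essentially the same approach as the paper: both arguments maximise the upper semi-continuous $\pi_G$ over the non-empty compact set of undominated mechanisms (via \Cref{corollary:undominated_existence}, \Cref{lemma:undominated_closed}, and \Cref{observation:pi_usc}), then invoke \Cref{proposition:dominated_by_undominated} to conclude that the resulting maximiser is globally optimal. The only difference is cosmetic: the paper compresses the second step into the phrase ``it suffices'', whereas you spell out the contradiction argument explicitly.
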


\begin{proof}
	Fix a distribution $G$, and let $\mathcal{Y}$ denote the set of all IC mechanisms.
	Applying \Cref{lemma:Pi_usc} to a sequence $\left(x^{0n},X^{1n}\right)_{n \in \N}$ in $\mathcal{Y}$ along which $\Pi_G\left(x^{0n},X^{1n}\right) \to \sup_{\mathcal{Y}}\Pi_G$ yields the existence of an $\left(x^0,X^1\right) \in \mathcal{Y}$ that maximises $\Pi_G$ on $\mathcal{Y}$.
	By \Cref{proposition:dominated_by_undominated}, we may choose $\left(x^0,X^1\right)$ to be undominated.
	Hence $\left(x^0,X^1\right)$ is optimal for $G$.
\end{proof}

\subsection{Proof of \texorpdfstring{\Cref{lemma:max_effort_obedient}}{Lemma~\ref{lemma:max_effort_obedient}} and \texorpdfstring{\hyperref[proposition:indiff_moralhazard]{\Cref*{proposition:indiff}$\boldsymbol{'}$}}{Theorem~\ref{proposition:indiff}'} (§\ref{sec:mh})}
\label{suppl:moralhazard}

In §\ref{app:moralhazard:lemmata}, we state some lemmata and use them to prove \Cref{lemma:max_effort_obedient} and \hyperref[proposition:indiff_moralhazard]{\Cref*{proposition:indiff}$'$}.
The proof of one of the lemmata is deferred to §\ref{app:moralhazard:lemma_pf}.

\subsubsection{Proof using lemmata}
\label{app:moralhazard:lemmata}

An IC triplet $\left(x^0,X^1,a\right)$
is obedient iff the effort schedule $a$ maximises the agent's payoff
\begin{align*}
	a^\dag
	&\mapsto \int_0^\infty e^{-rt}
	\left( \lambda a^\dag_t e^{-\lambda \int_0^t a^\dag} X^1_t
	+ e^{-\lambda \int_0^t a^\dag}
	r \left[ x^0_t - c a^\dag_t \right] \right)
	\dd t 
	\\
	&= \int_0^\infty e^{-rt -\lambda \int_0^t a^\dag}
	\left( a^\dag_t
	\left[ \lambda X^1_t - rc \right]
	+ r x^0_t \right)
	\dd t .
\end{align*}
This requires precisely that for a.e. $t \in \R_+$,
$\lambda \left(X^1_t - v_t\right) \geq \mathrel{(\leq)} rc$ whenever $a_t=1$ ($a_t=0$), where $v$ denotes the agent's optimal value function under $\left(x^0,X^1\right)$.%
	\footnote{The HJB equation reads
	$r v_t = \dot v_t + rx^0_t
	+ \max_{\alpha \in \{0,1\}}
	\alpha \left[\lambda \left(X^1_t - v_t\right) - rc \right]$.}
Say that a triplet $\left(x^0,X^1,a\right)$ is \emph{quasi-obedient} iff 
\begin{equation*}
	\lambda \left(X^1_t - X^0_t\right) \geq rc \quad \text{for a.e. $t \in \R_+$ such that $a_t = 1$.}
\end{equation*}
Obedience implies quasi-obedience, since $v \geq X^0$.

Quasi-obedience is a useful way of relaxing obedience for two reasons. Firstly, it sometimes captures all of the obedience constrains that \emph{matter} (\Cref{observation:obedience_X1} below). Secondly, it is tractable, since it does not depend on the optimal value function $v$. This tractability permits us to prove that continual effort is optimal in the relaxation of the principal's problem in which obedience is weakened to quasi-obedience (\Cref{lemma:max_effort_quasiobedient} below). It also allows some arguments to be recycled (\hyperref[observation:quasiobedience_u0]{\Cref*{lemma:lequ0}q} and \hyperref[lemma:quasiobedience_X1]{\Cref*{proposition:indiff}q} below).

Call a mechanism $\left(x^0,X^1\right)$ \emph{quasi-obedient} iff the triplet $\left(x^0,X^1,1\right)$ is quasi-obedient.
An IC and quasi-obedient mechanism $\left(x^0,X^1\right)$ is \emph{q-un\-domi\-na\-ted} iff it is not dominated by any IC and quasi-obedient mechanism.

\begin{namedthm}[\Cref*{lemma:lequ0}q.]
	\label{observation:quasiobedience_u0}
	Any q-undominated, IC and quasi-obedient mechanism $\left(x^0,X^1\right)$ satisfies $x^0 \leq u^0$.	
\end{namedthm}

This follows immediately from the logic of \Cref{lemma:lequ0},
noting that quasi-obedience (like IC) is preserved when $x^0$ is lowered pointwise.

\begin{namedthm}[\Cref*{proposition:indiff}q.]
	\label{lemma:quasiobedience_X1}
	Any q-undominated, IC and quasi-obedient mechanism $\left(x^0,X^1\right)$ satisfies $X^1 = X^0 + rc/\lambda$.	
\end{namedthm}

The proof of \hyperref[lemma:quasiobedience_X1]{\Cref*{proposition:indiff}q} follows that of \Cref{proposition:indiff}; we omit it.

\begin{observation}
	\label{observation:obedience_X1}
	If a mechanism $\left(x^0,X^0+rc/\lambda\right)$
	is IC and has $x^0$ bounded,
	then it is obedient.
\end{observation}

\begin{proof}
	An IC mechanism $\left( x^0, X^1 \right)$ is obedient iff $\lambda \left(X^1 - v\right) \geq rc$.
	It therefore suffices to show that an IC mechanism $\left(x^0,X^0+rc/\lambda\right)$ with $x^0$ bounded
	has $v=X^0$,
	since then quasi-obedience implies obedience.

	Given any IC mechanism $\left( x^0, X^1 \right)$,
	a standard verification theorem
	provides that if a bounded and absolutely continuous function $w : \R_+ \to \R$
	satisfies the HJB equation
	\begin{equation*}
		r w_t = \dot w_t + rx^0_t
		+ \max_{\alpha \in \{0,1\}}
		\alpha \left[\lambda \left(X^1_t - w_t\right) - rc\right]
		\quad \text{for a.e. $t \in \R_+$,}
	\end{equation*}
	then $w=v$.
	For the IC mechanism $\left(x^0,X^0+rc/\lambda\right)$,
	the function $X^0$ is absolutely continuous and bounded since $x^0$ is bounded,
	and it satisfies the HJB equation;
	so $v=X^0$.
\end{proof}

\begin{lemma}	
	\label{lemma:max_effort_quasiobedient}
	There exists an IC and quasi-obedient mechanism $\bigl(x^{0\dag},X^{1\dag}\bigr)$ such that $\Pi_{G^1}\bigl(x^{0\dag},X^{1\dag}\bigr) \geq \Pi_{G^a}\left(x^0,X^1\right)$ for any IC and quasi-obedient triplet $\left(x^0,X^1,a\right)$.
\end{lemma}

In other words, continual effort is optimal in the relaxation of the principal's problem in which obedience is replaced by quasi-obedience. The (long) proof of \Cref{lemma:max_effort_quasiobedient} is given §\ref{app:moralhazard:lemma_pf} below.

\begin{proof}[Proof of \Cref{lemma:max_effort_obedient}]
	By \Cref{lemma:max_effort_quasiobedient}, there exists an IC and quasi-obedient mechanism $\bigl(x^{0\dag},X^{1\dag}\bigr)$ such that $\Pi_{G^1}\bigl(x^{0\dag},X^{1\dag}\bigr) \geq \Pi_{G^a}\left(x^0,X^1\right)$ for any IC and quasi-obedient triplet $\left(x^0,X^1,a\right)$.
	In particular, the inequality holds for any obedient triplet $\left(x^0,X^1,a\right)$.
	By a straightforward adaptation of \Cref{proposition:dominated_by_undominated} in \cref{suppl:undom_opt_properties},
	we may choose $\bigl( x^{0\dag}, X^{1\dag} \bigr)$ to be q-undominated.
	We have that $x^{0\dag}$ is bounded by \hyperref[observation:quasiobedience_u0]{\Cref*{lemma:lequ0}q} and that $X^{1\dag} = X^{0\dag} + rc/\lambda$ by \hyperref[lemma:quasiobedience_X1]{\Cref*{proposition:indiff}q};
	thus $\bigl( x^{0\dag}, X^{1\dag} \bigr)$ is obedient by \Cref{observation:obedience_X1}.
\end{proof}

\begin{proof}[Proof of {\hyperref[proposition:indiff_moralhazard]{\Cref*{proposition:indiff}$\xslantmath{'}$}}]
	We prove the contrapositive:
	we fix an IC and obedient mechanism $\left( x^0, X^1 \right)$
	which violates either $x^0 \leq u^0$ or $X^1 = X^0 + rc/\lambda$,
	and show that it must be dominated by some IC and obedient mechanism $\bigl( x^{0\dag}, X^{1\dag} \bigr)$.
	Since $\left( x^0, X^1 \right)$ is IC and obedient, it is quasi-obedient.
	So by \hyperref[observation:quasiobedience_u0]{\Cref*{lemma:lequ0}q} and \hyperref[lemma:quasiobedience_X1]{\Cref*{proposition:indiff}q}, $\left( x^0, X^1 \right)$ is dominated by some IC and quasi-obedient mechanism $\bigl( x^{0\dag}, X^{1\dag} \bigr)$,
	and (by a straightforward adaptation of the argument for \Cref{proposition:dominated_by_undominated} in \cref{suppl:undom_opt_properties})
	we may choose $\bigl( x^{0\dag}, X^{1\dag} \bigr)$ to be q-undominated.
	We have that $x^{0\dag}$ is bounded by \hyperref[observation:quasiobedience_u0]{\Cref*{lemma:lequ0}q} and that $X^{1\dag} = X^{0\dag} + rc/\lambda$ by \hyperref[lemma:quasiobedience_X1]{\Cref*{proposition:indiff}q};
	thus $\bigl( x^{0\dag}, X^{1\dag} \bigr)$ is obedient by \Cref{observation:obedience_X1}.
\end{proof}

\subsubsection{Proof of \texorpdfstring{\Cref{lemma:max_effort_quasiobedient}}{Lemma~\ref{lemma:max_effort_quasiobedient}}}
\label{app:moralhazard:lemma_pf}

A \emph{mixed} effort schedule is a measurable map $a : \R_+ \to [0,1]$.
Extend the definition of $G^a$ to mixed effort schedules via $G^a(t) \coloneqq 1 - \exp\bigl(-\lambda \int_0^t a \bigr)$.
For a mechanism $\left(x^0,X^1\right)$ and a mixed effort schedule $a$, the \emph{continuation payoff} of $\left(x^0,X^1,a\right)$ is the map $C : \R_+ \to [-\infty,\infty)$ such that, for each $t \in \R_+$, $C_t$ is the principal's continuation payoff from period $t$ conditional on no breakthrough in $[0,t]$, given that the distribution of the arrival time is $G^a$ and the agent discloses promptly.

\begin{observation}
	\label{observation:improvement}
	Fix a mechanism $\left(x^0,X^1\right)$ and mixed effort schedules $a$ and $a^\dag$ such that $\Pi_{G^a}\left(x^0,X^1\right) \in \R$, and let $C$ be the continuation payoff of $\left(x^0,X^1,a\right)$.
	If $\bigl(a^\dag-a\bigr)\left[F^1\left(X^1\right)-C \right]$
	is non-negative a.e. (and strictly positive on a non-null set of times), 
	then $\Pi_{G^{a^\dag}}\left(x^0,X^1\right) \geq \mathrel{(>)} \Pi_{G^a}\left(x^0,X^1\right)$. 
\end{observation}

\begin{lemma}
	\label{lemma:maximal}
	There exists an IC and quasi-obedient triplet $\bigl(x^{0\dag},X^{1\dag},a^\dag\bigr)$ such that $x^{0\dag} \leq u^0$ and $\Pi_{G^{a^\dag}}\bigl(x^{0\dag},X^{1\dag}\bigr) \geq \Pi_{G^a}\left(x^0,X^1\right)$ for any IC and quasi-obedient triplet $\left(x^0,X^1,a\right)$.
\end{lemma}

What \Cref{lemma:maximal} asserts is firstly that the restriction $x^{0\dag} \leq u^0$ is without loss of optimality, and secondly that there exists a \emph{best} IC and quasi-obedient triplet with this property. The former claim follows from the logic of \Cref{lemma:lequ0} (\cpageref{lemma:lequ0}). The latter claim is merely technical; it is routine but tedious to verify, using \Cref{lemma:S_K_equivalent_defns,lemma:SK_seq_compact} from \cref{suppl:undom_opt_properties} together with \Cref{observation:improvement}. We omit the details.

\begin{proof}[Proof of \Cref{lemma:max_effort_quasiobedient}]
	By \Cref{lemma:maximal}, there is an IC and quasi-obedient triplet $\left(x^0,X^1,a\right)$ such that $x^0 \leq u^0$ and $\Pi_{G^a}\left(x^0,X^1\right) \geq \Pi_{G^{a^\star}}\left(x^{0\star},X^{1\star}\right)$ for any IC and quasi-obedient triplet $\left(x^{0\star},X^{1\star},a^\star\right)$.
	It suffices to exhibit an IC and quasi-obedient $\bigl(x^{0\dag},X^{1\dag}\bigr)$ such that $\Pi_{G^1}\bigl(x^{0\dag},X^{1\dag}\bigr) \geq \Pi_{G^a}\left(x^0,X^1\right)$.
	Let 
	\begin{equation*}
		A \coloneqq \left\{s \in (0,\infty) : \text{$a=0$ a.e. on $(t,T)$ for some $t \in (0,s)$ and $T \in (s,\infty)$}\right\} .
	\end{equation*}
	Since $A$ is open, it equals the union of a collection $\mathcal{I}$ of disjoint open intervals.
	For any $t<T$ in $(0,\infty)$, write
	$\widebar{x}^{t,T} \coloneqq \int_t^T re^{-r(s-t)}x^0_s \dd s \bigm/ \int_t^T re^{-r(s-t)}\dd s$.

	\begin{namedthm}[Claim.]
		\label{claim:max_effort_quasiobedient}
		There exists an IC mechanism $\left(x^{0\star},X^{1\star}\right)$ such that
		$x^{0\star} \leq u^0$,
		the triplet $\left(x^{0\star},X^{1\star},a\right)$ is quasi-obedient, $\Pi_{G^a}\left(x^{0\star},X^{1\star}\right) \geq \Pi_{G^a}\left(x^0,X^1\right)$,
		and $x^{0\star} = \widebar{x}^{t,T}$ on $(t,T)$ for each $(t,T) \in \mathcal{I}$.
	\end{namedthm}
	
	\begin{proof}[Proof]%
	\renewcommand{\qedsymbol}{$\square$}
		Let $x^{0\star} \coloneqq \widebar{x}^{t,T}$ on $(t,T)$ for each $(t,T) \in \mathcal{I}$, $x^{0^\star} \coloneqq x^0$ on $\R_+ \setminus A$, and
		\begin{equation*}
			X^{1\star}_s \coloneqq
			\begin{cases}
				X^1_s & \text{if $s \notin A$}\\
				\left(1-e^{-r(T-s)}\right)\widebar{x}^{t,T} + e^{-r(T-s)}X^1_T & \text{if $s \in (t,T)$ where $(t,T) \in \mathcal{I}$,}
			\end{cases}
		\end{equation*}
		and let $X^1_\infty \coloneqq 0$.
		We have $\Pi_{G^a}\left(x^{0\star},X^{1\star}\right) \geq \Pi_{G^a}\left(x^0,X^1\right)$ since $F^0$ is concave. Moreover, $X^{1\star} = X^1$ and $X^{0\star} = X^0$ on $\R_+ \setminus A$.
		Thus since the triplet $\left(x^0,X^1,a\right)$ is quasi-obedient, so is the triplet $\left(x^{0\star},X^{1\star},a\right)$.
		
		It remains to show that $\left(x^{0\star},X^{1\star}\right)$ is IC; equivalently, that $s \mapsto h^\star(s) \coloneqq e^{-rs}\left(X^{1\star}_s - X^{0\star}_s\right)$ is non-negative and decreasing.
		Since $X^{1\star} = X^1$ and $X^{0\star} = X^0$ on $\R_+ \setminus A$, $h^\star$ matches $s \mapsto h(s) \coloneqq e^{-rs}\left(X^1_s - X^0_s\right)$ on $\R_+ \setminus A$.
		Moreover, $h^\star$ is constant on $[t,T)$ for any $(t,T) \in \mathcal{I}$, with $h^\star(t) = h(T) \in [0,h(t)]$ if $T < \infty$, and $h^\star(t) = 0 \leq h(t)$ if $T = \infty$.
		Since $\left(x^0,X^1,a\right)$ is IC, $h$ is non-negative and decreasing, and thus so is $h^\star$.
	\end{proof}%
	\renewcommand{\qedsymbol}{$\blacksquare$}

	By the \hyperref[claim:max_effort_quasiobedient]{claim}, we may assume without loss of generality that $x^0 = \widebar{x}^{t,T}$ on $(t,T)$ for each $(t,T) \in \mathcal{I}$.

	If $\Pi_{G^a}\left(x^0,X^1\right) \leq F^0\left(u^0\right)$, then $\bigl(x^{0\dag},X^{1\dag}\bigr) \coloneqq \left(u^0,u^0+rc/\lambda\right)$ is clearly IC and quasi-obedient, and we have $\Pi_{G^1}\bigl(x^{0\dag},X^{1\dag}\bigr) \geq F^0\left(u^0\right)$ by assumption~\eqref{eq:moralhazard_assumption} on \cpageref{eq:moralhazard_assumption}.
	Assume for the remainder that $\Pi_{G^a}\left(x^0,X^1\right) > F^0\left(u^0\right)$.

	Let $x^{0\dag} \coloneqq x^0$.
	To define $X^{1\dag}$, let 
	\begin{equation*}
		S \coloneqq \left\{t \in \R_+: a_t = 1 \text{ and } F^1\left(X^1_t\right) \geq C_t\right\} ,
	\end{equation*}
	and note that for any effort schedule $a^\dag \leq a$, the triplet $\bigl(x^0,X^1,a^\dag\bigr)$ is IC and quasi-obedient since $\left(x^0,X^1,a\right)$ is.
	Then by \Cref{observation:improvement} with $a^\dag \coloneqq \1_S$, we have $a = 0$ a.e. on $\R_+ \setminus S$.
	Hence $S \ne \varnothing$, for otherwise $\Pi_{G^a}\left(x^0,X^1\right) = \int_0^\infty re^{-rs}F^0\left(x^0_s\right) \dd s \leq F^0\left(u^0\right)$.

	Note also that since the triplet $\left(x^0,X^1,a\right)$ is IC and quasi-obedient, so is the triplet $\left(x^{0\star},X^{1\star},a^\star\right)$ defined by
	\begin{equation*}
		\left( x^{0\star}_t , X^{1\star}_t, a^\star_t \right)
		\coloneqq \left( x^0_{t+\inf S}, X^1_{t+\inf S}, a_{t+\inf S} \right)
		\quad \text{for each $t \in \R_+$.}
	\end{equation*}
	Hence $\inf S = 0$, since otherwise
	\begin{align*}
		\Pi_{G^a}\left(x^0,X^1\right)
		&\leq \left(1-e^{-r \inf S}\right)F^0\left(u^0\right)
		+ e^{-r \inf S} \Pi_{G^{a^\star}}\left(x^{0\star},X^{1\star}\right)
		\\
		&< \left(1-e^{-r \inf S}\right)\Pi_{G^a}\left(x^0,X^1\right)
		+ e^{-r \inf S} \Pi_{G^a}\left(x^0,X^1\right)
		\\
		&= \Pi_{G^a}\left(x^0,X^1\right) 
	\end{align*}
	by $F^0\left(u^0\right) < \Pi_{G^a}\left(x^0,X^1\right)$ and $\Pi_{G^{a^\star}}\left(x^{0\star},X^{1\star}\right) \leq \Pi_{G^a}\left(x^0,X^1\right)$.

	Since $\inf S = 0$, $\R_+ \setminus \cl S$ equals the union of a collection $\mathcal{J}$ of disjoint open intervals.
	Choose a measurable $X^{1\dag} : \cl S \to [0,\infty]$ such that $X^{1\dag} \coloneqq X^1$ on $S$ and, for each $t \in \cl S \setminus S$, $X^{1\dag}_t = \lim_{n \to \infty} X^1_{t^n}$ for a sequence $(t^n)_{n \in \N}$ in $S$ that converges to $t$.%
		\footnote{For example, set $X^{1\dag}_t \coloneqq \limsup_{s \to t} \1_S(s)X^1_s$ for each $t \in \cl S \setminus S$. Note that $X^{1\dag}$ is measurable as it is upper semi-continuous.}
	Extend $X^{1\dag}$ to $\R_+$ by letting, for each $s \in (t,T) \in \mathcal{J}$,
	\begin{equation*}
		X^{1\dag}_s \coloneqq 
		\begin{cases}
			\frac{1-e^{-r(T-s)}}{1-e^{-r(T-t)}}X^{1\dag}_t
			+\frac{e^{-r(T-s)}-e^{-r(T-t)}}{1-e^{-r(T-t)}} X^{1\dag}_T
			& \text{if $T < \infty$}\\
			X^0_t + rc/\lambda
			& \text{if $T = \infty$.}		
		\end{cases}
	\end{equation*}

	To show that $\bigl(x^{0\dag},X^{1\dag}\bigr)$ is quasi-obedient, note first that given any $t \in \cl S$, choosing a sequence $(t^n)_{n \in \N}$ in $S$ such that $t^n \to t$ and $X^1_{t^n} \to \smash{X^{1\dag}_t}$,
	\begin{equation*}
		X^{1\dag}_t = \lim_{n \to \infty} X^1_{t^n} \geq \lim_{n \to \infty} X^0_{t^n} + \textstyle rc/\lambda = X^{0\dag}_t + rc/\lambda ,
	\end{equation*}
	where the inequality holds since $X^1 \geq X^0 + rc/\lambda$ on $S$ as $\left(x^0,X^1,a\right)$ is quasi-obedient, and the last equality follows from $X^{0\dag} = X^0$ and the continuity of $X^0$.
	It remains only to show that $X^{1\dag} \geq X^{0\dag} + rc/\lambda$ on $(t,T)$ for each $(t,T) \in \mathcal{J}$.
	So fix an arbitrary $(t,T) \in \mathcal{J}$.
	If $T = \infty$, then $X^{1\dag} = X^{0\dag} - rc/\lambda$ on $(t,T)$.
	Suppose instead that $T<\infty$.
	Note that
	\begin{equation}
		X^{0\dag}_s = \widebar{x}^{t,T} + e^{-r(T-s)}\left(X^{0\dag}_T-\widebar{x}^{t,T}\right)\quad \text{for each $s \in [t,T]$,}
		\label{eq:X0_interval}
	\end{equation}
	so that $X^{1\dag}-X^{0\dag}$ is monotone on $[t,T]$, being an affine transformation of $s \mapsto e^{rs}$.
	This together with the fact that $X^{1\dag} \geq X^{0\dag} + rc/\lambda$ on $\{t,T\} \subseteq \cl S$ implies that $X^{1\dag} \geq X^{0\dag} + rc/\lambda$ on $[t,T]$, as desired.

	To prove that $\bigl(x^{0\dag},X^{1\dag}\bigr)$ is IC, it is necessary and sufficient to show that $t \mapsto h^\dag(t) \coloneqq \smash{e^{-rt}\bigl(X^{1\dag}_t - X^{0\dag}_t\bigr)}$ is non-negative and decreasing.
	Non-negativity follows from quasi-obedience.
	For monotonicity, it suffices to show that $h^\dag$ is decreasing on $\cl S$ and on the closure of each element of $\mathcal{J}$.
	For the former, $h^\dag = h$ on $S$, which is decreasing since $\left(x^0,X^1\right)$ is IC.
	For the latter, fix $(t,T) \in \mathcal{J}$.
	If $T = \infty$, note that $h^\dag$ matches the decreasing function $s \mapsto e^{-rs} rc/\lambda$ on $(t,T)$, and that $h^\dag(t) \geq e^{-rt} rc/\lambda$ since $\bigl(x^{0\dag},X^{1\dag}\bigr)$ is quasi-obedient.
	If instead $T < \infty$, then \eqref{eq:X0_interval} implies that $h^\dag$ is monotone on $[t,T]$, being an affine transformation of $s \mapsto e^{-rs}$.
	Since $h^\dag(t) \geq h^\dag(T)$ and $\{t,T\} \subseteq \cl S$, it follows that $h^\dag$ is decreasing on $[t,T]$.

	It remains to prove that $\Pi_{G^1}\bigl(x^{0\dag},X^{1\dag}\bigr) \geq \Pi_{G^a}\left(x^0,X^1\right)$.
	Note that $C$ equals the continuation payoff of $\bigl(x^{0\dag},X^{1\dag},a\bigr)$, since $x^{0\dag} = x^0$, $X^{1\dag} = X^1$ on $S$, and $a = 0$ a.e. on $\R_+ \setminus S$.
	It therefore suffices to show that $F^1\bigl(X^{1\dag}\bigr) \geq C$, since then applying \Cref{observation:improvement} the triplet $\bigl(x^{0\dag},X^{1\dag},a\bigr)$ and the effort schedule $a^\dag \coloneqq 1$ yields $\Pi_{G^1}\bigl(x^{0\dag},X^{1\dag}\bigr) \geq \Pi_{G^a}\bigl(x^{0\dag},X^{1\dag}\bigr) = \Pi_{G^a}\left(x^0,X^1\right)$.

	To that end, note that for each $t \in \cl S$, there is a sequence $(t^n)_{n \in \N}$ in $S$ such that $t^n \to t$ and $X^1_{t^n} \to X^{1\dag}_t$, and thus
	\begin{equation*}
		F^1\left(X^{1\dag}_t\right)-C_t
		\geq \lim_{n \to \infty} F^1\left(X^1_{t^n}\right) - C_{t^n} \geq 0,
	\end{equation*} 
	where the first inequality holds since $F^1$ is upper semi-continuous and $C$ is continuous.
	Now fix $(t,T) \in \mathcal{J}$.
	If $T = \infty$, then
	\begin{equation*}
		F^1\left(X^{1\dag}\right) = F^1\left(X^0_t + rc/\lambda \right) \geq F^0\left(X^0_t\right) = C
		\quad \text{on $(t,T)$,}
	\end{equation*} 
	where the inequality follows from assumption~\eqref{eq:moralhazard_assumption} on \cpageref{eq:moralhazard_assumption}.
	If $T < \infty$, fix an $s \in [t,T]$ and note that $C_s = F^0\bigl(\widebar{x}^{t,T}\bigr) + e^{-r(T-s)}\bigl[C_T-F^0\bigl(\widebar{x}^{t,T}\bigr)\bigr]$, so that
	\begin{multline*}
		F^1\left(X^{1\dag}_s\right) - C_s
		\\
		\begin{aligned}
			&\geq \frac{1-e^{-r(T-s)}}{1-e^{-r(T-t)}}F^1\left(X^{1\dag}_t\right)
			+ \frac{e^{-r(T-s)}-e^{-r(T-t)}}{1-e^{-r(T-t)}} F^1\left(X^{1\dag}_T\right) - C_s
			\\
			&\geq \min_{s' \in \{t,T\}} F^1\left(X^{1\dag}_{s'}\right) - C_{s'} \geq 0 ,
		\end{aligned}
	\end{multline*}
	where the first inequality holds since $F^1$ is concave, the second holds since its left-hand side is monotone in $s \in [t,T]$ (being an affine transformation of the map $s \mapsto e^{rs}$, by \eqref{eq:X0_interval}), and the last inequality holds since $\{t,T\} \subseteq \cl S$.
\end{proof}

\subsection{Proof of the \texorpdfstring{\hyperref[lemma:euler]{Euler lemma}}{Euler lemma} (\texorpdfstring{\cref{app:Euler:Euler_lemma}}{appendix \ref{app:Euler:Euler_lemma}})}
\label{suppl:Euler_lemma_pf}

Recall from \cref{app:Euler} the definitions of $\mathcal{X}$ and $\pi_G$,
and note that the former is a convex set.
For $j \in \{0,1\}$, define $F^{j\prime} : \R_+^2 \to \R_+$ by
\begin{equation*}
	F^{j\prime}\left(u,u'\right)
	\coloneqq
	\begin{cases}
		F^{j-}(u)	& \text{if $u'<u$} \\
		0			& \text{if $u'=u$} \\
		F^{j+}(u)	& \text{if $u'>u$,}
	\end{cases}
\end{equation*}
where $F^{j-}$ ($F^{j+}$) denotes the left-hand (right-hand) derivative.
Write
\begin{equation*}
	\DD \pi_G\left( x, x^\dag-x \right)
	\coloneqq \lim_{\alpha \downarrow 0}
	\frac{\pi_G\left( x + \alpha \left[ x^\dag - x \right] \right)-\pi_G(x)}{\alpha}
\end{equation*}
for the Gateaux derivative of $\pi_G$ at $x$ in direction $x^\dag-x$.

Let $\mathcal{X}_G$ be the set of $x \in \mathcal{X}$ such that the maps
$\psi^0_{x,u} : \R_+ \to [0,\infty]$
and $\psi^1_{X,u} : \R_+ \to [-\infty,\infty]$ defined by
\begin{equation*}
	\psi^0_{x,u}(t) \coloneqq r \int_0^t e^{-rs}F^{0\prime}\left(x_s,u\right)\dd s
	\quad \text{and} \quad
	\psi^1_{X,u}(t) \coloneqq e^{-rt}F^{1\prime}\left(X_t,u\right)
	\label{eq:euler_psi01_defn}
	\tag*{}
\end{equation*}
are $G$-integrable for any $u \in \left(0,u^0\right)$.

We require three lemmata. The first two are measure-theoretic housekeeping, while the last gives a formula for the Gateaux derivatives of $\pi_G$.

\begin{lemma}
	\label{lemma:euler_integrable_supergradient}
	If $x \in \mathcal{X}$ and $(x,X)$ satisfies the Euler equation for some $\phi^0,\phi^1$,
	then $x$ belongs to $\mathcal{X}_G$,
	and the map $t \mapsto r \int_0^t e^{-rs} \phi^0(s) \dd s$ is $G$-integrable.
\end{lemma}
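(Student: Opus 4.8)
The plan is to fix an arbitrary $u \in \left(0,u^0\right)$ and establish, in order, the $G$-integrability of $t \mapsto r\int_0^t e^{-rs}\phi^0(s)\,\dd s$, of $\psi^0_{x,u}$, and of $\psi^1_{X,u}$. Write $M \coloneqq \int_{\R_+}\abs{\phi^1}\,\dd G$; this is finite since $\phi^1$ is $G$-integrable, so $\phi^1(t) \in \R$ for $G$-almost every $t$. Two elementary facts will be used throughout: \emph{(i)} since $F^0$ is concave with unique peak at $u^0$ it is increasing on $\left[0,u^0\right]$, so $F^{0\prime} \geq 0$ there and $F^{0+}(u) \geq 0$; and \emph{(ii)} the Euler equation \eqref{eq:euler} gives $\left[1-G(t)\right]\phi^0(t) = -\int_{[0,t]}\phi^1\,\dd G \leq M$ for \emph{every} $t \in \R_+$.

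For the first map, Tonelli's theorem (the integrand is non-negative) yields
\[
	\int_{\R_+} \left( r\int_0^t e^{-rs}\phi^0(s)\,\dd s \right) G(\dd t)
	= \int_0^\infty r e^{-rs}\, \phi^0(s)\, \overline{G}(s)\, \dd s ,
	\qquad \overline{G}(s) \coloneqq G\bigl([s,\infty)\bigr) .
\]
Since $G$ has at most countably many atoms, $\overline{G}(s) = 1-G(s)$ for Lebesgue-almost every $s$; hence the integrand is $\leq r e^{-rs} M$ a.e.\ by \emph{(ii)}, and as $r\int_0^\infty e^{-rs}\,\dd s = 1$ the whole expression is $\leq M < \infty$.

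The crux is a pair of pointwise estimates: $F^{0\prime}(x_s,u) \leq \phi^0(s) + F^{0+}(u)$ for a.e.\ $s$ with $G(s)<1$, and $\abs{F^{1\prime}(X_t,u)} \leq \abs{F^{1+}(u)} + \abs{F^{1-}(u)} + \abs{\phi^1(t)}$ for $G$-a.e.\ $t$. Both follow the same recipe: combine the supergradient property with the concavity ``chain'' inequalities recalled in \cref{app:background}, trapping the directional derivative between a value of $\phi$ and a derivative of $F^1$ (resp.\ $F^0$) at the \emph{fixed} point $u$. Taking $\psi^1_{X,u}$: if $X_t > u$ then $F^{1\prime}(X_t,u) = F^{1-}(X_t)$, and since $\phi^1(t)$ is a supergradient of $F^1$ at $X_t$ while $u < X_t$ with $u \in \interior D^1$, we have $\phi^1(t) \leq F^{1-}(X_t) \leq F^{1+}(u)$, so $F^{1-}(X_t)$ lies between the two finite numbers $\phi^1(t)$ and $F^{1+}(u)$; symmetrically, if $X_t < u$ then $F^{1\prime}(X_t,u) = F^{1+}(X_t)$ lies between $F^{1-}(u)$ and $\phi^1(t)$; and if $X_t = u$ the derivative vanishes. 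The estimate for $F^{0\prime}(x_s,u)$ is obtained the same way, using that $\phi^0(s)$ is a supergradient of $F^0$ at $x_s$ for a.e.\ such $s$ and that $F^{0\prime} \geq 0$, $F^{0+}(u) \geq 0$. Granting these, $\int_{\R_+}\abs{\psi^1_{X,u}}\,\dd G \leq \abs{F^{1+}(u)} + \abs{F^{1-}(u)} + M < \infty$; and, applying Tonelli once more and discarding $\{s : G(s)=1\}$ (where $\overline{G}=0$ a.e.),
\[
	\int_{\R_+}\psi^0_{x,u}\,\dd G
	= \int_0^\infty r e^{-rs} F^{0\prime}(x_s,u)\,\overline{G}(s)\,\dd s
	\leq \int_0^\infty r e^{-rs}\phi^0(s)\,\overline{G}(s)\,\dd s + F^{0+}(u)
	\leq M + F^{0+}(u) < \infty
\]
by the second paragraph. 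Hence $x \in \mathcal{X}_G$ and the map $t \mapsto r\int_0^t e^{-rs}\phi^0(s)\,\dd s$ is $G$-integrable.

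The step requiring the most care is the pointwise bound on $F^{1\prime}(X_t,u)$: unlike $F^0$ on $\left[0,u^0\right]$, the function $F^1$ is not monotone and its one-sided derivatives can be $+\infty$ near $0$ and $-\infty$ near $u^0$, so $F^{1\prime}(X_t,u)$ admits no constant bound. The remedy is exactly the trap described above---$\phi^1(t)$ controls $F^{1+}(X_t)$ from above and $F^{1-}(X_t)$ from below, while the fixed interior point $u$ supplies the reverse bound by concavity---and the $G$-integrability of $\phi^1$ (so $\phi^1$ is finite $G$-a.e.) confines the genuinely infinite-slope configurations, e.g.\ $X_t = 0$ when $F^{1+}(0)=\infty$, to a $G$-null set. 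The remaining manipulations are routine Tonelli computations.
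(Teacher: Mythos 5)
The paper omits this proof, referring to ``standard arguments,'' so there is nothing to compare against directly; your argument is a correct filling-in of what was gestured at. The two key devices---extracting from the Euler equation the uniform bound $[1-G(t)]\phi^0(t) \leq \int_{\R_+}\abs{\phi^1}\dd G$, and trapping $F^{j\prime}(\cdot,u)$ between the supergradient $\phi^j$ and a finite one-sided derivative of $F^j$ at the fixed interior point $u$ via the monotonicity of directional derivatives---are exactly right, and you handle the genuine subtleties (non-monotonicity of $F^1$, infinite slopes at $0$ and $u^0$, the $G$-null vs.\ Lebesgue-null distinction built into the Euler equation) correctly. One trivial slip: the Tonelli swap, given the paper's open-interval convention $\int_0^t = \int_{(0,t)}$, yields $G((s,\infty)) = 1 - G(s)$ rather than $G([s,\infty))$; as you note this is the same Lebesgue-a.e., so the bounds are unaffected.
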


\begin{lemma}
	\label{lemma:integrable_supergradient}
	$\argmax_{\mathcal{X}}\pi_G \subseteq \mathcal{X}_G$.
\end{lemma}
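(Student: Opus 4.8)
The plan is to exploit the concavity of $\pi_G$ together with the optimality of $x$, perturbing towards a single well-chosen constant flow. Fix $x\in\argmax_{\mathcal X}\pi_G$, and set $u_*\coloneqq\min\{u^0/4,\,(u^0-u^1)/2\}\in(0,u^0)$. For $\alpha\in(0,1]$ put $x^\alpha\coloneqq x+\alpha(u_*-x)\in\mathcal X$; since $r\int_t^\infty e^{-r(s-t)}\dd s=1$, the associated reward satisfies $X^\alpha_t=X_t+\alpha(u_*-X_t)$. Both $x^\alpha$ and $X^\alpha$ take values in $[\alpha u_*,u^0]\subseteq(0,u^0]$, so $F^0(x^\alpha_s),F^1(X^\alpha_t)$ are finite and bounded and hence $\pi_G(x^\alpha)\in\R$; and $\pi_G(x)\in\R$ because $x$ is optimal (so $\pi_G(x)\geq\pi_G(u^0)>-\infty$) and $F^0,F^1$ are bounded above.

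Next I would compute the one-sided directional derivative $\DD\pi_G(x,u_*-x)$. For each $s$ the maps $\alpha\mapsto F^0(x^\alpha_s)$ and $\alpha\mapsto F^1(X^\alpha_s)$ are concave, so their difference quotients are decreasing in $\alpha$ and hence increase, as $\alpha\downarrow0$, to $(u_*-x_s)F^{0\prime}(x_s,u_*)$ and $(u_*-X_s)F^{1\prime}(X_s,u_*)$ respectively; moreover, by the supergradient inequality, $(u_*-x_s)F^{0\prime}(x_s,u_*)\geq F^0(u_*)-F^0(x_s)\geq F^0(u_*)-F^0(u^0)$ and $(u_*-X_s)F^{1\prime}(X_s,u_*)\geq F^1(u_*)-\max F^1$, so the integrand of the difference quotient of $\pi_G$ is bounded below by a finite constant, uniformly in $\alpha$. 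Writing $\pi_G(x^\alpha)-\pi_G(x)=\E_G\big(r\int_0^\tau e^{-rs}[F^0(x^\alpha_s)-F^0(x_s)]\dd s+e^{-r\tau}[F^1(X^\alpha_\tau)-F^1(X_\tau)]\big)$ (legitimate since both $\pi_G$-values are finite), dividing by $\alpha$ and applying monotone convergence gives
\[
	\DD\pi_G(x,u_*-x)=\E_G\!\left(r\!\int_0^\tau\! e^{-rs}(u_*-x_s)F^{0\prime}(x_s,u_*)\dd s+e^{-r\tau}(u_*-X_\tau)F^{1\prime}(X_\tau,u_*)\right)\in(-\infty,\infty].
\]
Optimality of $x$ forces each difference quotient to be $\leq0$, so this quantity is $\leq0$ and hence finite; splitting each of its two summands into a positive part and a (bounded, by the inequalities above) negative part then yields
\[
	\E_G\!\left(r\!\int_0^\tau\! e^{-rs}\big[(u_*-x_s)F^{0\prime}(x_s,u_*)\big]^+\dd s\right)<\infty
	\quad\text{and}\quad
	\E_G\!\left(e^{-r\tau}\big[(u_*-X_\tau)F^{1\prime}(X_\tau,u_*)\big]^+\right)<\infty .
\]

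Finally I would deduce the $G$-integrability of $\psi^0_{x,u}$ and $\psi^1_{X,u}$, for each $u\in(0,u^0)$, from these two facts by elementary secant-slope estimates. The key observation is that on $[0,u^0]$ the functions $F^{0\prime}(\cdot,u)$, $F^{1\prime}(\cdot,u)$ can fail to be bounded only through $F^{0+},F^{1+}$ diverging to $+\infty$ as their argument tends to $0$ and $F^{1-}$ diverging to $-\infty$ as its argument tends to $u^0$; away from these two singularities they are dominated by finite quantities (e.g.\ $F^{0-}(u)$ on $\{x_s>u\}$, and finite secant slopes anchored at $u/2$ or $u_*/2$ on the parts of $\{X_t>u\}$ and $\{X_t<u\}$ bounded away from $0$). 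On $\{x_s<u_*/2\}$ (resp.\ $\{X_t<\min\{u_*,u^1\}/2\}$) one has $u_*-x_s>u_*/2$ (resp.\ $u_*-X_t>u_*/2$), so $F^{0+}(x_s)$ (resp.\ $F^{1+}(X_t)$) is bounded by $\tfrac{2}{u_*}[(u_*-x_s)F^{0\prime}(x_s,u_*)]^+$ (resp.\ $\tfrac{2}{u_*}[(u_*-X_t)F^{1\prime}(X_t,u_*)]^+$); and on $\{X_t>\max\{u_*,u^1\}+u_*\}$ — a set contained in $(0,u^0)$ by the choice of $u_*$ — one has $X_t-u_*>u_*$, whence $|F^{1-}(X_t)|$ is bounded by $\tfrac{1}{u_*}[(u_*-X_t)F^{1\prime}(X_t,u_*)]^+$. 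Feeding these pointwise bounds into the two displayed finiteness statements and Tonelli (using that $\int(\cdot)\,G(\dd t)=\E_G[\,\cdot\,]$ for $\tau\sim G$) gives $\int\psi^0_{x,u}\,\dd G<\infty$ and $\int|\psi^1_{X,u}|\,\dd G<\infty$; since $u\in(0,u^0)$ was arbitrary, $x\in\mathcal X_G$.

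The main obstacle is the bookkeeping around infinities: checking the lower bounds that make monotone convergence applicable and the subtraction $\pi_G(x^\alpha)-\pi_G(x)$ legitimate (both $F^0$ and $F^1$ may equal $-\infty$ at $0$, and the directional derivative may be $+\infty$), and verifying that the single perturbation direction $u_*$, chosen small enough, simultaneously controls the blow-up of $F^{j+}$ near $0$ and that of $F^{1-}$ near $u^0$. The remaining estimates are routine.
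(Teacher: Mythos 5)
Your proof is correct. It is essentially the contrapositive of the paper's argument, and makes explicit what the paper omits. The paper fixes $x\notin\mathcal X_G$, picks a direction $u$ at which $G$-integrability \emph{fails}, and invokes an omitted ``Claim'' (cited as standard) giving an expression for $\DD\pi_G(x,u-x)$ in which the non-integrable part forces the Gateaux derivative to equal $+\infty$. You instead fix $x\in\argmax_{\mathcal X}\pi_G$, carry out the Gateaux computation yourself via monotone convergence (which is precisely the content of the paper's omitted Claim), and read off finiteness from $\DD\pi_G(x,u_*-x)\le0$. Because you commit to a single direction $u_*$ rather than the adversarial $u$, you need the extra propagation step---that integrability of $\psi^0_{x,u_*},\psi^1_{X,u_*}$ controls $\psi^0_{x,u},\psi^1_{X,u}$ for every $u\in(0,u^0)$---which the paper sidesteps by design; your secant-slope bounds handle this correctly, though the bookkeeping there is delicate (e.g.\ one must also use that $F^{j-}(u)$ is finite for $u\in(0,u^0)$ to bound $F^{j\prime}(\cdot,u)$ on the set where $X_t$ or $x_s$ lies between $u$ and $u_*$, a step you gesture at rather than spell out). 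Net effect: your route is self-contained and uses only optimality at one perturbation, at the cost of the extra propagation lemma; the paper's route is shorter on the page because it delegates the differentiation under the integral to an unproved Claim and chooses the perturbation direction to make the failure manifest.
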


\begin{namedthm}[Gateaux lemma.]
	\label{lemma:euler_gateaux}
	For any $x,x^\dag \in \mathcal{X}_G$, measurable $\phi^0 : \R_+ \to [0,\infty]$ such that $t \mapsto r \int_0^t e^{-rs} \phi^0(s)\dd s$ is $G$-integrable, and $G$-integrable $\phi^1: \R_+ \to [-\infty,\infty]$,
	the Gateaux derivative $\DD \pi_G(x,x^\dag-x)$ exists and is equal to
	\begin{align*}
		&r \int_0^\infty e^{-rt} \left(
		[1-G(t)] \phi^0(t) + \int_{[0,t]} \phi^1 \dd G
		\right)
		\left(x^\dag_t-x_t\right) \dd t
		\\
		&\qquad + \E_G\left( r \int_0^\tau e^{-rt}
		\left[ F^{0\prime}\left( x_t, x^\dag_t \right) - \phi^0(t) \right]
		\left[ x^\dag_t - x_t \right] \dd t \right)
		\\
		&\qquad + \E_G\left( e^{-r\tau} \left[
		F^{1\prime}\left( X_\tau, X^\dag_\tau \right) - \phi^1(\tau) \right]
		\left[ X^\dag_\tau - X_\tau \right] \right) .
	\end{align*}
\end{namedthm}

\Cref{lemma:euler_integrable_supergradient} and the \hyperref[lemma:euler_gateaux]{Gateaux lemma} follow from standard arguments, which we omit \parencite[see][]{omit}.
\Cref{lemma:integrable_supergradient} is proved below.

\begin{proof}[Proof of the {\hyperref[lemma:euler]{Euler lemma}}]
	Fix a distribution $G$.
	For the first part, suppose that $x \in \mathcal{X}$ and that $(x,X)$
	satisfies the Euler equation with $\phi^0,\phi^1$ (the former measurable, the latter $G$-integrable).
	Then by \Cref{lemma:euler_integrable_supergradient}, $x$ belongs to $\mathcal{X}_G$, and $t \mapsto r \int_0^t e^{-rs} \phi^0(s) \dd s$ is $G$-integrable.

	By \Cref{corollary:optimal_existence} in \cref{suppl:undom_opt_properties} (\cpageref{corollary:optimal_existence}),
	there is a mechanism $\left(x^\star,X^\star\right)$ that is optimal for $G$.
	We must have $x^\star \in \mathcal{X}$ by \Cref{lemma:lequ0} (\cpageref{lemma:lequ0}),
	and thus $x^\star \in \argmax_{\mathcal{X}} \pi_G$.
	So it suffices to show that $\pi_G\left( x^\star \right) \leq \pi_G(x)$.

	By \Cref{lemma:integrable_supergradient}, $x^\star$ belongs to $\mathcal{X}_G$.
	Thus $x,x^\star$ and $\phi^0,\phi^1$ satisfy the hypotheses of the \hyperref[lemma:euler_gateaux]{Gateaux lemma}.
	Moreover, $\pi_G$ is concave since $F^0,F^1$ are and the map $x \mapsto X$ is linear,
	so for any $\alpha \in (0,1)$, we have
	\begin{equation*}
		\frac{ \pi_G\left( x+\alpha\left[ x^\star - x \right] \right) - \pi_G(x) }
		{\alpha}
		\geq \pi_G\left( x^\star \right) - \pi_G(x) .
	\end{equation*}
	The left-hand side converges as $\alpha \downarrow 0$ by the \hyperref[lemma:euler_gateaux]{Gateaux lemma}, yielding
	\begin{equation*}
		\DD \pi_G\left( x, x^\star-x \right)
		\geq \pi_G\left( x^\star \right) - \pi_G(x) .
	\end{equation*}
	It therefore suffices to show that $\DD \pi_G\left( x, x^\star-x \right) \leq 0$.
	And indeed,
	the first term in the \hyperref[lemma:euler_gateaux]{Gateaux lemma's}
	expression for $\DD \pi_G\left( x, x^\star-x \right)$
	is zero by \eqref{eq:euler},
	while second (third) term is non-positive by definition of $\phi^0$ ($\phi^1$) and the concavity of $F^0$ ($F^1$).%
		\footnote{For the second term,
		$\phi^0(t)$ is a supergradient of the concave function $F^0$ at $x_t$ for a.e. $t \in \R_+$ with $G(t)<1$, and at each such $t$,
		$x^\star_t > x_t$ implies $F^{0\prime}( x_t, x^\star_t ) = F^{0+}(x_t) \leq \phi^0(t)$
		and $x^\star_t < x_t$ implies $F^{0\prime}( x_t, x^\star_t ) = F^{0-}(x_t) \geq \phi^0(t)$.
		Analogously for the third term.}

	For the second part, fix an $x^\dag \in \argmax_{\mathcal{X}} \pi_G$.
	Since $\phi^0,\phi^1$ satisfy \eqref{eq:euler}, what must be shown is merely that
	\begin{itemize}

		\item for a.e. $t \in \R_+$ with $G(t) < 1$, $\phi^0(t)$ is a supergradient of $F^0$ at $x^\dag_t$, and

		\item for $G$-a.e. $t \in \R_+$, $\phi^1(t)$ is a supergradient of $F^1$ at $X^\dag_t$.

	\end{itemize}
	$x^\dag$ belongs to $\mathcal{X}_G$ by \Cref{lemma:integrable_supergradient}.
	So by the \hyperref[lemma:euler_gateaux]{Gateaux lemma} (with the roles of $x$ and $x^\dag$ reversed) and \eqref{eq:euler},
	\begin{align*}
	 	\DD \pi_G\left( x^\dag, x-x^\dag \right)
		&= \E_G\left( r \int_0^\tau e^{-rt}
		\left[ F^{0\prime}\left( x^\dag_t, x_t \right) - \phi^0(t) \right]
		\left[ x_t - x^\dag_t \right] \dd t \right)
		\\
		&\quad + \E_G\left( e^{-r\tau} \left[
		F^{1\prime}\left( X^\dag_\tau, X_\tau \right) - \phi^1(\tau) \right]
		\left[ X_\tau - X^\dag_\tau \right] \right) .
	\end{align*}
	We must have $\DD \pi_G\bigl( x^\dag, x-x^\dag \bigr) \leq 0$ since $\pi_G$ is maximised at $x^\dag$.
	On the other hand, the two integrands
	\begin{align*}
		t &\mapsto \left[ F^{0\prime}\left( x^\dag_t, x_t \right) - \phi^0(t) \right]
		\left[ x_t - x^\dag_t \right]
		\\
		\text{and} \quad
		t &\mapsto \left[ F^{1\prime}\left( X^\dag_t, X_t \right) - \phi^1(t) \right]
		\left[ X_t - X^\dag_t \right]
	\end{align*}
	are non-negative at, respectively, a.e. $t \in \R_+$ with $G(t)<1$ (the first integrand) and at $G$-a.e. every $t \in \R_+$ (the second).%
		\footnote{For the first integrand,
		$\phi^0(t)$ is a supergradient of the concave function $F^0$ at $x_t$ for a.e. $t \in \R_+$ with $G(t)<1$, and at every such $t$,
		$x_t < x^\dag_t$
		implies $F^{0\prime}( x^\dag_t, x_t ) = F^{0-}( x^\dag_t ) \leq F^{0+}(x_t) \leq \phi^0(t)$
		and $x_t > x^\dag_t$ implies $F^{0\prime}( x^\dag_t, x_t ) \geq \phi^0(t)$.
		Similarly for the second integrand.}
	Thus the first (second) integrand must be equal to zero at a.e. $t \in \R_+$ at which $G(t)<1$
	(at $G$-a.e. $t \in \R_+$).
	For a.e. $t \in \R_+$ with $G(t)<1$ at which the first integrand is zero,
	$\phi^0(t)$ is a supergradient of $F^0$ at $\smash{x^\dag_t}$.%
		\footnote{If the first integrand is zero at $t$,
		then either $F^{0\prime}( x^\dag_t, x_t ) = \phi^0(t)$
		or $x_t = x^\dag_t$.
		If the former, then $\phi^0(t)$ is a supergradient of $F^0$ at $\smash{x^\dag_t}$.
		And for almost every $t \in \R_+$ with $G(t)<1$ at which the latter holds,
		$\phi^0(t)$ is a supergradient of $F^0$ at $x_t = x^\dag_t$.}
	Similarly, $\phi^1(t)$ is a supergradient of $F^1$ at $\smash{X^\dag_t}$ for $G$-a.e. $t \in \R_+$ at which the second integrand is zero.
\end{proof}

\begin{proof}[Proof of \Cref{lemma:integrable_supergradient}]
	Fix an $x \in \mathcal{X} \setminus \mathcal{X}_G$;
	we must show that it does not belong to $\argmax_{\mathcal{X}}\pi_G$.
	By hypothesis, there is a $u \in \left(0,u^0\right)$ such that either $\psi^0_{x,u}$ or $\psi^1_{X,u}$ (defined on \cpageref{eq:euler_psi01_defn}) fails to be $G$-integrable.
	Define $\smash{x^\dag \equiv u}$; it clearly belongs to $\mathcal{X}$.
	It suffices to show that $\smash{\DD \pi_G\bigl(x,x^\dag-x\bigr) = \infty}$,
	since then
	\begin{equation*}
		\pi_G\left(x + \alpha\left[x^\dag-x\right]\right) > \pi_G(x)
		\quad \text{for $\alpha \in (0,1)$ small enough.}
	\end{equation*}

	Fix an $\eps \in (0,u)$, and define
	\begin{equation*}
		\mathcal{T} \coloneqq \left\{t \in \R_+ : x_t > u - \eps\right\} .
	\end{equation*}
	Choose $\eps' \in \left(0,u \meet \left[u^0 - u \right]\right)$ so that $\eps' < u^1 \meet \left(u^0 - u^1 \right)$ if $u^1 > 0$, and let
	\begin{equation*}
		\mathcal{T}'
		\coloneqq
		\begin{cases}
			\left\{t \in \R_+ : X_t < u + \eps'\right\}
			& \text{if $u^1 = 0$}
			\\
			\left\{t \in \R_+ : \left(u \meet u^1\right)-\eps' < X_t < \left(u \join u^1\right) + \eps'\right\}
			& \text{if $u^1 > 0$.}
		\end{cases}
	\end{equation*}
	(Here `$\meet$' and `$\join$' denote the minimum and maximum, respectively.)

	\begin{namedthm}[Claim.]
		\label{claim:integrable_supergradient}
		$\DD\pi_G\bigl(x,x^\dag-x\bigr)$ exists in $[-\infty,\infty]$,
		and for some $C \in \R$,
		\begin{align*}
			\DD\pi_G\left(x,x^\dag-x\right)
			&= \eps \E_G\left(
			r \int_0^\tau
			e^{-rt} \abs*{ F^{0\prime}\left(x_t,u\right) }
			\1_{\R_+ \setminus \mathcal{T}'}(t)
			\dd t
			\right)
			\\
			&\quad+ \eps' \E_G\left(
			e^{-r\tau} \abs*{F^{1\prime}\left(X_\tau,u\right)}
			\1_{\R_+ \setminus \mathcal{T}'}(\tau)
			\right)
			+ C .
		\end{align*}
	\end{namedthm}

	The \hyperref[claim:integrable_supergradient]{claim} follows from standard arguments, which we omit \parencite[see][]{omit}.
	Now, the maps
	\begin{equation*}
		t \mapsto r \int_0^t
		e^{-rs}F^{0\prime}\left(x_s,u\right)
		\1_{\mathcal{T}}(s)
		\dd s
		\quad \text{and} \quad
		t \mapsto e^{-rt} F^{1\prime}\left(X_t,u\right)
		\1_{\mathcal{T}'}(t)
	\end{equation*}
	are $G$-integrable
	because $F^0$ is Lipschitz continuous on $\left[u-\eps,u^0\right]$
	and
	\begin{equation*}
		\text{$F^1$ is Lipschitz continuous on}\quad
		\begin{cases}
			\left[0,u+\eps'\right]
			& \text{if $u^1=0$} \\
			\left[ \left(u \meet u^1\right)-\eps',
			\left(u \join u^1\right) + \eps'\right]
			& \text{if $u^1>0$.}
		\end{cases}
	\end{equation*}
	Since either $\psi^0_{x,u}$ or $\psi^1_{X,u}$ (defined on \cpageref{eq:euler_psi01_defn}) fails to be $G$-integrable (as $x \notin \mathcal{X}_G$ by hypothesis),
	it must therefore be that one of the maps
	\begin{equation*}
		t \mapsto r \int_0^t
		e^{-rs} F^{0\prime}\left(x_s,u\right)
		\1_{\R_+ \setminus \mathcal{T}}(s)
		\dd s
		\quad \text{and} \quad
		t \mapsto e^{-rt} F^{1\prime}\left(X_t,u\right)
		\1_{\R_+ \setminus \mathcal{T}'}(t)
	\end{equation*}
	fails to be $G$-integrable.
	In either case, the \hyperref[claim:integrable_supergradient]{claim} implies that $\DD \pi_G\bigl(x,x^\dag-x\bigr) = \infty$, as desired.
\end{proof}

\subsection{Proofs of the construction lemmata (\texorpdfstring{\cref{app:Euler:construction}}{appendix \ref{app:Euler:construction}})}
\label{suppl:construction}

In this appendix, we prove the lemmata in \cref{app:Euler:construction}
used to construct a solution of the superdifferential Euler equation (\cref{app:Euler}).
All of the arguments are elementary but tedious.

\subsubsection{Proof of \texorpdfstring{\Cref{lemma:existence_simple}}{Lemma~\ref{lemma:existence_simple}}}
\label{suppl:construction:pf_existence_simple}

Enumerate the support of $G$ as $\supp(G) = \{t_k\}_{k=1}^K \subseteq \R_+$, where $K \in \N$ and
\begin{equation*}
	0 \leq t_1 < \cdots < t_K < \infty .
\end{equation*}
As $F^{0\prime}$ is continuous and strictly decreasing on $\left[u^\star,u^0\right]$, it admits a continuous and decreasing inverse $\inv F^{0\prime} : \left[ F^{0\prime}(u^0), F^{0\prime}\left(u^\star\right) \right] \to \left[u^\star,u^0\right]$.
Extend $\inv F^{0\prime}$ to $\R$ by making it constant on $(-\infty,F^{0\prime}(u^0)]$ and on $[F^{0\prime}\left(u^\star\right),\infty)$, so that continuity and monotonicity are preserved.

For $\lambda \in \left[u^\star,u^0\right]$, let $x^\lambda_{t_K} \coloneqq X^\lambda_{t_K} \coloneqq \lambda$ and, if $K > 1$, define a sequence
$\{ x^\lambda_{t_k}, X^\lambda_{t_k} \}_{k=1}^{K-1}$
in $\left[u^\star,u^0\right]$ recursively by
\begin{align*}
	x^\lambda_{t_k}
	&\coloneqq \inv F^{0\prime}\left(
	\E_G\left(F^{1\prime}\left( X^\lambda_\tau\right) \middle| \tau > t_k \right)
	\right)
	\quad \text{and}
	\\
	X^\lambda_{t_k}
	&\coloneqq \left( 1 - e^{r(t_k-t_{k+1})} \right) x^\lambda_k
	+ e^{r(t_k-t_{k+1})} X^\lambda_{t_{k+1}} .
\end{align*}

\begin{namedthm}[Claim.]
	\label{claim:thmC_pf_finite_decreasing}
	The sequence $\bigl( x^\lambda_{t_k} \bigr)_{k=1}^K$ is decreasing.
\end{namedthm}

\begin{proof}%
	\renewcommand{\qedsymbol}{$\square$}
	We prove that the sequence $\bigl( x^\lambda_{t_k} \bigr)_{k=k'}^K$ is decreasing for every $k' \in \{1,\dots,K-1\}$ by backward induction on $k'$.
	For the base case $k' = K-1$, we have
	\begin{equation*}
		x^\lambda_{t_{K-1}}
		= \inv F^{0\prime}\left( F^{1\prime}(\lambda) \right)
		\geq \lambda
		= x^\lambda_{t_K} ,
	\end{equation*}
	where the inequality holds since $F^{0\prime} \geq F^{1\prime}$ on $\left[ u^\star, u^0 \right] \ni \lambda$.

	For the induction step, suppose for $k' \in \{1,\dots,K-2\}$ that $\smash{( x^\lambda_{t_k} )_{k=k'+1}^K}$ is decreasing;
	we must show that $\smash{x_{t_{k'}} \geq x_{t_{k'+1}}}$.
	The induction hypothesis implies that
	$\smash{( X^\lambda_{t_k} )_{k=k'+1}^K}$
	is also decreasing,
	which since $F^{1\prime}$ is a decreasing function implies that
	\begin{equation*}
		F^{1\prime}\left( X^1_{t_{k'+1}} \right)
		\leq \E_G\left( F^{1\prime}\left( X^1_\tau \right)
		\middle| \tau > t_{k'+1} \right) ,
	\end{equation*}
	and thus
	\begin{align*}
		\E_G\left( F^{1\prime}\left(X^1_\tau\right)
		\middle| \tau > t_{k'} \right)
		&= \frac{G(t_{k'+1})-G(t_{k'})}{1-G(t_{k'})}
		F^{1\prime}\left( X^1_{t_{k'+1}} \right)
		\\
		&\quad + \frac{1-G(t_{k'+1})}{1-G(t_{k'})}
		\E_G\left( F^{1\prime}\left(X^1_\tau\right)
		\middle| \tau > t_{k'+1} \right)
		\\
		&\leq \E_G\left( F^{1\prime}\left(X^1_\tau\right)
		\middle| \tau > t_{k'+1} \right) .
	\end{align*}
	Since $\inv F^{0\prime}$ is decreasing, it follows that $x_{t_{k'}} \geq x_{t_{k'+1}}$.
\end{proof}%
\renewcommand{\qedsymbol}{$\blacksquare$}

Since $\inv F^{0\prime}$ and $F^{1\prime}$ are continuous, $\lambda \mapsto x^\lambda_{t_k}$ and $\lambda \mapsto X^\lambda_{t_k}$ are continuous on $\left[u^\star,u^0\right]$ for every $k \in \{1,\dots,K\}$.%
	\footnote{Proceed by strong backward induction on $k \in \{1,\dots,K\}$. Clearly continuity holds in the base case $k = K$.
	For the induction step, suppose for $k<K$ that $\smash{\lambda \mapsto X^\lambda_{t_{k'}}}$ is continuous for all $k' > k$.
	Then $\smash{\lambda \mapsto x^\lambda_{t_k}}$ is continuous, and thus so is $\smash{\lambda \mapsto X^\lambda_{t_k}}$.}
Thus the map $\psi : \left[u^\star,u^0\right] \to \R$ defined by
\begin{equation*}
	\psi(\lambda)
	\coloneqq \E_G \left( F^{1\prime}\left(X^\lambda_\tau\right) \right)
	\quad \text{for each $\lambda \in \left[u^\star,u^0\right]$}
\end{equation*}
is continuous.
Since $F^0$ and $F^1$ are continuously differentiable and $u^\star > 0$, we have by definition of $u^\star$ that
$F^{1\prime}\left(u^\star\right)
= F^{0\prime}\left(u^\star\right)$
and
$F^{1\prime}\left(u^0\right)
\leq F^{0\prime}\left(u^0\right)$.
Thus if $\lambda \in \left\{ u^\star, u^0 \right\}$,
then $x^\lambda_{t_k} = \lambda$ for every $k \in \{1,\dots,K\}$,
so that $\psi(\lambda) = F^{1\prime}(\lambda)$.%
	\footnote{This follows easily by backward induction on $k \in \{1,\dots,K\}$.}
It follows that
\begin{equation*}
	\psi\left(u^\star\right)
	= F^{0\prime}\left(u^\star\right)
	\geq 0
	= F^{0\prime}\left(u^0\right)
	\geq \psi\left(u^0\right) .
\end{equation*}
Hence the continuous function $\psi$ has a root $\lambda_\star \in \left[u^\star,u^0\right]$ by the intermediate value theorem.

Let $x : \R_+ \to \left[u^\star,u^0\right]$ be given by
\begin{equation*}
	x_t
	\coloneqq
	\begin{cases}
		u^0
		& \text{for $t \in [0,t_1)$}
		\\
		x^{\lambda_\star}_{t_k}
		& \text{for $t \in [t_k,t_{k+1})$ where $k \in \{1,\dots,K-1\}$}
		\\
		x^{\lambda_\star}_{t_K}
		& \text{for $t \in [t_K,\infty)$.}
	\end{cases}
\end{equation*}
The mechanism $(x,X)$ satisfies the Euler equation
by \Cref{observation:euler_differentiable} in \cref{app:Euler:construction} (\cpageref{observation:euler_differentiable}).
\qed

\subsubsection{Proof of \texorpdfstring{\Cref{lemma:approx_G}}{Lemma~\ref{lemma:approx_G}}}
\label{suppl:construction:pf_lemma_approx_G}

Since $F^0,F^1$ are simple and $x$ belongs to $\mathcal{X}'$,
it suffices by \Cref{observation:euler_differentiable} in \cref{app:Euler:construction} (\cpageref{observation:euler_differentiable})
to show that
$\E_G\left( F^{1\prime}(X_\tau) \right) = 0$ and
\begin{equation}
	F^{0\prime}(x_t)
	= \E_G\left( F^{1\prime}(X_\tau) \middle| \tau > t \right)
	\quad \text{for a.e. $t \in \R_+$ such that $G(t) < 1$.}
	\label{eq:euler_forward_diff}
\end{equation}
By \Cref{observation:euler_differentiable} again,
$\left(x^n,X^n\right)$ and $G_n$ satisfy $\E_{G_n} \left( F^{1\prime}(X^n_\tau) \right) = 0$ and \eqref{eq:euler_forward_diff} for each $n \in \N$.

To show that $\E_G \left(F^{1\prime}(X_\tau)\right) = 0$,
note that by simplicity, $F^{1\prime}$ is $L$-Lipschitz on $\left[u^\star,u^0\right]$ for some $L > 0$.
Thus for any $T \in \R_+$, we have
\begin{multline*}
	\abs*{ \E_{G_n}\left( F^{1\prime}\left(X^n_\tau\right)
	- F^{1\prime}\left(X_\tau\right) \right) }
	\leq L \E_{G_n}\left( \abs*{ X^n_\tau-X_\tau } \right)
	\\
	\begin{aligned}
		&\leq L \E_{G_n}\Bigl( \abs*{ X^n_\tau-X_\tau }
		\Bigm| \tau \leq T \Bigr)
		+ L[1-G_n(T)]\left(u^0-u^\star\right)
		\\
		&\leq L \E_{G_n}\left(
		e^{r\tau} r \int_{\tau}^\infty e^{-rt} \abs*{ x^n_t - x_t } \dd t \;
		\middle| \tau \leq T \right)
		+ L[1-G_n(T)]\left(u^0-u^\star\right)
		\\
		&\leq Le^{rT} r \int_0^\infty e^{-rt} \abs*{ x^n_t - x_t } \dd t
		+ L [1-G_n(T)] \left( u^0 - u^\star \right) .
	\end{aligned}
\end{multline*}
Since $T \in \R_+$ was arbitrary and $G_n \to G$ and $x^n \to x$ pointwise,
it follows that the left-hand side vanishes as $n \to \infty$.%
	\footnote{Fix any $\eps > 0$;
	we seek an $N \in \N$ such that
	$\abs{ \E_{G_n}( F^{1\prime}(X^n_\tau) - F^{1\prime}(X_\tau) ) } < \eps$
	for all $n \geq N$.
	To that end, choose a $T \in \R_+$ large enough that $[1-G(T)] L (u^0-u^\star) < \eps/3$.
	Since $G_n \to G$ and $x^n \to x$ pointwise,
	we may find an $N \in \N$ such that
	both $\abs{ G(T) - G_n(T) } L (u^0-u^\star) < \eps/3$ and $Le^{rT} r \int_0^\infty e^{-rt} \abs*{ x^n_t - x_t } \dd t < \eps/3$ for all $n \geq N$.}
Thus since $\left(x^n,X^n\right)$ and $G_n$ satisfy $\E_{G_n} \left(F^{1\prime}(X^n_\tau)\right) = 0$ for each $n \in \N$, we have
\begin{multline*}
	\abs*{ \E_G\left( F^{1\prime}\left( X_\tau \right) \right) }
	= \abs*{ \E_{G_n}\left( F^{1\prime}\left( X^n_\tau \right) \right)
	- \E_G\left( F^{1\prime}\left( X_\tau \right) \right) }
	\\
	\leq \abs*{ \E_{G_n}\left( F^{1\prime}\left( X^n_\tau \right)
	- F^{1\prime}\left( X_\tau \right) \right) }
	+ \abs*{ \E_{G_n}\left( F^{1\prime}\left( X_\tau \right) \right)
	- \E_G\left( F^{1\prime}\left( X_\tau \right) \right) }
	\to 0
\end{multline*}
as $n \to \infty$,
where the second term vanishes because $G_n \to G$ pointwise (hence weakly) and $X$ and $F^{1\prime}$ are bounded and continuous.

It remains to derive \eqref{eq:euler_forward_diff}.
Since $\left(x^n,X^n\right)$ and $G_n$ satisfy \eqref{eq:euler_forward_diff} for each $n \in \N$, we have for a.e. $t \in \R_+$ that
\begin{multline*}
	\abs*{ F^{0\prime}(x_t)
	- \E_G\left( F^{1\prime}\left(X_\tau\right) \middle| \tau > t \right) }
	\\
	\leq \abs*{ F^{0\prime}(x_t) - F^{0\prime}\left( x^n_t \right) }
	+ \abs*{ \E_{G_n}\left( F^{1\prime}\left(X^n_\tau\right) \middle| \tau > t \right)
	- \E_G\left( F^{1\prime}\left(X_\tau\right) \middle| \tau > t \right) } .
\end{multline*}
The first term vanishes as $n \to \infty$ since $F^{0\prime}$ is continuous and $x^n \to x$ pointwise.
The second term vanishes by a straightforward variation on the above argument,
using the fact that since $G_n$ converges pointwise to $G$, the same is true of the conditional CDFs given $\tau > t$.%
	\footnote{For all sufficiently large $n \in \N$,
	we have $G_n(t)<1$ since $G_n(t) \to G(t) < 1$,
	so the conditional CDF $G_n / [1-G_n(t)]$ is well-defined
	and converges pointwise to $G / [1-G(t)]$.}
\qed

\subsubsection{Proof of \texorpdfstring{\Cref{lemma:approx_F}}{Lemma~\ref{lemma:approx_F}}}
\label{suppl:construction:pf_lemma_approx_F}

Choose a sequence $\left( F^0_n, F^1_n \right)_{n \in \N}$ of technologies
satisfying the following:%
	\footnote{For an explicit example, see \textcite{omit}.}
\begin{enumerate}[label=(\alph*)]

	\item \label{bullet:thmC_pf_general:a}
	$F^0_n,F^1_n$ are simple for every $n \in \N$,

	\item \label{bullet:thmC_pf_general:b}
	$u^0_n \uparrow u^0$, $u^1_n \to u^1$ and $u^\star_n \to u^\star$ as $n \to \infty$,

	\item \label{bullet:thmC_pf_general:c}
	\begin{enumerate}[label=(\roman*)]

		\item for any $u \in \left(0,u^0\right]$ at which $F^{1-}$ is finite,
		$\left(F^{0\prime}_n\right)_{n \in \N}$ and $\left(F^{1\prime}_n\right)_{n \in \N}$ are uniformly bounded below on $\left[0,u\right]$,

		\item for any $u \in \left[0,u^0\right)$ at which $F^{0+},F^{1+}$ are finite,
		$\left(F^{0\prime}_n\right)_{n \in \N}$ and $\left(F^{1\prime}_n\right)_{n \in \N}$ are uniformly bounded above on $\left[u,u^0\right]$, and

	\end{enumerate}

	\item \label{bullet:thmC_pf_general:d}
	for both $j \in \{0,1\}$ and any convergent sequence $(u_n)_{n \in \N}$ with $0 < u_n \leq u^0_n$ for each $n \in \N$,
	every subsequential limit of the sequence $( F^{j\prime}_n(u_n) )_{n \in \N}$ is a supergradient of $F^j$ at $\lim_{n \to \infty} u_n$.

\end{enumerate}

Fix a mechanism $(x,X)$ and a CDF $G$ with unbounded support,
and suppose that $x$ is the pointwise limit of a sequence $(x^n)_{n \in \N}$ such that
for each $n \in \N$,
$x^n$ belongs to $\mathcal{X}'_n$
and $(x^n,X^n)$ satisfies the Euler equation for $\left(F^0_n,F^1_n,G\right)$ and $x^n \in \mathcal{X}'_n$.
Assume without loss that each $x^n$ is decreasing and right-continuous.%
	\footnote{Each $x^n$ admits a decreasing right-continuous version, e.g. $\widetilde{x}^n$ given by
	$\widetilde{x}^n_t \coloneqq \sup_{s>t} x^n_t$.}

Since $F^0_n,F^1_n$ are simple for each $n \in \N$ (by property \ref{bullet:thmC_pf_general:a}),
we have by \Cref{observation:euler_differentiable} in \cref{app:Euler:construction} (\cpageref{observation:euler_differentiable}) that
\begin{equation*}
	[1-G(t)] F^{0\prime}_n\left(x^n_t\right)
	+ \int_{[0,t]} F^{1\prime}_n\left(X^n_s\right) G(\dd s) = 0
	\quad \text{for all $t \in \R_+$ and $n \in \N$.}
	\label{eq:euler_n}
	\tag{$\mathcal{E}$}
\end{equation*}
(In particular, this holds for a.e. $t \in \R_+$ by \Cref{observation:euler_differentiable},
and thus for every $t$ since $F^{0\prime}$ is continuous and $x^n$ is right-continuous.)

\begin{namedthm}[Claim.]
	\label{claim:approx_F}
	$X_0 < u^0$ unless $F^{1-}\left(u^0\right)$ is finite,
	and $X > 0$ unless $F^{0+}(0),F^{1+}(0)$ are finite.
\end{namedthm}

\begin{proof}%
	\renewcommand{\qedsymbol}{$\square$}
	For the first part, suppose toward a contradiction that $F^{1-}\left(u^0\right) = -\infty$ and $X_0 = u^0$.
	Then $x = X = u^0$ since $x \leq u^0$ (as $x \in \mathcal{X}'$),
	so that $x^n \to u^0$ pointwise and (thus) $X^n \to u^0$ pointwise.
	Fix a $t \in \R_+$ at which $G(t) > 0$.
	Property \ref{bullet:thmC_pf_general:d} implies that
	\begin{equation*}
		\limsup_{n \to \infty} F^{0\prime}_n\left(x^n_t\right)
		\leq F^{0-}\left(u^0\right)
		\quad \text{and} \quad
		\lim_{n \to \infty} F^{1\prime}_n\left(X^n_s\right)
		= -\infty
		\quad \text{for any $s \in [0,t]$.}
	\end{equation*}

	Since $u^1_n \to u^1 < u^0$ by property \ref{bullet:thmC_pf_general:b}, there is an $N \in \N$ such that $X^n_t \geq u^1_n$ for every $n \geq N$,
	and thus $X^n \geq u^1_n$ on $[0,t]$ since $X^n$ is decreasing (as $x^n \in \mathcal{X}'_n$).
	It follows that $s \mapsto F^{1\prime}_n\left(X^n_s\right)$ is non-positive
	for every $n \geq N$,
	so that Fatou's lemma applies, yielding
	\begin{multline*}
		\limsup_{n \to \infty}
		\left\{
		[1-G(t)] F^{0\prime}_n\left(x^n_t\right)
		+ \int_{[0,t]} F^{1\prime}_n\left(X^n_s\right) G(\dd s)
		\right\}
		\\
		\begin{aligned}
			&\leq [1-G(t)] F^{0-}\left(u^0\right)
			+ \limsup_{n \to \infty} \int_{[0,t]}
			F^{1\prime}_n\left(X^n_s\right) G(\dd s)
			\\
			&\leq [1-G(t)] F^{0-}\left(u^0\right)
			+ \int_{[0,t]} \limsup_{n \to \infty}
			F^{1\prime}_n\left(X^n_s\right) G(\dd s)
			= -\infty ,
		\end{aligned}
	\end{multline*}
	where the equality holds by $F^{0-}\left(u^0\right) < \infty$ and $G(t) > 0$.
	This is a contradiction with \eqref{eq:euler_n}.

	For the second part, suppose toward a contradiction that $X_t = 0$ for some $t \in \R_+$ and that either $F^{0+}(0) = \infty$ or $F^{1+}(0) = \infty$.
	Choose $u'' \in \left[X_0,u^0\right]$ such that $X_0 < u'' < u^0$ if $X_0 < u^0$,
	and note that $F^{1-}(u'')$ is finite by the first part of the \hyperref[claim:approx_F]{claim}.
	Since $X^n \to X$ pointwise and $X^n \leq u^0_n \leq u^0$ for each $n \in \N$ (by $x^n \in \mathcal{X}'_n$ and property \ref{bullet:thmC_pf_general:b}),
	there is an $N' \in \N$ such that $X^n_0 \leq u''$ for all $n \geq N'$.
	Since $X^n$ is decreasing, it follows that $X^n \leq u''$ for all $n \geq N'$.
	Thus by property \ref{bullet:thmC_pf_general:c},
	the sequence of maps
	$\left( s \mapsto F^{1\prime}_n\left(X^n_s\right) \right)_{n = N'}^\infty$
	is uniformly bounded below,
	so satisfies the hypothesis of Fatou's lemma.

	We have $X = x = 0$ on $[t,\infty)$ since $x$ is decreasing.
	So by property \ref{bullet:thmC_pf_general:d},
	\begin{equation*}
		\liminf_{n \to \infty} F^{0\prime}_n\left(x^n_s\right)
		\geq F^{0+}(0)
		\quad \text{and} \quad
		\liminf_{n \to \infty} F^{1\prime}_n\left(X^n_s\right)
		\geq F^{1+}(0)
		\quad \text{for any $s \geq t$.}
	\end{equation*}
	As $G$ has unbounded support, there is a $t' > t$ with $G(t) < G(t') < 1$.
	Then
	\begin{multline*}
		\liminf_{n \to \infty}
		\left\{
		[1-G(t')] F^{0\prime}_n\left(x^n_{t'}\right)
		+ \int_{[0,t']} F^{1\prime}_n\left(X^n_s\right)G(\dd s)
		\right\}
		\\
		\begin{aligned}
			&\geq
			[1-G(t')] \liminf_{n \to \infty}
			F^{0\prime}_n\left(x^n_{t'}\right)
			+ \int_{[0,t']} \liminf_{n \to \infty}
			F^{1\prime}_n\left(X^n_s\right)G(\dd s)
			\\
			&\geq
			[1-G(t')] F^{0+}(0)
			+ G(t') F^{1+}(0)
			= \infty ,
		\end{aligned}
	\end{multline*}
	where the first inequality holds by Fatou's lemma.
	This contradicts \eqref{eq:euler_n}.
\end{proof}%
\renewcommand{\qedsymbol}{$\blacksquare$}

Define $\phi^0_n,\phi^1_n : \R_+ \to \R$ by
\begin{equation*}
	\phi^0_n(t) \coloneqq F^{0\prime}_n\left( x^n_t \right)
	\quad \text{and} \quad
	\phi^1_n(t) \coloneqq F^{1\prime}_n\left( X^n_t \right)
	\quad \text{for each $t \in \R_+$.}
\end{equation*}
We shall show that for any $t \in \R_+$,
$\left( \phi^0_n \right)_{n \in \N}$ and $\left( \phi^1_n \right)_{n \in \N}$ are uniformly bounded on $[0,t]$.
So fix a $t \in \R_+$.
Choose $u' \in [0,X_t]$ so that $0 < u' < X_t$ in case $X_t > 0$,
and let $u'' \in \left[X_0,u^0\right]$ be such that $X_0 < u'' < u^0$ if $X_0 < u^0$.
By the \hyperref[claim:approx_F]{claim}, $F^{0+}(u')$, $F^{1+}(u')$ and $F^{1-}(u'')$ are finite.
Since $X^n \to X$ pointwise and $0 < u^\star_n \leq X^n \leq u^0_n \leq u^0$ (by $x^n \in \mathcal{X}'_n$ and property \ref{bullet:thmC_pf_general:b}), there is an $N \in \N$ such that $X^n_0 \leq u''$ and $X^n_t \geq u'$ for all $n \geq N$.
Since $x^n$ is decreasing for each $n \in \N$, it follows that
\begin{equation*}
	u' \leq x^n_s \leq u^0
	\quad \text{and} \quad
	u' \leq X^n_s \leq u''
	\quad \text{for all $s \in [0,t]$ and $n \geq N$.}
\end{equation*}
This together with property \ref{bullet:thmC_pf_general:c}
and the fact that $\phi^0_n \geq 0$ for each $n \in \N$
implies that $\left( \phi^0_n \right)_{n = N}^\infty$ and $\left( \phi^1_n \right)_{n = N}^\infty$
are uniformly bounded on $[0,t]$.
Since $\phi^0_n,\phi^1_n$ are bounded for each $n \in \N$
by property \ref{bullet:thmC_pf_general:a},
it follows that $\left( \phi^0_n \right)_{n \in \N}$ and $\left( \phi^1_n \right)_{n \in \N}$ are uniformly bounded on $[0,t]$, as desired.

For each $n \in \N$,
$\phi^0_n,\phi^1_n$ are increasing
since $x^n$ is decreasing and $F^0,F^1$ are concave.
Since $\left( \phi^0_n \right)_{n \in \N}$ and $\left( \phi^1_n \right)_{n \in \N}$ are also uniformly bounded on $[0,t]$ for any $t \in \R_+$,
it follows that
$\left( x^n \right)_{n \in \N}$ admits a subsequence along which $\phi^0_n,\phi^1_n$ converge pointwise to some increasing $\phi^0 : \R_+ \rightarrow [0,\infty]$ and $\phi^1 : \R_+ \rightarrow [-\infty,\infty]$,
by the Helly selection theorem.%
	\footnote{E.g. \textcite[p. 167]{Rudin1976}.
	For any subsequence of $\left( x^n \right)_{n \in \N}$ and $t \in \R_+$, Helly yields a sub-subsequence along which $( \phi^0_n )_{n \in \N}$ converges on $[0,t]$;
	a diagonalisation argument yields a subsequence of $\left( x^n \right)_{n \in \N}$ along which $( \phi^0_n )_{n \in \N}$ converges pointwise on $\R_+$.
	The same reasoning yields a further subsequence along which $( \phi^1_n )_{n \in \N}$ converges on $\R_+$.}

Clearly $\phi^0$ is measurable.
Moreover, since $x^n \to x$ pointwise and (thus) $X^n \to X$ pointwise,
the same is true along the subsequence.
So by property \ref{bullet:thmC_pf_general:d},
$\phi^0(t)$ ($\phi^1(t)$) is a supergradient of $F^0$ at $x_t$ (of $F^1$ at $X_t$)
for every $t \in \R_+$.
Moreover,
letting $n \to \infty$ in \eqref{eq:euler_n} yields that $\phi^0,\phi^1$ satisfy \eqref{eq:euler} for each $t \in \R_+$ by bounded convergence.

It remains only to show that $\phi^1$ is $G$-integrable.
Note first that $\phi^1$ is bounded below
by $\phi^1(0) = \lim_{n \to \infty} \phi^1_n(0) \in \R$
since it is increasing.
Hence
\begin{equation*}
	\phi^1(0)
	\leq \E_G\left(\phi^1(\tau)\right)
	\leq \liminf_{t \to \infty} \int_{[0,t]} \phi^1 \dd G
	= - \limsup_{t \to \infty} \left[ 1 - G(t) \right] \phi^0(t)
	\leq 0
\end{equation*}
by Fatou's lemma (second inequality),
\eqref{eq:euler} (the equality)
and the non-negativity of $\phi^0$ (final inequality).
Hence $\phi^1$ is $G$-integrable.
\qed

\subsection{Comparative statics}
\label{suppl:mcs}

When the likely time of the breakthrough becomes later,
the agent is optimally provided with a higher continuation utility $X_t$ in every period $t$:

\begin{namedthm}[Comparative statics theorem.]
	\label{theorem:mcs}
	Suppose that $F^0$ is strictly concave.
	Let $G,G^\dag$ be absolutely continuous distributions with equal, unbounded support.
	If $G$ MLR-dominates $G^\dag$,%
		\footnote{I.e. the ratio $G' / G^{\dag\prime}$ of their densities is increasing on the support.}
	then $X \geq X^\dag$ for any mechanisms $(x,X)$ and $\bigl( x^\dag, X^\dag \bigr)$
	that are optimal for $G$ and $G^\dag$, respectively.
\end{namedthm}

The restriction to absolutely continuous distributions $G,G^\dag$ with equal support is merely for simplicity.
The proof relies on the following two lemmata, which are proved below.
Recall from \cref{app:Euler} the definitions of the (superdifferential) Euler equation,
$\mathcal{X}$, $\mathcal{X}'$ and `simple'.

\begin{lemma}
	\label{lemma:mcs}
	Suppose that $F^0,F^1$ are simple.
	Let $G,G^\dag$ be finite-support distributions with $G(0)=G^\dag(0)=0$
	and equal support.
	If $G$ MLR-dominates $G^\dag$,%
		\footnote{I.e. the ratio $g / g^\dag$ of their probability mass functions is increasing on the support.}
	then $X \geq X^\dag$ for any mechanisms $(x,X)$ and $\bigl( x^\dag, X^\dag \bigr)$
	with $x,x^\dag \in \mathcal{X}'$
	that satisfy the Euler equation for $G$ and $G^\dag$, respectively.
\end{lemma}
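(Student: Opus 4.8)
The plan is to reduce the statement to a finite backward recursion and then run a monotone‑comparative‑statics argument powered by the MLR hypothesis.

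First the reduction. Write the common support as $\{t_1 < \dots < t_K\} \subseteq \R_+$; since $G(0) = G^\dagger(0) = 0$ we have $t_1 > 0$, and since $F^0,F^1$ are simple (hence strictly concave with invertible, strictly decreasing derivatives, and $F^{0\prime} \geq F^{1\prime}$ on $\left[u^\star,u^0\right]$ by the definition of $u^\star$), \Cref{observation:euler_differentiable} tells us that a mechanism $(x,X)$ with $x \in \mathcal{X}'$ satisfying the Euler equation has $x = u^0$ on $[0,t_1)$ and $x$ constant on each $[t_k,t_{k+1})$, the values $x_{t_k}, X_{t_k}$ being generated by the backward recursion $F^{0\prime}(x_{t_k}) = \E_G\!\left( F^{1\prime}(X_\tau) \mid \tau \geq t_{k+1} \right)$ and $X_{t_k} = \left(1-e^{-r(t_{k+1}-t_k)}\right) x_{t_k} + e^{-r(t_{k+1}-t_k)} X_{t_{k+1}}$, from a terminal value $\lambda := X_{t_K} \in \left[u^\star,u^0\right]$ which is in turn pinned down by the initial condition $\E_G\!\left(F^{1\prime}(X_\tau)\right) = 0$. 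Replacing $x$ on $[t_K,\infty)$ by the constant $\lambda$ leaves both $X$ and incentive‑compatibility untouched (and preserves monotonicity, as $\lambda = X_{t_K} \leq \lim_{s \uparrow t_K} x_s$), so I may assume $x \equiv \lambda$ there; then the solution is a function of $G$ alone, with a decreasing continuation path which I write $X^G$ (likewise $X^{G^\dagger}$ for $G^\dagger$).

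Next, two routine backward inductions establish: for fixed $G$, the recursion outputs $X_{t_k}$, the intermediate $x_{t_k}$, and the numbers $F^{1\prime}(X_{t_k})$ are monotone in $\lambda$, all $X_{t_k}$ lie in $\left[u^\star,u^0\right]$, and $\lambda \mapsto \E_G\!\left(F^{1\prime}(X_\tau)\right)$ is continuous, strictly decreasing, nonnegative at $\lambda = u^\star$ and nonpositive at $\lambda = u^0$ (using strict concavity of $F^1$, $F^{0\prime} \geq F^{1\prime}$ on $\left[u^\star,u^0\right]$, and $F^{0\prime}(u^0)=0$); so $\lambda(G)$ is well‑defined and unique. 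The core of the proof is then the claim $X^G \geq X^{G^\dagger}$. I would first show $\lambda(G) \geq \lambda(G^\dagger)$ by contradiction: if $X^G_{t_K} < X^{G^\dagger}_{t_K}$, a clean backward induction gives $X^G_{t_k} \leq X^{G^\dagger}_{t_k}$ for every $k$ (strictly for $k=K$), because the inductive step reduces to $\E_G\!\left( F^{1\prime}(X^G_\tau) \mid \tau \geq t_{k+1} \right) \geq \E_{G^\dagger}\!\left( F^{1\prime}(X^{G^\dagger}_\tau) \mid \tau \geq t_{k+1} \right)$, which holds since $F^{1\prime}(X^G_{t_i}) \geq F^{1\prime}(X^{G^\dagger}_{t_i})$ for $i>k$ by hypothesis and since $G$'s law conditional on $\{\tau \geq t_{k+1}\}$ first‑order stochastically dominates $G^\dagger$'s (MLR is preserved under conditioning on up‑sets) while $i \mapsto F^{1\prime}(X^{G^\dagger}_{t_i})$ is increasing ($X^{G^\dagger}$ being decreasing) — then passing through the decreasing $(F^{0\prime})^{-1}$ and the convex combination closes the step. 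Consequently $F^{1\prime}(X^G_{t_i}) \geq F^{1\prime}(X^{G^\dagger}_{t_i})$ for all $i$, strictly at $i=K$, whence $0 = \sum_i g_i F^{1\prime}(X^G_{t_i}) > \sum_i g_i F^{1\prime}(X^{G^\dagger}_{t_i}) \geq \sum_i g^\dagger_i F^{1\prime}(X^{G^\dagger}_{t_i}) = 0$, the middle inequality by FOSD of $G$ over $G^\dagger$ applied to the increasing map $t_i \mapsto F^{1\prime}(X^{G^\dagger}_{t_i})$ — a contradiction. With $\lambda(G) \geq \lambda(G^\dagger)$ in hand, I would propagate the inequality backward through the recursion to $X^G_{t_k} \geq X^{G^\dagger}_{t_k}$ for all $k$, and then extend to $X^G_t \geq X^{G^\dagger}_t$ for every $t$, the path on each subinterval and on $[0,t_1)$ being an increasing function of the values at its endpoints.

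The hard part is precisely this last propagation. An MLR up‑shift acts through two channels: it raises the terminal value $\lambda$ (which pushes $X$ up — the channel captured above), but it also tilts each per‑step conditional expectation toward later support points, which taken in isolation pushes the recursion's output down. The naive simultaneous invariant — "$X^G_{t_i} \geq X^{G^\dagger}_{t_i}$ for $i \geq k$ together with $\E_G\!\left(F^{1\prime}(X^G_\tau)\mid\tau\geq t_k\right) \leq \E_{G^\dagger}\!\left(F^{1\prime}(X^{G^\dagger}_\tau)\mid\tau\geq t_k\right)$" does not close, because in the inductive step for the second clause the reweighting term carries the wrong sign. Resolving this requires either a finer inductive invariant (quantifying how far $F^{1\prime}(X^G_{t_k})$ falls below $F^{1\prime}(X^{G^\dagger}_{t_k})$), or a ``maximal crossing index'' contradiction mirroring the $\lambda$‑argument, or recognising the solution as the unique maximiser of $\pi_G$ over $\mathcal{X}'$ (via the \hyperref[lemma:euler]{Euler lemma}) and deducing monotonicity of the maximiser in $G$ from a single‑crossing property of $\pi_G$ in $(x,G)$ for the pointwise and MLR orders. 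Everything else — the reduction, the monotonicity preliminaries, and the $\lambda$‑comparison — is routine.
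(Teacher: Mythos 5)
The gap is genuine, and you flag it yourself: you establish $\lambda(G) \geq \lambda(G^\dagger)$ — i.e.\ $X_{t_K} \geq X^\dagger_{t_K}$ — and then explicitly leave the backward propagation to all $k$ unresolved, correctly diagnosing that the MLR reweighting pushes each step's conditional expectation of $F^{1\prime}$ in the wrong direction. The resolution the paper uses is precisely the \emph{maximal crossing index} contradiction you name as one of your three candidates but do not carry out. Suppose $X_{t_k} < X^\dagger_{t_k}$ for some $k$ and let $k'$ be the \emph{largest} such $k$. Maximality of $k'$ hands you simultaneously $X_{t_{k'}} < X^\dagger_{t_{k'}}$ and $X_{t_{k'+1}} \geq X^\dagger_{t_{k'+1}}$; combined with the present-value identity $X_{t_{k'}} = (1-e^{-r\Delta}) u_{k'} + e^{-r\Delta} X_{t_{k'+1}}$, this forces $u_{k'} < u^\dagger_{k'}$ outright and hence $\E_G(F^{1\prime}(X_\tau) \mid \tau > t_{k'}) = F^{0\prime}(u_{k'}) > F^{0\prime}(u^\dagger_{k'}) = \E_{G^\dagger}(F^{1\prime}(X^\dagger_\tau) \mid \tau > t_{k'})$ \emph{without} any appeal to the MLR order.

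The paper then runs a backward induction from $k'$ down to $1$ maintaining the pair of invariants $X_{t_k} < X^\dagger_{t_k}$ and $\E_G(F^{1\prime}(X_\tau)\mid\tau\geq t_k) > \E_{G^\dagger}(F^{1\prime}(X^\dagger_\tau)\mid\tau\geq t_k)$. The MLR hypothesis enters only in passing from the strict-tail conditioning $\{\tau>t_k\}$ to $\{\tau\geq t_k\}$: decompose into the atom at $t_k$ plus the tail, note that $F^{1\prime}(X_{t_k})$ lies weakly below the tail conditional mean because $X$ and $F^{1\prime}$ are decreasing, and then use that $G|_{\tau\geq t_k}$ MLR-dominates $G^\dagger|_{\tau\geq t_k}$. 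Reaching $k=1$ contradicts $\E_G(F^{1\prime}(X_\tau))=0=\E_{G^\dagger}(F^{1\prime}(X^\dagger_\tau))$. Your Step~1 is exactly this argument specialised to $k'=K$; the observation you are missing is that parametrising by a generic $k'$ makes the two-step structure (first $\lambda$, then propagate) unnecessary, because maximality gives you for free the one inequality — $X_{t_{k'+1}} \geq X^\dagger_{t_{k'+1}}$ — that your forward propagation cannot manufacture on its own.
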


\begin{lemma}
	\label{lemma:euler_optimal}
	If $F^0$ is strictly concave
	and $G$ has unbounded support,
	then a mechanism $(x,X)$ which satisfies $x \in \mathcal{X}$
	and the Euler equation
	is uniquely optimal for $G$.
\end{lemma}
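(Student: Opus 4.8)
The plan is to combine the \hyperref[lemma:euler]{Euler lemma} with a strict-concavity argument. First I would establish that $\pi_G$ is \emph{strictly} concave on $\mathcal{X}$ (regarded as equivalence classes under a.e.-equality). By Fubini's theorem,
\[
	\E_G\left( r \int_0^\tau e^{-rs} F^0(x_s) \dd s \right)
	= r \int_0^\infty e^{-rs} \left[ 1 - G(s) \right] F^0(x_s) \dd s ,
\]
and since $G$ has unbounded support, $1 - G(s) > 0$ for a.e. $s \in \R_+$. As the map $x \mapsto X$ is linear and $F^1$ is concave, the strict concavity of $F^0$ then yields $\pi_G\bigl( \tfrac{1}{2}(x+x') \bigr) > \tfrac{1}{2}\pi_G(x) + \tfrac{1}{2}\pi_G(x')$ whenever $x \neq x'$ in $\mathcal{X}$. (A short remark handles the degenerate possibility that $F^0(0) = -\infty$: every maximiser of $\pi_G$ has finite payoff, since $\pi_G$ exceeds the finite value attained by the constant mechanism $x \equiv u^0$, so maximisers take values where $F^0$ is finite for a.e.\ $s$.) Hence $\argmax_{\mathcal{X}} \pi_G$ is a singleton.

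Next I would note that maximising $\pi_G$ over $\mathcal{X}$ coincides with maximising $\Pi_G$ among all IC mechanisms. By \Cref{lemma:lequ0} and \Cref{theorem:indiff}, any undominated IC mechanism has the form $(x',X')$ with $x' \in \mathcal{X}$, so its payoff is $\pi_G(x')$; and by \Cref{proposition:dominated_by_undominated} every dominated mechanism is weakly worse under $G$ than some undominated one. So if $x \in \mathcal{X}$ and $(x,X)$ satisfies the Euler equation, the \hyperref[lemma:euler]{Euler lemma} gives $x \in \argmax_{\mathcal{X}} \pi_G$, and the preceding observation upgrades this to: $(x,X)$ maximises $\Pi_G$ among all IC mechanisms.

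It remains to show $(x,X)$ is undominated, so that it is optimal for $G$, and then that it is the unique optimal mechanism. If $(x,X)$ were dominated, it would (by \Cref{proposition:dominated_by_undominated}) be dominated by some undominated $(x'',X'')$ with $x'' \in \mathcal{X}$; domination gives $\Pi_G(x'',X'') \geq \Pi_G(x,X)$, while maximality gives the reverse, so $\pi_G(x'') = \pi_G(x)$, and strict concavity forces $x'' = x$ in $\mathcal{X}$, hence $(x'',X'') = (x,X)$ — contradicting that domination requires a strict improvement under some distribution. For uniqueness, any mechanism optimal for $G$ is undominated, hence of the form $(x',X')$ with $x' \in \mathcal{X}$, and maximises $\Pi_G$, hence $\pi_G(x') = \max_{\mathcal{X}} \pi_G$; strict concavity then gives $x' = x$, so the optimal mechanism is $(x,X)$.

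The main obstacle is not any single deep step but the careful assembly: in particular, a maximiser of $\pi_G$ is not a priori undominated, and it is precisely strict concavity — via the uniqueness of the maximiser — that closes this gap; the other fiddly point is the $F^0(0) = -\infty$ case in the strict-concavity claim, together with keeping the "versions"/equivalence-class bookkeeping straight.
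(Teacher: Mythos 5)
Your argument is correct and tracks the paper's own proof closely: both route through the \hyperref[lemma:euler]{Euler lemma} and a strict-concavity argument (the paper's \Cref{observation:piG_str_conc}, whose proof it omits but which you supply via Fubini, including the $F^0(0)=-\infty$ edge case), concluding via uniqueness of $\argmax_{\mathcal{X}} \pi_G$. The only organisational difference is that the paper invokes the existence of an optimal mechanism (\Cref{corollary:optimal_existence}) and identifies it with the unique maximiser, whereas you verify directly that the Euler solution is undominated using \Cref{proposition:dominated_by_undominated} — a minor reshuffling of the same ingredients rather than a different method.
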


We shall also use the construction lemmata (\ref{lemma:existence_simple}, \ref{lemma:approx_G} and \ref{lemma:approx_F}) in \cref{app:Euler:construction}.

\begin{proof}[Proof of the {\hyperref[theorem:mcs]{comparative statics theorem}}]
	Let $\smash{\left(F^0_n,F^1_n\right)_{n \in \N}}$ be the simple technologies delivered by \Cref{lemma:approx_F}.
	Choose sequences $(G_m)_{m \in \N}$ and $(G^\dag_m)_{m \in \N}$ of finite-support CDFs
	converging pointwise to (respectively) $G$ and $G^\dag$ such that for each $m \in \N$,
	$G_m(0) = G^\dag_m(0) = 0$,
	$G_m$ and $G^\dag_m$ have equal support,
	and the former MLR-dominates the latter.%
		\footnote{For example:
		let $\{Q_n\}_{n=0}^\infty$ be an enumeration of $\supp(G) \intersect \Q$
		with $Q_0 = \min \supp(G)$ and $G(Q_1), G^\dag(Q_1) > 0$,
		write $\{Q_k\}_{k=0}^m = \{q^m_k\}_{k=0}^m$ where $q^m_0 < \cdots < q^m_m$,
		and define
		\begin{equation*}
			G^{(\dag)}_m(t)
			= \frac{1}{G^{(\dag)}\left(q^m_m\right)}
			\sum_{k=1}^m
			\1_{\left[0,q^m_k\right]}(t)
			\left[ G^{(\dag)}\left(q^m_k\right)
			- G^{(\dag)}\left(q^m_{k-1}\right) \right]
			\quad \text{for each $t \in \R_+$.}
		\end{equation*}
		%
		}

	Fix an arbitrary $n \in \N$.
	For every $m \in \N$,
	\Cref{lemma:existence_simple} assures us of the existence of $x^{nm},x^{\dag,nm} \in \mathcal{X}'_n$ such that $(x^{nm},X^{nm})$ and $\bigl(x^{\dag,nm},X^{\dag,nm}\bigr)$
	satisfy the Euler equation for $\left(F^0_n,F^1_n,G_m\right)$ and $\bigl(F^0_n,F^1_n,G^\dag_m\bigr)$, respectively.
	Since $\mathcal{X}'_n$ is sequentially compact
	by \Cref{observation:thmC_pf_sequential_compactness} in \cref{app:Euler:construction} (\cpageref{observation:thmC_pf_sequential_compactness}),
	we may assume (passing to a subsequence if necessary) that
	\begin{equation*}
		x^{nm} \to x^n
		\quad \text{and} \quad
		x^{\dag,nm} \to x^{\dag,n}
		\quad \text{pointwise as $m \to \infty$}
	\end{equation*}
	for some $x^n,x^{\dag,n} \in \mathcal{X}'_n$.
	Since $u^0_n \to u^0$ and $u^\star_n \to u^\star$,
	\Cref{observation:thmC_pf_sequential_compactness} permits us to assume (again passing to a subsequence if required) that
	\begin{equation*}
		x^n \to x
		\quad \text{and} \quad
		x^{\dag,n} \to x^\dag
		\quad \text{pointwise as $n \to \infty$}
	\end{equation*}
	for some $x,x^\dag \in \mathcal{X}'$.
	We have $X^{nm} \geq X^{nm,\dag}$ for any $n,m \in \N$ by \Cref{lemma:mcs},
	so that letting $m \to \infty$ and $n \to \infty$ yields $X \geq X^\dag$.

	$(x^n,X^n)$ satisfies the Euler equation for $\left(F^0_n,F^1_n,G\right)$ for each $n \in \N$ by \Cref{lemma:approx_G},
	and thus $(x,X)$ satisfies the Euler equation for $\left(F^0,F^1,G\right)$ by \Cref{lemma:approx_F}.
	Hence $( x, X )$ is uniquely optimal for $G$ by \Cref{lemma:euler_optimal}.
	Similarly, $\bigl( x^\dag, X^\dag \bigr)$ is uniquely optimal for $G^\dag$.
\end{proof}

\subsubsection{Proof of \texorpdfstring{\Cref{lemma:mcs}}{Lemma~\ref{lemma:mcs}}}
\label{suppl:mcs:pf_prop_mcs}

The argument is elementary but tedious.
Fix $x,x^\dag \in \mathcal{X}'$ such that $(x,X)$ and $\bigl( x^\dag, X^\dag \bigr)$ satisfy the Euler equation for $G$ and $G^\dag$, respectively; we must show that $X \geq X^\dag$.
Enumerate the (common) support of $G$ and $G^\dag$ as $\{t_k\}_{k=1}^K \subseteq \R_+$, where $K \in \N$ and
\begin{equation*}
	0 < t_1 < \cdots < t_K < \infty .
\end{equation*}
Since $F^0,F^1$ are simple and $x,x^\dag \in \mathcal{X}'$, \Cref{observation:euler_differentiable} in \cref{app:Euler:construction} (\cpageref{observation:euler_differentiable}) implies that
for some $u_1 \geq \cdots \geq u_K$ in $\left[u^\star,u^0\right]$, we have
\begin{equation*}
	x_t =
	\begin{cases}
		u^0 & \text{for a.e. $t \in [0,t_1)$}
		\\
		u_k & \text{for a.e. $t \in [t_k,t_{k+1})$ where $k \in \{1,\dots,K-1\}$}
		\\
		u_K & \text{for a.e. $t \in [t_K,\infty)$,}
	\end{cases}
	\label{eq:simple_optimal_x}
	\tag{S}
\end{equation*}
and that $x^\dag$ satisfies also \eqref{eq:simple_optimal_x} with some $u^\dag_1 \geq \cdots \geq u^\dag_K$ in $\left[u^\star,u^0\right]$.

\begin{namedthm}[Claim.]
	\label{claim:pf_prop_mcs}
	It suffices to show that $X_{t_k} \geq X^\dag_{t_k}$ for every $k \in \{1,\dots,K\}$.
\end{namedthm}

\begin{proof}%
	\renewcommand{\qedsymbol}{$\square$}
	Suppose that $X_{t_k} \geq X^\dag_{t_k}$ for every $k \in \{1,\dots,K\}$,
	and fix an arbitrary $t \in \R_+$; we shall show that $\smash{X_t \geq X^\dag_t}$.
	If $t \geq t_K$, then
	\begin{equation*}
		X_t = X_{t_K} \geq X^\dag_{t_K} = X^\dag_t
	\end{equation*}
	by \eqref{eq:simple_optimal_x}.
	Assume for the remainder that $t < t_K$.

	Suppose first that for some $k \leq K$,
	we have $t \leq t_k$ and $x \geq x^\dag$ a.e. on $(t,t_k)$.
	(This holds if $t \leq t_1$, since $x = u^0 = x^\dag$ a.e. on $[0,t_1)$ by \eqref{eq:simple_optimal_x}.)
	Then
	\begin{align*}
		X_t - X^\dag_t
		= r \int_t^{t_k} e^{-r(s-t)}\left(x_s-x^\dag_s\right) \dd s
		+{}& e^{-r(t_k-t)} \left(X_{t_k}-X^\dag_{t_k}\right)
		\\
		\geq{}& e^{-r(t_k-t)} \left(X_{t_k}-X^\dag_{t_k}\right)
		\geq 0 .
	\end{align*}
	Suppose instead that $t \in (t_k,t_{k+1})$ for some $k < K$ and that $x < x^\dag$ on a non-null subset of $(t_k,t_{k+1})$.
	Then $x < x^\dag$ a.e. on $(t_k,t)$ by \eqref{eq:simple_optimal_x}, so that
	\begin{align*}
		0
		\leq X_{t_k} - X^\dag_{t_k}
		= r \int_{t_k}^t e^{-r(s-t_k)}\left(x_s-x^\dag_s\right) \dd s
		+{}& e^{-r(t-t_k)} \left(X_t-X^\dag_t\right)
		\\
		\leq{}& e^{-r(t-t_k)} \left(X_t-X^\dag_t\right) .
		\qedhere
	\end{align*}
\end{proof}%
\renewcommand{\qedsymbol}{$\blacksquare$}

To show that $X_{t_k} \geq X^\dag_{t_k}$ for every $k \in \{1,\dots,K\}$,
suppose not; we shall derive a contradiction.
Let $k'$ denote the largest $k \in \{1,\dots,K\}$ at which $X_{t_k} < X^\dag_{t_k}$.
We shall prove that for every $k \leq k'$, it holds that
\begin{align}
	X_{t_k}
	&< X^\dag_{t_k}
	\label{eq:x_le_xdag}
	\\
	\text{and} \quad
	\E_{G}\left(F^{1\prime}(X_\tau) \middle|\tau \geq t_k \right)
	&> \E_{G^\dag}\left(F^{1\prime}\left(X^\dag_\tau\right)
	\middle|\tau \geq t_k \right) .
	\label{eq:E_le_Edag}
\end{align}
This suffices because it contradicts the fact that
\begin{multline*}
	\E_{G}\left(F^{1\prime}(X_\tau)
	\middle| \tau \geq t_1 \right)
	= \E_{G}\left(F^{1\prime}(X_\tau) \right)
	\\
	= 0
	= \E_{G^\dag}\left( F^{1\prime}\left(X^\dag_\tau\right) \right)
	= \E_{G^\dag}\left( F^{1\prime}\left(X^\dag_\tau\right)
	\middle| \tau \geq t_1 \right) ,
\end{multline*}
which holds by \Cref{observation:euler_differentiable} in \cref{app:Euler:construction} (\cpageref{observation:euler_differentiable}) since $(x,X)$ and $\bigl( x^\dag, X^\dag \bigr)$ satisfy the Euler equation for $G$ and $G^\dag$.
We proceed by (backward) induction on $k \in \{k',\dots,1\}$.

\emph{Base case: $k=k'$.}
Here \eqref{eq:x_le_xdag} holds by hypothesis, so we need only derive \eqref{eq:E_le_Edag}.
If $k' = K$, then we have by strict concavity of $F^1$ that
\begin{equation*}
	\E_{G}\left(F^{1\prime}(X_\tau)
	\middle|\tau \geq t_k \right)
	= F^{1\prime}\left(X_{t_K}\right)
	> F^{1\prime}\left(X^\dag_{t_K}\right)
	= \E_{G^\dag}\left(F^{1\prime}\left(X^\dag_\tau\right)
	\middle|\tau \geq t_k \right) .
\end{equation*}
Assume for the remainder that $k' < K$.

Since $X_{t_k} < X^\dag_{t_k}$ and $X_{t_{k+1}} \geq X^\dag_{t_{k+1}}$ by definition of $k'$, \eqref{eq:simple_optimal_x} yields
\begin{align*}
	\left( 1 - e^{-r\left(t_{k+1}-t_k\right)} \right) u_k
	&= X_{t_k} - e^{-r\left(t_{k+1}-t_k\right)} X_{t_{k+1}}
	\\
	&< X^\dag_{t_k} - e^{-r\left(t_{k+1}-t_k\right)} X^\dag_{t_{k+1}}
	= \left( 1 - e^{-r\left(t_{k+1}-t_k\right)} \right) u^\dag_k ,
\end{align*}
so that $u_k < u^\dag_k$.
It follows by the strict concavity of $F^0$ that
\begin{equation*}
	\E_{G}\left(F^{1\prime}(X_\tau) \middle|\tau > t_k \right)
	= F^{0\prime}(u_k)
	> F^{0\prime}\left(u^\dag_k\right)
	= \E_{G^\dag}\left(F^{1\prime}\left(X^\dag_\tau\right) \middle|\tau > t_k \right) ,
\end{equation*}
which is to say that \eqref{eq:E_le_Edag} holds at $k+1$.
Thus \eqref{eq:E_le_Edag} holds at $k$:
\begin{align*}
	\E_{G}\left(F^{1\prime}\left(X_\tau\right)\middle|\tau \geq t_k\right)
	&= \PP_{G}\left(\tau = t_k \middle| \tau \geq t_k\right)
	F^{1\prime}\left(X_{t_k}\right)
	\\
	&\quad + \PP_{G}\left(\tau > t_k \middle| \tau \geq t_k\right)
	\E_{G}\left(F^{1\prime}\left(X_\tau\right)\middle|\tau > t_k\right)
	\\
	&\geq \PP_{G^\dag}\left(\tau = t_k \middle| \tau \geq t_k\right)
	F^{1\prime}\left(X_{t_k}\right)
	\\
	&\quad +\PP_{G^\dag}\left(\tau > t_k \middle| \tau \geq t_k\right)
	\E_{G}\left(F^{1\prime}\left(X_\tau\right)
	\middle|\tau > t_k\right)
	\\
	&> \PP_{G^\dag}\left(\tau = t_k \middle| \tau \geq t_k\right)
	F^{1\prime}\left(X^\dag_{t_k}\right)
	\\
	&\quad + \PP_{G^\dag}\left(\tau > t_k \middle| \tau \geq t_k\right)
	\E_{G^\dag}\left(F^{1\prime}\left(X^\dag_\tau\right)
	\middle|\tau > t_k\right)
	\\
	&= \E_{G^\dag}\left(F^{1\prime}\left(X^\dag_\tau\right)
	\middle|\tau \geq t_k\right) ,
\end{align*}
where the weak inequality holds since
$G|_{\tau \geq t_k}$ MLR-dominates $G^\dag|_{\tau \geq t_k}$ and
\begin{equation*}
	F^{1\prime}\left(X_{t_k}\right)
	\leq \E_{G}\left( F^{1\prime}\left(X_\tau\right)
	\middle| \tau > t_k \right)
	\quad \text{since $X$ and $F^{1\prime}$ are decreasing,}
\end{equation*}
and the strict inequality holds by \eqref{eq:x_le_xdag} and strict concavity of $F^1$ (first term)
and the fact that \eqref{eq:E_le_Edag} holds at $k+1$ (second term).

\emph{Induction step:} Assume that \eqref{eq:x_le_xdag} and \eqref{eq:E_le_Edag} hold at $k+1 \leq K$; we must show that they hold at $k$.
Since \eqref{eq:E_le_Edag} holds at $k+1$, we have
\begin{equation*}
	F^{0\prime}\left(u_k\right)
	= \E_{G}\left(F^{1\prime}(X_\tau)
	\middle|\tau \geq t_{k+1} \right)
	> \E_{G^\dag}\left(F^{1\prime}\left(X^\dag_\tau\right)
	\middle|\tau \geq t_{k+1} \right)
	= F^{0\prime}\left(u^\dag_k\right) ,
\end{equation*}
so that $u_k < u^\dag_k$ by strict concavity of $F^0$.
Using \eqref{eq:simple_optimal_x} and the fact that \eqref{eq:x_le_xdag} holds at $k+1$ yields
\begin{align*}
	X_{t_k}
	&= \left( 1 - e^{-r\left(t_{k+1}-t_k\right)} \right) u_k
	+ e^{-r\left(t_{k+1}-t_k\right)} X_{t_{k+1}}
	\\
	&< \left( 1 - e^{-r\left(t_{k+1}-t_k\right)} \right) u^\dag_k
	+ e^{-r\left(t_{k+1}-t_k\right)} X^\dag_{t_{k+1}}
	= X^\dag_{t_k} ,
\end{align*}
showing that \eqref{eq:x_le_xdag} holds at $k$.
Since \eqref{eq:x_le_xdag} holds at $k$
and \eqref{eq:E_le_Edag} holds at $k+1$,
the (exact) same argument as in the base case yields that \eqref{eq:E_le_Edag} holds at $k$.
\qed

\subsubsection{Proof of \texorpdfstring{\Cref{lemma:euler_optimal}}{Lemma~\ref{lemma:euler_optimal}}}
\label{suppl:mcs:pf_lemma_euler_optimal}

Recall the definitions of $\mathcal{X}$ and $\pi_G$ from \cref{app:Euler}.
Note that $\mathcal{X}$ is convex.

\begin{observation}
	\label{observation:piG_str_conc}
	If $F^0$ is strictly concave
	and $G$ has unbounded support,
	then $\argmax_{\mathcal{X}} \pi_G$ has at most one element.
\end{observation}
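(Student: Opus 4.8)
The plan is to upgrade the (easy) concavity of $\pi_G$ to strict concavity along any segment joining two distinct points of $\mathcal{X}$, which immediately precludes two distinct maximizers. The single place where the unbounded-support hypothesis is used — and the only thing one must get right — is the positivity of the weight that $F^0$ receives in $\pi_G$; everything else is routine.

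Concretely, I would first record, for each $x\in\mathcal{X}$, the Fubini identity
\[
\E_G\left( r\int_0^\tau e^{-rs} F^0(x_s)\,\dd s \right)
= r\int_0^\infty e^{-rs}\left[1-G(s)\right] F^0(x_s)\,\dd s ,
\]
which is legitimate because $F^0(u^0)-F^0(x_s)\geq 0$ (so Tonelli applies to this non-negative integrand) and the subtracted term $F^0(u^0)\,\E_G\!\left(1-e^{-r\tau}\right)$ is finite; both sides may equal $-\infty$. Write $w(s)\coloneqq r e^{-rs}[1-G(s)]$. The point of unbounded support is that $G(s)<1$, hence $w(s)>0$, for \emph{every} $s\in\R_+$. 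Decompose $\pi_G=\pi_G^0+\pi_G^1$ with $\pi_G^0(x)\coloneqq\int_0^\infty w(s)F^0(x_s)\,\dd s$ and $\pi_G^1(x)\coloneqq\E_G\!\left(e^{-r\tau}F^1(X_\tau)\right)$; both are concave, $\pi_G^1$ because $x\mapsto X$ is linear and $F^1$ concave, and $\pi_G^0$ because $F^0$ is concave and $w\geq 0$.

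Now suppose $x,x^\dag\in\argmax_{\mathcal{X}}\pi_G$ differ on a non-null set of times, and set $\bar x\coloneqq(x+x^\dag)/2\in\mathcal{X}$. Since $\pi_G(x)=\pi_G(x^\dag)\geq\pi_G\!\left(\text{constant }u^0\right)>-\infty$ and $w>0$, we get $F^0(x_s),F^0(x^\dag_s)>-\infty$ for a.e.\ $s$; as $F^0$ is finite on $\left(0,u^0\right]$, this forces $x_s,x^\dag_s$, and hence $\bar x_s$, into $\left(0,u^0\right]$ for a.e.\ $s$, a region on which $F^0$ is finite and strictly concave. Thus $F^0(\bar x_s)\geq\tfrac12 F^0(x_s)+\tfrac12 F^0(x^\dag_s)$ for a.e.\ $s$, with strict inequality on the non-null set where $x_s\neq x^\dag_s$; multiplying by $w(s)>0$ and integrating gives $\pi_G^0(\bar x)>\tfrac12\pi_G^0(x)+\tfrac12\pi_G^0(x^\dag)$, and adding the weak concavity inequality for $\pi_G^1$ yields $\pi_G(\bar x)>\tfrac12\pi_G(x)+\tfrac12\pi_G(x^\dag)=\max_{\mathcal{X}}\pi_G$, a contradiction. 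Hence $\argmax_{\mathcal{X}}\pi_G$ has at most one element. (The disposal of the boundary point $u=0$, where $F^0$ may be $-\infty$, can alternatively be avoided altogether by noting that any two a.e.-distinct maximizers already live in $\left(0,u^0\right]$ a.e.\ and applying strict concavity of $F^0$ only there.)
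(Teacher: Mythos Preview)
Your argument is correct and is the natural one: decompose $\pi_G$ into the pre-disclosure part $\pi_G^0(x)=\int_0^\infty w(s)F^0(x_s)\,\dd s$ and the concave post-disclosure part $\pi_G^1$, observe that unbounded support makes the weight $w(s)=re^{-rs}[1-G(s)]$ strictly positive everywhere, and then use strict concavity of $F^0$ pointwise to obtain strict concavity of $\pi_G^0$ along any segment joining two a.e.-distinct elements of $\mathcal{X}$. The paper omits its proof entirely (calling it ``easy''), so there is nothing to compare against; your write-up is exactly the kind of argument one would expect, and the care you take with the Fubini justification and with the possibility $F^0(0)=-\infty$ is appropriate.
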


We omit the easy proof; see \textcite{omit}.

\begin{proof}[Proof of \Cref{lemma:euler_optimal}]
	If $(x,X)$ is an optimal mechanism,
	then we must have $x \in \mathcal{X}$ by \Cref{lemma:lequ0} (\cpageref{lemma:lequ0}),
	and thus $x$ must belong to $\argmax_{\mathcal{X}} \pi_G$.

	By \Cref{corollary:optimal_existence} in \cref{suppl:undom_opt_properties} (\cpageref{corollary:optimal_existence}),
	there is a mechanism $\bigl(x^\dag,X^\dag\bigr)$ that is optimal for $G$.
	Thus $\argmax_{\mathcal{X}} \pi_G = \bigl\{ x^\dag \bigr\}$ by \Cref{observation:piG_str_conc}.

	Now, if a mechanism $(x,X)$ satisfies $x \in \mathcal{X}$ and the Euler equation,
	then $x$ belongs to $\argmax_{\mathcal{X}} \pi_G$ by the \hyperref[lemma:euler]{Euler lemma} in \cref{app:Euler} (\cpageref{lemma:euler}),
	so $(x,X)$ must be the uniquely optimal mechanism $\bigl(x^\dag,X^\dag\bigr)$.
\end{proof}

\end{appendices}



\printbibliography[heading=bibintoc]


\end{document}